\definecolor{edge-color}{rgb}{0.85,0.85,0.85}
\definecolor{edge-on-color}{rgb}{0.8,0.2,0.2}
\definecolor{edge-bnd-color}{rgb}{0.4,0.4,0.9}
\definecolor{domain-path-color}{rgb}{0.4,0.4,0.9}
\tikzstyle edge=[color=edge-color]
\tikzstyle edge-on=[color=edge-on-color, thick]
\tikzstyle edge-off=[color=white,thick,dotted]
\tikzstyle edge-bnd=[color=edge-bnd-color, ultra thick]
\tikzstyle domain-path=[color=domain-path-color, thick, dotted]
\tikzstyle node0=[circle, fill=gray!50]
\tikzstyle node1=[circle, fill]
\tikzstyle node2=[circle, draw]
\def\hexagonEdge[#1][#2][#3][#4]{ 
 \def \h {0.866cm}
 \begin{scope}[xshift=(#1) * 1.5cm, yshift=(#1+2*(#2)) *\h]
	\def \dd { \intcalcMod{\intcalcAdd{#3}{6}}{6} }
	\ifnumequal{\dd}{1}{ \draw [#4] {(0:1) -- (60:1)}; } {}
	\ifnumequal{\dd}{0}{ \draw [#4] {(60:1) -- (120:1)}; } {}
	\ifnumequal{\dd}{5}{ \draw [#4] {(120:1) -- (180:1)}; } {}
	\ifnumequal{\dd}{4}{ \draw [#4] {(180:1) -- (240:1)}; } {}
	\ifnumequal{\dd}{3}{ \draw [#4] {(240:1) -- (300:1)}; } {}
	\ifnumequal{\dd}{2}{ \draw [#4] {(300:1) -- (0:1)}; } {}
 \end{scope}
}
\def\hexagonEdges[#1][#2][#3][#4]{
 	\foreach \x/\y/\d in {#4} {
 		\hexagonEdge[(#2)+\x][(#3)+\y][\d][#1];
}}
\def\hexagonEdgesFullRow[#1][#2][#3][#4]{
 	\foreach \x/\d in {#4} {
 		\hexagonEdge[(#2)+\x][(#3)-\intcalcDiv{\x}{2}][\d][#1];
}}
\def\hexagonFullRow[#1][#2]{
	\foreach \x in {#2} {
 		\hexagon[\x][(#1)-\intcalcDiv{\x}{2}][1];
}}
\def\hexagonFullColumn[#1][#2]{
	\foreach \y in {#2} {
 		\hexagon[#1][\y-\intcalcDiv{#1}{2}][1];
}}
\def\hexagon[#1][#2]{
 \def \h {0.866cm}
 \begin{scope}[xshift=(#1) * 1.5cm, yshift=(#1+2*(#2)) *\h]
    \draw [edge] {(0:1) -- (60:1) -- (120:1) -- (180:1) -- (240:1) -- (300:1) -- cycle};
 \end{scope}
}
\def\hexagonNode[#1][#2][#3][#4]{
 \def \h {0.866cm}
 \begin{scope}[xshift=(#1) * 1.5cm, yshift=(#1+2*(#2)) *\h]
    \def\c{\intcalcMod{(#1)-(#2)}{3}}
	\def\d{#3}
	\ifnumequal{\d}{-1}{}{\ifnumequal{\d}{\c}{}{\def\c{9}}}
    \ifnumequal{\c}{0}{ \node [node0] at (0,0) { }; } {}
    \ifnumequal{\c}{1}{ \node [node1] at (0,0) { }; } {}
    \ifnumequal{\c}{2}{ \node [node2] at (0,0) { };} {}
 \end{scope}
}
\def\hexagonGrid[#1][#2][#3]{
  \foreach \i in {0,...,#1} {
   \foreach \j in {0,...,#2} {
 	\def\y{\j - \intcalcDiv{\i}{2}}
  	\hexagon[\i][\y];
	\ifnumequal{#3}{1}{\hexagonNode[\i][\y][-1][0]}{}
}}}
\def\hexagonGridOneClass[#1][#2][#3][#4]{
\begin{scope}
	\tikzstyle node0=[circle, fill=gray!50, minimum size=5pt,inner sep=0]
	\tikzstyle node1=[circle, fill=gray!50, minimum size=5pt,inner sep=0]
	\tikzstyle node2=[circle, fill=gray!50, minimum size=5pt,inner sep=0]

  \foreach \i in {0,...,#1} {
   \foreach \j in {0,...,#2} {
 	\def\y{\j - \intcalcDiv{\i}{2}}
  	\hexagon[\i][\y];
	\hexagonNode[\i][\y][#3][#4];
}}
\end{scope}}
\def\groundState[#1][#2][#3]{
  \hexagonGrid[#1][#2][#3];
  \foreach \i in {0,...,#1} {
   \foreach \j in {0,...,#2} {
	\ifnumequal{ \intcalcMod{\i- \j + \intcalcDiv{\i}{2}}{3} }{1}{\trivialLoop[\i][\j - \intcalcDiv{\i}{2}]}{}
}}}
\def\trivialLoop[#1][#2]{ \hexagonEdges[edge-on][#1][#2][0/0/0,0/0/1,0/0/2,0/0/3,0/0/4,0/0/5];}
\def\doubleLoopUp[#1][#2]{ \hexagonEdges[edge-on][#1][#2][0/1/-2,0/1/-1,0/1/0,0/1/1,0/1/2,0/0/1,0/0/2,0/0/3,0/0/4,0/0/5]; }
\def\doubleLoopRight[#1][#2]{ \hexagonEdges[edge-on][#1][#2][0/0/0,0/0/5,0/0/4,0/0/3,0/0/2,1/0/3,1/0/2,1/0/1,1/0/0,1/0/-1]; }
\def\doubleLoopLeft[#1][#2]{ \hexagonEdges[edge-on][#1][#2][0/1/0,0/1/1,1/0/0,1/0/1,1/0/2,1/0/3,1/0/4,0/1/3,0/1/4,0/1/5]; }
\newtheorem{thm}{Theorem}[section]
\newtheorem{lemma}[thm]{Lemma}
\newtheorem{prop}[thm]{Proposition}
\newtheorem{definition}[thm]{Definition}
\newtheorem{cor}[thm]{Corollary}
\newtheorem{quest}[thm]{Question}
\newtheorem{fact}[thm]{Fact}
\theoremstyle{definition}
\newtheorem*{remark}{Remark}
\newcommand{\reffig}[1] {\textsc{\ref{#1}}}
\newcommand{\cC}{\mathcal{C}}
\newcommand{\cF}{\mathcal{F}}
\newcommand{\cE}{\mathcal{E}}
\newcommand{\R}{\mathbb{R}}
\newcommand{\Z}{\mathbb{Z}}
\newcommand{\HH}{\mathbb{H}}
\newcommand{\VH}{V(\HH)}
\newcommand{\EH}{E(\HH)}
\newcommand{\T}{\mathbb{T}}
\newcommand{\sfT}{\mathsf{T}}
\renewcommand{\Pr}{\mathbb{P}}
\newcommand{\nSn}{\sqrt{n} \cdot \mathbb{S}^{n-1}}
\newcommand{\Int}[1]{\mathrm{Int}(#1)}
\newcommand{\Ext}[1]{\mathrm{Ext}(#1)}
\newcommand{\IntEdge}[1]{\mathrm{Int}^\mathrm{E}(#1)}
\newcommand{\IntVert}[1]{\mathrm{Int}^\mathrm{V}(#1)}
\newcommand{\ExtEdge}[1]{\mathrm{Ext}^\mathrm{E}(#1)}
\newcommand{\ExtVert}[1]{\mathrm{Ext}^\mathrm{V}(#1)}
\newcommand{\IntHex}[1]{\mathrm{Int}^{\mathrm{hex}}(#1)}
\newcommand{\BadEdges}{E^{\operatorname{bad}}}
\newcommand{\BadVert}{V^{\operatorname{bad}}}
\newcommand{\BadEdgesBefore}{\overline E}
\DeclareMathOperator{\dbl}{dbl}
\newcommand{\ground}{\omega_{\operatorname{gnd}}}
\DeclareMathOperator{\TrivialLoop}{\mathsf{TrivLoop}}
\newcommand{\clr}{{{\mathsf{c}}}}
\newcommand{\breakup}{{{\cC}}}
\newcommand{\shiftFunc}{R_\gamma}
\newcommand{\shift}[1]{{\shiftFunc(#1)}}
\newcommand{\LC}{\mathsf{LoopConf}}
\newcommand{\DIRup}{\,\uparrow\,}
\newcommand{\DIRdown}{\,\downarrow\,}
\title{Exponential decay of loop lengths in the loop $O(n)$ model with large $n$}
\date{\today}
\author{Hugo Duminil-Copin}
\thanks{Research of H.D.-C. is supported by Swiss FNS, the ERC AG COMPASP and the NCCR SwissMap}
\address{Hugo Duminil-Copin\hfill\break
    Universit\'e de Gen\`eve\\
    D\'epartement de math\'ematiques\\
    1211 Gen\`eve 4, Switzerland.}
\email{hugo.duminil@unige.ch}
\urladdr{http://www.unige.ch/~duminil}
\author{Ron Peled}
\thanks{Research of R.P. and Y.S. is partially supported by an ISF grant and an IRG
grant.}
\address{Ron Peled\hfill\break
    Tel Aviv University\\
    School of Mathematical Sciences\\
    Tel Aviv, 69978, Israel.}
\email{peledron@post.tau.ac.il}
\urladdr{http://www.math.tau.ac.il/~peledron}
\author{Wojciech Samotij}
\thanks{Research of W.S. is partially supported by an ISF grant.}
\address{Wojciech Samotij\hfill\break
    Tel Aviv University\\
    School of Mathematical Sciences\\
    Tel Aviv, 69978, Israel.}
\email{samotij@post.tau.ac.il}
\urladdr{http://www.math.tau.ac.il/~samotij}
\author{Yinon Spinka}
\address{Yinon Spinka\hfill\break
    Tel Aviv University\\
    School of Mathematical Sciences\\
    Tel Aviv, 69978, Israel.}
\email{yinonspi@post.tau.ac.il}
\urladdr{http://www.math.tau.ac.il/~yinonspi}
\begin{document}

\begin{abstract}
The loop $O(n)$ model is a model for a random collection of non-intersecting loops on
the hexagonal lattice, which is believed to be in the same
universality class as the spin $O(n)$ model. It has been conjectured
that both the spin and the loop $O(n)$ models exhibit exponential decay of
correlations when $n>2$. We verify this for the loop $O(n)$ model
with large parameter $n$, showing that long loops are exponentially
unlikely to occur, uniformly in the edge weight $x$. Our proof
provides further detail on the structure of typical configurations
in this regime. Putting appropriate boundary conditions, when $nx^6$
is sufficiently small, the model is in a dilute, disordered phase in
which each vertex is unlikely to be surrounded by any loops,
whereas when $nx^6$ is sufficiently large, the model is in a dense,
ordered phase which is a small perturbation of one of the three
ground states.
\end{abstract}

\maketitle

\section{Introduction}\label{sec:introduction}

After the introduction of the Ising model \cite{Len20} and Ising's
conjecture that it does not undergo a phase transition, physicists
tried to find natural generalizations of the model with richer
behavior. In~\cite{HelKra34}, Heller and Kramers described the
classical version of the celebrated quantum Heisenberg model where
spins are vectors in the (two-dimensional) unit sphere in dimension
three. Later, Stanley introduced the \emph{spin $O(n)$ model} by
allowing spins to take values in higher-dimensional spheres
\cite{Sta68}. We refer the interested reader to~\cite{DomGre76} for
a history of the subject.

Formally, a configuration of the spin $O(n)$ model on a finite graph
$G$ is an assignment $\sigma\in\Omega:=(\sqrt n \cdot\mathbb
S^{n-1})^{V(G)}$ of spins to each vertex of $G$, where $\mathbb
S^{n-1} \subseteq \R^n$ is the $(n-1)$-dimensional unit sphere and
the choice of the radius $\sqrt{n}$ serves as a convenient
normalization. The {\em Hamiltonian} of the model is defined by
\[
\mathcal{H}_{G,n}(\sigma)~:=~-\sum_{\{u,v\}\in
E(G)}\left\langle\sigma_u,\sigma_v\right\rangle,
\]
where $\left\langle\cdot,\cdot\right\rangle$ denotes the scalar
product in $\R^n$. At inverse temperature $\beta$, we define the
finite-volume Gibbs measure $\mu_{G,n,\beta}$ to be the probability measure
on $\Omega$ given by
\begin{equation*}
  d\mu_{G,n,\beta}(\sigma) := \frac{1}{Z^{\text{spin}}_{G,n,\beta}} \exp \left[-\beta \mathcal{H}_{G,n}(\sigma)\right]
  d\sigma,
\end{equation*}
where $Z^{\text{spin}}_{G,n,\beta}$, the \emph{partition function},
is given by
\begin{equation}\label{eq:Z_def}
  Z^{\text{spin}}_{G,n,\beta}:=\int_{\Omega} \exp \left[-\beta
\mathcal{H}_{G,n}(\sigma)\right]
  d\sigma
\end{equation}
and $d\sigma$ is the uniform probability measure on $\Omega$ (i.e.,
the product measure of the uniform distributions on $\sqrt n
\cdot\mathbb S^{n-1}$ for each vertex in $G$).

By taking the weak limit of measures on larger and larger subgraphs of an infinite planar lattice, such as $\Z^2$ or the hexagonal lattice $\HH$, an infinite-volume measure $\mu_{n,\beta}$ can be defined, and one may ask whether a phase transition occurs at some critical inverse temperature. From this point of view, the behavior of the model is very different for different values of $n$:
\begin{itemize}
\item
  For $n=1$, the model is simply the Ising model, which is known to undergo a phase transition between an ordered and a disordered phase, as proved by Peierls~\cite{Pei36} (refuting Ising's conjecture). The critical inverse temperature has been computed for the square and the hexagonal lattices and it is fair to say that a lot is known about the behavior of the model. We refer the reader to~\cite{Dum13,DumSmi12a,Pal07} and references therein for an overview of the recent progress on the subject.
\item
  For $n=2$, the model is the so-called XY model (first introduced in \cite{VakLar66}). Since the spin space $\mathbb S^1$ is a continuous group, the Mermin--Wagner theorem~\cite{MerWag66} guarantees that there is no phase transition between ordered and disordered phases. Still, a Kosterlitz--Thouless phase transition occurs as proved in~\cite{FroSpe81,KosTho73,McBSpe77,Tho69}. That is, below some critical inverse temperature, the spin-spin correlations $\mu_{n,\beta}[\langle \sigma_u,\sigma_v\rangle]$ decay exponentially fast in the distance between $u$ and $v$, while above this critical inverse temperature, they decay only like an inverse power of the distance.
\item
  For $n\ge 3$, it is predicted that no phase transition occurs~\cite{Pol75} and that spin-spin correlations decay exponentially fast at every positive temperature. The $n=3$ case, corresponding to the classical Heisenberg model, is of special interest. Let us mention that this prediction is part of a more general conjecture asserting that planar spin systems with non-Abelian continuous spin space do not exhibit a phase transition. As of today, the $n\ge 3$ case remains wide open. The best known results in this direction can be found in~\cite{Kup80}, where a $1/n$ expansion is performed as $n$ tends to infinity.
\end{itemize}

On the hexagonal lattice $\HH$, the spin $O(n)$ model can be related
to the so-called \emph{loop $O(n)$ model} introduced
in~\cite{DomMukNie81}. Before providing additional details on the
relation, let us define the loop $O(n)$ model. A \emph{loop} is a
finite subgraph of $\HH$ which is isomorphic to a simple cycle. A
\emph{loop configuration} is a spanning subgraph of $\HH$ in which
every vertex has even degree; see Figure~\reffig{fig:loop-configs}. The
\emph{non-trivial finite} connected components of a loop
configuration are necessarily loops, however, a loop configuration
may also contain isolated vertices and infinite simple paths. We
shall often identify a loop configuration with its set of edges,
disregarding isolated vertices. In this work, a \emph{domain} $H$ is
a non-empty finite connected induced subgraph of $\HH$ whose
complement $\VH \setminus V(H)$ induces a connected subgraph of
$\HH$ (in other words, it does not have ``holes''). For convenience,
all of our results will be stated for domains, although the
definitions and techniques may sometimes be applied in greater
generality. Given a domain $H$ and a loop configuration $\xi$, we
denote by $\LC(H,\xi)$ the collection of all loop configurations
$\omega$ that agree with $\xi$ on $\EH \setminus E(H)$. Finally,
for a domain $H$ and a loop configuration $\omega$, we denote by
$L_H(\omega)$ the number of loops in $\omega$ which \emph{intersect}
$E(H)$ and by $o_H(\omega)$ the number of edges of $\omega \cap E(H)$.

\begin{definition}\label{def:loop-model}
Let $H$ be a domain and let $\xi$ be a loop configuration. Let $n$ and $x$ be
positive real numbers. The loop $O(n)$ measure on $H$ with edge
weight $x$ and boundary conditions $\xi$ is the probability measure
$\Pr_{H,n,x}^\xi$ on $\LC(H,\xi)$ defined by
  \[
  \Pr_{H,n,x}^\xi(\omega) := \frac{x^{o_H(\omega)} n^{L_H(\omega)}}{Z_{H,n,x}^\xi}, \quad \omega\in\LC(H,\xi),
  \]
  where $Z_{H,n,x}^\xi$ is the unique constant which makes $\Pr_{H,n,x}^\xi$ a probability measure.
\end{definition}
We note that the loop $O(n)$ model is defined for any \emph{real}
$n>0$ whereas the spin $O(n)$ model is only defined for positive
\emph{integer} $n$ (the loop $O(n)$ model may be defined also with
$n=0$ by taking the limit $n\to0$, giving rise to a self-avoiding
walk model). Let us now briefly discuss the connection between the
loop and the spin $O(n)$ models (with integer $n$) on a domain $H
\subset \HH$. Rewriting the partition function
$Z^{\text{spin}}_{H,n,\beta}$ given by \eqref{eq:Z_def} using the
approximation $e^t \approx 1+t$ gives
\[
\begin{split}
  Z^{\text{spin}}_{H,n,\beta}=\int\displaylimits_{\Omega} \prod_{\{u,v\}\in E(H)} e^{\beta \langle  \sigma_u,\sigma_v \rangle} \,d\sigma
  &\approx \int\displaylimits_{\Omega} \prod_{\{u,v\}\in E(H)} (1 + \beta \langle  \sigma_u,\sigma_v \rangle) \,d\sigma \\
  &= \sum_{\omega \subset E(H)} \beta^{o_H(\omega)} \int\displaylimits_{\Omega} \prod_{\{u,v\} \in E(\omega)} \langle  \sigma_u,\sigma_v \rangle \,d\sigma.
\end{split}
\]
The integral on the right-hand side equals $n^{L_H(\omega)}$ if
$\omega\in\LC(H,\emptyset)$ and 0 otherwise; see
Appendix~\ref{sec:integrals} for the calculation. Here, the
normalization of taking spins on the sphere of radius $\sqrt{n}$ is
used. Hence, substituting $x$ for $\beta$,
\[
Z_{H,n,x}^{\text{spin}} \approx \sum_{\omega \in
\LC(H,\emptyset)} x^{o_H(\omega)} n^{L_H(\omega)} =
Z_{H,n,x}^\emptyset.
\]
In the same manner, the spin-spin correlation of $u,v\in V(H)$ may
be approximated as follows.
\begin{equation}\label{eq:relation spin loop}
  \mu_{H,n,x}[\langle  \sigma_u,\sigma_v \rangle] = \frac{ \displaystyle\int_{\Omega} \langle \sigma_u, \sigma_v \rangle \exp \left[-x \mathcal{H}_{H,n}(\sigma)\right]
  d\sigma}{Z_{H,n,x}^{\text{spin}}} \approx n\cdot \frac{\displaystyle \sum_{\lambda \in \LC(H,\emptyset,u,v)} x^{o_H(\lambda)} n^{L'_H(\lambda)} J(\lambda)}{\displaystyle \sum_{\omega \in \LC(H,\emptyset)} x^{o_H(\omega)} n^{L_H(\omega)}},
\end{equation}
where $\LC(H,\emptyset,u,v)$ is the set of spanning subgraphs of $H$
in which the degrees of $u$ and $v$ are odd and the degrees of all
other vertices are even. Here, for $\lambda \in \LC(H,\emptyset,u,v)$, $o_H(\lambda)$ is the number of edges of $\lambda$, $L'_H(\lambda)$ is the number of loops in $\lambda$ after removing an arbitrary simple path in $\lambda$ between $u$ and $v$, and
$J(\lambda):=\tfrac{3n}{n+2}$ if there are three disjoint paths in
$\lambda$ between $u$ and $v$ and $J(\lambda):=1$ otherwise (in which
case, there is a unique simple path in $\lambda$ between $u$ and
$v$); see Appendix~\ref{sec:integrals} for the calculation.

Unfortunately, the above approximation is not justified for any
$x>0$. Nevertheless, \eqref{eq:relation spin loop} provides a
heuristic connection between the spin and the loop $O(n)$ models and
suggests that both these models reside in the same universality
class. For this reason, it is natural to ask whether the prediction
about the absence of phase transition is valid for the loop $O(n)$
model.

\begin{quest}
  \label{quest:loop-On-quest}
  Does the quantity on the right-hand side of~\eqref{eq:relation spin loop} decay exponentially fast in the distance between $u$ and $v$, uniformly in the domain $H$, whenever $n>2$ and $x>0$?
\end{quest}

In this article, we partially answer this question. In
Theorem~\ref{thm:spin spin} below, we show that for all sufficiently
large $n$ and any $x>0$, the quantity on the right-hand side
of~\eqref{eq:relation spin loop} decays exponentially fast for a
large class of domains $H$. The theorem is a consequence of a more
detailed understanding of the loop $O(n)$
model. We show that for small $x$ the model is in a dilute,
disordered phase, where the sampled loop configuration is rather
sparse and the probability of seeing long loops surrounding a given
vertex decays exponentially in the length (see
Figure~\reffig{fig:loop-sample-n=8,x=0.5}). For large $x$, the same
exponential decay holds but for a different reason. There, the model
is in a dense, ordered phase, which is a perturbation of a periodic
ground state. In the ground state all loops have length $6$ and a typical
perturbation does not make them significantly longer (see
Figure~\reffig{fig:loop-sample-n=8,x=2}).

\medbreak \noindent {\bf The $x=\infty$ Model.} We shall also
consider the limit of the loop $O(n)$ model as the edge weight $x$
tends to infinity. This means restricting the model to `optimally
packed loop configurations', i.e., loop configurations having the
maximum possible number of edges.
\begin{definition}
  Let $H$ be a domain and let $\xi$ be a loop configuration.
  For $n>0$, the loop $O(n)$ measure on $H$ with edge weight $x = \infty$ and boundary conditions $\xi$ is the probability measure on $\LC(H,\xi)$ defined by
\[
\Pr_{H,n,\infty}^\xi(\omega) := \lim_{x \to \infty} \Pr_{H,n,x}^\xi(\omega) =
 \begin{cases} \frac{n^{L_{H}(\omega)}}{Z_{H,n,\infty}^\xi} &\text{if }o_{H}(\omega)=o_{H,\xi} \\ 0 &\text{otherwise} \end{cases} , \quad \omega\in\LC(H,\xi),
\]
where $o_{H,\xi} := \max \{ o_H(\omega) : \omega \in \LC(H,\xi) \}$ and $Z_{H,n,\infty}^\xi$ is the unique constant making
$\Pr_{H,n,\infty}^\xi$ a probability measure.
\end{definition}
We note that if a loop configuration $\omega\in\LC(H,\xi)$ is
\emph{fully packed}, i.e., every vertex in $V(H)$ has degree $2$,
then $\omega$ is optimally packed, i.e., $o_H(\omega)=o_{H,\xi}$.

\medbreak

Before concluding this section, let us mention that the loop $O(n)$
model with $n\le 2$ is also of great interest; see
Section~\ref{sec:discussion_open_questions} for a discussion.


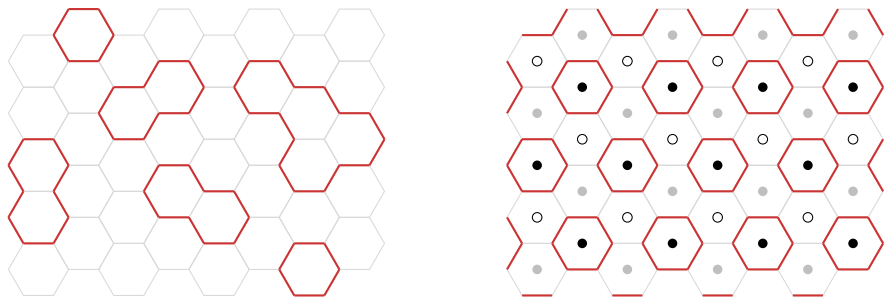
\begin{figure}
    \centering

    \begin{tabular}{lcr}

    \begin{tikzpicture}[scale=0.4, every node/.style={scale=0.4}]
        \hexagonGrid[7][4][0];
        \trivialLoop[1][4];
        \trivialLoop[6][-3];
        \doubleLoopUp[0][1];
        \doubleLoopRight[2][2];
        \doubleLoopLeft[3][-1];
        \hexagonEdges[edge-on][5][0][0/1/-3,0/1/-2,0/1/-1,0/1/0,0/1/1,1/0/0,1/0/1,2/-1/0,2/-1/1,2/-1/2,2/-1/3,1/-1/2,1/-1/3,1/-1/4,1/-1/5,1/0/-2];
    \end{tikzpicture}

    &\,\qquad\,&

    \begin{tikzpicture}[scale=0.4, every node/.style={scale=0.35}]
        \groundState[7][4][1];
        \hexagonEdges[edge-on][0][0][0/0/-1,0/1/-2,0/3/-1,0/4/-2];
        \hexagonEdges[edge-on][0][0][0/0/3,2/-1/3,4/-2/3,6/-3/3];
        \hexagonEdges[edge-on][7][-2][0/0/1,0/1/2];
        \hexagonEdges[edge-on][0][5][0/0/3,1/-1/-1,1/-1/1,2/-1/3,3/-2/-1,3/-2/1,4/-2/3,5/-3/-1,5/-3/1,6/-3/3,7/-4/-1,7/-4/1];
    \end{tikzpicture}

    \end{tabular}

    \caption{On the left, a loop configuration. On the right, a proper $3$-coloring of the triangular lattice $\T$ (the dual of the hexagonal lattice $\HH$), inducing a partition of $\T$ into three color classes $\T^0$, $\T^1$, and $\T^2$. The $0$-phase ground state $\ground^0$ is the (fully-packed) loop configuration consisting of trivial loops around each hexagon in~$\T^0$.}
    \label{fig:loop-configs}
\end{figure}

\begin{figure}
    \centering
    \begin{subfigure}[t]{.5\textwidth}
        \includegraphics[scale=0.46]{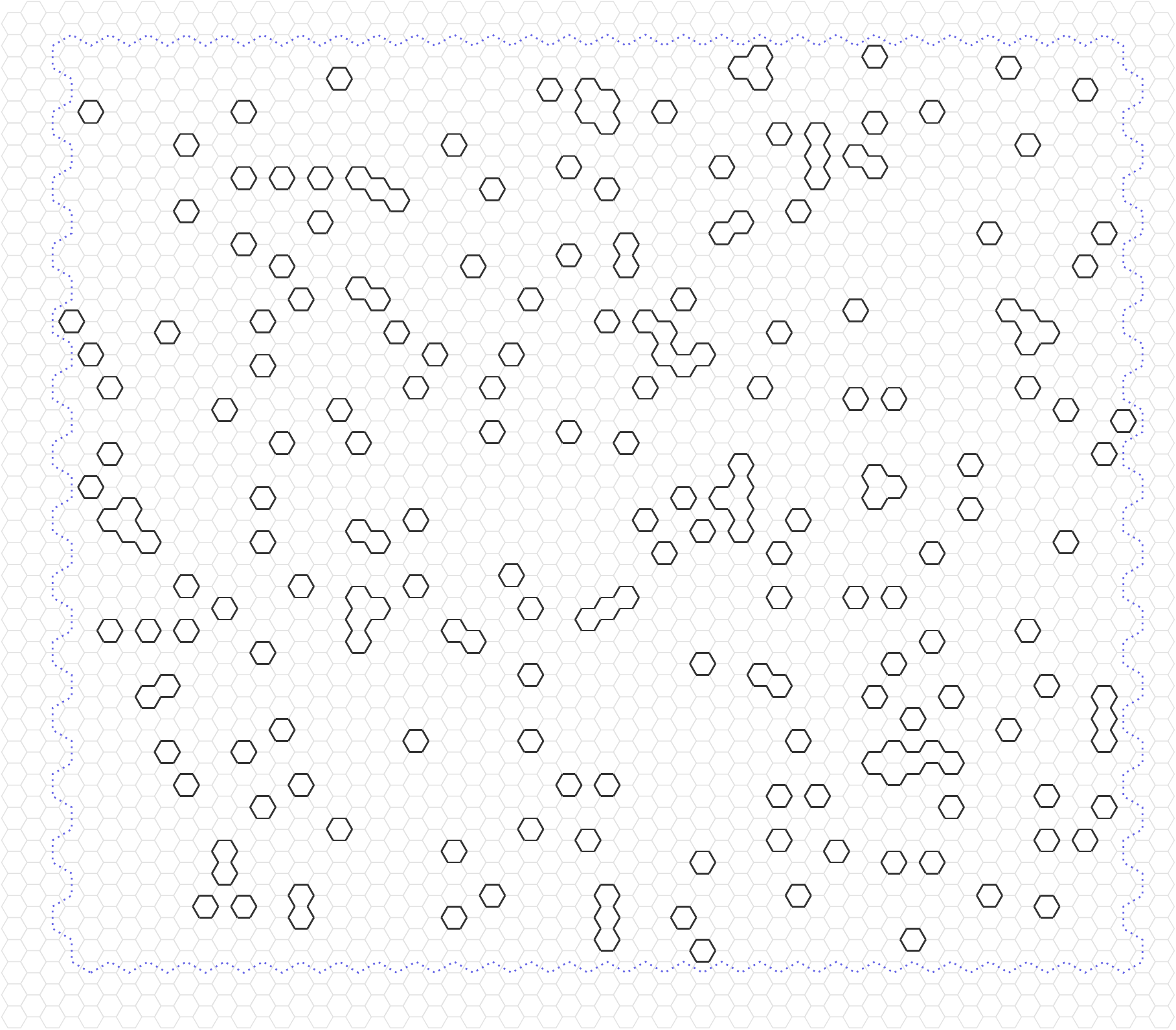}
        \caption{$n=8$ and $x=0.5$. Theorem~\ref{thm:Gibbs measures_small_x} shows that the limiting measure is unique for domains with vacant boundary conditions when $x$ is small.}
        \label{fig:loop-sample-n=8,x=0.5}
    \end{subfigure}%
    \begin{subfigure}{20pt}
        \quad
    \end{subfigure}%
    \begin{subfigure}[t]{.5\textwidth}
        \includegraphics[scale=0.46]{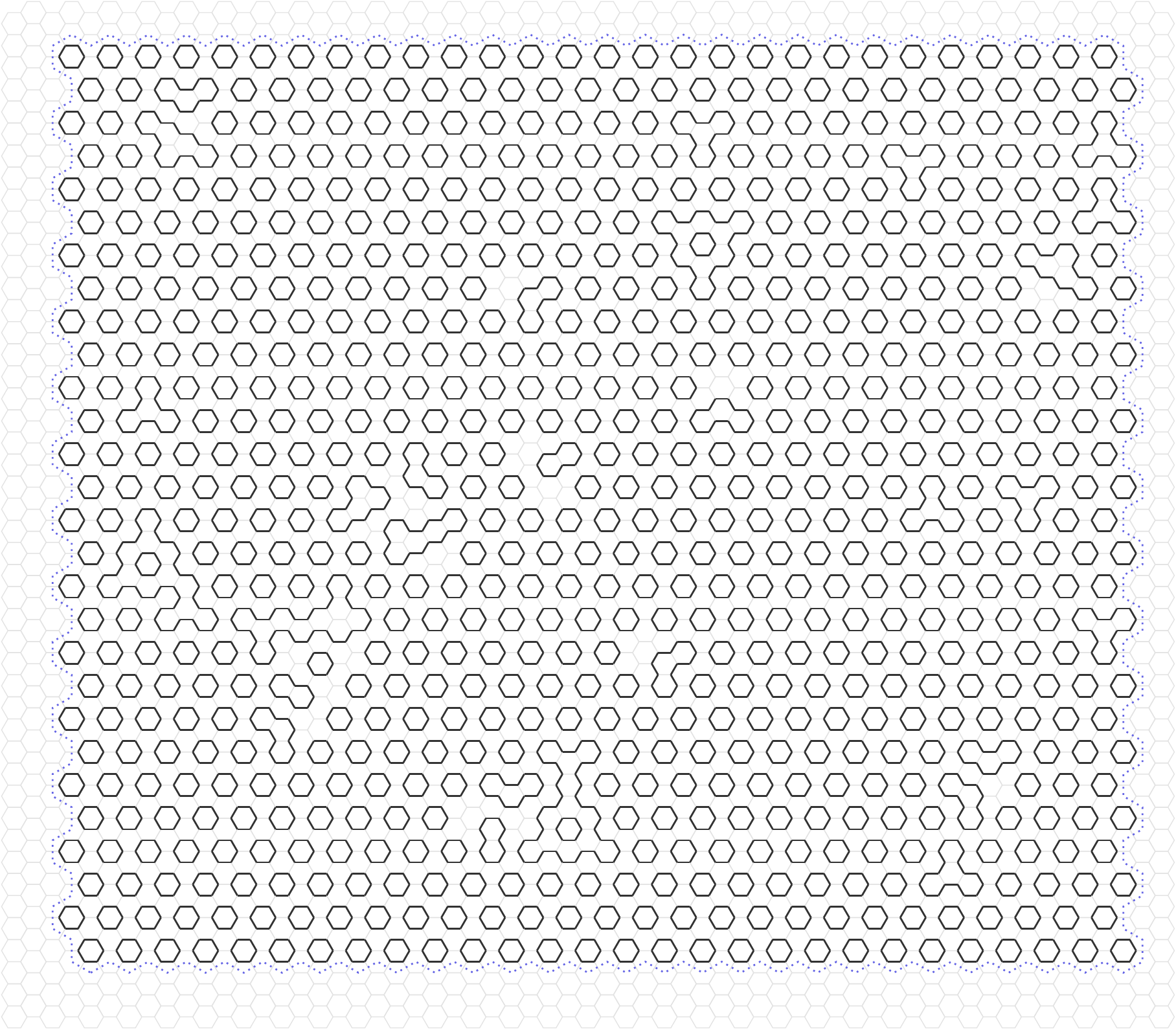}
        \caption{$n=8$ and $x=2$. Theorem~\ref{thm:small-deviations-from-the-ground-state-x-model} shows that typical configurations are small perturbations of the ground state for large $n$ and $x$.}
        \label{fig:loop-sample-n=8,x=2}
    \end{subfigure}
    \caption{Two samples of random loop configurations with large $n$. Configurations are on a $60\times45$ domain of type $0$ and are sampled via Glauber dynamics for 100 million iterations started from the empty configuration.}
    \label{fig:loop-samples-large-n}
\end{figure}


\subsection{Results}
\label{sec:results}

In order to state our main results, we need several more definitions
(see Figure~\reffig{fig:loop-configs} for their illustration). We
consider the triangular lattice $\T := (0,2)\Z + (\sqrt3,1)\Z$, and
view the hexagonal lattice $\HH$ as its dual lattice, obtained by
placing a vertex at the center of every face (triangle) of $\T$, so
that each edge $e$ of $\HH$ corresponds to the unique edge $e^*$ of
$\T$ which intersects $e$. Since vertices of $\T$ are identified
with faces of $\HH$, they will be called \emph{hexagons} instead of
vertices. We will also say that a vertex or an edge of $\HH$
\emph{borders} a hexagon if it borders the corresponding face of
$\HH$.

There are exactly 6 proper colorings of $\T$ with the colors
$\{0,1,2\}$. For the rest of the paper, we fix an arbitrary proper coloring and let $\T^\clr$ be the set of hexagons colored by $\clr$,
$\clr\in\{0,1,2\}$.
A \emph{trivial loop} is a loop of length exactly $6$. Define the
\emph{$\clr$-phase ground state} $\ground^\clr$ to be the
(fully-packed) loop configuration consisting of all the trivial
loops surrounding hexagons in $\T^\clr$.
We shall say that a domain $H$ is \emph{of type $\clr$}, $\clr \in
\{0,1,2\}$, if every edge $\{u,v\} \in \ground^\clr$ satisfies
either $u,v \in V(H)$ or $u,v \notin V(H)$.
Equivalently, $H$ is of type $\clr$ if and only if
\begin{equation}\label{eq:domain_of_type}
    \LC(H,\emptyset) = \{\omega\cap E(H) : \omega\in\LC(H,\ground^\clr)\}.
\end{equation}
Finally, we shall say that a loop \emph{surrounds} a vertex $u$
of $\HH$ if any infinite simple path in $\HH$ starting at $u$
intersects a vertex of this loop. In particular, if a loop passes
through a vertex then it surrounds it as well.

\begin{thm}
  \label{thm:no-large-loops}
  There exist $n_0, \alpha > 0$ such that for any $n \geq n_0$ and $x\in(0,\infty]$ the following holds. For any $\clr \in \{0,1,2\}$, any domain $H$ of type $\clr$, any $u \in V(H)$ and any integer $k > 6$, we have
  \[
  \Pr_{H,n,x}^\emptyset(\text{there exists a loop of length $k$ surrounding $u$}) \leq n^{-\alpha k}.
  \]
\end{thm}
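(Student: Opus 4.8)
\smallskip
\noindent\textbf{Proof plan.}

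\emph{Reduction to a single loop.} The bound will come from a Peierls-type argument: we estimate the probability of each fixed candidate loop and sum over all of them. If $\omega$ contains a loop of length $k$ surrounding $u$ then, since $\HH$ has maximum degree $3$ and hence every finite even subgraph of $\HH$ is a vertex-disjoint union of simple cycles, that loop is a whole connected component of $\omega$; in particular it is a specific simple cycle $\gamma\subseteq E(H)$ with $|E(\gamma)|=k$, $\gamma\subseteq\omega$, and $u\in\Int(\gamma)\cup V(\gamma)$. A union bound gives
\[
\Pr_{H,n,x}^\emptyset(\text{some loop of length }k\text{ surrounds }u)\ \le\ \sum_{\gamma}\Pr_{H,n,x}^\emptyset(\gamma\subseteq\omega),
\]
the sum being over all such $\gamma$. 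By the isoperimetric inequality on $\HH$ a simple cycle of length $k$ encloses $O(k^2)$ hexagons, so every such $\gamma$ must cross a fixed infinite simple path emanating from $u$ at an edge within graph distance $O(k)$ of $u$; since there are $O(k)$ such edges and at most $2^{k}$ simple cycles of length $k$ through a fixed edge of $\HH$ (the connective constant of $\HH$ being $<2$), the sum has at most $e^{C_0k}$ terms for an absolute constant $C_0$. Hence it suffices to produce absolute constants $c>0$ and $n_0$ with $\Pr_{H,n,x}^\emptyset(\gamma\subseteq\omega)\le n^{-ck}$ for all $n\ge n_0$, all $x\in(0,\infty]$ and all such $\gamma$; the theorem then follows with $\alpha:=c/2$ after enlarging $n_0$ so that $e^{C_0k}n^{-ck}\le n^{-\alpha k}$ for every $k>6$.

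\emph{The break-up surgery.} Fix $\gamma$ and put $\mathcal{G}:=\{\omega\in\LC(H,\emptyset):\gamma\subseteq\omega\}$. As $H$ is a domain (no holes) and $\gamma\subseteq E(H)$, the interior of $\gamma$ is contained in $H$, so a configuration may be altered freely there. The core of the argument is a map $\Phi\colon\mathcal{G}\to\LC(H,\emptyset)$ that ``breaks up'' the long loop $\gamma$ into many short ones: $\Phi$ removes $\gamma$ together with the edges of $\omega$ in a thin layer $A$ of hexagons bordering $\gamma$ from inside, and refills the freed region with a fully-packed, ground-state-like pattern of trivial loops $\TrivialLoop$ around the hexagons of $A$ in a single colour class $\T^{\clr'}$, promoting some of them to double loops where the neighbouring structure of $\omega$ demands it, the colour class and the exact modified region being chosen depending on $\omega$. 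Three properties are needed. (i) \emph{Entropy gain}: $L_H(\Phi(\omega))\ge L_H(\omega)+c_1k$ for an absolute $c_1>0$ — the layer $A$ has $\Omega(k)$ hexagons (each edge of $\gamma$ borders exactly one hexagon of $A$, and, as $k>6$, no hexagon is bordered by all six of its own edges), and $A$, being a connected set of at least two hexagons, is not monochromatic, so a positive fraction of its hexagons lies in one colour class; the corresponding trivial loops are pairwise edge-disjoint and $\Omega(k)$ in number, whereas only $\gamma$ and the loops of $\omega$ straddling $A$ are lost, and these straddling loops are re-routed rather than simply deleted. (ii) \emph{Energy conservation}: $o_H(\Phi(\omega))=o_H(\omega)$ — this is what makes the estimate uniform in $x$, since then the weight ratio below is a pure power of $n$; it is exactly the requirement that keeps $\Phi(\omega)$ optimally packed when $x=\infty$. (iii) \emph{Bounded multiplicity}: $\Phi$ is at most $e^{C_1k}$-to-one, because $\Phi(\omega)$ differs from $\omega$ only on the $O(k)$ edges bordering $A$, so $\omega$ is recovered from $\Phi(\omega)$, $\gamma$, and one of at most $2^{O(k)}$ possibilities for the restriction of $\omega$ to those edges.

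\emph{Conclusion from the surgery.} Granting (i)--(iii), for each $\omega\in\mathcal{G}$,
\[
\frac{x^{o_H(\omega)}n^{L_H(\omega)}}{x^{o_H(\Phi(\omega))}n^{L_H(\Phi(\omega))}}\ =\ n^{\,L_H(\omega)-L_H(\Phi(\omega))}\ \le\ n^{-c_1k},
\]
using (ii) to kill the power of $x$. Summing over $\omega\in\mathcal{G}$ and using (iii),
\[
\Pr_{H,n,x}^\emptyset(\gamma\subseteq\omega)\ =\ \frac{\sum_{\omega\in\mathcal{G}}x^{o_H(\omega)}n^{L_H(\omega)}}{Z_{H,n,x}^\emptyset}\ \le\ e^{C_1k}\,n^{-c_1k}\,\frac{\sum_{\omega'\in\LC(H,\emptyset)}x^{o_H(\omega')}n^{L_H(\omega')}}{Z_{H,n,x}^\emptyset}\ \le\ e^{C_1k}\,n^{-c_1k},
\]
which is $\le n^{-c_1k/2}$ for $n$ large; taking $c:=c_1/4$ completes the reduction.

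\emph{Where the difficulty lies.} The main obstacle is property (ii) together with (i) and (iii) holding \emph{simultaneously}, uniformly over all $\omega\in\mathcal G$ and all local density regimes of $\omega$ near $\gamma$. In the dilute regime the layer around $\gamma$ is almost empty, so breaking $\gamma$ means \emph{adding} edges, and one must add exactly as many as are deleted; in the dense regime — the one relevant for large $x$, where the model is a perturbation of a ground state and $\gamma$ is a defect sitting in an almost fully-packed neighbourhood — one must replace a fully-packed region by another fully-packed region (same edge count automatically) while still gaining $\Omega(k)$ loops, and this forces the inserted pattern of trivial and double loops to be matched to $\omega$ across the interface with the untouched interior so that $\Phi(\omega)$ is again a valid loop configuration, all while keeping the modification confined to $O(k)$ edges. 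A natural way to organise this is to develop the break-up first for the $x=\infty$ model, where edge conservation is automatic on the set of optimally packed configurations and the construction is purely combinatorial, and then to transfer it to finite $x$; the handful of degenerate cases in which exact edge conservation cannot be arranged within a one-hexagon-deep layer (e.g.\ very small $k$, or pinched $\gamma$) are dealt with either by thickening the layer or by falling back on the trivial bound $\Pr_{H,n,x}^\emptyset(\gamma\subseteq\omega)\le x^k n$ obtained by simply erasing $\gamma$.
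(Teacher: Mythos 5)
Your reduction (union bound over the at most $k2^k$ candidate loops, then a per-loop estimate via Lemma~\ref{lem:prob-inequality-tool}) matches the paper's outer structure, but the heart of your argument --- the surgery $\Phi$ with properties (i)--(iii) --- is not constructed, and in the form you describe it cannot exist. The fatal constraint is locality: you confine the modification to a one-hexagon-deep layer $A$ inside the long loop and change only $O(k)$ edges, which is what makes your multiplicity bound (iii) trivial. But for large $x$ (and for $x=\infty$) the typical configuration containing a long loop has its interior nearly fully packed in a \emph{shifted} phase: just inside the layer sit trivial loops of $\ground^1$ or $\ground^2$, just outside sit trivial loops of $\ground^0$, and every hexagon of $A$ is adjacent to an occupied hexagon on at least one side. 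A trivial loop inserted at such a hexagon shares an edge with an untouched trivial loop and creates degree-$3$ vertices, so \emph{no} trivial loops can be packed into the freed annulus at all; deleting $\gamma$ then strictly decreases the edge count, so (ii) fails, and at $x=\infty$ the image even has probability zero. This is not one of your ``degenerate cases'' (small $k$, pinched $\gamma$) --- it is the generic situation in the ordered regime --- and the fallback bound $x^k n$ is useless there. More generally, loops of $\omega$ crossing the layer leave $\Omega(k)$ dangling endpoints at arbitrary positions along the inner interface, and you give no argument that they can be re-paired inside a thin annulus simultaneously with an $\Omega(k)$ gain of loops and exact edge conservation.

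The missing idea is that the repair must be global in the interior of a suitable vacant circuit, not local around the loop: regions resembling $\ground^1$ or $\ground^2$ have to be \emph{shifted} into the $0$-phase (this is what reconciles the phase mismatch that defeats any bounded-width surgery), regions resembling $\ground^0$ are kept, and only the genuinely disordered part is overwritten by trivial loops; this is the content of the repair map of Section~\ref{sec:repair-map} and of Lemma~\ref{lem:prob-outer-circuit}. Two further consequences of abandoning locality, which your plan would also need: the preimage count can no longer be read off from ``only $O(k)$ edges changed'' and is instead obtained by reconstructing $\omega$ outside the bad set $V(\omega,\gamma)$ from its image (as in the proof of Lemma~\ref{lem:prob-outer-circuit-intermediate}); and exact edge conservation is neither achievable nor needed --- the paper only shows the edge count weakly increases ($\Delta o=|V'(\omega,\gamma)|\ge 0$), which combined with the loop gain yields a bound in terms of $n\cdot\min\{x^6,1\}$, and the remaining regime of small $nx^6$ is handled by the separate, simple deletion argument of Corollary~\ref{cl:no-large-loops-for-small-x} rather than by a single $x$-uniform map. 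Finally, note that the long loop need not itself be captured by the circuit you repair inside (it can be hidden inside a cluster); bridging this requires an additional step such as Lemma~\ref{lem:existence-of-good-outer-circuit}, which your proposal does not anticipate.
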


As follows from Theorem~\ref{thm:small-deviations-from-the-ground-state-x-model} below, when $n$ and $nx^6$ are sufficiently large, it is likely that $u$ is contained in a trivial loop. Thus, the assumption that $k>6$ is necessary.
The techniques involved in the proof of Theorem~\ref{thm:no-large-loops} also imply the following result, which partially answers Question~\ref{quest:loop-On-quest}.

\begin{thm}
  \label{thm:spin spin}
  There exist $n_0,\alpha > 0$ such that for any $n \geq n_0$ and any $x>0$ the following holds.
  For any $\clr \in \{0,1,2\}$, any domain $H$ of type $\clr$ and any distinct non-adjacent $u,v \in V(H)$, we have
  \[
  \frac{\displaystyle\sum_{\lambda \in \LC(H,\emptyset,u,v)} x^{o_H(\lambda)} n^{L'_H(\lambda)} J(\lambda)}{\displaystyle\sum_{\omega \in \LC(H,\emptyset)} x^{o_H(\omega)} n^{L_H(\omega)}} \le x \cdot n^{-\alpha \, d_H(u,v)},
  \]
where $d_H(u,v)$ is the graph distance in $H$ between $u$ and $v$.
\end{thm}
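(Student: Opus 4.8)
The plan is to deduce Theorem~\ref{thm:spin spin} from Theorem~\ref{thm:no-large-loops} by a combinatorial encoding argument. The key observation is that a configuration $\lambda \in \LC(H,\emptyset,u,v)$ — i.e.\ a spanning subgraph of $H$ in which $u$ and $v$ have odd degree and all other vertices have even degree — necessarily contains a simple path from $u$ to $v$. Removing such a path turns $\lambda$ into a genuine loop configuration. The sum in the numerator can therefore be reorganized by first choosing the simple path $P$ from $u$ to $v$ and then choosing a loop configuration on the remaining edges. Concretely, for a fixed $P$, write $\lambda = P \cup \omega'$ where $\omega' \in \LC(H \setminus E(P), \emptyset)$, so that $o_H(\lambda) = |E(P)| + o_H(\omega')$ and $L'_H(\lambda)$ is essentially the number of loops in $\omega'$. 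Summing over $P$ then over $\omega'$, one bounds the numerator by
\[
\sum_{P : u \to v} x^{|E(P)|} \sum_{\omega' \in \LC(H \setminus E(P), \emptyset)} x^{o_H(\omega')} n^{L'_H(\lambda)} J(\lambda).
\]

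First I would handle the nuisance factors. The factor $J(\lambda) \le 3n/(n+2) < 3$ is a bounded constant, and the discrepancy between $L'_H(\lambda)$ (loops after path removal) and $L_H(\omega')$ (loops of $\omega'$) is at most a bounded number accounting for loops that merge with $P$; each such discrepancy costs at most a constant factor of $n$ per unit length of $P$, which can be absorbed into the exponential decay rate by shrinking $\alpha$. So up to harmless constants the inner sum over $\omega'$ is comparable to $Z^\emptyset_{H \setminus E(P), n, x}$, and I want to compare this to the denominator $Z^\emptyset_{H,n,x}$. The natural move is a resampling/injection argument: given $\omega'$ on $H \setminus E(P)$, one obtains a configuration on $H$ by adding nothing (or by adding loops near $P$), and this shows $Z^\emptyset_{H \setminus E(P),n,x} \le Z^\emptyset_{H,n,x}$ up to a constant, uniformly in $P$. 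Thus the ratio in the theorem is bounded by $C \sum_{P : u \to v} x^{|E(P)|} \cdot (\text{const})^{|E(P)|}$ times $1$ — but this crude bound fails because there are exponentially many paths $P$ of each length and $x$ can be large; the geometric series need not converge.

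The repair — and this is where Theorem~\ref{thm:no-large-loops} enters — is that a path $P$ from $u$ to $v$ together with the surrounding loop configuration produces, in the combined configuration or after a small local modification, a \emph{long loop surrounding a vertex} along $P$. More precisely, one connects $u$ to the boundary (or exploits that $u,v$ are both in the interior) to convert the odd-degree pair into an even configuration containing a loop through $u$ of length at least $d_H(u,v)$; alternatively, one reflects part of the path to close it up. The probability of such a loop is at most $n^{-\alpha k}$ for $k \ge d_H(u,v)$ by Theorem~\ref{thm:no-large-loops}. Summing over $k \ge d_H(u,v)$ the geometric series $\sum_k n^{-\alpha k}$ converges and is bounded by $C n^{-\alpha d_H(u,v)}$ for $n$ large; the leftover single factor of $x$ in the statement comes from the one ``defect'' edge at $u$ (or $v$) that cannot be paired into a loop. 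I would therefore set up a weight-preserving map from pairs $(P,\omega')$ to ordinary loop configurations on $H$ (with a bounded multiplicity and a cost of one factor of $x$ and bounded factors of $n$), under which the image always contains a loop surrounding $u$ of length $\ge d_H(u,v)$, and then apply the union bound from Theorem~\ref{thm:no-large-loops}.

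The main obstacle I expect is making the map from $\LC(H,\emptyset,u,v)$ to loop configurations on $H$ precise while controlling all three bookkeeping quantities simultaneously: the change in edge count (a single added/removed edge, contributing the factor $x$), the change in loop count (bounded, contributing bounded powers of $n$ absorbed into $\alpha$), and the multiplicity of the map (bounded, or at worst polynomial in the loop length and hence negligible against $n^{-\alpha k}$). The cleanest route is probably the standard one for the hexagonal lattice: since $u$ has odd degree it has degree $1$ or $3$ in $\lambda$, and by flipping the state of a single edge incident to $u$ one can make its degree even, at the cost of $x^{\pm 1}$ and changing the number of loops by at most one; doing the same at $v$ yields a bona fide loop configuration in which the former path $P$ is now part of a loop passing through (a neighbor of) $u$, necessarily of length at least $d_H(u,v) - O(1)$. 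One must check this genuinely lands in $\LC(H,\emptyset)$ — here the hypothesis that $u,v$ are distinct and non-adjacent, together with $H$ being a domain of type $\clr$, should guarantee the edge-flips are well-defined and stay inside $E(H)$ — and that the map is at most, say, $9$-to-one (choices of which edge at $u$ and at $v$ were flipped). Granting this, the theorem follows by combining the map with Theorem~\ref{thm:no-large-loops} and summing the resulting geometric series.
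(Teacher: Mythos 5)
Your opening reduction — decomposing $\lambda \in \LC(H,\emptyset,u,v)$ by choosing a simple path $P$ from $u$ to $v$ and writing $\omega_\lambda = \lambda \setminus E(P)$ — matches the paper's setup, and you correctly identify that the naive bound $\sum_P x^{|E(P)|} Z^\emptyset_{H\setminus E(P)} / Z^\emptyset_{H}$ fails because of the exponentially many paths and the possibility $x>1$. The problem is your proposed repair: the map you describe does not exist. Flipping the state of a single edge incident to $u$ toggles the parity of $u$ \emph{and of its other endpoint}, so the set of odd-degree vertices changes by symmetric difference with an edge. To go from the odd-degree set $\{u,v\}$ to $\emptyset$ you must flip a collection of edges whose $\Z_2$-boundary equals $\{u,v\}$, i.e.\ at least a path from $u$ to $v$ — at least $d_H(u,v)$ edges, not $O(1)$. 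Your claim that ``by flipping a single edge at $u$ one can make its degree even; doing the same at $v$ yields a bona fide loop configuration'' is therefore false, and with it the whole reduction to Theorem~\ref{thm:no-large-loops} via a bounded-multiplicity, weight-preserving encoding. The hypothesis that $u,v$ are non-adjacent does not rescue this: it is precisely what makes the minimum flip count $\ge 2$, and in general it is $d_H(u,v)$.

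The paper's actual proof does not go through Theorem~\ref{thm:no-large-loops} at all. It proves a dedicated variant of the main lemma (Lemma~\ref{lem:prob-outer-circuit-loopconfig-with-path}) directly for sums over $\lambda \in \LC(H,\emptyset,u,v)$, and the decisive ingredient is the refined form of Lemma~\ref{lem:prob-outer-circuit-intermediate}, which carries the extra factor $\max\{x,1\}^{-\ell}$ tracking $|V'(\omega,\gamma)|$ — the number of \emph{isolated} vertices of $V(\omega,\gamma)$. After deleting $P$, its $\ell$ internal vertices (each of degree exactly $2$ in $\lambda$, by $3$-regularity of $\HH$) become isolated in $\omega_\lambda$, so $|V'(\omega_\lambda,\gamma)| \ge \ell$; the resulting $\max\{x,1\}^{-\ell}$ precisely cancels the $x^{\ell}$ of the deleted path and leaves the single factor $x$ appearing in the statement. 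This compensation mechanism — isolated path vertices being refilled by trivial loops in the repair map, each added edge contributing a factor of $x$ — is what handles $x>1$, and it has no counterpart in your plan. If you want to run your reduction, you would need to rediscover this bookkeeping (or an equivalent) rather than appeal to a bounded edge-flip map.
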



Our techniques provide additional information on the (infinite-volume) Gibbs measures
of the loop $O(n)$ model. We recall the standard definition: a probability measure $\Pr$ on the set of loop configurations on $\HH$ (viewed as a subset of $\{0,1\}^{E(\HH)}$) is a Gibbs measure for the loop $O(n)$ model with edge weight $x$ if for any domain $H$ and $\Pr$-almost every loop configuration $\xi$, the distribution of the configuration $\omega$, conditioned that $\omega\in\LC(H,\xi)$, is given by $\Pr^{\xi}_{H,n,x}$.

For small parameter $x$, under vacant
boundary conditions, the model is in a dilute, disordered phase,
where loops are rare and tend to be short; see
Figure~\reffig{fig:loop-sample-n=8,x=0.5}. This is relatively simple to
show and is proved in Corollary~\ref{cl:no-large-loops-for-small-x}.
A consequence of this fact is the existence of a unique limiting
Gibbs measure when exhausting the hexagonal lattice $\HH$ via
domains with vacant boundary conditions.
\begin{thm}\label{thm:Gibbs measures_small_x}
  There exists $c>0$ such that for any $n > 0$ and $0<x\le c$ satisfying $nx^6\le c$ the following holds. Let $H_k$ be an increasing sequence of domains
  satisfying $\cup_k H_k = \HH$. Then the measures $\Pr_{H_k,n,x}^\emptyset$ converge (weakly) as $k \to \infty$ to an infinite-volume Gibbs measure $\Pr_{\HH,n,x}$ which is supported on loop configurations with no infinite paths.
\end{thm}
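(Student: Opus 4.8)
The plan is to recognise that, under vacant boundary conditions, the loop $O(n)$ model is a dilute gas of polymers (the loops) and to run the standard cluster‑expansion machinery; the hypotheses $x\le c$ and $nx^6\le c$ are precisely what makes the polymer activities small.

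First, the polymer representation. For a domain $H$, a configuration $\omega\in\LC(H,\emptyset)$ is nothing but a finite family of pairwise vertex‑disjoint loops, each using only edges of $H$, and
\[
x^{o_H(\omega)}n^{L_H(\omega)}=\prod_{\ell}nx^{|\ell|},
\]
the product ranging over the loops $\ell$ of $\omega$, with $|\ell|$ the length of $\ell$. Thus $Z_{H,n,x}^\emptyset$ is exactly a polymer partition function: polymers are the loops contained in $H$, the activity of $\ell$ is $z(\ell):=nx^{|\ell|}$, and two polymers are compatible when vertex‑disjoint. Since the number of loops of length $k$ through a fixed vertex of $\HH$ is at most $C_0^{\,k}$ for an absolute constant $C_0$, we get, for every vertex $v$ and a small fixed $\kappa>0$ to be chosen,
\[
\sum_{\ell\ni v}z(\ell)\,e^{\kappa|\ell|}\ \le\ n\sum_{k\ge 6}(C_0\,x\,e^{\kappa})^{k}\ =\ O(nx^6),
\]
provided $x\le c$ (so that $C_0xe^{\kappa}<1$).

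Next, the cluster expansion. With the activity bound above I would verify the Koteck\'y--Preiss criterion for the weight $a(\ell):=\kappa|\ell|$: for every loop $\ell$,
\[
\sum_{\ell'\,:\,V(\ell')\cap V(\ell)\neq\emptyset}z(\ell')\,e^{a(\ell')}\ \le\ \sum_{v\in V(\ell)}\ \sum_{\ell'\ni v}z(\ell')\,e^{\kappa|\ell'|}\ \le\ |\ell|\cdot O(nx^6)\ \le\ \kappa|\ell|=a(\ell),
\]
the last step holding once $nx^6\le c$ with $c$ small enough. This yields an absolutely convergent expansion $\log Z_{G,n,x}^\emptyset=\sum_{\mathcal C}\Phi(\mathcal C)$ over clusters $\mathcal C$ of loops, valid for every induced subgraph $G\subseteq\HH$, with the usual uniform bound $\sum_{\mathcal C\ \text{using a loop through }v}|\Phi(\mathcal C)|=O(1)$. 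In particular, for any finite vertex set $S$,
\[
\frac{Z_{H,n,x}^\emptyset}{Z_{H\setminus S,n,x}^\emptyset}=\exp\Big(\sum_{\mathcal C\ \text{in }H:\ \mathcal C\ \text{uses a loop meeting }S}\Phi(\mathcal C)\Big)\in\big[1,\ e^{O(|S|)}\big]
\]
uniformly in $H$, and its exponent converges as $H\uparrow\HH$ (the summands are absolutely summable over all clusters of $\HH$, and the relevant clusters of $H_k$ increase to those of $\HH$). From the polymer formula $\Pr_{H_k,n,x}^\emptyset(\ell_1,\dots,\ell_m\subseteq\omega)=\big(\prod_i z(\ell_i)\big)\,Z_{H_k\setminus\bigcup_i V(\ell_i),n,x}^\emptyset/Z_{H_k,n,x}^\emptyset$ for pairwise disjoint loops, together with inclusion--exclusion, every cylinder probability of $\Pr_{H_k,n,x}^\emptyset$ is a (possibly infinite) series of such ratios times products of activities; it converges absolutely, uniformly in $k$, by the displayed bound and $\sum_{\ell\ni v}z(\ell)<\infty$. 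Hence each cylinder probability converges as $k\to\infty$, with a limit independent of the exhaustion, which is the asserted weak convergence $\Pr_{H_k,n,x}^\emptyset\Rightarrow\Pr_{\HH,n,x}$. For the support statement, note the trivial bound $\Pr_{H_k,n,x}^\emptyset(\ell\subseteq\omega)=z(\ell)\,Z_{H_k\setminus V(\ell),n,x}^\emptyset/Z_{H_k,n,x}^\emptyset\le nx^{|\ell|}$; since in finite volume a simple $\omega$‑path from $v$ with $\ge m$ edges lies on a loop of length $\ge m$,
\[
\Pr_{\HH,n,x}\big(\text{some simple }\omega\text{-path from }v\text{ has}\ge m\text{ edges}\big)\ \le\ \lim_{k}\sum_{\ell\ni v,\ |\ell|\ge m}nx^{|\ell|}\ =\ O\!\big(n(C_0x)^{m}\big)\ \xrightarrow[m\to\infty]{}\ 0,
\]
so a union bound over $v$ rules out bi‑infinite paths $\Pr_{\HH,n,x}$‑almost surely (this is also the content of Corollary~\ref{cl:no-large-loops-for-small-x}).

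The one point that is not entirely routine is showing that the weak limit is a Gibbs measure in the sense of Definition~\ref{def:loop-model}: the finite‑volume specification $\xi\mapsto\Pr^\xi_{H',n,x}$ is \emph{not} quasilocal, because the loop count $L_{H'}$ depends on how boundary strands are joined far away inside $\xi$, so one must invoke the exponential decay of loop lengths (uniform in the domain) to argue that this long‑range dependence is negligible and that the defining conditional distributions pass to the limit; equivalently, one may verify these relations directly on the explicit cluster‑expanded correlation functions. The remaining steps — calibrating the Koteck\'y--Preiss constants so that the criterion holds exactly in the regime $x\le c$, $nx^6\le c$, and controlling the infinite series in the previous step by dominated convergence — are standard.
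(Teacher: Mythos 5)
Your route is genuinely different from the paper's and, for the convergence and support statements, it is sound. The paper does not use a cluster expansion at all: it proves the elementary bound $\Pr^\xi_{H,n,x}(A\subset\omega)\le n^{L_H(A)}x^{o_H(A)}$ (Lemma~\ref{lem:given-loops-are-unlikely}, the same bound you use in the form $\Pr(\ell\subseteq\omega)\le nx^{|\ell|}$), then couples two independent samples $\omega\sim\Pr^\emptyset_{H,n,x}$, $\omega'\sim\Pr^\emptyset_{H',n,x}$, shows via Lemma~\ref{lem:empty space} that the cluster of a fixed finite set in $\omega\cup\omega'$ is small, extracts from its complement an outer circuit that is vacant in \emph{both} configurations, and concludes by Lemma~\ref{lem:marginal-distribution-given-vacant-circuit} (domain Markov property inside the outermost doubly-vacant circuit) that the two marginals on the fixed set are close in total variation, uniformly in $H,H'$. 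Your polymer/Kotecký--Preiss argument replaces this coupling: the verification of the criterion with activity $z(\ell)=nx^{|\ell|}$ and weight $\kappa|\ell|$ is exactly calibrated to the regime $x\le c$, $nx^6\le c$, and it buys you more than the paper's argument (uniform exponential decay of truncated correlations, analyticity in $x$), at the cost of the inclusion--exclusion bookkeeping needed to pass from loop correlation functions to general cylinder probabilities; the paper's argument is more elementary and self-contained.

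The one step you leave genuinely unfinished is the verification that the limit is a Gibbs measure, and your proposed fix (``invoke exponential decay of loop lengths to argue the long-range dependence is negligible'') is not yet an argument: the specification fails quasilocality precisely because $L_{H'}(\omega)$ depends on how strands of $\xi$ reconnect arbitrarily far away, and decay of loop lengths in the \emph{finite-volume} measures does not by itself control conditional distributions of the limit. The clean way to close this, and the one the paper uses (Lemma~\ref{lem:limit-is-gibbs}), is to exploit the support statement you have already proved: since $\Pr_{\HH,n,x}$-almost every $\tau$ has no infinite paths, almost every $\tau$ admits a circuit vacant in $\tau$ surrounding any fixed domain $H$ (take the smallest domain containing $V(H)$ and all components of $\tau$ meeting it, via Fact~\ref{fact:circuit-domain-bijection}); for such $\tau$ the domain Markov property gives $\Pr^{\xi_k}_{H_k,n,x}(A\mid\mathcal E^\tau_m)=\Pr^\tau_{H,n,x}(A)$ exactly, for $m$ large and $k\ge m$, and L\'evy's zero--one law then yields the Gibbs property. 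With that step supplied (or with a direct verification of the DLR relations on your cluster-expanded correlation functions, which would require a comparable argument anyway), your proof is complete.
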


It follows that the limiting measure $\Pr_{\HH,n,x}$ does not depend
on the specific choice of exhausting sequence $(H_k)$ as one may
interleave two such sequences to obtain another convergent sequence.
Consequently, it also follows that $\Pr_{\HH,n,x}$ is invariant
under automorphisms of $\HH$. Our proofs apply also when one allows
$H_k$ to be arbitrary finite subgraphs of $\HH$ rather than domains,
but we do not state this explicitly as our work is mostly concerned
with domains. The restriction to vacant boundary conditions is,
however, essential for our proofs with the difficulty stemming from
the fact that non-vacant boundary conditions may force the existence
of long paths in the configuration (see
Figure~\reffig{fig:domain-unique-min-edge-loop-config}). Still, it may
be that there is a unique Gibbs measure in this regime of small $x$
and we provide a discussion of this in
Section~\ref{sec:discussion_open_questions}.

For large parameter $x$ and large $n$, the situation changes
dramatically. Here, we obtain that the model is in a dense, ordered
phase, where, under the $\ground^\clr$ boundary conditions, a typical
configuration is a perturbation of that ground state.
As a consequence of this structure, the model has at least three different limiting Gibbs measures in this regime of $n$ and $x$. We state this precisely in the following theorem.
To lighten the notation, we write $\Pr_{H,n,x}^\clr$ for
the loop $O(n)$ measure on $H$ with boundary conditions
$\ground^\clr$.

\begin{thm}\label{thm:Gibbs measures_large_x}
  There exists $C>0$ such that for any $n\ge C$ and any $x\in(0,\infty]$ satisfying $nx^6 \ge C$ the following holds. Let $H_k$ be an increasing sequence of domains satisfying $\cup_k H_k = \HH$. Then, for every $\clr\in\{0,1,2\}$, the measures
$\Pr_{H_k,n,x}^{\clr}$ converge (weakly) as $k \to \infty$ to an
infinite-volume Gibbs measure $\Pr_{\HH,n,x}^\clr$ which is supported on loop configurations with no infinite paths.
Furthermore, no one of the limiting measures is a convex combination of the other two.
\end{thm}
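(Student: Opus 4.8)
The plan is to first establish convergence of the sequence $\Pr_{H_k,n,x}^{\clr}$ and then separately establish that the three limiting measures are genuinely distinct in the strong sense stated. For the convergence, I would follow the standard recipe: since loop configurations live in the compact space $\{0,1\}^{E(\HH)}$, the sequence $(\Pr_{H_k,n,x}^{\clr})$ is tight, so it suffices to show that for every fixed domain $G$ the restriction $\Pr_{H_k,n,x}^{\clr}\!\restriction_{E(G)}$ converges as $k\to\infty$. Fix $G$ and take $k$ large enough that $G\subset H_k$. The key input is the fine structure result (Theorem~\ref{thm:small-deviations-from-the-ground-state-x-model}, invoked above): under $\ground^\clr$ boundary conditions and in the regime $n,nx^6\ge C$, a typical configuration is a small perturbation of $\ground^\clr$, in the sense that with high probability there is a large "sea" of hexagons of color $\clr$ carrying trivial loops, separating $G$ from the boundary of $H_k$. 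I would use a domain-Markov / Peierls-type argument: the event that $G$ is surrounded by a circuit of trivial loops of color $\clr$ (lying strictly inside $H_k$) has probability tending to $1$ as $k\to\infty$, uniformly in the further configuration outside, and on this event the law of $\omega\restriction_{E(G)}$ depends only on the innermost such circuit, not on $H_k$. Comparing two indices $k<\ell$ and conditioning on the (random) innermost surrounding circuit of trivial loops common to both, one sees that $\Pr_{H_k,n,x}^{\clr}\!\restriction_{E(G)}$ and $\Pr_{H_\ell,n,x}^{\clr}\!\restriction_{E(G)}$ differ by at most the probability that no such circuit exists, which is $o(1)$. This gives a Cauchy sequence of (finitely supported) measures, hence a limit $\Pr_{\HH,n,x}^\clr$. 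That this limit is a Gibbs measure follows from passing to the limit in the DLR equations, which are stable under weak limits along an exhausting sequence. Supported on configurations with no infinite paths: by Theorem~\ref{thm:no-large-loops} (applied after noting that under $\ground^\clr$ boundary conditions long loops around a vertex are exponentially unlikely — one needs the analogue of Theorem~\ref{thm:no-large-loops} for $\ground^\clr$ boundary conditions, which is part of the ground-state-perturbation package) the probability that a fixed vertex $u$ lies on or inside a loop of length $\ge k$ is summable in $k$, so by Borel--Cantelli a.s.\ every vertex is surrounded by loops of bounded length; in particular no infinite path passes through $u$.

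For the separation statement, the natural observable is a local function that "detects the phase." Concretely, fix a hexagon $h\in\T^\clr$ and let $A_h^\clr$ be the event that the six edges bordering $h$ form a trivial loop in $\omega$; more robustly, let $f^\clr$ be the indicator that a fixed vertex $u$ is surrounded by the trivial loop around a neighbouring hexagon of color $\clr$. From Theorem~\ref{thm:small-deviations-from-the-ground-state-x-model} one extracts, for $n,nx^6\ge C$ with $C$ enlarged if necessary, a quantitative lower bound $\Pr_{H_k,n,x}^{\clr}(f^\clr=1)\ge 1-\varepsilon(n)$ with $\varepsilon(n)\to 0$, valid uniformly in $k$; passing to the limit, $\Pr_{\HH,n,x}^\clr(f^\clr=1)\ge 1-\varepsilon(n)$. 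On the other hand I claim $\Pr_{\HH,n,x}^{\clr'}(f^\clr=1)\le\varepsilon'(n)$ is small for $\clr'\ne\clr$: under $\ground^{\clr'}$ boundary conditions the typical configuration looks like $\ground^{\clr'}$ near $u$, and for $u$ to be surrounded by a trivial loop of color $\clr$ one must pay for a local defect of the ground state of color $\clr'$; the energy-entropy estimates behind Theorem~\ref{thm:small-deviations-from-the-ground-state-x-model} bound the probability of any such fixed local deviation by a small power of $n$ (this uses that $u$ has only finitely many hexagons of color $\clr$ nearby, so the relevant bad event involves a deviation of bounded size, and each is exponentially unlikely). Choosing $C$ large so that $\varepsilon(n)+\max_{\clr'\ne\clr}\varepsilon'(n)<\tfrac12$, we get $\Pr_{\HH,n,x}^\clr(f^\clr=1)>\tfrac12>\Pr_{\HH,n,x}^{\clr'}(f^\clr=1)$ for both $\clr'\ne\clr$. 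If $\Pr_{\HH,n,x}^\clr$ were a convex combination $t\Pr_{\HH,n,x}^{\clr_1}+(1-t)\Pr_{\HH,n,x}^{\clr_2}$ of the other two, then evaluating the observable $f^{\clr}$ (with $\clr\notin\{\clr_1,\clr_2\}$ since all three colors are distinct) on both sides would give $\tfrac12 < \Pr_{\HH,n,x}^\clr(f^\clr=1) = t\Pr_{\HH,n,x}^{\clr_1}(f^\clr=1)+(1-t)\Pr_{\HH,n,x}^{\clr_2}(f^\clr=1)\le\tfrac12$, a contradiction. (One should phrase this using a single observable per pair: to rule out that $\Pr^{\clr}$ is a convex combination of $\Pr^{\clr_1},\Pr^{\clr_2}$ with $\{\clr,\clr_1,\clr_2\}=\{0,1,2\}$, the observable $f^\clr$ does the job since it is $>\tfrac12$ under $\Pr^\clr$ and $<\tfrac12$ under the other two; the statement "no one of the limiting measures is a convex combination of the other two" is exactly the assertion for each of the three choices of distinguished color.)

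I expect the main obstacle to be obtaining the uniform-in-$k$ quantitative statements from Theorem~\ref{thm:small-deviations-from-the-ground-state-x-model} in the precise form needed, in two respects. First, for the convergence I need that the probability of a surrounding circuit of trivial loops of the correct color grows to $1$ at a rate independent of $H_k$ — this is morally a consequence of the exponential decay of the perturbation, but one must be careful that the relevant "bad" connected structures separating $G$ from $\partial H_k$ are controlled by a convergent Peierls sum that does not see the size of $H_k$. Second, and more delicate, is the bound $\Pr_{\HH,n,x}^{\clr'}(f^\clr=1)\le\varepsilon'(n)$: the value $x=\infty$ and the full range $nx^6\ge C$ (which includes small $x$ with huge $n$, as well as moderate $x$) must all be covered by the same perturbative estimate, so I would rely on the fact — presumably built into Theorem~\ref{thm:small-deviations-from-the-ground-state-x-model} — that the relevant deviation probabilities are controlled by a single parameter like $\min(x, (nx^6)^{-1}, \ldots)$ or, more likely, simply by a negative power of $n$ that is uniform over $x\in(0,\infty]$ with $nx^6\ge C$. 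Granting that input, the remaining arguments are routine: tightness, the DLR limit, Borel--Cantelli for the absence of infinite paths, and the elementary linear-algebra argument with the observable $f^\clr$ to conclude the extremality-type statement.
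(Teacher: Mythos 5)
Your overall plan matches the paper's: prove convergence of marginals by showing that, with high probability, a vacant circuit of the right color separates a fixed window from the boundary (so the inside law is domain-independent), then deduce the Gibbs property, absence of infinite paths, and distinctness of the three limits. However there is a concrete flaw in the convergence step. You propose to condition on the \emph{innermost} surrounding circuit of trivial loops; this does not work, because the event ``the innermost such circuit equals $\gamma$'' requires that no smaller circuit surrounding $G$ is also vacant, and that is an \emph{interior} event, so the domain Markov property no longer determines the conditional law inside $\gamma$. The paper resolves this by conditioning on the \emph{outermost} doubly-vacant circuit in a coupling of two independent samples (Lemma~\ref{lem:marginal-distribution-given-vacant-circuit}); the event $\{\Gamma=\gamma\}$ is then measurable with respect to the two configurations restricted to $\ExtEdge\gamma\cup\gamma^*$, and the conditional law on $\Int\gamma$ is $\Pr^{\emptyset}_{\Int\gamma,n,x}$ for both. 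Beyond the innermost/outermost issue, the Peierls estimate you invoke (``the bad structures separating $G$ from $\partial H_k$ are controlled by a convergent sum uniform in $H_k$'') is exactly the content of Lemma~\ref{lem:empty spaceA}, which is phrased in terms of the breakup $\breakup(\omega)$ and $\HH^\times$-connectedness and is more delicate than your sketch suggests: one needs to show that a large surrounding circuit exists that is vacant in \emph{both} $\omega$ and $\omega'$ and lies in $\T\setminus\T^{\clr}$, which requires the geometric claim~\eqref{eq:S_T_boundaries_claim} about unions of breakups and the outer-circuit construction.

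On the separation statement, your worry that the bound $\Pr^{\clr'}_{\HH,n,x}(f^\clr=1)\le\varepsilon'(n)$ is ``more delicate'' and requires a fresh energy-entropy estimate under $\clr'$-boundary conditions is unfounded: the paper obtains it essentially for free. Once you know $\Pr^{\clr'}_{\HH,n,x}(z'\text{ is surrounded by a trivial loop})>1/2$ for every $z'\in\T^{\clr'}$ (which follows from Theorem~\ref{thm:small-deviations-from-the-ground-state-x-model} and the fact that $\Pr^1,\Pr^2$ are shifts of $\Pr^0$), then for a hexagon $z\in\T^\clr$ adjacent to such a $z'$, adjacent hexagons cannot both carry trivial loops, hence $\Pr^{\clr'}_{\HH,n,x}(z\text{ trivial})\le 1-\Pr^{\clr'}_{\HH,n,x}(z'\text{ trivial})<1/2$. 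No new perturbative estimate under the ``wrong'' boundary conditions is needed, and the convex-combination contradiction is then exactly the one you wrote down. So your separation argument is correct in outline, but it overcomplicates the step you flag as the main obstacle.
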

Similarly to before, it follows that, for each $\clr\in\{0,1,2\}$,
the limiting measure $\Pr_{\HH,n,x}^\clr$ does not depend on the
specific choice of exhausting sequence $(H_k)$ and that
$\Pr_{\HH,n,x}^\clr$ is invariant under automorphisms preserving the
set $\T^\clr$. However, as these measures are distinct for different
$\clr$, they are not invariant under {\em all} automorphisms. In
particular, if each $H_k$ is of type $\clr$, by
\eqref{eq:domain_of_type}, we have that $\Pr_{H_k,n,x}^{\emptyset}$
also converges to $\Pr_{\HH,n,x}^\clr$, in contrast to the behavior
obtained in Theorem~\ref{thm:Gibbs measures_small_x} for small $x$.
It would be interesting to determine whether every
infinite-volume Gibbs measure is a convex combination of these three
measures, i.e., whether these are the only \emph{extremal} Gibbs measures (see also Section~\ref{sec:discussion_open_questions}). As we remark at the end of the section, this is not the case for $x=\infty$.

As mentioned above, in the ordered regime (large $x$ and $n$), a
typical configuration drawn from $\Pr_{H,n,x}^\clr$ is a
perturbation of the $\clr$-phase ground state $\ground^\clr$ (see
Figure~\reffig{fig:loop-sample-n=8,x=2}). This is made precise in
the following theorem, which we state for the $\clr=0$ phase for
concreteness of our definitions. In order to measure how close
$\ground^0$ and a typical loop configuration are, we introduce the
notion of a \emph{breakup}. Fix a domain $H$ and let $\omega \in
\LC(H,\ground^0)$ be a loop configuration. Let $A(\omega)$ be the
set of vertices of $\HH$ belonging to trivial loops surrounding
hexagons in $\T^0$ and let $B(\omega)$ be the unique infinite
connected component of $A(\omega)$. For $u \in \HH$, define the
breakup $\breakup(\omega,u)$ of $u$ to be the connected component of
$\HH \setminus B(\omega)$ containing $u$, setting
$\breakup(\omega,u)=\emptyset$ if $u \in B(\omega)$. We also define
$\partial \breakup(\omega,u)$ to be the internal vertex boundary of
$\breakup(\omega,u)$, i.e., the set of vertices in
$\breakup(\omega,u)$ adjacent to a vertex not in
$\breakup(\omega,u)$ (thus in $B(\omega)$). We remark that
$\breakup(\omega,u)$ need not be contained in $H$, though it cannot
extend significantly beyond it in the sense that it is contained in
any domain of type $0$ containing $H$.

\begin{thm}\label{thm:small-deviations-from-the-ground-state-x-model}
    There exists $c>0$ such that for any $n>0$, any $x \in (0,\infty]$, any domain $H$, any $u\in V(H)$ and any positive integer $k$, we have
    \[
    \Pr_{H,n,x}^0(|\partial \breakup(\omega,u)| \ge k) \le (c n \cdot \min\{x^6,1\})^{-k/15} .
    \]
\end{thm}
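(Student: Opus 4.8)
The plan is to use a Peierls-type argument based on a contour representation of the deviation set $\partial\breakup(\omega,u)$. Fix a domain $H$ of type $0$ (the general case follows since $\breakup(\omega,u)$ is contained in any type-$0$ domain containing $H$, and one may enlarge $H$ at no cost) and a loop configuration $\omega\in\LC(H,\ground^0)$. The set $B(\omega)$ of vertices on ``good'' trivial loops around $\T^0$-hexagons forms the infinite ``sea'', and $\breakup(\omega,u)$ is the finite island around $u$; its internal boundary $\partial\breakup(\omega,u)$ consists of vertices that are on good trivial loops but adjacent to the non-good region. The first step is to encode the event $\{|\partial\breakup(\omega,u)|\ge k\}$ by a combinatorial object — roughly, a connected ``interface'' $\Gamma$ in $\HH$ (or in an auxiliary graph obtained from $\T$) separating $u$ from infinity, built from the edges/vertices bordering $\breakup(\omega,u)$, together with the promise that $\Gamma$ has at least $\sim k$ vertices and that near $\Gamma$ the configuration $\omega$ fails to look like $\ground^0$.

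The core estimate is an energy-versus-entropy bound on the probability that a given admissible interface $\Gamma$ occurs. For this I would construct a \emph{repair map} $T=T_\Gamma$ on loop configurations that, given $\omega$ agreeing with $\ground^0$ outside $H$ and having $\breakup(\omega,u)$ with boundary $\Gamma$, produces a new valid loop configuration $T(\omega)\in\LC(H,\ground^0)$ in which the region enclosed by $\Gamma$ has been ``filled in'' by the ground state $\ground^0$ (so that $\Gamma$ is no longer a boundary of a breakup). One must check: (i) $T$ changes only edges in a bounded neighborhood of the interior of $\Gamma$; (ii) $T$ increases the number of edges — because $\ground^0$ restricted to a region is optimally packed, so $o_H(T(\omega))\ge o_H(\omega)$, which for $x\le 1$ costs at most a factor $x^{(\text{something})\cdot|\Gamma|}$ and for $x\ge 1$ costs nothing, uniformly in $x\in(0,\infty]$ via $\min\{x^6,1\}$; (iii) $T$ increases the number of loops by at least a constant fraction of $|\Gamma|$ — each unit of ``defect'' along $\Gamma$ can be charged to the creation of at least one new trivial loop when we repair, so $n^{L_H(T(\omega))}\ge n^{c|\Gamma|}n^{L_H(\omega)}$; and (iv) $T$ is at most boundedly-many-to-one on its image, which is the standard price one pays for not having exact injectivity (the preimages differ only inside the repaired region, which has size $O(|\Gamma|)$, so the multiplicity is at most $C^{|\Gamma|}$). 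Combining (ii)--(iv) gives
\[
\Pr_{H,n,x}^0(\Gamma \text{ is a breakup boundary}) \le \bigl(c' n\cdot\min\{x^6,1\}\bigr)^{-|\Gamma|/c''}
\]
for absolute constants; the exponent $1/15$ is then just the outcome of optimizing these constants (the $6$ comes from the six edges of a trivial loop and the extra slack from the geometry of how much area a unit of boundary can control).

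The final step is a union bound over interfaces. The number of admissible connected interfaces $\Gamma$ of a given size $m$ that separate the fixed vertex $u$ from infinity is at most $C_0^{m}$ (self-avoiding-polygon / animal counting on a lattice of bounded degree), and on the event $|\partial\breakup(\omega,u)|\ge k$ one gets such a $\Gamma$ with $m\ge c_0 k$ for some absolute $c_0>0$ (since $|\Gamma|$ is comparable to $|\partial\breakup(\omega,u)|$ up to a fixed lattice factor). Hence
\[
\Pr_{H,n,x}^0\bigl(|\partial\breakup(\omega,u)|\ge k\bigr)\ \le\ \sum_{m\ge c_0 k} C_0^{m}\bigl(c' n\cdot\min\{x^6,1\}\bigr)^{-m/c''},
\]
which is a geometric series summing to at most $(cn\cdot\min\{x^6,1\})^{-k/15}$ once $n\cdot\min\{x^6,1\}$ is large enough to beat $C_0^{c''}$; after absorbing all constants into $c$ this is exactly the claimed bound, and there is no lower bound hypothesis on $n$ or $nx^6$ because for small $n\cdot\min\{x^6,1\}$ the right-hand side exceeds $1$ and the statement is vacuous.

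The main obstacle I anticipate is constructing the repair map $T$ so that it simultaneously satisfies (i)--(iv) \emph{uniformly in $x$, including $x=\infty$}: the map must be purely local, must genuinely gain a linear-in-$|\Gamma|$ number of loops (not merely not lose any), and — the subtlest point — must have controlled multiplicity. Getting the loop gain right requires a careful local combinatorial analysis of how a boundary vertex of a breakup sits relative to the three-coloring of $\T$, i.e., classifying the possible local pictures of $\omega$ along $\partial\breakup(\omega,u)$ and checking that in each the ground-state filling creates fresh short loops; handling the $x=\infty$ (optimally-packed) case means the repair must preserve optimal packing, so the filling has to be chosen among fully-packed local patterns. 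This is precisely the kind of ``surgery'' lemma that is routine in principle but where all the real work of the paper lives.
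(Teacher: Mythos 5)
Your final step (a union bound over contours surrounding $u$, with geometric counting beaten by the per-contour estimate, and the observation that the statement is vacuous unless $n\cdot\min\{x^6,1\}$ is large) is exactly how the paper deduces this theorem from its main lemma: the breakup $\breakup(\omega,u)$ is a domain, hence equals $\IntVert{\gamma}$ for a vacant circuit $\gamma\subset\T\setminus\T^0$ satisfying $\partial\breakup(\omega,u)\subset V(\omega,\gamma)$, and one sums Lemma~\ref{lem:prob-outer-circuit} over such circuits. The genuine gap is in your per-contour estimate, specifically in the combination of your claims (iii) and (iv) for the repair map that ``fills in the region enclosed by $\Gamma$ with $\ground^0$''. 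If $T_\Gamma$ overwrites the whole interior of $\Gamma$, its multiplicity is not $C^{|\Gamma|}$: the preimages of a given image are all configurations inside $\Int{\Gamma}$ compatible with $\Gamma$ being the breakup boundary, and their number is exponential in the \emph{area} enclosed by $\Gamma$, not in $|\Gamma|$. Concretely, take $\omega$ whose interior is a large patch resembling $\ground^1$ (trivial loops around $\T^1$-hexagons) separated from the external $\ground^0$-sea by a thin defect annulus along $\Gamma$: the patch is as densely packed with loops as $\ground^0$, so filling gains only $n^{O(|\Gamma|)}$ in probability (from the $O(|\Gamma|)$ loops missing in the annulus), while perturbing the configuration deep inside the patch produces exponentially-in-area many distinct preimages with the same image; Lemma~\ref{lem:prob-inequality-tool} then gives a ratio $q/p$ that is exponentially large in the area, and the bound is vacuous. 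If instead you modify only an $O(1)$-width annulus around $\Gamma$, the map is no longer obviously a valid loop configuration of higher weight, since $\ground^0$-loops cannot be glued directly against an interior resembling $\ground^1$ or $\ground^2$ (adjacent trivial loops share an edge), so this variant also does not deliver (ii)--(iv) as stated.

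This is precisely the obstruction the paper's repair map is built to overcome, and it is not a routine surgery: rather than erasing the interior, the map identifies the maximal regions resembling each ground state (the $\clr$-clusters), \emph{shifts} the $1$-clusters down and the $2$-clusters up into the $0$-phase, keeps the $0$-clusters, and overwrites by $\ground^0$ only the remaining bad edges. Because the clusters are carried along up to a known shift, the image together with the deviation set $V=V(\omega,\gamma)$ determines $\omega$ outside $E(V)$, so the multiplicity is at most $2^{3|V|/2}$, and the gain in loops is at least $|V|/15$ (Proposition~\ref{cor:shift-increases-weight-of-configuration}, via Lemma~\ref{cl:shift-general-measure-formula}); both are controlled by the size of the deviation set rather than by the enclosed area, which is what closes the Peierls argument. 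Your proposal correctly locates where the work lies, but the map you describe cannot satisfy your claims (iii) and (iv) simultaneously, so the central estimate is missing.
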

One should note that the above theorem contains the implicit
assumption that $n \ge C$ and $n x^6 \ge C$, as otherwise the
statement is trivial.

In this work, we mainly study the loop $O(n)$ model with either vacant
or ground state boundary conditions. To obtain a complete picture
regarding the possible Gibbs measures, one must also study the model
for general boundary conditions. As mentioned above, understanding
the Gibbs measures in each regime of $n$ and $x$, and in particular,
determining the number of extremal Gibbs measures, is an interesting
problem. Theorem~\ref{thm:Gibbs measures_small_x} and Theorem~\ref{thm:Gibbs measures_large_x} bring us closer to this
goal, providing a partial answer in the regimes $nx^6 \le c$ and
$nx^6 \ge C$, for large $n$. In this regard, one may ask what
happens in the intermediate regime, i.e., when $c<nx^6<C$ and $n$ is
large. For instance, one may ask whether or not there is a single
transition curve, perhaps of the form $nx^6=c'$. If indeed this is the
case, it would be interesting to investigate the number of extremal
Gibbs measures on this curve, determining whether there is a unique
such Gibbs measure (as Theorem~\ref{thm:Gibbs measures_small_x} suggests for $nx^6 \le c$),
$3$ such measures (as Theorem~\ref{thm:Gibbs measures_large_x} suggests for $nx^6 \ge C$), $4$ such measures, or perhaps a different quantity (see also Section~\ref{sec:discussion_open_questions}).

\begin{remark}
  For $x=0$ and $x=\infty$, many other Gibbs measures can be constructed. For instance, for positive integers $a$ and $b$, let $H_{a,b}$ be the ``rectangle'' of width $2a+1$ and height $b$ (measured in hexagons) with the origin at the center, as in Figure~\reffig{fig:domain-unique-fully-packed-loop-config} (on the left). It is not hard to check that the configuration depicted in the figure is the unique fully-packed loop configuration (with vacant boundary conditions) inside $H_{a,b}$. Thus, the probability measure $\Pr_{H_{a,b},n,\infty}^\emptyset$ is supported on a single configuration. The measures $\Pr_{H_{a,b},n,\infty}^\emptyset$ converge (as $a,b \to \infty$) to a delta measure on the configuration of infinite vertical paths covering the entire lattice (which is a Gibbs measure of the loop $O(n)$ model with edge weight $\infty$). By considering different domains, one may construct many more examples of this nature (once again, see Figure~\reffig{fig:domain-unique-fully-packed-loop-config}).
  One may also look at the limiting model as $x$ tends to $0$, which corresponds to requiring the configuration to have the minimal number of edges. For the vacant boundary conditions, the finite-volume measure is a Dirac measure on the empty configuration. Using alternative boundary conditions, one may construct several distinct Gibbs measures (see, e.g., Figure~\reffig{fig:domain-unique-min-edge-loop-config}).
\end{remark}

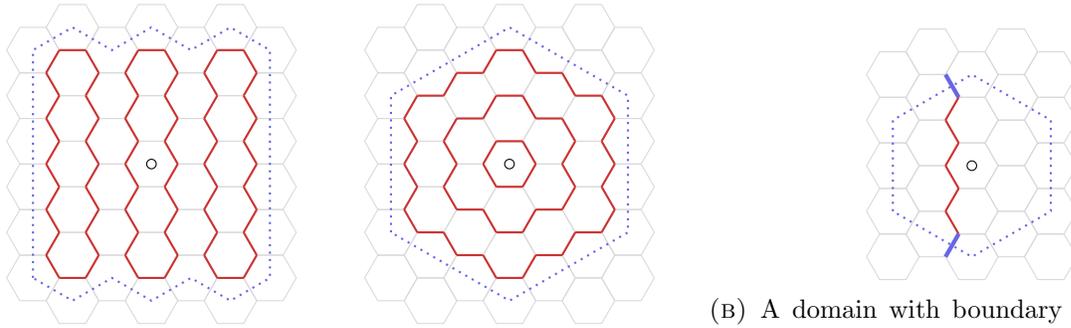
\begin{figure}
    \centering
    \begin{subfigure}{0.53\textwidth}

   \begin{tabular}{lll}

    \begin{tikzpicture}[scale=0.35, every node/.style={scale=0.35}]
        \hexagonGrid[6][5][0];
        \hexagon[1][-1][1];\hexagon[3][-2][1];\hexagon[5][-3][1];
        \hexagonEdges[edge-on][1][-1][0/1/3,0/1/2,0/1/-2,0/1/1,0/1/-1,0/2/2,0/2/-2,0/2/1,0/2/-1,0/3/2,0/3/-2,0/3/1,0/3/-1,0/4/2,0/4/-2,0/4/1,0/4/-1,0/5/2,0/5/-2,0/5/1,0/5/-1,0/5/0];
        \hexagonEdges[edge-on][3][-2][0/1/3,0/1/2,0/1/-2,0/1/1,0/1/-1,0/2/2,0/2/-2,0/2/1,0/2/-1,0/3/2,0/3/-2,0/3/1,0/3/-1,0/4/2,0/4/-2,0/4/1,0/4/-1,0/5/2,0/5/-2,0/5/1,0/5/-1,0/5/0];
        \hexagonEdges[edge-on][5][-3][0/1/3,0/1/2,0/1/-2,0/1/1,0/1/-1,0/2/2,0/2/-2,0/2/1,0/2/-1,0/3/2,0/3/-2,0/3/1,0/3/-1,0/4/2,0/4/-2,0/4/1,0/4/-1,0/5/2,0/5/-2,0/5/1,0/5/-1,0/5/0];
         \begin{scope}[yshift=-2*0.866cm, xscale=1.5, yscale=0.866]
            \node [node2] at (3,7) {};
            \draw [domain-path] {(0,2)--(0,12)--(1,13)--(2,12)--(3,13)--(4,12)--(5,13)--(6,12)--(6,2)--(5,1)--(4,2)--(3,1)--(2,2)--(1,1)--cycle };
        \end{scope}
    \end{tikzpicture}

    & \,\,\, &

    \begin{tikzpicture}[scale=0.35, every node/.style={scale=0.35}]
        \hexagonGrid[6][5][0];
        \hexagon[1][-1][1];\hexagon[3][-2][1];\hexagon[5][-3][1];
        \hexagonEdgesFullRow[edge-on][-3][1][5/0,5/1,6/0,7/0,7/5];
        \hexagonEdgesFullRow[edge-on][-3][2][3/1,4/0,4/1,5/0,5/1,6/0,7/0,7/5,8/0,8/5,9/5];
        \hexagonEdgesFullRow[edge-on][-3][3][3/1,4/5,4/1,5/5,5/1,6/0,7/5,7/1,8/5,8/1,9/5];
        \hexagonEdgesFullRow[edge-on][-3][4][3/1,4/5,4/1,5/0,5/5,6/0,6/5,6/1,7/0,7/1,8/5,8/1,9/5];
        \hexagonEdgesFullRow[edge-on][-3][5][4/0,4/5,5/0,5/5,6/0,6/5,6/1,7/0,7/1,8/0,8/1];
        \hexagonEdgesFullRow[edge-on][-3][6][6/0,6/5,6/1];

        \begin{scope}[xscale=1.5, yscale=0.866]
            \node [node2] at (3,5) {};
            \draw [domain-path] {(0,2) -- (0,8) -- (3,11) -- (6,8) -- (6,2) -- (3,-1) -- cycle };
        \end{scope}
    \end{tikzpicture}

    \end{tabular}

    \caption{Domains for which there exists a single fully-packed loop configuration (with vacant boundary conditions). Using such domains, one may obtain many weak limits of the probability measures $\Pr_{H,n,\infty}^\emptyset$.}
    \label{fig:domain-unique-fully-packed-loop-config}

    \end{subfigure}%
    \begin{subfigure}{20pt}
        \quad
    \end{subfigure}%
    \begin{subfigure}{0.42\textwidth}
        \centering
    \begin{tikzpicture}[scale=0.35, every node/.style={scale=0.35}]
        \hexagonGrid[4][4][0];
        \hexagonEdges[edge-on][1][0][0/0/1,0/1/2,0/1/1,0/2/2,0/2/1,0/3/2];
        \hexagonEdges[edge-bnd][1][0][0/0/2,0/3/1];
        \begin{scope}[xscale=1.5, yscale=0.866]
            \node [node2] at (2,4) {};
            \draw [domain-path] {(0,2) -- (0,6) -- (2,8) -- (4,6) -- (4,2) -- (2,0) -- cycle };
        \end{scope}
    \end{tikzpicture}

    \caption{A domain with boundary conditions inducing a unique loop configuration with minimal number of edges. Such domains give rise to a Gibbs measure for $x=0$ which contains an infinite interface passing near the origin.}
    \label{fig:domain-unique-min-edge-loop-config}

    \end{subfigure}

    \caption{Constructing multiple Gibbs measures when $x=0$ or $x=\infty$ through suitable domains and boundary conditions.}
\end{figure}

%
%

\subsection{Overview of the proof}
Our proofs make use of the following simple lemma.
\begin{lemma}
  \label{lem:prob-inequality-tool}
  Let $p,q>0$ and let $E$ and $F$ be two events in a discrete probability space. If there exists a map $\sfT \colon E \to F$ such that $\Pr(\sfT(e))\ge p\cdot\Pr(e)$ for every $e\in E$, and $|\sfT^{-1}(f)|\le q$ for every $f\in F$, then
  \[ \Pr(E) \leq \frac{q}{p}\cdot\Pr(F) .\]
\end{lemma}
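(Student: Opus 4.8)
The plan is to estimate $\Pr(E)$ by summing the hypothesis $\Pr(\sfT(e)) \ge p \cdot \Pr(e)$ over all elementary outcomes $e$ in the event $E$, and then to re-index the resulting sum over the image set $F$, using the bound on fiber sizes to control the multiplicity with which each outcome of $F$ is counted. Concretely, I would first write $p \cdot \Pr(E) = \sum_{e \in E} p \cdot \Pr(e)$, which is legitimate since we are in a discrete probability space and $E$ is (at most) countable. Applying the first hypothesis term by term gives $p \cdot \Pr(E) \le \sum_{e \in E} \Pr(\sfT(e))$.

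Next I would regroup the sum on the right according to the value of $\sfT$: since $\sfT$ maps $E$ into $F$, every $e \in E$ lies in exactly one fiber $\sfT^{-1}(f)$ with $f \in F$, so $\sum_{e \in E} \Pr(\sfT(e)) = \sum_{f \in F} |\sfT^{-1}(f)| \cdot \Pr(f)$. (One should note the harmless point that $\sfT^{-1}(f)$ may be empty for some $f \in F$, which only helps.) The second hypothesis $|\sfT^{-1}(f)| \le q$ then yields $\sum_{f \in F} |\sfT^{-1}(f)| \cdot \Pr(f) \le q \sum_{f \in F} \Pr(f) = q \cdot \Pr(F)$. Chaining these inequalities gives $p \cdot \Pr(E) \le q \cdot \Pr(F)$, and dividing by $p > 0$ gives the claim.

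There is essentially no serious obstacle here; the only points requiring a word of care are purely bookkeeping: that all sums are over countable index sets so the rearrangements are valid (all terms being nonnegative, so Tonelli-type reordering is unconditional), and that $q$ need not be an integer, so the fiber-size bound is used as a real inequality rather than a counting identity. I would keep the write-up to the three-line computation above.
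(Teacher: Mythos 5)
Your proof is correct and follows exactly the same chain of inequalities as the paper: bound $p\cdot\Pr(E)$ by $\sum_{e\in E}\Pr(\sfT(e))$, regroup by fibers of $\sfT$, and apply the fiber-size bound. Nothing to add.
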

\begin{proof}
We have
\[ p\cdot \Pr(E)  \leq \sum_{e \in E} \Pr(\sfT(e))
 = \sum_{e \in E} \sum_{f \in F} \Pr(f) \mathbf{1}_{\{\sfT(e)=f\}}
 = \sum_{f \in F} |\sfT^{-1}(f)|\cdot\Pr(f) \leq q\cdot \Pr(F). \qedhere \]
\end{proof}
The results for small $x$ are obtained via a fairly standard, and
short, Peierls argument, by applying the above lemma to a map
which removes loops. For details, we refer the reader to
Section~\ref{sec:exponential_decay_loop_lengths}. The main novelty
of this work lies in the study of the loop $O(n)$ model for large
$x$.

In the large $x$ regime, the idea is to apply the above lemma to a
suitably defined `repair map'.
This map takes a configuration $\omega$ sampled with $0$-phase
ground state boundary conditions (or vacant boundary
conditions in a domain of type $0$) and having a large breakup and
returns a `repaired' configuration in which the breakup is
significantly reduced. The map operates by identifying regions in
which the configuration resembles one of the three ground states.
Regions resembling the $\ground^1$ state are `shifted down' by one
hexagon to resemble $\ground^0$ and similarly regions resembling
$\ground^2$ are `shifted up' by one hexagon to resemble $\ground^0$.
Regions resembling the $\ground^0$ state are left untouched. Regions
which do not resemble any of the ground states are completely
replaced by trivial loops from the $\ground^0$ state. We show that
this yields a new loop configuration, compatible with the boundary
conditions, and having much higher probability.
To finish using Lemma~\ref{lem:prob-inequality-tool}, we further show that
the number of preimages of a given loop configuration is exponentially smaller than the probability gain. This yields the main lemma of our paper, Lemma~\ref{lem:prob-outer-circuit}, from which our results for large $x$ are later deduced. The repair map is illustrated in
Figure~\reffig{fig:proof-illustration} and is formally defined in
Section~\ref{sec:repair-map} following the definitions of
`flowers', `gardens' and `clusters' which we require to make precise
the notion of resembling a ground state.

%
%
\subsection{Graph notation}\label{sec:preliminaries}

Throughout this paper, given a graph $G$, we shall denote its vertex and edge sets by $V(G)$ and $E(G)$, respectively. If $u,v\in V(G)$ are such that $\{u,v\}\in E(G)$, we say that $u$ and $v$ are \emph{adjacent} (or {\em neighbors}) in $G$ and we drop the dependence on $G$ if it is clear from the context. For a vertex $u$ and an edge $e$ such that $u\in e$, we say that $e$ is \emph{incident} to $u$ and that $u$ is an {\em endpoint} of $e$. For $A\subset V(G)$, we define its \emph{(vertex) boundary} $\partial A$ by
\[
\partial A := \big\{u\in A ~:~ \{u,v\}\in E(G)\text{ for some }v \not\in A \big\}.
\]

The following is a standard lemma which gives a bound on the number of connected induced subgraphs of a graph.
\begin{lemma}[{\cite[Chapter~45]{Bol06}}]\label{lem:number-of-connected-graphs}
Let $G$ be a graph with maximum degree $d \ge 3$. The number of connected subsets of $V(G)$ containing a given vertex and $k$ other vertices is at most $(e(d-1))^k$.
\end{lemma}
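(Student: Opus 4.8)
This is a classical estimate (as the citation indicates); the plan is to encode connected sets by a depth-first exploration and reduce the count to that of subtrees of the infinite $d$-regular tree. Fix an arbitrary linear order on $V(G)$. Given a connected set $S\subseteq V(G)$ with $v\in S$ and $|S|=k+1$, run a depth-first search of $G[S]$ from $v$: at the current vertex $w$ scan the neighbours of $w$ in $G$ in the fixed order, omitting the vertex from which $w$ was reached, and for each scanned neighbour write a bit, namely $1$ if it lies in $S$ and has not yet been discovered (then recurse into it) and $0$ otherwise; backtrack once all of $w$'s neighbours have been scanned. After a routine padding with $0$'s making the block of bits written at each vertex have length $d$ at $v$ and $d-1$ elsewhere, one obtains a binary string $\beta(S)$ whose $1$-bits nest like a well-formed bracketing and which contains exactly $k$ ones, one for each edge of the DFS spanning tree of $G[S]$. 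Because the search, and hence the reconstruction, is deterministic once $G$, the order and $v$ are fixed, $S\mapsto\beta(S)$ is injective; and the strings of this shape are in bijection with the connected subtrees with $k$ edges of the infinite $d$-regular tree containing a fixed vertex. Writing $\tau_k$ for the number of such subtrees, we obtain
\[
\#\{\text{connected }S\subseteq V(G):\ v\in S,\ |S|=k+1\}\ \le\ \tau_k .
\]

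It then remains to compute $\tau_k$ and bound it by $(e(d-1))^k$. Let $f(y)$ count, by number of edges, the rooted subtrees hanging below a non-root vertex of the $d$-regular tree (which has $d-1$ downward neighbours); decomposing over the subset of downward neighbours that are used gives $f(y)=(1+yf(y))^{d-1}$, and including the root, with its $d$ neighbours, gives $g(y)=(1+yf(y))^{d}$ with $\tau_k=[y^k]g(y)$. Setting $w=yf(y)$, so that $w=y(1+w)^{d-1}$, Lagrange inversion yields
\[
\tau_k=[y^k]g(y)=\frac1k\,[w^{k-1}]\Bigl(d(1+w)^{d-1}\cdot(1+w)^{(d-1)k}\Bigr)=\frac{d}{k}\binom{(d-1)(k+1)}{k-1}.
\]
Bounding $\binom{(d-1)(k+1)}{k-1}\le\frac{\bigl((d-1)(k+1)\bigr)^{k-1}}{(k-1)!}$, then using $(k-1)!\ge((k-1)/e)^{k-1}$ and $\bigl(\tfrac{k+1}{k-1}\bigr)^{k-1}\le e^{2}$, one gets $\tau_k\le\tfrac{de}{k(d-1)}\,(e(d-1))^{k}$; since $\tfrac{de}{k(d-1)}\le\tfrac{3e}{2k}\le1$ for $k\ge5$ and the finitely many remaining values of $k$ (with $d\ge3$) are checked by hand from the closed formula above (e.g.\ $\tau_1=d$, $\tau_2=\tfrac32 d(d-1)$), the lemma follows.

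The step I expect to be the real obstacle is this last estimate: the constant $e$ is essentially tight --- $\tau_k$ grows like $\bigl(\tfrac{(d-1)^{d-1}}{(d-2)^{d-2}}\bigr)^{k}$, and $\tfrac{(d-1)^{d-1}}{(d-2)^{d-2}}=(d-1)\bigl(1+\tfrac1{d-2}\bigr)^{d-2}$ approaches $e(d-1)$ from below --- so any lossy counting of the codewords overshoots. For instance, splitting a codeword into a ``traversal shape'' (a Dyck path, with $\binom{2k}{k}$ choices) times ``branch choices'' ($(d-1)^{k}$ of them), or simply bounding by the number of binary strings with $k$ ones and bounded length, already gives more than $(e(d-1))^k$; one genuinely needs the interaction between shape and branch choices encapsulated by the generating function $g$, and the inequality $(1+\tfrac1{d-2})^{d-2}\le e$ valid for every $d\ge3$. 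A secondary, purely bookkeeping, difficulty is making the reduction to the $d$-regular tree precise: one must spell out the padding and the reconstruction so that injectivity, the valid nesting of the $1$-bits, and the exact count of $k$ ones are all manifest.
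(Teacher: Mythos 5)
The paper offers no proof of this lemma (it simply cites \cite[Chapter~45]{Bol06}), and your argument is correct and is essentially the standard one behind that citation: encode a connected set via a DFS as a subtree of the infinite $d$-regular tree containing the root, count such subtrees exactly as $\tau_k=\frac{d}{k}\binom{(d-1)(k+1)}{k-1}$ (your Lagrange-inversion computation is right), and bound this by $(e(d-1))^k$. The cases $k\le 4$ that you deferred to a hand check do hold for all $d\ge 3$, so there is no gap.
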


%
%
\subsection{Organization of the article}
The rest of the article is structured as follows.
Section~\ref{sec:main-lemma} introduces the repair map and proves
the main lemma, Lemma~\ref{lem:prob-outer-circuit}. In
Section~\ref{sec:proofs_main_theorems}, we derive our theorems. The
statements regarding large $x$ are deduced from the main lemma
whereas the parts pertaining to small $x$, being simpler, are
obtained directly. In Section~\ref{sec:discussion_open_questions}, we
discuss several directions for future research.

\subsection{Acknowledgements}
We are grateful to two anonymous referees whose comments helped to
improve the exposition and elucidate the relation of the results
with the existing literature.

%
%

\section{Flowers, gardens and the repair map}
\label{sec:main-lemma}

This section is devoted to the formulation and proof of the main
lemma, Lemma~\ref{lem:prob-outer-circuit}. We start by stating a few
definitions in Section~\ref{sec:definitions}. In particular, we
introduce the notions of a \emph{circuit}, \emph{$\clr$-flower},
\emph{$\clr$-garden} and \emph{$\clr$-cluster}, and gather some easy
general facts about these objects. The main lemma is stated in
Section~\ref{sec:main lemma} and the remaining sections are devoted
to its proof. Section~\ref{sec:repair-map} introduces the
repair map, which will play the role of $\sfT$ in
Lemma~\ref{lem:prob-inequality-tool}. Section~\ref{sec:compare
probabilities} compares the probability of a configuration and its
image under the repair map (which corresponds to estimating $p$ in
Lemma~\ref{lem:prob-inequality-tool}). Section~\ref{sec:proof of
lemma} gathers the last ingredients (mainly an estimate for the
number of possible preimages under the repair map, which
corresponds to bounding $q$ in Lemma~\ref{lem:prob-inequality-tool})
to conclude the proof of Lemma~\ref{lem:prob-outer-circuit}.

\subsection{Definitions and gardening}
\label{sec:definitions}

A {\em circuit} is a simple closed path in $\T$, which may be viewed as a sequence of hexagons $\gamma=(\gamma_0,\dots,\gamma_m)$, $m \ge 3$, satisfying the following two properties:
\begin{itemize}[noitemsep,nolistsep]
    \item $\gamma_m=\gamma_0$ and $\gamma_i\ne \gamma_j$ for every $0\le i<j<m$,
    \item $\gamma_i$ and $\gamma_{i+1}$ are neighbors (in $\T$) for every $0\le i<m$.
\end{itemize}
Define $\gamma^*$ to be the set of edges
$\{\gamma_i,\gamma_{i+1}\}^* \in \EH$ for $0\le i<m$.

We proceed with three standard geometric facts regarding circuits
and domains. For completeness, these facts are proved in
Appendix~\ref{sec:circuits-and-domains}. The first two facts constitute a discrete version of the Jordan curve theorem.

\begin{fact}\label{fact:gamma-int-ext}
If $\gamma$ is a circuit then the removal of $\gamma^*$ splits $\HH$ into exactly two connected components, one of which is infinite, denoted by $\Ext\gamma$, and one of which is finite, denoted by $\Int\gamma$. Moreover, each of these are induced subgraphs of $\HH$.
\end{fact}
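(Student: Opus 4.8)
\smallskip
\noindent\emph{Proof proposal.}
The plan is to reduce the statement to the (continuous) Jordan curve theorem using the planar embeddings of $\T$ and $\HH$. Fix the standard straight‑line embedding of $\T$ in $\R^2$, and realize $\HH$ by placing each of its vertices at the centroid of the corresponding triangular face of $\T$ and drawing each edge $e\in\EH$ as the straight segment joining the centroids of the two faces sharing $e^*$. An elementary computation with equilateral triangles (the two faces are mirror images across $e^*$, and the segment runs along the relevant medians) shows that this segment meets the $1$‑skeleton of $\T$ in exactly one point, the midpoint of $e^*$; in particular it passes through no vertex of $\T$, and every vertex of $\HH$ lies in the open interior of a face of $\T$ and avoids the $1$‑skeleton.

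Next, view the circuit $\gamma=(\gamma_0,\dots,\gamma_m)$ as the simple closed polygonal curve $\Gamma\subset\R^2$ obtained as the union of the edges $\{\gamma_i,\gamma_{i+1}\}$ of $\T$; since $\gamma$ is simple, $\Gamma$ is a subset of the $1$‑skeleton of $\T$ in which each of its edges occurs once. By the Jordan curve theorem, $\R^2\setminus\Gamma$ has exactly two connected components, an unbounded one $U_{\mathrm{out}}$ and a bounded one $U_{\mathrm{in}}$, both open and hence polygonally connected. Since the open interior of any face of $\T$ is disjoint from the $1$‑skeleton, it lies entirely in one component; define $\Ext\gamma$ (resp.\ $\Int\gamma$) to be the set of vertices of $\HH$ whose face lies in $U_{\mathrm{out}}$ (resp.\ $U_{\mathrm{in}}$). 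These partition $\VH$; both are non‑empty (any open set meeting a face interior contains that face's centroid, $U_{\mathrm{in}}$ is non‑empty, and $U_{\mathrm{out}}$ is unbounded); $\Int\gamma$ is finite and $\Ext\gamma$ is infinite, since only finitely many faces of $\T$ meet the bounded set $\overline{U_{\mathrm{in}}}$ while $U_{\mathrm{out}}$ contains centroids arbitrarily far from the origin.

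I would then classify the edges of $\HH$. For $e\in\EH$ with $e^*\notin\gamma^*$, the segment representing $e$ meets the $1$‑skeleton only at the midpoint of $e^*$, which is not on $\Gamma$, so $e$ is disjoint from $\Gamma$ and its two endpoints lie in the same component. When $e^*\in\gamma^*$, the segment crosses $\Gamma$ exactly once and transversally, so its endpoints lie in different components. Hence $\gamma^*$ is exactly the set of edges of $\HH$ with one endpoint in $\Int\gamma$ and the other in $\Ext\gamma$, and every other edge of $\HH$ stays on one side. Therefore removing $\gamma^*$ from $\HH$ yields the disjoint union of the induced subgraphs on $\Int\gamma$ and on $\Ext\gamma$, with no edges between them; in particular each component is an induced subgraph of $\HH$. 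It remains to see these two induced subgraphs are connected: given two vertices on the same side, take a polygonal path between their centroids inside the relevant component $U$, put it in general position so that it avoids all vertices of $\T$ and crosses $\T$‑edges transversally finitely often; then the faces it traverses are all contained in $U$ (each such open face meets the path and is disjoint from $\Gamma$) and each edge it crosses is in the $1$‑skeleton but not in $\Gamma$, hence not in $\gamma^*$, so the corresponding sequence of centroids is a path in $\HH\setminus\gamma^*$ on that side. This proves connectedness and finishes the argument.

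The main obstacle here is not conceptual but bookkeeping: one must pin down the embedding of $\HH$ in the faces of $\T$ precisely enough that $\HH$‑vertices avoid the $1$‑skeleton, that each $\HH$‑edge meets $\Gamma$ in at most one transversal point, and that a topological path inside a complementary component can be upgraded to an actual path in $\HH\setminus\gamma^*$ via a general‑position argument. A self‑contained alternative that avoids invoking the continuous Jordan curve theorem is an induction on the length $m$ of $\gamma$, splitting the circuit along a chord (in the extreme case peeling off an ``ear'' — a face of $\T$ incident to two consecutive edges of $\gamma$); this is elementary but involves a more tedious case analysis, so I would favour the topological route above.
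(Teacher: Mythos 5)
Your argument is correct, but it takes a genuinely different route from the paper's. The paper proves this fact purely combinatorially in its appendix: it defines $\Ext\gamma$ as the set of endpoints of infinite simple paths in $\HH$ minus $\gamma^*$, shows this is a single component because the deletion only affects a bounded region, shows that at most two components contain endpoints of $\gamma^*$ by analysing the local constellation of the six vertices around each hexagon $\gamma_i$ (only two of its six bordering edges lie in $\gamma^*$), rules out $\Ext\gamma=\VH$ by a cycle-space argument (the basic hexagonal $6$-cycles generate the cycle space of $\HH$ and each meets $\gamma^*$ in $0$ or $2$ edges, so no cycle can use exactly one edge of $\gamma^*$), and finally shows the complement is connected. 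You instead reduce everything to the Jordan curve theorem for the polygon $\Gamma$ traced by $\gamma$ in the plane, classify vertices of $\HH$ by the component containing the centroid of the corresponding triangular face, and upgrade topological paths to lattice paths by a general-position argument. Your route is shorter and conceptually transparent, at the price of importing the continuous Jordan curve theorem plus embedding and general-position bookkeeping (in particular the standard fact that a single transversal crossing switches components); the paper's route is longer but self-contained and purely graph-theoretic, exploiting the specific lattice structure, which fits its stated aim of giving a discrete Jordan-type statement. One small point to tighten: the parenthetical claim that ``any open set meeting a face interior contains that face's centroid'' is not literally true; what you need (and essentially already established) is that any component of $\R^2\setminus\Gamma$ meeting a face interior contains the whole interior, hence its centroid, together with the observation that the non-empty open set $U_{\mathrm{in}}$ must meet some face interior because the $1$-skeleton of $\T$ has empty interior; with that rephrasing the non-emptiness of $\Int\gamma$ is airtight.
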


Let $\gamma$ be a circuit. We denote the vertex sets and edge sets
of $\Int\gamma,\Ext\gamma$ by $\IntVert\gamma,\ExtVert\gamma$ and
$\IntEdge\gamma,\ExtEdge\gamma$, respectively. Note that $\{
\IntVert\gamma, \ExtVert\gamma \}$ is a partition of $\VH$ and
that $\{ \IntEdge\gamma, \ExtEdge\gamma, \gamma^* \}$ is a partition
of $\EH$. We also define $\IntHex\gamma$ to be the set of faces
of $\Int\gamma$, i.e., the set of hexagons $z\in\T$ having all their
six bordering vertices in $\IntVert\gamma$. Since $\Int\gamma$ is
induced, this is equivalent to having all six bordering edges in
$\IntEdge\gamma$.

Note that, by Fact~\ref{fact:gamma-int-ext}, $\Int\gamma$ is a
domain. The converse is also true.
\begin{fact}\label{fact:circuit-domain-bijection}
    Circuits are in one-to-one correspondence with domains via $\gamma \leftrightarrow \Int\gamma$.
\end{fact}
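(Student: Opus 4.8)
The plan is to show that the map $\gamma \mapsto \Int\gamma$, which already lands in the set of domains by the remark following Fact~\ref{fact:gamma-int-ext}, is both injective and surjective; essentially all of the work is in surjectivity, which is a discrete Jordan curve statement.

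The first step I would record is the identity that, for a circuit $\gamma$, the set $\gamma^*$ is exactly the edge boundary of $\IntVert\gamma$ in $\HH$, i.e.\ the set of edges of $\HH$ with one endpoint in $\IntVert\gamma$ and one in $\ExtVert\gamma$. This is immediate from the partition statements following Fact~\ref{fact:gamma-int-ext}: $\{\IntVert\gamma,\ExtVert\gamma\}$ partitions $\VH$ and $\{\IntEdge\gamma,\ExtEdge\gamma,\gamma^*\}$ partitions $\EH$, and since $\Int\gamma$ and $\Ext\gamma$ are induced subgraphs, $\IntEdge\gamma$ and $\ExtEdge\gamma$ consist precisely of the edges with both endpoints on one side, so the remaining class $\gamma^*$ consists precisely of the crossing edges. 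Injectivity then follows at once: if $\Int{\gamma_1}=\Int{\gamma_2}=:H$, then $\gamma_1^*$ and $\gamma_2^*$ both equal the edge boundary of $V(H)$, hence $\gamma_1^*=\gamma_2^*$; and since the dual edges $\{e^*:e\in\gamma^*\}$ form, by definition, the edge set of the simple cycle $(\gamma_0,\dots,\gamma_m)$ in $\T$, and a simple cycle is determined by its edge set up to the choice of base point and orientation (the equivalence under which we regard circuits as equal), we get $\gamma_1=\gamma_2$.

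For surjectivity, given a domain $H$ I would put $D:=\{e\in\EH:|e\cap V(H)|=1\}$ and first check that $\HH\setminus D$ has exactly two connected components: the subgraph induced on $V(H)$, which is connected (since $H$ is a domain) and forms a whole component (since no retained edge leaves $V(H)$), and the subgraph induced on $\VH\setminus V(H)$, which is connected because $H$ has no holes; the former is finite and nonempty, the latter infinite, and in particular $D\neq\emptyset$ as $\HH$ is connected and both parts are nonempty. Hence $D$ is a finite minimal edge cut (bond) of $\HH$. The crux — and the step I expect to be the real obstacle — is to deduce that $\{e^*:e\in D\}$ is a single simple cycle in $\T$, equivalently that $D=\gamma^*$ for a (necessarily unique, by injectivity) circuit $\gamma$. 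I would argue this by hand: each hexagon $z\in\T$ is incident to an even number of edges of $D$, since traversing the six $\HH$-vertices bordering $z$ cyclically one crosses between $V(H)$ and its complement an even number of times, so $\{e^*:e\in D\}$ is an even subgraph of $\T$ and therefore an edge-disjoint union of simple cycles; if there were at least two of them, then for the $\HH$-edge set $\hat C$ dual to one of them, the graph $\HH\setminus\hat C$ would still contain an edge of $D\setminus\hat C$ joining $V(H)$ to $\VH\setminus V(H)$, so (both sides being connected already in the sparser graph $\HH\setminus D$) $\HH\setminus\hat C$ would be connected, contradicting Fact~\ref{fact:gamma-int-ext} applied to the circuit whose dual is $\hat C$. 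Alternatively, this step can be obtained by invoking planar duality for the dual pair $(\HH,\T)$ after restricting to a large enough finite disc of $\HH$ containing $V(H)$. Once $D=\gamma^*$ is established, Fact~\ref{fact:gamma-int-ext} identifies $\Int\gamma$ with the finite component of $\HH\setminus\gamma^*=\HH\setminus D$, namely the subgraph induced on $V(H)$, which equals $H$ since a domain is an induced subgraph; thus $\Int\gamma=H$ and surjectivity follows.
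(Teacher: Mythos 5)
Your argument is correct, and it shares the paper's skeleton --- both proofs start from the edge boundary of $V(H)$ (your $D$, the paper's $E$), both observe via the same hexagon-traversal argument that each hexagon borders an even number of these edges, and both then invoke Fact~\ref{fact:gamma-int-ext} --- but the pivotal step is handled differently. The paper never shows that the dual of $E$ is a single simple cycle: it only extracts \emph{some} circuit $\gamma$ contained in the even auxiliary graph, and then argues directly that $V(H)$ and $\VH\setminus V(H)$ are each connected in $\HH\setminus\gamma^*$ (being connected already in $\HH\setminus E$) and separated by $\gamma^*\subseteq E$, so that finiteness of $\Int\gamma$ forces $V(H)=\IntVert\gamma$. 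You instead prove the stronger statement that $D$ \emph{is} $\gamma^*$ for a single circuit, via the decomposition of the even dual graph into edge-disjoint simple cycles and the contradiction that deleting only one cycle's dual would leave $\HH$ connected, against Fact~\ref{fact:gamma-int-ext}. Your route costs a little more (the ``exactly two components of $\HH\setminus D$'' observation and the two-cycle argument) but buys the explicit identity $D=\gamma^*$, i.e.\ that $\gamma^*$ is precisely the edge boundary of $\IntVert\gamma$, which you then use to give an explicit injectivity argument --- a point the paper's proof leaves implicit, as it only verifies that every domain arises as $\Int\gamma$. Both arguments are sound; yours is a bit longer but more self-contained on the ``one-to-one'' half of the statement.
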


Hence, every domain $H$ may be written as $H=\Int\gamma$ for some circuit $\gamma$.
Recalling the definition from Section~\ref{sec:results} of a domain of type $\clr\in\{0,1,2\}$, one should also note that $H$ is of type $\clr$ if and only if $\gamma \subset \T \setminus \T^\clr$.

\begin{fact}\label{fact:circuits-max}
    Let $\sigma$ and $\sigma'$ be two circuits such that $\sigma^* \cap (\sigma')^* \neq \emptyset$ or $\IntVert\sigma \cap \IntVert{\sigma'} \neq \emptyset$.
    Then there exists a circuit $\gamma \subset \sigma \cup \sigma'$ such that $\gamma^* \subset \sigma^* \cup (\sigma')^*$ and $\Int\sigma \cup \Int{\sigma'} \subset \Int{\gamma}$.

\end{fact}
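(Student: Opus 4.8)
The plan is to realize $\gamma$ as the ``outer contour'' of $\Int\sigma\cup\Int{\sigma'}$, absorbing any regions this union encloses. Throughout I will use that for every circuit $\delta$ one has $\delta^* = \partial_E(\IntVert\delta)$: from the partitions $\VH = \IntVert\delta\sqcup\ExtVert\delta$ and $\EH = \IntEdge\delta\sqcup\ExtEdge\delta\sqcup\delta^*$, together with the fact (Fact~\ref{fact:gamma-int-ext}) that $\Int\delta$ and $\Ext\delta$ are induced, an edge lies in $\delta^*$ precisely when it has one endpoint in $\IntVert\delta$ and one in $\ExtVert\delta$. Now set $S := \ExtVert\sigma\cap\ExtVert{\sigma'} = \VH\setminus(\IntVert\sigma\cup\IntVert{\sigma'})$; since this complement is finite and $\HH$ has one end, the induced subgraph $\HH[S]$ has a unique infinite component $U$. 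Put $K := \VH\setminus V(U)$. As $V(U)\subseteq S$, we get $\IntVert\sigma\cup\IntVert{\sigma'}\subseteq K$, so once a circuit $\gamma$ with $\IntVert\gamma = K$ is produced, the inclusion $\Int\sigma\cup\Int{\sigma'}\subseteq\Int\gamma$ follows at once (all relevant subgraphs being induced).

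The crux is to check that $\HH[K]$ is a domain, for then Fact~\ref{fact:circuit-domain-bijection} supplies a unique circuit $\gamma$ with $\Int\gamma = \HH[K]$. It is finite, being the finite union of $\IntVert\sigma$, $\IntVert{\sigma'}$ and the finitely many finite components of $\HH[S]$, and its complement $\HH[\VH\setminus K] = U$ is connected by construction, so only connectivity of $\HH[K]$ itself needs work. First, $\HH[\IntVert\sigma\cup\IntVert{\sigma'}]$ is connected, and this is where the hypothesis enters: if $\IntVert\sigma\cap\IntVert{\sigma'}\ne\emptyset$ it is a union of two connected sets through a common vertex, and if instead $e = \{u,v\}\in\sigma^*\cap(\sigma')^*$ then, letting $u$ be the (unique, since $e\in\sigma^*=\partial_E(\IntVert\sigma)$) endpoint of $e$ in $\IntVert\sigma$, the endpoint of $e$ in $\IntVert{\sigma'}$ is either $u$ (the previous case) or $v$, and in the latter case $e$ is an edge joining $\Int\sigma$ to $\Int{\sigma'}$. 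Second, every finite component $B$ of $\HH[S]$ has all of its $\HH$-neighbours outside $S$ — a neighbour inside $S$ would lie in the same component of $\HH[S]$, hence in $B$ — so those neighbours lie in $\IntVert\sigma\cup\IntVert{\sigma'}$, and $B$, being nonempty, has at least one. Thus every such $B$ attaches to the connected core $\HH[\IntVert\sigma\cup\IntVert{\sigma'}]$, and $\HH[K]$ is connected.

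It then remains to verify $\gamma^*\subseteq\sigma^*\cup(\sigma')^*$ and $\gamma\subset\sigma\cup\sigma'$. By the opening observation, $\gamma^* = \partial_E(\IntVert\gamma) = \partial_E(K)$, so any $e\in\gamma^*$ joins some vertex of $V(U)\subseteq S$ to a vertex $a\in K$; $a$ cannot lie in a finite component of $\HH[S]$ (it would then be $\HH$-adjacent to $U$, merging the components), so $a\in\IntVert\sigma\cup\IntVert{\sigma'}$, whence $e\in\partial_E(\IntVert\sigma) = \sigma^*$ or $e\in\partial_E(\IntVert{\sigma'}) = (\sigma')^*$. For the containment of circuits, recall that the hexagons traversed by a circuit $\delta$ are exactly the vertices of $\T$ that are endpoints of the $\T$-edges $e^*$ with $e\in\delta^*$; applying this to $\gamma$ and to $\sigma,\sigma'$, each hexagon of $\gamma$ lies on an edge of $\sigma^*$ or of $(\sigma')^*$, hence is a hexagon of $\sigma$ or of $\sigma'$, i.e.\ $\gamma\subset\sigma\cup\sigma'$.

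I expect the main obstacle to be the topological bookkeeping in the middle paragraph: confirming that $\HH[K]$ genuinely has no ``holes'' and that $\gamma^*$ picks up no stray edges bordering a trapped component of $\HH[S]$. These are the familiar discrete-Jordan-curve subtleties, and they dissolve once $\gamma^*$ is identified with the edge boundary $\partial_E(\IntVert\gamma)$ and one keeps careful track of which component of $\HH[S]$ each vertex under consideration belongs to; the remaining steps then follow from the definitions and Fact~\ref{fact:gamma-int-ext}.
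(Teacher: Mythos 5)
Your proof is correct and follows essentially the same route as the paper's: show $\IntVert\sigma\cup\IntVert{\sigma'}$ is connected via the same case analysis on the hypothesis, fill in the finite components of its complement to obtain a domain, invoke Fact~\ref{fact:circuit-domain-bijection} to get $\gamma$, and check that every edge of $\gamma^*$ has its inner endpoint in $\IntVert\sigma\cup\IntVert{\sigma'}$. The only (harmless) difference is that you also spell out the deduction of $\gamma\subset\sigma\cup\sigma'$ from $\gamma^*\subset\sigma^*\cup(\sigma')^*$, which the paper leaves implicit.
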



\begin{figure}
	\centering
	
	\begin{tikzpicture}[scale=0.4, every node/.style={scale=0.4}]
	\hexagonGridOneClass[12][6][2][1];
	\begin{scope}[yshift=1*0.866cm, xshift=1.5cm]
	\trivialLoop[3][-1];\trivialLoop[5][-2];\trivialLoop[7][-3];
	\trivialLoop[2][1];\trivialLoop[1][3];\trivialLoop[2][4];
	\trivialLoop[4][3];\trivialLoop[6][2];\trivialLoop[8][1];
	\trivialLoop[9][-1];\trivialLoop[8][-2];
	\trivialLoop[6][0];
	\doubleLoopUp[0][1];
	\trivialLoop[10][0];
	\doubleLoopRight[9][-4];
	
	\hexagonEdges[edge-on][3][1][0/1/-3,0/1/-2,0/1/-1,0/1/0,0/1/1,1/0/0,1/0/1,2/-1/-1,2/-1/-2,1/-1/2,1/-1/3,1/-1/4,1/-1/5,1/0/-2];
	\end{scope}
	\begin{scope}[yshift=1*0.866cm, xshift=1.5cm, xscale=1.5, yscale=0.866]
	\draw [domain-path]
	{(0,6)--(0,8)--(1,9)--(1,11)--(2,12)--(3,11)--(4,12)--(5,11)--(6,12)--(7,11)--(8,12)--(9,11)--(9,9)--(10,8)--(10,6)--(9,5)--(9,3)--(8,2)--(8,0)--(7,-1)--(6,0)--(5,-1)--(4,0)--(3,-1)--(2,0)--(2,2)--(1,3)--(1,5)--cycle };
	\end{scope}
	\end{tikzpicture}
	
	\caption{A garden. The dashed line denotes a vacant circuit $\sigma \subset \T \setminus \T^\clr$, where $\clr\in\{0,1,2\}$. The edges inside $\sigma$, along with the edges crossing $\sigma$, then comprise a $\clr$-garden of $\omega$, since every hexagon in $\T^\clr \cap \partial \IntHex\sigma$ is surrounded by a trivial loop.}
	\label{fig:garden}
\end{figure}
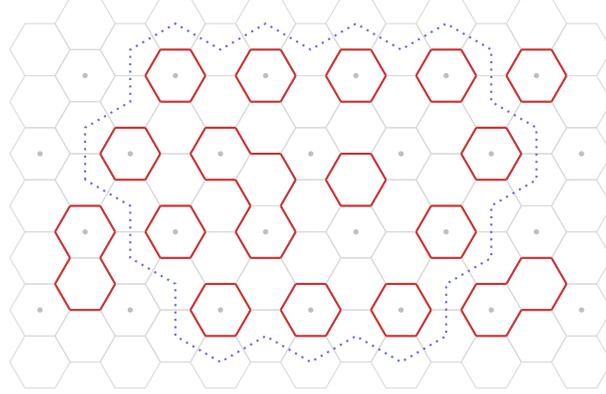


\begin{definition}[$\clr$-flower, $\clr$-garden, vacant circuit; see Figure~\reffig{fig:garden}]
  Let $\clr\in\{0,1,2\}$ and let $\omega$ be a loop configuration. A hexagon $z\in\T^\clr$ is a {\em $\clr$-flower of} $\omega$ if it is surrounded by a trivial loop in $\omega$. A subset $E \subset \EH$ is a \emph{$\clr$-garden of $\omega$} if there exists a circuit $\sigma \subset \T \setminus \T^\clr$ such that $E=\IntEdge\sigma \cup \sigma^*$ and every $z\in \T^\clr\cap\partial \IntHex\sigma$ is a $\clr$-flower of $\omega$. In this case, we denote $\sigma(E) :=
  \sigma$. A circuit $\sigma$ is {\em vacant} in $\omega$ if $\omega\cap\sigma^*=\emptyset$.
\end{definition}

We say that $E \subset \EH$ is a garden of $\omega$ if it is a $\clr$-garden of $\omega$ for some $\clr\in\{0,1,2\}$.
We stress the fact that a garden is a {\em subset of the
edges of $\HH$}. We continue with several simple properties of
circuits, gardens and loop configurations which will be used
throughout the paper.
\begin{lemma}\label{lem:circuits-and-clusters}
    Let $\omega$ and $\omega'$ be two loop configurations.
    \begin{enumerate}[label=(\alph*), ref=\alph*, labelindent=\parindent]
        \item\label{it:circuits-and-clusters1} If $\sigma$ is a vacant circuit in $\omega$ then $\omega \cap \IntEdge\sigma$ and $\omega \cap \ExtEdge\sigma$ are loop configurations.
        \item\label{it:circuits-and-clusters2} If $E$ is a garden of $\omega$ then $\sigma(E)$ is a vacant circuit in $\omega$.
        \item\label{it:circuits-and-clusters3} If $E$ is a garden of $\omega$ then $\omega \cap E$ and $\omega \setminus E$ are loop configurations.
        \item\label{it:circuits-and-clusters4} If $\omega$ and $\omega'$ are disjoint then $\omega \cup \omega'$ is a loop configuration.
        \item\label{it:circuits-and-clusters5} If $\omega'$ is contained in $\omega$ then $\omega \setminus \omega'$ is a loop configuration.
    \end{enumerate}
\end{lemma}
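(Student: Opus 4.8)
The plan is to treat (d), (e), (a) by elementary degree bookkeeping, to derive (c) formally from (a) and (b), and to concentrate the real work on (b), the only part that touches the geometry of the lattice.

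For (d) and (e) I would simply observe that adjoining a set of edges disjoint from $\omega$, or deleting a subset of $\omega$, changes every vertex degree by an even amount, so all degrees remain even and the result is again a loop configuration. For (a), recall from Fact~\ref{fact:gamma-int-ext} that $\{\IntVert\sigma,\ExtVert\sigma\}$ partitions $\VH$, $\{\IntEdge\sigma,\ExtEdge\sigma,\sigma^*\}$ partitions $\EH$, and $\Int\sigma,\Ext\sigma$ are induced. Fixing $v\in\IntVert\sigma$, every edge at $v$ lies in $\IntEdge\sigma\cup\sigma^*$ (none in $\ExtEdge\sigma$), and since vacancy gives $\omega\cap\sigma^*=\emptyset$ one gets $\deg_{\omega\cap\IntEdge\sigma}(v)=\deg_\omega(v)$, which is even, and $\deg_{\omega\cap\ExtEdge\sigma}(v)=0$; the case $v\in\ExtVert\sigma$ is symmetric, so both $\omega\cap\IntEdge\sigma$ and $\omega\cap\ExtEdge\sigma$ are loop configurations. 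Granting (b), part (c) follows at once: with $\sigma=\sigma(E)$ and $E=\IntEdge\sigma\cup\sigma^*$, vacancy of $\sigma$ yields $\omega\cap E=\omega\cap\IntEdge\sigma$ and $\omega\setminus E=\omega\cap\ExtEdge\sigma$, both loop configurations by (a).

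For (b), I would argue by contradiction that the circuit $\sigma=\sigma(E)\subset\T\setminus\T^\clr$ is vacant. Suppose $e^*\in\omega\cap\sigma^*$, where $e^*$ is the $\HH$-edge dual to a $\T$-edge $\{\sigma_i,\sigma_{i+1}\}$ of $\sigma$. Its two $\HH$-endpoints are the two triangles of $\T$ on this edge, one in $\IntVert\sigma$ and one in $\ExtVert\sigma$ (this is what characterizes $\sigma^*$ in the partition of $\EH$); let the interior one be $v_1=\{\sigma_i,\sigma_{i+1},w_1\}$. Because the colouring is proper and $\sigma$ avoids $\T^\clr$, the third hexagon $w_1$ lies in $\T^\clr$. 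The crucial point is that $\sigma$ contains no $\T$-edge incident to $w_1$, as such an edge would make $w_1$ a hexagon of $\sigma$; dually, none of the six $\HH$-edges bordering $w_1$ lies in $\sigma^*$, so the trivial loop around $w_1$ is a connected subgraph of $\HH\setminus\sigma^*$ containing the vertex $v_1\in\IntVert\sigma$, hence is entirely inside $\Int\sigma$; that is, $w_1\in\IntHex\sigma$. On the other hand $\sigma_i\notin\IntHex\sigma$, since the bordering edge $e^*$ of $\sigma_i$ lies in $\sigma^*$; as $w_1$ is $\T$-adjacent to $\sigma_i$, this gives $w_1\in\T^\clr\cap\partial\IntHex\sigma$, so the garden hypothesis makes $w_1$ a $\clr$-flower of $\omega$. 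Consequently the two $\HH$-edges at $v_1$ that border $w_1$ both lie in $\omega$; together with $e^*$ these are the three edges incident to $v_1$, forcing $\deg_\omega(v_1)=3$, which contradicts $\omega$ being a loop configuration. Hence $\omega\cap\sigma^*=\emptyset$.

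I expect part (b) to be the only real obstacle: (a), (d), (e) are one-line parity arguments and (c) is purely formal, whereas (b) hinges on the combinatorial geometry of $\T$, its dual $\HH$, and the three-colouring — in particular on the fact that the boundary circuit of a garden can never run along a $\clr$-hexagon, which is precisely what lets the flower condition rule out an edge crossing that circuit.
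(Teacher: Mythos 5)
Your proposal is correct and follows essentially the same route as the paper: (d), (e) by parity of degrees, (a) from the edge partition $\{\IntEdge\sigma,\ExtEdge\sigma,\sigma^*\}$ and vacancy, (c) formally from (a) and (b), and (b) by observing that a crossing edge $e^*\in\omega\cap\sigma^*$ would sit at a vertex bordering a $\clr$-hexagon in $\partial\IntHex\sigma$, whose flower loop forces degree $3$. The only difference is that you spell out in detail the step the paper asserts tersely (that this $\clr$-hexagon indeed lies in $\IntHex\sigma$, hence in $\partial\IntHex\sigma$), which is essentially the content of Lemma~\ref{lem:circuit-interior-degree}, and you verify (a) by degree counting rather than by noting that each connected component of $\omega$ stays on one side of $\sigma$.
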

\begin{proof}
    To see~\eqref{it:circuits-and-clusters1}, let $\sigma$ be a vacant circuit in $\omega$. Since any path between $\Int\sigma$ and $\Ext\sigma$ intersects $\sigma^*$, and since $\omega \cap \sigma^* = \emptyset$, every loop of $\omega$ is contained in either $\Int\sigma$ or $\Ext\sigma$, and thus,~\eqref{it:circuits-and-clusters1} follows.

    We now show~\eqref{it:circuits-and-clusters2}.
    Let $E$ be a $\clr$-garden of $\omega$, $\clr\in\{0,1,2\}$, and let $\sigma := \sigma(E)$. One of the endpoints of every edge $e\in \sigma^*$ must border a hexagon in $\T^\clr \cap \partial \IntHex\sigma$. By the definition of a $\clr$-garden, this hexagon is a $\clr$-flower, and hence, $e$ cannot belong to $\omega$. Thus, $\sigma$ is vacant in $\omega$.

    In light of~\eqref{it:circuits-and-clusters1} and~\eqref{it:circuits-and-clusters2},~\eqref{it:circuits-and-clusters3} is immediate.

    To establish~\eqref{it:circuits-and-clusters4}, it suffices to show that no vertex has degree $3$ in $\omega' \cup \omega$. Indeed, if a vertex has degree $3$ then one of the edges incident to it must be contained in both $\omega$ and $\omega'$, which is a contradiction.

    Finally, the last statement is straightforward.
\end{proof}

\begin{lemma}\label{lem:circuit-interior-degree}
    Let $\clr\in\{0,1,2\}$, let $\sigma \subset \T \setminus \T^\clr$ be a circuit, let $z \in \T^\clr$ be a hexagon and let $V(z)$ denote the six vertices in $\HH$ bordering $z$. Then
    \[ z \in \IntHex\sigma \quad\iff\quad V(z) \cap \IntVert\sigma \neq \emptyset .\]
\end{lemma}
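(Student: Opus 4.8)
The plan is to prove the two implications separately, relying on the characterization that, since $\Int\sigma$ is an induced subgraph of $\HH$ (Fact~\ref{fact:gamma-int-ext}), one has $z \in \IntHex\sigma$ if and only if \emph{all six} of the vertices in $V(z)$ lie in $\IntVert\sigma$. The forward direction is then immediate: if $z \in \IntHex\sigma$ then $V(z) \subset \IntVert\sigma$, and since $V(z) \neq \emptyset$ we get $V(z) \cap \IntVert\sigma \neq \emptyset$.

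For the contrapositive of the reverse direction, suppose $z \notin \IntHex\sigma$; we must show $V(z) \subset \ExtVert\sigma$, i.e.\ $V(z) \cap \IntVert\sigma = \emptyset$. The first thing to observe is that since $z \in \T^\clr$ and $\sigma \subset \T \setminus \T^\clr$, the hexagon $z$ does not lie on the circuit $\sigma$, so none of the edges of $\HH$ bordering $z$ belongs to $\sigma^*$ (an edge of $\sigma^*$ is dual to an edge of $\T$ joining two hexagons of $\sigma$, and $z$ is not among them). In particular the six boundary edges of $z$ all lie in $\IntEdge\sigma \cup \ExtEdge\sigma$, and consequently the six vertices of $V(z)$ all lie in $\IntVert\sigma \cup \ExtVert\sigma$ with the hexagon's boundary cycle avoiding $\sigma^*$ entirely. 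Since $\VH \setminus \sigma^*$ (at the level of the boundary cycle of $z$, which avoids $\sigma^*$) cannot switch between $\Int\sigma$ and $\Ext\sigma$ without crossing $\sigma^*$ — any path between the two components must use an edge of $\sigma^*$ by Fact~\ref{fact:gamma-int-ext} — the boundary cycle of $z$ is entirely contained in one of the two components: either $V(z) \subset \IntVert\sigma$ or $V(z) \subset \ExtVert\sigma$. The former would force all six bordering edges into $\IntEdge\sigma$ (again using that $\Int\sigma$ is induced and avoids $\sigma^*$), hence $z \in \IntHex\sigma$, contrary to assumption. Therefore $V(z) \subset \ExtVert\sigma$, which is exactly $V(z) \cap \IntVert\sigma = \emptyset$.

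The one point that requires a little care — and is the main (minor) obstacle — is the claim that the boundary cycle of $z$ avoids $\sigma^*$. This is where the hypothesis $z \in \T^\clr$, $\sigma \subset \T \setminus \T^\clr$ enters: an edge $e \in \sigma^*$ is by definition dual to an edge $\{\sigma_i, \sigma_{i+1}\}$ of $\T$ with $\sigma_i, \sigma_{i+1} \in \T \setminus \T^\clr$, whereas every edge of $\HH$ bordering $z$ is dual to an edge of $\T$ incident to $z \in \T^\clr$; since $z \notin \sigma$ these two sets of dual edges are disjoint. Once this is in hand, the rest is the discrete Jordan curve argument packaged in Fact~\ref{fact:gamma-int-ext} together with the ``induced'' property, both of which are already available. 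I would write the argument in the contrapositive form throughout, as above, to keep the connectivity step clean.
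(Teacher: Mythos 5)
Your proposal is correct and rests on the same key observation as the paper's proof: an edge of $\HH$ bordering $z$ cannot lie in $\sigma^*$, since otherwise $z$ would belong to $\sigma \subset \T \setminus \T^\clr$, contradicting $z \in \T^\clr$; membership in $\IntVert\sigma$ then propagates around the boundary cycle of $z$. The paper states this as a direct propagation step rather than in contrapositive form, but the argument is essentially identical.
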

\begin{proof}
    Recall that, by definition, $z \in \IntHex\sigma$ if and only if $V(z) \subset
    \IntVert\sigma$.
    Thus, it suffices to check that if $v \in V(z) \cap \IntVert\sigma$ and $u \in V(z)$ is adjacent to $v$ then $u \in \IntVert\sigma$. Indeed this is the case, as otherwise, $\{u,v\} \in \sigma^*$ and $z \in \sigma$, which contradicts the assumption that $\sigma \subset \T \setminus \T^\clr$.
\end{proof}


We proceed to discuss disjointness and containment properties of
gardens.
\begin{lemma}\label{cl:gardens-same-class}
  Let $\omega$ be a loop configuration and let $E_1$ and $E_2$ be two $\clr$-gardens of $\omega$ for some $\clr\in\{0,1,2\}$. If there exists a vertex which is the endpoint of an edge in $E_1$ and an edge in $E_2$, then $E_1\cup E_2$ is contained in a $\clr$-garden of $\omega$.
\end{lemma}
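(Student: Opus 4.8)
The plan is to produce a single circuit $\gamma\subset\T\setminus\T^\clr$ with $E_1\cup E_2\subset\IntEdge\gamma\cup\gamma^*$ for which $\IntEdge\gamma\cup\gamma^*$ is a $\clr$-garden of $\omega$. Write $\sigma_i:=\sigma(E_i)$, so $\sigma_i\subset\T\setminus\T^\clr$ is a circuit and $E_i=\IntEdge{\sigma_i}\cup\sigma_i^*$; note that $E_i$ is precisely the set of edges of $\HH$ having at least one endpoint in $\IntVert{\sigma_i}$. The first step is to show that the hypothesis forces $\sigma_1^*\cap\sigma_2^*\neq\emptyset$ or $\IntVert{\sigma_1}\cap\IntVert{\sigma_2}\neq\emptyset$. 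Let $v$ be incident to edges $e_1\in E_1$ and $e_2\in E_2$. For each $i$, either $v\in\IntVert{\sigma_i}$, or $v\in\ExtVert{\sigma_i}$; in the latter case $e_i\notin\IntEdge{\sigma_i}$, so $e_i\in\sigma_i^*$ and $e_i$ is the dual of a circuit edge of $\sigma_i$ whose two endpoints (both lying in $\T\setminus\T^\clr$) are bordered by $v$. Since the three hexagons bordering $v$ are pairwise adjacent and hence carry the three colours $0,1,2$, exactly one of them, say $h$, lies in $\T^\clr$; so in the exterior case $e_i=\{h',h''\}^*$, where $h',h''$ are the other two hexagons bordering $v$ — an edge that does not depend on $i$. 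Consequently: if $v$ is exterior to both circuits, then $e_1=e_2\in\sigma_1^*\cap\sigma_2^*$; if interior to both, then $v\in\IntVert{\sigma_1}\cap\IntVert{\sigma_2}$; and if, say, $v\in\IntVert{\sigma_1}\cap\ExtVert{\sigma_2}$, then the other endpoint $v'$ of $e_2=\{h',h''\}^*$ lies in $\IntVert{\sigma_2}$, and either $v'\in\IntVert{\sigma_1}$ (done) or $v'\in\ExtVert{\sigma_1}$, in which case $e_2$ has endpoints on both sides of $\sigma_1$, whence $e_2\in\sigma_1^*\cap\sigma_2^*$.

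Next, by Fact~\ref{fact:circuits-max} there is a circuit $\gamma\subset\sigma_1\cup\sigma_2$ with $\gamma^*\subset\sigma_1^*\cup\sigma_2^*$ and $\Int{\sigma_1}\cup\Int{\sigma_2}\subset\Int\gamma$; in particular $\gamma\subset\T\setminus\T^\clr$ and $\IntVert{\sigma_i}\subset\IntVert\gamma$. Put $E:=\IntEdge\gamma\cup\gamma^*$. Every edge of $\IntEdge{\sigma_i}$ has both endpoints in $\IntVert{\sigma_i}\subset\IntVert\gamma$ and so lies in $\IntEdge\gamma$, since $\Int\gamma$ is an induced subgraph; every edge of $\sigma_i^*$ has an endpoint in $\IntVert{\sigma_i}\subset\IntVert\gamma$, hence is not in $\ExtEdge\gamma$ and so lies in $E$. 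Therefore $E_1\cup E_2\subset E$, and it remains only to check that $E$ is a $\clr$-garden of $\omega$.

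So fix $z\in\T^\clr\cap\partial\IntHex\gamma$; the goal is to prove $z$ is a $\clr$-flower of $\omega$. Then $V(z)\subset\IntVert\gamma$ and $z$ has a $\T$-neighbour $z''\notin\IntHex\gamma$. The two common endpoints of the edge shared by $z$ and $z''$ lie in $V(z)\subset\IntVert\gamma$, so $z''$ is not a hexagon all of whose bordering vertices lie in $\ExtVert\gamma$; since any hexagon off $\gamma$ has all six of its bordering vertices on a single side of $\gamma$ (its bordering edges avoid $\gamma^*$), $z''$ must be a circuit hexagon of $\gamma$, and $z\notin\gamma$ because $z\in\T^\clr$. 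As $\gamma\subset\sigma_1\cup\sigma_2$, the hexagon $z''$ lies on $\sigma_1$ or $\sigma_2$. I would then show, using $\Int{\sigma_1}\cup\Int{\sigma_2}\subset\Int\gamma$ and a local analysis at $z''$, that $V(z)$ meets $\IntVert{\sigma_j}$ for some $j\in\{1,2\}$ with $z''\in\sigma_j$; by Lemma~\ref{lem:circuit-interior-degree} this yields $z\in\IntHex{\sigma_j}$. Since $z$ is adjacent to $z''\in\sigma_j$, which is not in $\IntHex{\sigma_j}$, we get $z\in\partial\IntHex{\sigma_j}$, and, $z$ being in $\T^\clr$, the defining property of the $\clr$-garden $E_j$ forces $z$ to be a $\clr$-flower of $\omega$.

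The crux — and the step I expect to cost the most work — is the local analysis just invoked. Arrange the six bordering vertices of $z''$ in a cyclic sequence. The arc of those lying in $\IntVert\gamma$ contains both the (non-empty) arc of those in $\IntVert{\sigma_1}$ and the (non-empty) arc of those in $\IntVert{\sigma_2}$ (these inclusions are exactly where $\Int{\sigma_1}\cup\Int{\sigma_2}\subset\Int\gamma$ enters), while the two common vertices of $z$ and $z''$ lie in the $\IntVert\gamma$-arc; one must deduce that these two vertices cannot simultaneously avoid both of the two smaller arcs. When $z''$ lies on only one of the circuits, say $\sigma_1\setminus\sigma_2$, the two edges of $z''$ in $\gamma^*$ are necessarily its two edges in $\sigma_1^*$ (no edge of $\sigma_2^*$ borders $z''$), so the $\IntVert\gamma$-arc and the $\IntVert{\sigma_1}$-arc of $z''$ coincide and there is nothing to prove. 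The genuinely delicate case is $z''\in\sigma_1\cap\sigma_2$: here one must rule out that $z$ sits in a region trapped between $\sigma_1$ and $\sigma_2$ yet still touches the outer circuit $\gamma$, which will require tracking the precise positions of the at most four edges of $z''$ lying in $\sigma_1^*\cup\sigma_2^*$ and using planarity of $\HH$ together with the full strength of the conclusion $\Int{\sigma_1}\cup\Int{\sigma_2}\subset\Int\gamma$ of Fact~\ref{fact:circuits-max}.
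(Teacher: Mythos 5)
Your first two steps are correct and follow the same route as the paper: the case analysis showing $\sigma_1^*\cap\sigma_2^*\neq\emptyset$ or $\IntVert{\sigma_1}\cap\IntVert{\sigma_2}\neq\emptyset$, the application of Fact~\ref{fact:circuits-max}, and the verification that $E_1\cup E_2\subset E:=\IntEdge\gamma\cup\gamma^*$ are all fine. The gap is in the final step, which you yourself leave as a plan (``I would then show\dots''): proving that every $z\in\T^\clr\cap\partial\IntHex\gamma$ has a bordering vertex in $\IntVert{\sigma_1}\cup\IntVert{\sigma_2}$. This is precisely the inclusion $\T^\clr\cap\partial\IntHex\gamma\subset\partial\IntHex{\sigma_1}\cup\partial\IntHex{\sigma_2}$ on which the whole lemma rests, and the route you sketch for the delicate case $z''\in\sigma_1\cap\sigma_2$ --- ``planarity of $\HH$ together with the full strength of $\Int{\sigma_1}\cup\Int{\sigma_2}\subset\Int\gamma$'' --- cannot succeed on its own: if one forgets that the circuits avoid $\T^\clr$, the trapped configuration you want to exclude is genuinely realizable (two circuits making sharp turns at $z''$, with $\gamma$ using one dual edge of each, leave a $\gamma$-interior arc of four vertices of $z''$ whose two middle vertices lie outside both interiors). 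So planarity plus the interior containment do not suffice; the exclusion must come from the colouring, which your sketch never invokes at this point.

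The missing observation closes the gap in a few lines. Every edge of the hexagon $z''$ lying in $\sigma_1^*\cup\sigma_2^*$ is dual to a $\T$-edge from $z''$ to a hexagon of $\sigma_1\cup\sigma_2\subset\T\setminus\T^\clr$; since the six $\T$-neighbours of $z''$ alternate between colour $\clr$ and the third colour, all such edges lie among three pairwise non-adjacent edges of the hexagon $z''$, and the edge $\{z,z''\}^*=\{u,v\}$, whose other bordering hexagon $z$ is in $\T^\clr$, is not one of them. As $z''\in\gamma$, exactly two edges of $z''$ lie in $\gamma^*\subset\sigma_1^*\cup\sigma_2^*$; since these are distinct and only one of the three admissible edges (the one opposite $\{u,v\}$) avoids both $u$ and $v$, some $\gamma^*$-edge at $z''$ is incident to $u$ or $v$, say $\{u,u'\}\in\gamma^*\cap\sigma_j^*$. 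Because $u\in V(z)\subset\IntVert\gamma$, the exterior endpoint of this $\gamma^*$-edge is $u'$, so its endpoint in $\IntVert{\sigma_j}$, which lies in $\IntVert{\sigma_j}\subset\IntVert\gamma$, must be $u$. Hence $u\in V(z)\cap\IntVert{\sigma_j}$, Lemma~\ref{lem:circuit-interior-degree} gives $z\in\IntHex{\sigma_j}$, and $z\in\partial\IntHex{\sigma_j}$ because its neighbour $z''$ is not in $\IntHex\gamma\supset\IntHex{\sigma_j}$; the garden property of $E_j$ then makes $z$ a $\clr$-flower. This colour-based local argument is what the paper compresses into its one-line appeal to Lemma~\ref{lem:circuit-interior-degree}, and it is the ingredient your proposal is missing.
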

\begin{proof}
    Denote $\sigma_1 := \sigma(E_1)$ and $\sigma_2 := \sigma(E_2)$.
Let us first show that necessarily
$\IntVert{\sigma_1}\cap\IntVert{\sigma_2}\neq\emptyset$ or
$\sigma_1^* \cap \sigma_2^* \neq \emptyset$. To this end, let
$v,u,w \in \VH$ be such that $\{v,u\} \in E_1$ and $\{v,w\} \in
E_2$. If $v \in \IntVert{\sigma_1}\cap\IntVert{\sigma_2}$ then we are done.
Otherwise, suppose without loss of generality that
$v\in\ExtVert{\sigma_1}$ so that $u\in\IntVert{\sigma_1}$. If also
$v\in\ExtVert{\sigma_2}$ then necessarily $w=u$ and $w\in
\IntVert{\sigma_2}$ as $\sigma_1,\sigma_2 \subset \T\setminus\T^\clr$. If
instead $v\in\IntVert{\sigma_2}$ then either $u\in\IntVert{\sigma_2}$
or $\{v,u\}\in\sigma_1^*\cap\sigma_2^*$.

By Fact~\ref{fact:circuits-max}, there exists a circuit $\gamma$ such that $\gamma^*
\subset \sigma_1^* \cup \sigma_2^*$ and $\Int{\sigma_1}
\cup \Int{\sigma_2} \subset \Int{\gamma}$. In particular,
$E_1\cup E_2 \subset E$, where $E := \IntEdge{\gamma} \cup \gamma^*$. It
remains to show that $E$ is a $\clr$-garden of $\omega$. Since, by
Lemma~\ref{lem:circuit-interior-degree}, $\T^\clr \cap
\partial \IntHex\gamma \subset
\partial \IntHex{\sigma_1} \cup \partial \IntHex{\sigma_2}$,
this follows from the assumption that $E_1$ and $E_2$ are
$\clr$-gardens of $\omega$.
\end{proof}

\begin{lemma}
  \label{cl:gardens-different-class}
  Let $\omega$ be a loop configuration, let $E_0$ be a $\clr_0$-garden of $\omega$ and let $E_1$ be a $\clr_1$-garden of $\omega$ with $\clr_0,\clr_1 \in \{0,1,2\}$ distinct. Then, either $E_0\subset E_1$, $E_1 \subset E_0$ or $E_0\cap E_1=\emptyset$.
\end{lemma}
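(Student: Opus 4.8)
The plan is to reduce everything to the corresponding statement about the circuits $\sigma_0 := \sigma(E_0)$ and $\sigma_1 := \sigma(E_1)$. Since $E_i = \IntEdge{\sigma_i}\cup\sigma_i^*$, containment of gardens $E_0 \subset E_1$ is equivalent to $\Int{\sigma_0} \subset \Int{\sigma_1}$ (as induced subgraphs of $\HH$, using Fact~\ref{fact:gamma-int-ext}), and disjointness $E_0 \cap E_1 = \emptyset$ holds exactly when $\Int{\sigma_0}$ and $\Int{\sigma_1}$ share no vertices and the circuits $\sigma_0,\sigma_1$ share no edges, i.e., $\IntVert{\sigma_0}\cap\IntVert{\sigma_1} = \emptyset$ and $\sigma_0^*\cap\sigma_1^* = \emptyset$ (one should check this equivalence carefully, since a priori $E_0$ and $E_1$ could share an edge of $\gamma^*$-type only; but a shared edge always forces a shared interior vertex or a shared circuit edge). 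So the task becomes: for two circuits $\sigma_0,\sigma_1$, either their interiors nest or their closures are disjoint. This trichotomy for circuits is essentially the Jordan-curve content of Fact~\ref{fact:circuits-max}, applied twice.

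First I would dispose of the case where $\sigma_0^*\cap\sigma_1^* = \emptyset$ and $\IntVert{\sigma_0}\cap\IntVert{\sigma_1} = \emptyset$: this is precisely $E_0 \cap E_1 = \emptyset$ and we are done. So assume $\sigma_0^*\cap\sigma_1^* \ne \emptyset$ or $\IntVert{\sigma_0}\cap\IntVert{\sigma_1}\ne\emptyset$. By Fact~\ref{fact:circuits-max} there is a circuit $\gamma \subset \sigma_0\cup\sigma_1$ with $\gamma^*\subset \sigma_0^*\cup\sigma_1^*$ and $\Int{\sigma_0}\cup\Int{\sigma_1}\subset\Int\gamma$. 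The key point is now to show that $\gamma$ must in fact coincide with one of $\sigma_0,\sigma_1$; this will force the corresponding interior to contain the other. Suppose toward a contradiction that $\gamma$ uses at least one edge from $\sigma_0$ and at least one edge from $\sigma_1$. Walking along $\gamma$, there is a hexagon $z$ at which $\gamma$ switches from a $\sigma_0$-edge to a $\sigma_1$-edge; at such a hexagon two of the three incident edges of $\T$ lie in $\gamma^*$, one inherited from $\sigma_0^*$ and one from $\sigma_1^*$. I would then argue that $z$ is forced to have a color in $\{\clr_0,\clr_1\}$ — indeed, the three hexagons around a vertex of $\HH$ have three distinct colors, and the edge structure at $z$ together with $\sigma_0\subset\T\setminus\T^{\clr_0}$, $\sigma_1\subset\T\setminus\T^{\clr_1}$ will pin down $z$'s color — and moreover that the flower condition at $z$ (coming from whichever garden has $z$ on its $\partial\IntHex{}$) contradicts the vacancy of the circuits from Lemma~\ref{lem:circuits-and-clusters}\eqref{it:circuits-and-clusters2}. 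Concretely: if $z\in\T^{\clr_0}$ then $z\notin\sigma_0$, so $z$ lies strictly inside $\Int{\sigma_0}$ or strictly outside it; being adjacent to $\sigma_1^*$-edges on its boundary while also bordering $\sigma_0$, one checks $z\in\T^{\clr_0}\cap\partial\IntHex{\sigma_1}$ (or the symmetric statement), hence $z$ is a $\clr_0$-flower of $\omega$, hence surrounded by a trivial loop in $\omega$; but then all six bordering edges of $z$ are in $\omega$, while the $\sigma_1^*$-edge at $z$ must be vacant in $\omega$ by Lemma~\ref{lem:circuits-and-clusters}\eqref{it:circuits-and-clusters2} — contradiction.

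Hence $\gamma = \sigma_0$ or $\gamma = \sigma_1$; say $\gamma = \sigma_1$. Then $\Int{\sigma_0}\subset\Int{\sigma_1}$, which gives $\IntEdge{\sigma_0}\cup\sigma_0^* \subset \IntEdge{\sigma_1}\cup\sigma_1^*$, i.e.\ $E_0 \subset E_1$; symmetrically $\gamma=\sigma_0$ gives $E_1\subset E_0$. This exhausts the cases. The main obstacle I anticipate is the bookkeeping in the switching-hexagon argument: carefully identifying, from the local picture of two circuit-edges of $\T$ meeting at $z$ together with the color constraints, that $z$ must be a flower of one of the two gardens, and then extracting the contradiction with vacancy. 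This is where the hypothesis that the two gardens have \emph{distinct} colors $\clr_0\ne\clr_1$ is essential — it is exactly what prevents a "switching" hexagon from existing, since a single hexagon cannot simultaneously avoid being a flower for both color classes while bordering edges of both circuits. Everything else is a routine translation between the combinatorics of circuits and the edge-sets of gardens, using Facts~\ref{fact:gamma-int-ext}--\ref{fact:circuits-max} and Lemmas~\ref{lem:circuits-and-clusters}--\ref{lem:circuit-interior-degree}.
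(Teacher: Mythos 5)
Your reduction to circuits and the appeal to Fact~\ref{fact:circuits-max} are fine, but the heart of your argument --- the contradiction at the ``switching hexagon'' --- does not go through as written. A hexagon $z$ at which $\gamma$ passes from an edge dual to $\sigma_0^*$ to an edge dual to $\sigma_1^*$ is an endpoint of a $\T$-edge of $\sigma_0$ and of a $\T$-edge of $\sigma_1$, hence $z\in\sigma_0\cap\sigma_1\subset(\T\setminus\T^{\clr_0})\cap(\T\setminus\T^{\clr_1})$, i.e.\ $z$ has the \emph{third} color --- the opposite of your claim that its color lies in $\{\clr_0,\clr_1\}$ --- and neither garden's flower condition says anything about such a hexagon. (Your ``concrete'' case is internally inconsistent: if $z\in\T^{\clr_0}$ then indeed $z\notin\sigma_0$, but then $z$ cannot be incident to a $\sigma_0$-edge of $\gamma$ and so is not a switch point; also, a hexagon has six incident $\T$-edges, not three.) Moreover, the inference ``$z\in\T^{\clr_0}\cap\partial\IntHex{\sigma_1}$, hence $z$ is a $\clr_0$-flower'' is a non sequitur: the $\clr_1$-garden $E_1$ only forces hexagons of $\T^{\clr_1}\cap\partial\IntHex{\sigma_1}$ to be flowers, and gives no information about $\clr_0$-colored hexagons on $\partial\IntHex{\sigma_1}$. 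Finally, there is no purely local obstruction at a third-color hexagon: around such a $z$ the $\T^{\clr_0}$- and $\T^{\clr_1}$-neighbors alternate and two circuits of the two types can perfectly well interleave there, so no contradiction can be extracted from the switch point alone, and your key claim that $\gamma$ must coincide with $\sigma_0$ or $\sigma_1$ is left unproved.

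The missing content is a non-local step that the paper supplies. Assuming $E_0\cap E_1\neq\emptyset$ (say $\clr_0=0$, $\clr_1=1$), one first finds an edge $e\in\sigma_0^*\cap E_1$; since $e$ is bordered by a $\T^1$- and a $\T^2$-hexagon and $\sigma_1$ avoids $\T^1$, Lemma~\ref{lem:circuit-interior-degree} puts the $\T^1$-hexagon bordering $e$ in $\IntHex{\sigma_1}$, and it lies on $\sigma_0$. If $\sigma_0\not\subset\IntHex{\sigma_1}$ (otherwise $E_0\subset E_1$ by Fact~\ref{fact:gamma-int-ext}), walking along $\sigma_0$ produces consecutive hexagons $y\in\IntHex{\sigma_1}$ and $y'\notin\IntHex{\sigma_1}$, which forces $y'\in\sigma_0\cap\sigma_1\subset\T^2$ and $y\in\T^1\cap\partial\IntHex{\sigma_1}$. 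The garden property of $E_1$ makes $y$ a $1$-flower, while $y\in\sigma_0$ makes $y$ adjacent to a hexagon of $\T^0\cap\partial\IntHex{\sigma_0}$, i.e.\ to a $0$-flower of $E_0$; two adjacent hexagons cannot both be surrounded by trivial loops, a contradiction. If you prefer your ``trivial loop versus vacancy'' mechanism, it can be salvaged at this correctly identified hexagon: $y$ being a $1$-flower puts all six edges of $y$ in $\omega$, two of which lie in $\sigma_0^*$, contradicting the vacancy of $\sigma_0$ from Lemma~\ref{lem:circuits-and-clusters}\ref{it:circuits-and-clusters2}. Either way, what your sketch lacks is the argument locating a $\clr_1$-colored hexagon of $\partial\IntHex{\sigma_1}$ on the circuit $\sigma_0$ (or symmetrically); the switching-hexagon analysis cannot supply it.
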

\begin{proof}
  Assume without loss of generality that $\clr_0=0$, $\clr_1=1$ and that $E_0\cap E_1\ne \emptyset$. Denote $\sigma_0 := \sigma(E_0) \subset \T \setminus \T^0$ and $\sigma_1 := \sigma(E_1) \subset \T \setminus \T^1$. Consider an infinite path in $\HH$ beginning with some edge of $E_0 \cap E_1$ and let $e \in \EH$ be the first edge on this path that is not in $\IntEdge{\sigma_0} \cap \IntEdge{\sigma_1}$ (maybe the first edge itself). We may assume without loss of generality that $e \notin \IntEdge{\sigma_0}$. Thus, $e \in \sigma_0^*$, and, therefore, $e$ is bordered by a hexagon $z \in \T^1$ and a hexagon in $\T^2$. Since $e$ is also in $E_1$, $z$ belongs to $\IntHex{\sigma_1}$, by Lemma~\ref{lem:circuit-interior-degree}.
  Now, if $\sigma_0 \subset \IntHex{\sigma_1}$ then $E_0\subset E_1$, by Fact~\ref{fact:gamma-int-ext}.
Otherwise, there exists $\{y,y'\} \in \sigma_0^*$ such that $y\in\IntHex{\sigma_1}$ and $y' \notin \IntHex{\sigma_1}$.
In particular, $y'$ must be in $\sigma_0 \cap \sigma_1 \subset \T^2$, so that $y$ must be in  $\T^1$. Since $y$ is in $\partial \IntHex{\sigma_1}$, it must be a $1$-flower of $\omega$. But since $y$ is on $\sigma_0$, it must also be adjacent to a $0$-flower of $\omega$, which is a contradiction.
\end{proof}

\begin{definition}[$\clr$-cluster, $\clr$-cluster inside $\gamma$]
Let $\clr\in\{0,1,2\}$ and let $\omega$ be a loop configuration. A
subset $E \subset \EH$ is a {\em $\clr$-cluster} of $\omega$ if
it is a $\clr$-garden of $\omega$ and it is not contained in any
other garden of $\omega$. Let
$\gamma$ be a vacant circuit in $\omega$ and note that $\omega \cap
\IntEdge\gamma$ is a loop configuration by
Lemma~\ref{lem:circuits-and-clusters}\ref{it:circuits-and-clusters1}.
A subset $E \subset \EH$ is a {\em $\clr$-cluster} of $\omega$
{\em inside} $\gamma$ if it is a $\clr$-cluster of $\omega \cap
\IntEdge\gamma$.
\end{definition}

\begin{figure}
    \centering
    \includegraphics[scale=0.8]{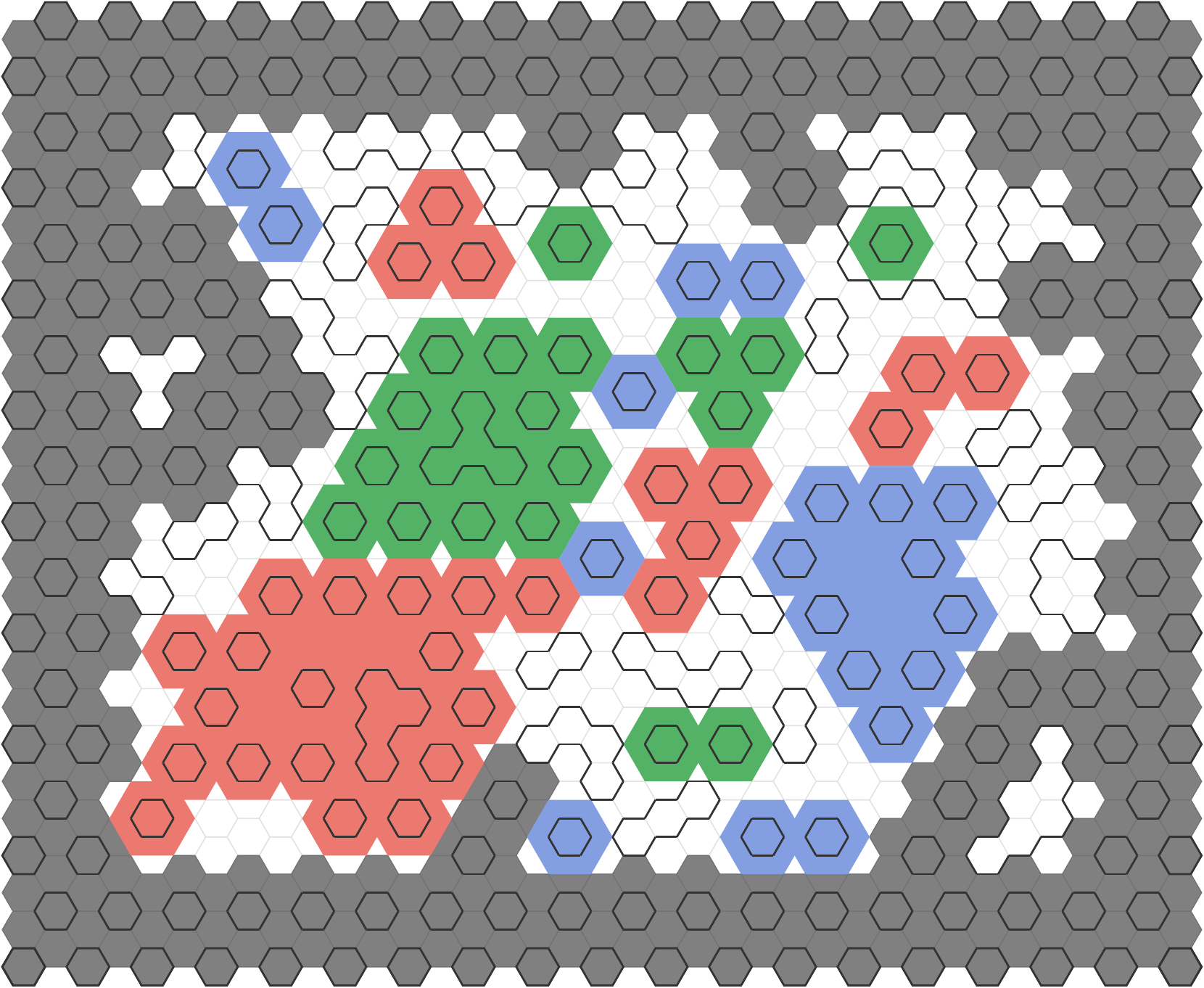}
    \caption{A loop configuration $\omega\in\LC(H,\ground^0)$. The $0$-clusters are denoted in green, the $1$-clusters in red and the $2$-clusters in blue; all taken with respect to the circuit surrounding the large unshaded domain.}
    \label{fig:breakup}
\end{figure}

We say that $E \subset \EH$ is a cluster (inside $\gamma$) if it is a $\clr$-cluster (inside $\gamma$) for some $\clr\in\{0,1,2\}$.
Once again, note that a cluster (inside $\gamma$) is a {\em subset of edges of }$\HH$. Evidently, a cluster of $\omega$ inside $\gamma$ is also a garden of $\omega$, but it is {\em not} necessarily a cluster of $\omega$. The notion of $\clr$-cluster inside $\gamma$ will be important in the definition of the repair map in Section~\ref{sec:repair-map}. Note that, by Lemma~\ref{cl:gardens-same-class} and Lemma~\ref{cl:gardens-different-class},
\begin{equation}\label{eq:clusters-are-disjoint}
\text{any two distinct clusters of $\omega$ (inside $\gamma$) are edge disjoint} ,
\end{equation}
and, moreover, for any $\clr\in\{0,1,2\}$,
\begin{equation}\label{eq:clusters-are-disconnected}
\text{the union of any two distinct $\clr$-clusters of $\omega$
(inside $\gamma$) is a disconnected set of edges} ,
\end{equation}
where a set of edges $E$ is said to be connected if the graph whose
vertex set is the set of endpoints of edges in $E$ and whose edge
set is $E$ is connected. Note also, that by Fact~\ref{fact:gamma-int-ext},
\begin{equation}\label{eq:clusters-are-connected}
\text{every cluster of $\omega$ (inside $\gamma$) is a connected set of edges} .
\end{equation}

\subsection{Statement of the main lemma}
\label{sec:main lemma}

We are now in a position to state the main lemma.
For a loop configuration $\omega$ and a vacant circuit $\gamma$ in
$\omega$, denote by $V(\omega,\gamma)$ the set of vertices $v \in
\IntVert\gamma$ such that the three edges of $\HH$ incident to $v$
are not all contained in the same cluster of $\omega$ inside
$\gamma$. One checks simply using Lemma~\ref{lem:circuit-interior-degree} that a vertex $v\in\IntVert\gamma$ satisfies $v\in V(\omega, \gamma)$ if and only if $v$ is incident to an edge which is not in any cluster or each of its incident edges lies in a different cluster.

For a vacant circuit $\gamma\subset \T\setminus\T^0$, the set $V(\omega, \gamma)$ specifies the deviation in $\omega$ from the $0$-phase ground state along the interior boundary of $\gamma$. Our main lemma shows that having a large deviation is exponentially unlikely.

\begin{lemma}
    \label{lem:prob-outer-circuit}
    There exists an absolute constant $c>0$ such that for any $n>0$, any $x \in (0,\infty]$, any domain $H$, any circuit $\gamma\subset \T\setminus\T^0$ and any positive integer $k$, we have
    \[
    \Pr_{H,n,x}^0\big(\partial\IntVert\gamma\subset V(\omega,\gamma)\text{ and }|V(\omega,\gamma)|\ge k \,\mid\, \text{$\gamma$ vacant} \big) \le (cn \cdot \min\{x^6,1\})^{-k/15} .
    \]
\end{lemma}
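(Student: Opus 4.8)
The plan is to apply Lemma~\ref{lem:prob-inequality-tool} with $E$ the event in the statement (restricted to configurations in which $\gamma$ is vacant) and $F$ a ``repaired'' event in which the configuration agrees with $\ground^0$ along $\partial\IntVert\gamma$, via a repair map $\sfT$ that takes $\omega$ to a new loop configuration obtained by fixing up the deviation set $V(\omega,\gamma)$. Concretely, since $\gamma$ is vacant, $\omega\cap\IntEdge\gamma$ is a loop configuration by Lemma~\ref{lem:circuits-and-clusters}\ref{it:circuits-and-clusters1}, and we may work entirely inside $\gamma$, leaving $\omega$ untouched on $\ExtEdge\gamma\cup\gamma^*$. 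The clusters of $\omega$ inside $\gamma$ are pairwise edge-disjoint, connected sets of edges (by \eqref{eq:clusters-are-disjoint}, \eqref{eq:clusters-are-connected}), each a $\clr$-garden for some $\clr\in\{0,1,2\}$; the repair map shifts each $1$-cluster ``down'' and each $2$-cluster ``up'' by one hexagon to align it with $\ground^0$, leaves $0$-clusters untouched, and replaces everything in $\IntVert\gamma$ not covered by a cluster with trivial loops of $\ground^0$ (the boundary data forces this replacement to be consistent because $\gamma\subset\T\setminus\T^0$ and every $z\in\T^0$ bordering the relevant region becomes a $0$-flower). One must check that the result $\sfT(\omega)$ is a genuine loop configuration in $\LC(H,\ground^0)$ agreeing with $\omega$ outside $\gamma$: this is where Lemma~\ref{lem:circuits-and-clusters}\ref{it:circuits-and-clusters3}--\ref{it:circuits-and-clusters5} and the disjointness/disconnectedness facts \eqref{eq:clusters-are-disjoint}--\eqref{eq:clusters-are-disconnected} get used, since the boundaries of adjacent cluster-regions must mesh correctly with the inserted trivial loops.

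Next I would estimate $p$, the probability gain. Each vertex $v\in V(\omega,\gamma)$ witnesses a genuine defect in $\omega$ relative to $\ground^0$ restricted to $\IntVert\gamma$; the repair operation strictly increases the number of edges used on such vertices (toward full packing) while the shifts preserve edge counts, and it does not destroy loops beyond what the rearrangement creates. A careful accounting of $o_H(\sfT(\omega))-o_H(\omega)$ and $L_H(\sfT(\omega))-L_H(\omega)$ shows that the weight ratio $x^{o_H(\sfT(\omega))}n^{L_H(\sfT(\omega))}/(x^{o_H(\omega)}n^{L_H(\omega)})$ is at least $(cn\min\{x^6,1\})^{|V(\omega,\gamma)|/C_1}$ for absolute constants $c,C_1$: the factor $n$ per ``repaired'' batch of vertices comes from creating trivial loops, and the factor $x^6$ (or $1$ if $x\ge 1$) comes from the added edges, with the denominator $15$ in the final bound reflecting the worst-case number of deviation vertices affected per unit of weight gain. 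Thus one may take $p=(cn\min\{x^6,1\})^{k/C_1}$ on the event $|V(\omega,\gamma)|\ge k$ (after possibly decomposing into connected pieces and handling each with a bounded local surgery).

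Then I would bound $q$, the number of preimages of a fixed $\psi\in F$. Given $\psi=\sfT(\omega)$, the lost information is: (i) which region of $\IntVert\gamma$ was modified — equivalently, the union $\bigcup$ of the shifted/replaced clusters, which is a connected (or controlled) subset of $\IntVert\gamma$ of size comparable to $|V(\omega,\gamma)|$ — and (ii) the original contents of that region. Point (i) is controlled by Lemma~\ref{lem:number-of-connected-graphs} applied to (a bounded-degree auxiliary graph on) $\HH$: the number of connected vertex sets of a given size containing a fixed reference vertex is at most $(e(d-1))^{\text{size}}$. Point (ii): inside a region of $m$ vertices there are at most $2^{(3/2)m}$ loop configurations, but more usefully, within the known cluster structure the preimage is essentially determined up to the shift directions and a bounded combinatorial choice per cluster, giving $q\le D^{|V(\omega,\gamma)|}$ for an absolute constant $D$. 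Combining, $\Pr(E)\le (q/p)\Pr(F)\le (D^k/(cn\min\{x^6,1\})^{k/C_1})$, and choosing constants so that $C_1=15$ and absorbing $D$ into $c$ yields the claimed bound $(cn\min\{x^6,1\})^{-k/15}$.

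The main obstacle, I expect, is the precise definition of the repair map and the verification that it outputs a valid loop configuration compatible with the boundary: the clusters can be nested, adjacent, and of different types, and shifting one cluster up while a neighboring one is shifted down (or replaced by trivial loops) could a priori create degree-$3$ vertices or sever loops at the seams. Getting the local geometry of ``flowers'' and cluster boundaries to match so that Lemma~\ref{lem:circuits-and-clusters}\ref{it:circuits-and-clusters4}--\ref{it:circuits-and-clusters5} apply cleanly, and simultaneously keeping the weight-gain bookkeeping tight enough to land the constant $1/15$, is the delicate heart of the argument; this is presumably why the paper devotes separate sections (Sections~\ref{sec:repair-map}, \ref{sec:compare probabilities}, \ref{sec:proof of lemma}) to it, and why the notions of flower, garden, and cluster were developed with care beforehand.
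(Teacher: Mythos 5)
Your plan follows the paper's own strategy (the repair map that shifts $1$- and $2$-clusters into the $0$-phase and fills the bad region with trivial loops, Lemma~\ref{lem:prob-inequality-tool}, a gain per deviation vertex, and an entropy bound over connected deviation sets), but the step that carries the real new idea --- the preimage count --- is not carried out, and as you describe it it would fail. You propose to enumerate ``the union of the shifted/replaced clusters'', claiming it has size comparable to $|V(\omega,\gamma)|$; this is false: shifted $1$- and $2$-clusters are modified wholesale and can be arbitrarily large compared with $|V(\omega,\gamma)|$, so enumerating the modified region costs far more entropy than the weight gain provides. Likewise, saying the preimage is determined ``up to the shift directions and a bounded combinatorial choice per cluster'' presupposes that one can read off from $\shift\omega$ the decomposition into shifted-cluster regions and filled regions, which is exactly the information that is lost. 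The paper's route is different in this respect: it first fixes the set $V=V(\omega,\gamma)$ and proves a bound for each such $V$ --- here is where the hypothesis $\partial\IntVert\gamma\subset V(\omega,\gamma)$ enters, since it makes $V(\omega,\gamma)$ connected in $\HH^{\times}$ (Lemma~\ref{cl:bad-vertices-are-connected}), so by Lemma~\ref{lem:number-of-connected-graphs} there are only exponentially many candidate $V$'s of a given size --- and then shows that $\shift\omega$ together with $V$ determines $\omega$ on every edge outside $E(V)$: an edge with an endpoint $u_0\notin V$ lies in some cluster, whose type $\clr$ is recovered by following a path from $u_0$ to $V$ and reading the colors of the two hexagons bordering the first edge crossing into $V$; on that cluster $\omega$ equals $\shift\omega$, $\shift\omega^{\DIRup}$ or $\shift\omega^{\DIRdown}$ accordingly. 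This leaves only the $2^{3|V|/2}$ subsets of $E(V)$ as preimage freedom, which is what makes $q/p$ small.

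The weight-gain bookkeeping also cannot be waved through if you want the exponent $1/15$: one needs that the discarded edges $\omega\cap\BadEdgesBefore(\omega)$ contain no trivial loops (every trivial loop lies in a cluster), so each destroyed loop has length at least $10$, and one needs the identity $|\BadVert(\omega)|=|V(\omega,\gamma)|+6\,|\dbl(\omega)|$ (Lemma~\ref{cl:bad-edges-formula}), whose correction term handles hexagons where the shifted $1$- and $2$-clusters overlap; together these give $\Delta L\ge |V(\omega,\gamma)|/15$ and $\Delta o=|V'(\omega,\gamma)|\ge 0$, hence the base $n\cdot\min\{x^6,1\}$. Finally, the lemma is stated for a general domain $H$ with $\ground^0$ boundary conditions and conditioning on $\gamma$ vacant, where $\gamma$ may leave $H$; your reduction to working inside $\gamma$ needs, besides the domain Markov property, the fact that the repair map preserves the forced $\ground^0$-agreement on $\IntEdge\gamma\setminus E(H)$ (Lemma~\ref{lem:range-of-repair-map-new}, using that such edges cannot lie in $1$- or $2$-clusters), which you gesture at but do not establish.
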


The reader should first have in mind the simpler case of the lemma
in which $H = \Int\gamma$. In this case the boundary conditions may
equivalently be taken to be vacant. The lemma is stated in greater
generality, allowing, in particular, for $\gamma$ to leave the
domain $H$, i.e., for $\Int\gamma\not\subset H$. This additional
flexibility is used in the proofs of Theorem~\ref{thm:Gibbs
measures_large_x} and
Theorem~\ref{thm:small-deviations-from-the-ground-state-x-model} to
handle the case of domains without a type.

One should note that Lemma~\ref{lem:prob-outer-circuit} contains the implicit assumption that $n \ge n \cdot \min\{x^6,1\} \ge C$, as otherwise its statement is trivial.

\subsection{Definition of the repair map}
\label{sec:repair-map}

For the remainder of this section, we fix a circuit $\gamma\subset \T\setminus \T^0$ and set $H := \Int\gamma$.
 Consider a loop configuration $\omega$ such that $\gamma$ is vacant in $\omega$. The idea of the repair map is to modify $\omega$ as follows:
\begin{itemize}[noitemsep,nolistsep]
\item
  Edges in $1$-clusters inside $\gamma$ are shifted down ``into the $0$-phase''.
\item
  Edges in $2$-clusters inside $\gamma$ are shifted up ``into the $0$-phase''.
\item
  Edges in $0$-clusters inside $\gamma$ are left untouched.
\item
  The remaining edges which are not inside (the shifted) clusters, but are in the interior of $\gamma$ (these edges will be called {\em bad}), are overwritten to ``match'' the $0$-phase ground state, $\ground^0$.
\end{itemize}
See Figure~\reffig{fig:proof-illustration} for an illustration of this
map.

In order to formalize this idea, we need a few definitions. A
\emph{shift} is a graph automorphism of $\T$ which maps every
hexagon to one of its neighbors. We henceforth fix a shift $\DIRup$
which maps $\T^0$ to $\T^1$ (and hence, maps $\T^1$ to $\T^2$ and
$\T^2$ to $\T^0$), and denote its inverse by $\DIRdown$. A shift
naturally induces mappings on the set of vertices and the set of
edges of $\HH$. We shall use the same symbols, $\DIRup$ and
$\DIRdown$, to denote these mappings. Recall from
Section~\ref{sec:results} that $\T$ has a coordinate system given by
$(0,2)\Z + (\sqrt3,1)\Z$ and that $(\T^0, \T^1, \T^2)$ are the color
classes of an arbitrary proper $3$-coloring of $\T$. In our figures
we make the choice that $(0,0) \in \T^0$ and $(0,2) \in \T^1$ so
that $\DIRup$ is the map $(a,b) \mapsto (a,b+2)$.

For a loop configuration $\omega \in \LC(H,\emptyset)$ and $\clr \in
\{0,1,2\}$, let $E^\clr(\omega)\subset \EH$ be the union of all
$\clr$-clusters of $\omega$. Note that, since $H=\Int\gamma$, for
$\omega \in \LC(H,\emptyset)$, the notions of a $\clr$-cluster and a
$\clr$-cluster inside $\gamma$ coincide. For $\omega \in
\LC(H,\emptyset)$, define also
\begin{align}
  \label{eq:def-bad-edges}
  \BadEdges(\omega) &:= (\IntEdge\gamma \cup \gamma^*) \setminus \big( E^0(\omega) \cup E^1(\omega)^{\DIRdown} \cup E^2(\omega)^{\DIRup} \big),
\\
  \label{eq:def-bad-edges-star}
  \BadEdgesBefore(\omega) &:= (\IntEdge\gamma\cup\gamma^*)\setminus \big( E^0(\omega) \cup E^1(\omega) \cup E^2(\omega) \big).
\end{align}
Note that, by~\eqref{eq:clusters-are-disjoint}, $\{ E^0(\omega),
E^1(\omega), E^2(\omega), \BadEdgesBefore(\omega) \}$ is a partition
of $\IntEdge\gamma \cup \gamma^*$. Thus, Lemma~\ref{lem:circuits-and-clusters} implies that
\begin{equation}\label{eq:pairwise-disjoint-loop-configs}
\text{$\omega \cap E^0(\omega)$, $\omega \cap E^1(\omega)$, $\omega \cap E^2(\omega)$ and $\omega \cap \BadEdgesBefore(\omega)$ are pairwise disjoint loop configurations.}
\end{equation}
See Figure~\reffig{fig:breakup} and Figure~\reffig{fig:proof-illustration} for an illustration of these notions. Finally, we define the
\emph{repair map}
\[ \shiftFunc \colon \LC(H,\emptyset) \to \LC(H,\emptyset) \]
by
  \begin{equation*}
    \label{eq:def-repair-map}
      \shift\omega :=  \big(\omega \cap E^0(\omega)\big)  \cup \big(\omega \cap E^1(\omega)\big)^{\DIRdown} \cup \big(\omega \cap E^2(\omega)\big)^{\DIRup}\cup\big(\ground^0 \cap \BadEdges(\omega)\big)  .
  \end{equation*}
The fact that the mapping is well-defined, i.e., that $\shift\omega$
is indeed in $\LC(H,\emptyset)$, is not completely straightforward.
This follows from the following proposition, together with the
simple property in
Lemma~\ref{lem:circuits-and-clusters}\ref{it:circuits-and-clusters4}.

%
%

\afterpage{
\newgeometry{left=15mm,bottom=40mm,top=10mm}
\pagestyle{empty}
\begin{figure}
    \centering
    \begin{subfigure}[t]{.5\textwidth}
        \includegraphics[scale=0.75]{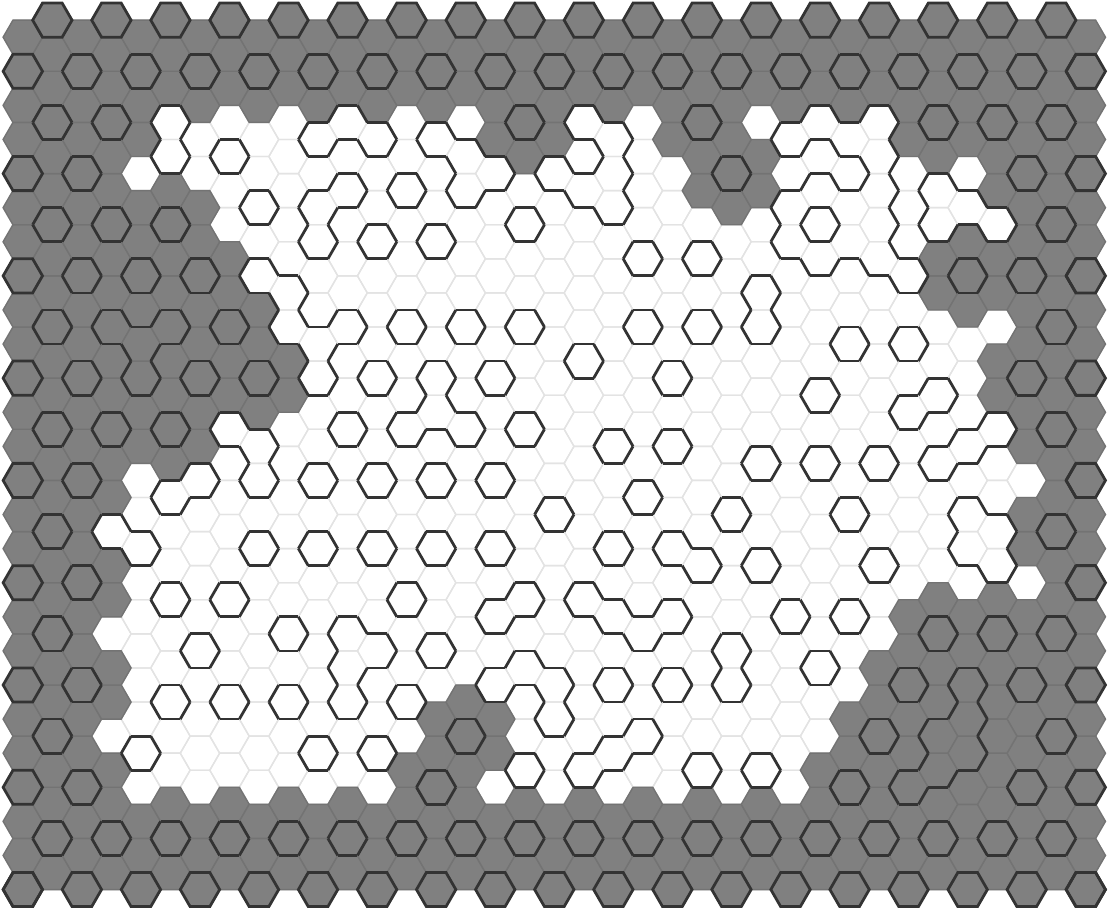}
        \caption{The breakup is found by exploring $0$-flowers from the boundary.}
        \label{fig:proof-illustration-1}
    \end{subfigure}%
    \begin{subfigure}{25pt}
        \quad
    \end{subfigure}%
    \begin{subfigure}[t]{.5\textwidth}
        \includegraphics[scale=0.75]{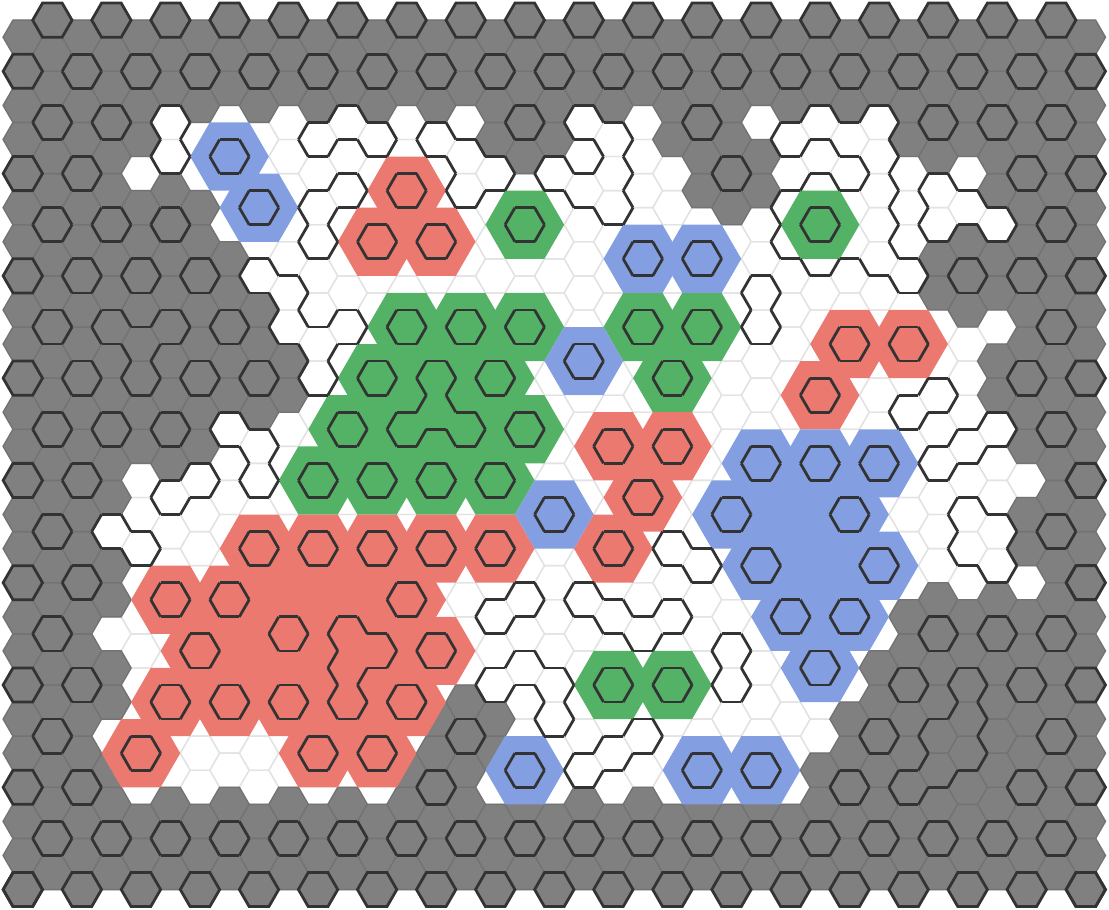}
        \caption{The clusters are found within the breakup.}
        \label{fig:proof-illustration-2}
    \end{subfigure}
    \medbreak
    \begin{subfigure}[t]{.5\textwidth}
        \includegraphics[scale=0.75]{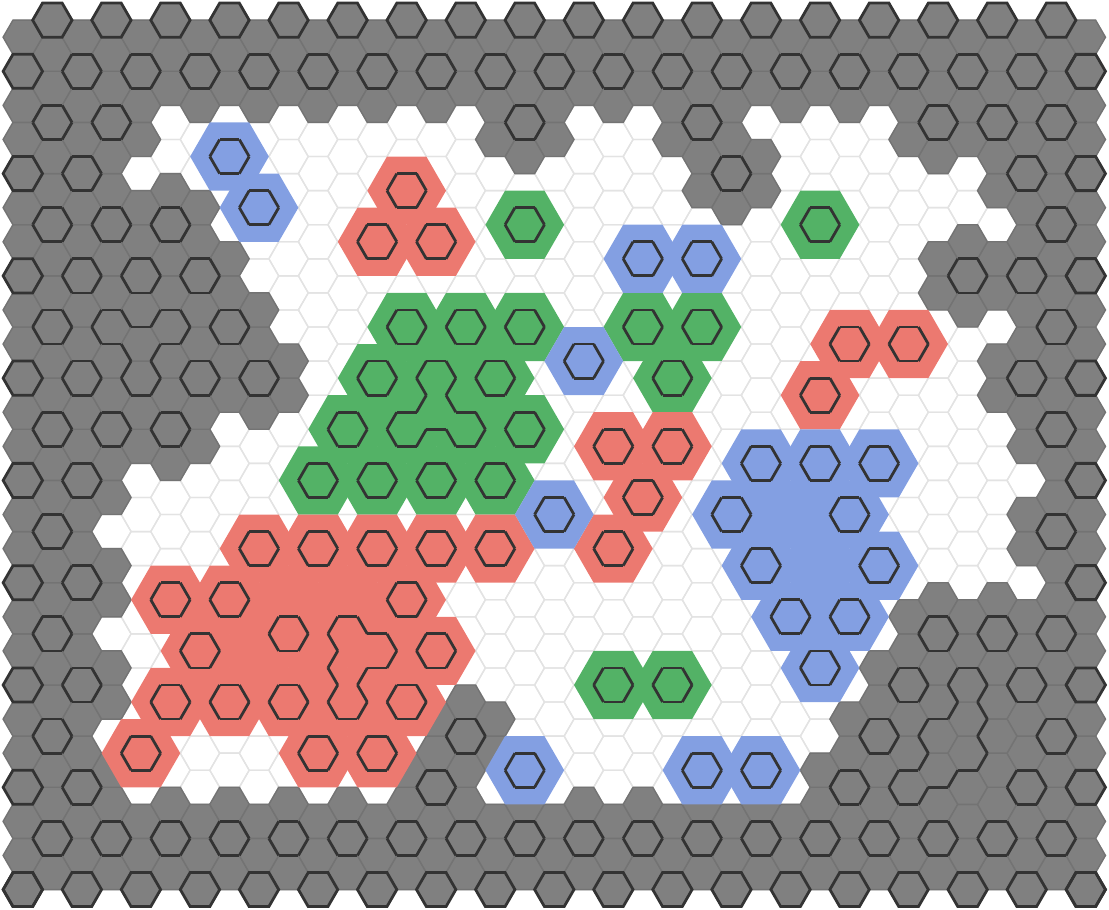}
        \caption{Bad edges are discarded.}
        \label{fig:proof-illustration-3}
    \end{subfigure}%
    \begin{subfigure}{25pt}
        \quad
    \end{subfigure}%
    \begin{subfigure}[t]{.5\textwidth}
        \includegraphics[scale=0.75]{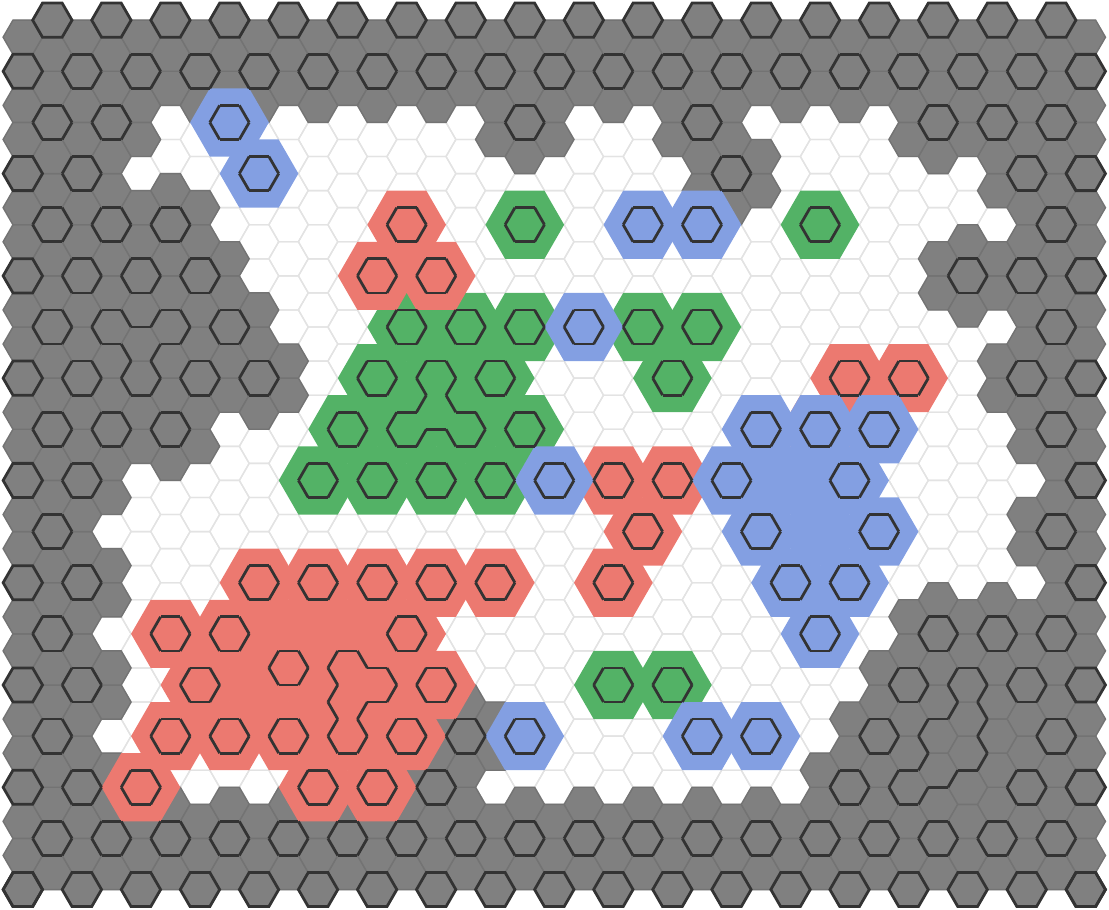}
        \caption{The clusters are shifted into the $0$-phase.
        }
        \label{fig:proof-illustration-4}
    \end{subfigure}
    \medbreak
    \begin{subfigure}[t]{.5\textwidth}
        \includegraphics[scale=0.75]{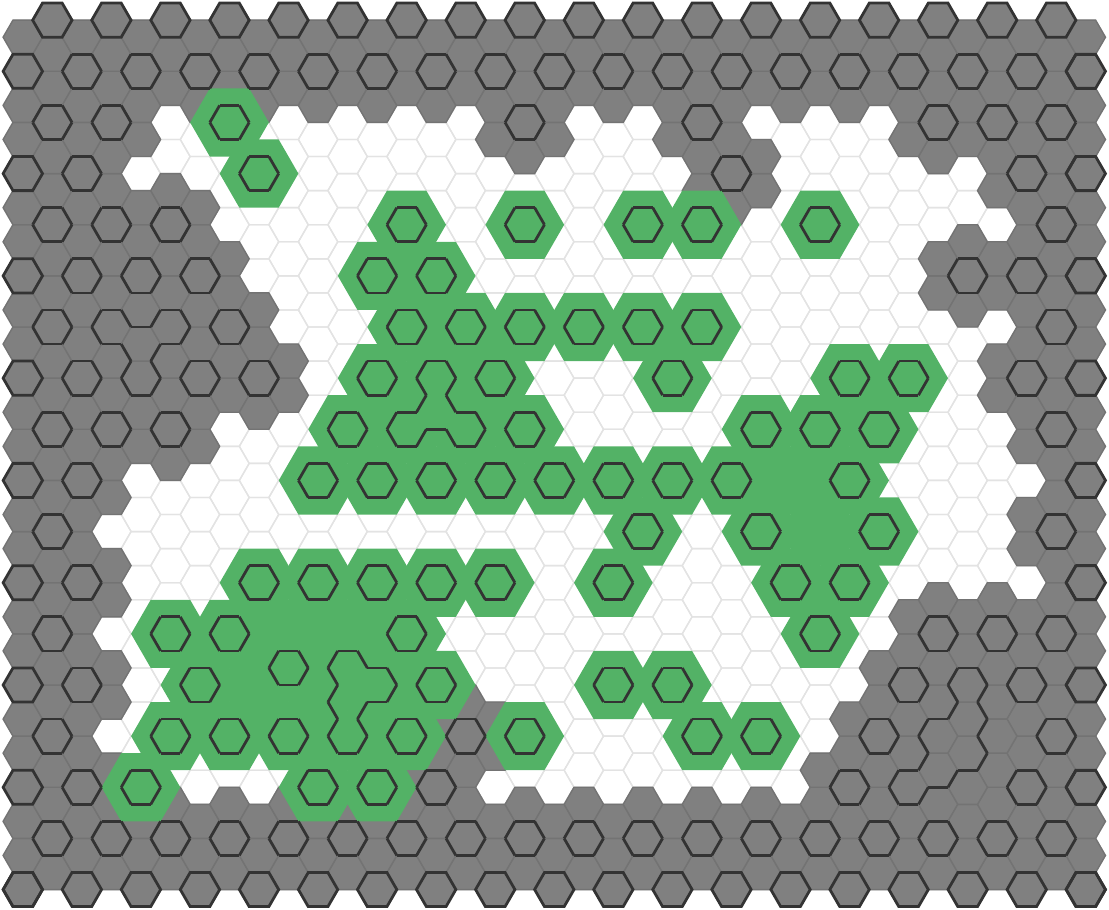}
        \caption{The empty area outside the shifted clusters is now compatible with the $0$-phase ground state.}
        \label{fig:proof-illustration-5}
    \end{subfigure}%
    \begin{subfigure}{25pt}
        \quad
    \end{subfigure}%
    \begin{subfigure}[t]{.5\textwidth}
        \includegraphics[scale=0.75]{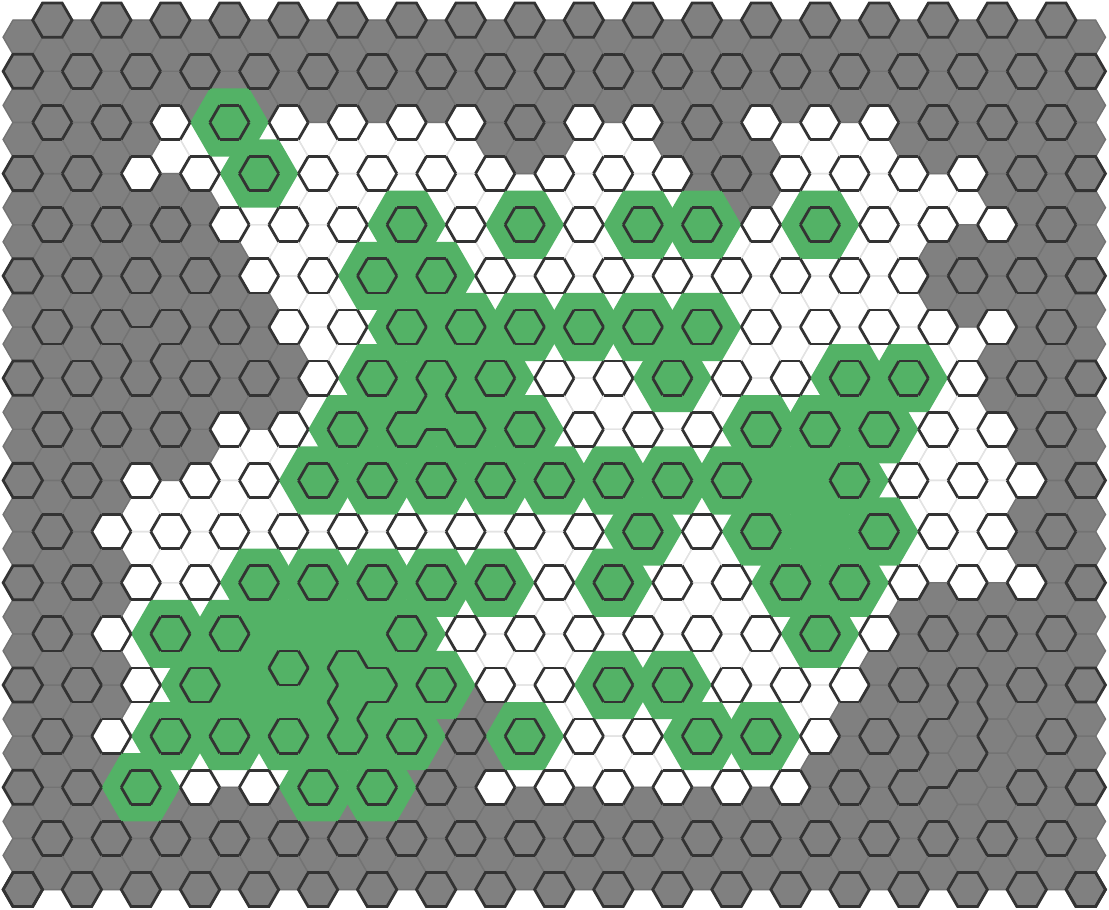}
        \caption{Trivial loops are packed in the empty area outside the shifted clusters.}
        \label{fig:proof-illustration-6}
    \end{subfigure}
    \caption{An illustration of finding the breakup and applying the repair map in it. The initial loop configuration is modified step-by-step, resulting in a loop configuration with many more loops and at least as many edges. Formal definitions are in Section~\ref{sec:repair-map}.}
    \label{fig:proof-illustration}
\end{figure}
\restoregeometry\clearpage }

\begin{prop}\label{prop:repair-map-disjoint-loop-configs}
  Let $\omega\in \LC(H,\emptyset)$. Then $\omega \cap E^0(\omega)$, $(\omega \cap E^1(\omega))^{\DIRdown}\cup(\omega \cap E^2(\omega))^{\DIRup}$ and $\ground^0 \cap \BadEdges(\omega)$ are pairwise disjoint loop configurations in $\LC(H,\emptyset)$.
\end{prop}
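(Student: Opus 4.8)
The plan is to verify the three required properties in turn: that each of the three sets is a loop configuration, that they are pairwise disjoint, and that each is contained in $\IntEdge\gamma\cup\gamma^*$ so that (being a loop configuration) it lies in $\LC(H,\emptyset)$ once we note that its complement in $\EH$ contains $\ExtEdge\gamma$.

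First I would handle $\omega\cap E^0(\omega)$. Since $E^0(\omega)$ is a disjoint union of $0$-clusters, each of which is a $0$-garden, Lemma~\ref{lem:circuits-and-clusters}\eqref{it:circuits-and-clusters3} shows $\omega\cap E$ is a loop configuration for each individual garden $E$; these loop configurations are supported on disjoint edge sets by \eqref{eq:clusters-are-disjoint}, and moreover the clusters are pairwise disconnected in the strong sense of \eqref{eq:clusters-are-disconnected}, so no vertex is shared between two of them — hence their union is a loop configuration by repeated application of Lemma~\ref{lem:circuits-and-clusters}\eqref{it:circuits-and-clusters4}. (Alternatively, one sees directly from \eqref{eq:pairwise-disjoint-loop-configs} that $\omega\cap E^0(\omega)$ is a loop configuration.) Next, for $(\omega\cap E^1(\omega))^{\DIRdown}\cup(\omega\cap E^2(\omega))^{\DIRup}$: the sets $\omega\cap E^1(\omega)$ and $\omega\cap E^2(\omega)$ are loop configurations by \eqref{eq:pairwise-disjoint-loop-configs}, and since $\DIRdown,\DIRup$ are graph automorphisms of $\HH$, their images are also loop configurations. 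The key point — and I expect this to be the main obstacle — is that these two shifted configurations are \emph{disjoint} and share no vertex, so that their union is again a loop configuration by Lemma~\ref{lem:circuits-and-clusters}\eqref{it:circuits-and-clusters4}. This is where the geometry must be used: a $1$-cluster lies inside a circuit $\sigma\subset\T\setminus\T^1$ with all of $\T^1\cap\partial\IntHex\sigma$ being $1$-flowers, so its edges are ``trapped'' between $1$-flowers; shifting down by $\DIRdown$ moves it so its boundary flowers become $0$-flowers. A $2$-cluster similarly shifts up to be bounded by $0$-flowers. One shows that after shifting, a $1$-cluster and a $2$-cluster cannot meet: using Lemma~\ref{cl:gardens-different-class} the original clusters were nested or disjoint, and the shift directions are chosen consistently so that nesting is preserved with a ``buffer'' of flower hexagons (which become part of the $0$-phase) separating them. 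I would make this precise by tracking, for an edge $e$ bordering hexagons $z_1\in\T^{c}$, $z_2\in\T^{c+1}$, which cluster (if any) $e$ can belong to after the shift, and checking that the constraint imposed by being a $1$-flower boundary (resp. $2$-flower boundary) forces a separation.

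Finally I would address $\ground^0\cap\BadEdges(\omega)$. This is a subset of $\ground^0$, hence — being an intersection of $\ground^0$ (which is a fully-packed, and in particular legitimate, loop configuration) with an arbitrary edge set — one must check it is itself a loop configuration; this follows because $\BadEdgesBefore(\omega)$ is a garden-like region whose defining circuit is vacant, and more directly because $\BadEdges(\omega)$ is a union of gardens of $\ground^0$ (the relevant circuits bound regions of the $0$-phase), so Lemma~\ref{lem:circuits-and-clusters}\eqref{it:circuits-and-clusters3} applies with $\omega$ replaced by $\ground^0$; alternatively, since the complement of $\BadEdges(\omega)$ in $\IntEdge\gamma\cup\gamma^*$ consists of shifted clusters whose boundary hexagons are $0$-flowers of $\ground^0$, the set $\ground^0\setminus\BadEdges(\omega)$ is a loop configuration, and then $\ground^0\cap\BadEdges(\omega)=\ground^0\setminus(\ground^0\setminus\BadEdges(\omega))$ is a loop configuration by Lemma~\ref{lem:circuits-and-clusters}\eqref{it:circuits-and-clusters5}. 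For disjointness of this third set from the first two: $\BadEdges(\omega)$ is by definition \eqref{eq:def-bad-edges} disjoint from $E^0(\omega)\supset \omega\cap E^0(\omega)$, from $E^1(\omega)^{\DIRdown}\supset(\omega\cap E^1(\omega))^{\DIRdown}$, and from $E^2(\omega)^{\DIRup}\supset(\omega\cap E^2(\omega))^{\DIRup}$, so it is disjoint from all three shifted pieces, and a fortiori $\ground^0\cap\BadEdges(\omega)$ is. Membership in $\LC(H,\emptyset)$ for all three sets follows since each is contained in $\IntEdge\gamma\cup\gamma^*=\EH\setminus\ExtEdge\gamma$, so each agrees with the empty configuration off $E(H)$ (recalling $H=\Int\gamma$); combined with being a loop configuration this gives membership in $\LC(H,\emptyset)$.
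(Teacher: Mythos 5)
You have hit a genuine obstruction, and unfortunately your proposed way around it is based on a false claim. The assertion you single out as the key point --- that $(\omega\cap E^1(\omega))^{\DIRdown}$ and $(\omega\cap E^2(\omega))^{\DIRup}$ are disjoint and share no vertex --- is wrong in general. If $z\in\T^0$ is such that $z^{\DIRup}$ is a $1$-flower on the boundary of a $1$-cluster and $z^{\DIRdown}$ is a $2$-flower on the boundary of a $2$-cluster, then both shifted sets contain the trivial loop around $z$; nothing forbids this, since the hexagons $z^{\DIRup}$ and $z^{\DIRdown}$ share no vertex of $\HH$, so both trivial loops can be present in $\omega$ simultaneously. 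This is precisely the content of the paper's Lemma~\ref{lem:E12-intersection}, which shows that the intersection $(\omega\cap E^1(\omega))^{\DIRdown}\cap(\omega\cap E^2(\omega))^{\DIRup}$ equals the union of the trivial loops around the ``double-clustered'' hexagons $\dbl(\omega)$ --- an overlap that is not an accident to be excluded but a feature used later (it produces the $6\,|\dbl(\omega)|$ correction in Lemma~\ref{cl:bad-edges-formula}). Hence your ``nesting plus a buffer of flowers forces separation'' argument cannot be made to work. The union is nonetheless a loop configuration, but the correct route is the paper's: show the intersection consists of whole trivial loops and then write the union as a disjoint union using Lemma~\ref{lem:circuits-and-clusters}\ref{it:circuits-and-clusters5} and \ref{it:circuits-and-clusters4}. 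Note also that the proposition only claims disjointness among the three displayed sets, not between the two shifted families inside the second one.

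Two further points are missing or merely asserted. First, you never address the pairwise disjointness of $\omega\cap E^0(\omega)$ from $(\omega\cap E^1(\omega))^{\DIRdown}\cup(\omega\cap E^2(\omega))^{\DIRup}$; your disjointness discussion only covers the third set, where indeed \eqref{eq:def-bad-edges} suffices. A priori a $1$-cluster shifted down could collide with a $0$-cluster, and ruling this out needs the geometric fact that the shift preserves disjointness of circuit interiors (Lemma~\ref{lem:circuits-and-shifts}\ref{it:circuits-and-shifts3}) together with \eqref{eq:clusters-are-disjoint}. Second, the containment of the shifted clusters inside $\IntEdge\gamma$ is not automatic and you do not justify it: it requires Lemma~\ref{lem:circuits-and-shifts}\ref{it:circuits-and-shifts1}, i.e.\ that shifting a circuit $\sigma'\subset\T\setminus\T^1$ with $\Int{\sigma'}\subset\Int\gamma$ down keeps its interior inside $\Int\gamma$ (using $\gamma\subset\T\setminus\T^0$); moreover, for membership in $\LC(H,\emptyset)$ with $H=\Int\gamma$ you want containment in $E(H)=\IntEdge\gamma$, not merely in $\IntEdge\gamma\cup\gamma^*$ as you state. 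Your treatment of $\omega\cap E^0(\omega)$ and of $\ground^0\cap\BadEdges(\omega)$ is essentially the paper's and is fine.
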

We require the following simple geometric lemma.
\begin{lemma}\label{lem:circuits-and-shifts}
    Let $\sigma \subset \T \setminus \T^0$ and $\sigma' \subset \T \setminus \T^1$ be circuits.
    \begin{enumerate}[label=(\alph*), ref=\alph*, labelindent=\parindent]
        \item If $\Int{\sigma'} \subset \Int{\sigma}$ then $\Int{\sigma'}^{\DIRdown} \subset \Int{\sigma}$. \label{it:circuits-and-shifts1}
        \item If $\Int{\sigma'} \subset \Ext{\sigma}$ then $\Int{\sigma'}^{\DIRdown} \subset \Ext{\sigma}$. \label{it:circuits-and-shifts2}
        \item If $\IntVert{\sigma'} \cap \IntVert{\sigma} = \emptyset$ then $\IntVert{\sigma'}^{\DIRdown} \cap \IntVert{\sigma} = \emptyset$. \label{it:circuits-and-shifts3}
    \end{enumerate}
\end{lemma}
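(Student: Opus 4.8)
The plan is to reduce all three statements, which concern vertex sets of $\HH$, to statements about hexagons, and to shuttle between the two points of view using Lemma~\ref{lem:circuit-interior-degree}. The only genuinely geometric input is the following observation about shifts: for every hexagon $w$, the hexagons $w$ and $w^{\DIRdown}$ are neighbours in $\T$, so by planar duality the corresponding faces of $\HH$ share an edge $e \in \EH$; in particular the two endpoints of $e$ lie in $V(w)\cap V(w^{\DIRdown})$, so that set is non-empty. Moreover, since a shift is an automorphism of $\HH$, it is compatible with the operator $V(\cdot)$, i.e.\ $V(w)^{\DIRdown}=V(w^{\DIRdown})$, and it maps $\T^1$ onto $\T^0$. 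With this in hand, the core of the proof is the claim that \emph{if $\tau\subset\T\setminus\T^0$ is a circuit and $w\in\T^1$, then $V(w)\subset\IntVert\tau$ implies $w^{\DIRdown}\in\IntHex\tau$, and $V(w)\cap\IntVert\tau=\emptyset$ implies $w^{\DIRdown}\notin\IntHex\tau$.}

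To prove the claim, let $e$ be the edge of $\HH$ shared by the faces $w$ and $w^{\DIRdown}$, so both endpoints of $e$ lie in $V(w)\cap V(w^{\DIRdown})$. If $V(w)\subset\IntVert\tau$, these endpoints lie in $\IntVert\tau$, so $V(w^{\DIRdown})\cap\IntVert\tau\ne\emptyset$; since $w^{\DIRdown}\in\T^0$ and $\tau\subset\T\setminus\T^0$, Lemma~\ref{lem:circuit-interior-degree} gives $w^{\DIRdown}\in\IntHex\tau$. If instead $V(w)\cap\IntVert\tau=\emptyset$, these same endpoints are not in $\IntVert\tau$, so $V(w^{\DIRdown})\not\subset\IntVert\tau$, which by the definition of $\IntHex\tau$ means $w^{\DIRdown}\notin\IntHex\tau$. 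I also record the attendant dichotomy: for $z\in\T^0$ and $\tau\subset\T\setminus\T^0$, Lemma~\ref{lem:circuit-interior-degree} together with the definition of $\IntHex\tau$ shows that $V(z)$ is \emph{either} contained in $\IntVert\tau$ \emph{or} disjoint from it; this is what upgrades ``$w^{\DIRdown}\notin\IntHex\tau$'' to ``$V(w^{\DIRdown})\cap\IntVert\tau=\emptyset$''.

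Given the claim, part~\eqref{it:circuits-and-shifts1} follows by taking an arbitrary $v\in\IntVert{\sigma'}$: the three hexagons bordering $v$ are the corners of a triangular face of $\T$, hence pairwise adjacent, hence carry the three distinct colours; let $w\in\T^1$ be the one of colour $1$. Since $v\in V(w)\cap\IntVert{\sigma'}$ and $\sigma'\subset\T\setminus\T^1$, Lemma~\ref{lem:circuit-interior-degree} gives $w\in\IntHex{\sigma'}$, so $V(w)\subset\IntVert{\sigma'}\subset\IntVert\sigma$ by hypothesis; the claim (with $\tau=\sigma$) then gives $w^{\DIRdown}\in\IntHex\sigma$, so $V(w^{\DIRdown})\subset\IntVert\sigma$, and as $v^{\DIRdown}\in V(w)^{\DIRdown}=V(w^{\DIRdown})$ we get $v^{\DIRdown}\in\IntVert\sigma$. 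Since $v$ was arbitrary and $\Int{\sigma'}^{\DIRdown},\Int\sigma$ are induced subgraphs, this is exactly $\Int{\sigma'}^{\DIRdown}\subset\Int\sigma$. Part~\eqref{it:circuits-and-shifts2} is identical, except the hypothesis gives $V(w)\subset\IntVert{\sigma'}\subset\ExtVert\sigma$, i.e.\ $V(w)\cap\IntVert\sigma=\emptyset$, whence the claim and the dichotomy give $V(w^{\DIRdown})\cap\IntVert\sigma=\emptyset$, so $v^{\DIRdown}\in\ExtVert\sigma$. Finally, part~\eqref{it:circuits-and-shifts3} is a reformulation of~\eqref{it:circuits-and-shifts2}: since $\{\IntVert\sigma,\ExtVert\sigma\}$ partitions $\VH$, the condition $\IntVert{\sigma'}\cap\IntVert\sigma=\emptyset$ is equivalent to $\Int{\sigma'}\subset\Ext\sigma$, and likewise for the conclusions. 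I do not anticipate a real obstacle; the only points demanding care are matching the colour class to the circuit when applying Lemma~\ref{lem:circuit-interior-degree} (colour $1$ for $\sigma'$, colour $0$ for $\sigma$ and for $w^{\DIRdown}$), and the planar-duality fact that $\T$-neighbouring hexagons correspond to $\HH$-faces sharing an edge.
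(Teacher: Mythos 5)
Your proof is correct, and while it rests on the same pillars as the paper's --- Lemma~\ref{lem:circuit-interior-degree} throughout, part~(\ref{it:circuits-and-shifts2}) handled by the same mechanism as part~(\ref{it:circuits-and-shifts1}), and part~(\ref{it:circuits-and-shifts3}) obtained as a reformulation of~(\ref{it:circuits-and-shifts2}) via Fact~\ref{fact:gamma-int-ext} --- the core step of~(\ref{it:circuits-and-shifts1}) is genuinely different and simpler. The paper proves the stronger hexagon-level inclusion $\IntHex{\sigma'}^{\DIRdown}\subset\IntHex{\sigma}$; since a general $z\in\IntHex{\sigma'}$ may have any colour, this is done by contradiction: a hexagon $z$ with $z^{\DIRdown}\notin\IntHex{\sigma}$ is shown to force $z^{\DIRdown}\in\sigma\cap\sigma'\subset\T^2$, and then a chain of applications of Lemma~\ref{lem:circuit-interior-degree} to the neighbours of $z$ and $z^{\DIRdown}$ yields a contradiction. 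You bypass this by shifting only the colour-$1$ hexagon $w$ bordering a given $v\in\IntVert{\sigma'}$: because $w^{\DIRdown}\in\T^0$ and $\sigma\subset\T\setminus\T^0$, the edge shared by the faces $w$ and $w^{\DIRdown}$ supplies a vertex of $V(w^{\DIRdown})$ lying in $\IntVert{\sigma}$, and a single direct application of Lemma~\ref{lem:circuit-interior-degree} gives $w^{\DIRdown}\in\IntHex{\sigma}$, hence $v^{\DIRdown}\in\IntVert{\sigma}$. The choice of colour $1$ is precisely what makes the shifted hexagon land in the colour class to which Lemma~\ref{lem:circuit-interior-degree} applies relative to $\sigma$; you prove slightly less than the paper (nothing about shifting hexagons of the other colours), but nothing more is needed, so the argument is shorter and avoids the case analysis. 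Your explicit containment-or-disjointness dichotomy for colour-$0$ hexagons, which upgrades $w^{\DIRdown}\notin\IntHex{\sigma}$ to $V(w^{\DIRdown})\cap\IntVert{\sigma}=\emptyset$, is also a clean way to make part~(\ref{it:circuits-and-shifts2}) (which the paper omits as very similar) completely explicit.
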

\begin{proof}
    We first prove~\eqref{it:circuits-and-shifts1}.
    The assumption that $\Int{\sigma'} \subset \Int{\sigma}$
    implies that $\IntHex{\sigma'} \subset \IntHex{\sigma}$.
    By Lemma~\ref{lem:circuit-interior-degree}, any vertex $v$ in $\IntVert{\sigma'}$ borders a hexagon in $\IntHex{\sigma'}$.
    Thus, it suffices to show that $\IntHex{\sigma'}^{\DIRdown} \subset
    \IntHex{\sigma}$. Assume towards a contradiction that there exists
    a hexagon $z \in \IntHex{\sigma'}$ such that $z^{\DIRdown} \notin
    \IntHex{\sigma}$. In such case, by Fact~\ref{fact:gamma-int-ext}, $z^{\DIRdown}$ must be in $\sigma\cap
    \sigma'\subset\T^2$, and consequently, $z\in\T^0$. Therefore, as $z
    \in \IntHex{\sigma'}$ and $\sigma' \subset
    \T \setminus \T^1$, Lemma~\ref{lem:circuit-interior-degree} implies that the three neighbors of $z$ in $\T^1$ belong to
    $\IntHex{\sigma'}\subset \IntHex{\sigma}$. Now, Lemma~\ref{lem:circuit-interior-degree} implies that $z^{\DIRdown}$ has three neighbors in
    $\T^0\cap \IntHex{\sigma}$. In particular, the six vertices bordering
    $z^{\DIRdown}$ belong to $\IntVert\sigma$, implying that
    $z^{\DIRdown} \in \IntHex{\sigma}$, which is a contradiction.

    The proof of~\eqref{it:circuits-and-shifts2} is very similar to
    that of~\eqref{it:circuits-and-shifts1} and so we omit it.

    Finally, by Fact~\ref{fact:gamma-int-ext}, \eqref{it:circuits-and-shifts3} is equivalent to~\eqref{it:circuits-and-shifts2}.
\end{proof}

\begin{proof}[Proof of Proposition~\ref{prop:repair-map-disjoint-loop-configs}]
  For the sake of brevity, throughout the proof, we drop $\omega$ from the notation of the above sets and write $\BadEdges$, $E^0$, $E^1$ and $E^2$.

  \medbreak\noindent
  \textsc{Step 1: $\omega \cap E^0$, $(\omega \cap E^1)^{\DIRdown}\cup(\omega \cap E^2)^{\DIRup}$ and $\ground^0 \cap \BadEdges$ are contained in $\IntEdge\gamma$.}
  \smallbreak

  Since $\gamma$ is vacant in both $\omega$ and $\ground^0$, it follows that $\omega \cap E^0$ and $\ground^0 \cap \BadEdges$ are contained in $\IntEdge\gamma$.
  It remains to show that $(\omega \cap E^1)^{\DIRdown}$ and $(\omega \cap E^2)^{\DIRup}$ are contained in $\IntEdge\gamma$.
  We show this only for $(\omega \cap E^1)^{\DIRdown}$, as the other case is symmetric.
  Let $E$ be a $1$-cluster of $\omega$.
  We must show that $(\omega \cap E)^{\DIRdown} \subset \IntEdge\gamma$.
  Since, by Lemma~\ref{lem:circuits-and-clusters}\ref{it:circuits-and-clusters2}, $\omega \cap E \subset
  \IntEdge{\sigma(E)} \subset \IntEdge\gamma$, this follows from
  Lemma~\ref{lem:circuits-and-shifts}\ref{it:circuits-and-shifts1}.

  \medbreak\noindent
  \textsc{Step 2: $\omega \cap E^0$, $(\omega \cap E^1)^{\DIRdown}\cup(\omega \cap E^2)^{\DIRup}$ and $\ground^0 \cap \BadEdges$ are pairwise disjoint.}
  \smallbreak

  By definition, $\BadEdges$ (and therefore $\ground^0 \cap \BadEdges$) is disjoint from the first two sets.
  It remains to show that $\omega\cap E^0$ is disjoint from $(\omega \cap E^1)^{\DIRdown}$ and $(\omega \cap E^2)^{\DIRup}$.
  We show this only for $\omega\cap E^0$ and $(\omega \cap E^1)^{\DIRdown}$, as the other case is symmetric.
  Let $E$ and $E'$ be $0$- and $1$-clusters of $\omega$,
  respectively. We must show that $(\omega \cap E) \cap
  (\omega \cap E')^{\DIRdown} = \emptyset$.
  By Lemma~\ref{lem:circuits-and-clusters}\ref{it:circuits-and-clusters2},
  $(\omega \cap E) \cap
  (\omega \cap E')^{\DIRdown} \subset \IntEdge{\sigma(E)} \cap
  \IntEdge{\sigma(E')}^{\DIRdown}$, which is empty by~\eqref{eq:clusters-are-disjoint} and
  Lemma~\ref{lem:circuits-and-shifts}\ref{it:circuits-and-shifts3}.

  \medbreak\noindent
  \textsc{Step 3: $\omega \cap E^0$, $(\omega \cap E^1)^{\DIRdown}\cup(\omega \cap E^2)^{\DIRup}$ and $\ground^0 \cap \BadEdges$ are loop configurations.}
  \smallbreak

    We first show that $\ground^0\cap\BadEdges$ is a loop configuration.
    Observe that $E^0 \cup (E^1)^{\DIRdown} \cup (E^2)^{\DIRup}$ is the union of $\IntEdge{\sigma} \cup \sigma^*$
    for a collection of circuits $\sigma \subset \T \setminus \T^0$.
    Since every circuit $\sigma \subset \T \setminus \T^0$ is vacant in $\ground^0$, Lemma~\ref{lem:circuits-and-clusters} implies that $\ground^0 \cap (E^0 \cup (E^1)^{\DIRdown} \cup (E^2)^{\DIRup})$ is a loop configuration, and thus, also that $\ground^0\cap\BadEdges=(\ground^0 \setminus (E^0 \cup (E^1)^{\DIRdown} \cup (E^2)^{\DIRup})) \cap \IntEdge\gamma$ is a loop configuration.

    Since $\omega \cap E^0$ is a loop configuration, by~\eqref{eq:pairwise-disjoint-loop-configs}, it remains only to check that $(\omega \cap E^1)^{\DIRdown}\cup(\omega \cap E^2)^{\DIRup}$ is a loop configuration.
    In light of~\eqref{eq:pairwise-disjoint-loop-configs} and Lemma~\ref{lem:circuits-and-clusters}(\ref{it:circuits-and-clusters4},\ref{it:circuits-and-clusters5}), it suffices to show that $(\omega \cap E^1)^{\DIRdown}\cap(\omega \cap E^2)^{\DIRup}$ is a loop configuration. For convenience, we prove this separately in the next lemma.
\end{proof}

 For a hexagon $z \in \T$, we denote by $E(z)$ the six edges bordering $z$.
 We call a hexagon $z \in \T$ \emph{double-clustered} for $\omega$
 if $E(z^{\DIRup}) \subset E^1(\omega)$ and $E(z^{\DIRdown}) \subset E^2(\omega)$.
 Denote by $\dbl(\omega)$ the subset of all hexagons in $\IntHex\gamma$
 that are double-clustered for $\omega$.

\begin{lemma}\label{lem:E12-intersection}
    Let $\omega\in \LC(H,\emptyset)$. Then $\dbl(\omega) \subset \T^0$ and $(\omega \cap E^1(\omega))^{\DIRdown}\cap(\omega \cap E^2(\omega))^{\DIRup}$ consists solely of the trivial loops surrounding the hexagons in $\dbl(\omega)$. That is,
    \[ (\omega \cap E^1(\omega))^{\DIRdown}\cap(\omega \cap E^2(\omega))^{\DIRup} = \bigcup_{z \in \dbl(\omega)} E(z) .\]
\end{lemma}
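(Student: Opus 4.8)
The plan is to analyze the intersection edge-by-edge and then bootstrap from single edges to whole hexagons. Suppose $e \in (\omega \cap E^1)^{\DIRdown} \cap (\omega \cap E^2)^{\DIRup}$. Then $e^{\DIRup} \in \omega \cap E^1$ and $e^{\DIRdown} \in \omega \cap E^2$, so $e^{\DIRup}$ lies in some $1$-cluster $E_1$ with associated vacant circuit $\sigma_1 \subset \T \setminus \T^1$, and $e^{\DIRdown}$ lies in some $2$-cluster $E_2$ with vacant circuit $\sigma_2 \subset \T \setminus \T^2$. First I would identify the hexagon $z$ such that $e$ borders $z$ and $e$ also borders $z^{\DIRup\DIRup} = z^{\DIRdown}$ — more precisely, pick the hexagon $w$ bordering $e^{\DIRup}$ that lies in $\T^1$ (it exists since $e^{\DIRup} \in \IntEdge{\sigma_1} \cup \sigma_1^*$ and $\sigma_1 \subset \T\setminus\T^1$, so by Lemma~\ref{lem:circuit-interior-degree} the $\T^1$-endpoint-hexagon of $e^{\DIRup}$ lies in $\IntHex{\sigma_1}$), hence $w$ is a $1$-flower of $\omega$, meaning $E(w) \subset \omega$ and in fact $E(w) \subset E_1 \subset E^1$. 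Symmetrically, the hexagon $w'$ bordering $e^{\DIRdown}$ lying in $\T^2$ is a $2$-flower, so $E(w') \subset E^2$. Setting $z := w^{\DIRdown} = (w')^{\DIRup}$ (one checks the shifts line up because $e^{\DIRup}$ borders $w \in \T^1$ and $e^{\DIRdown}$ borders $w' \in \T^2$, forcing $w^{\DIRdown} = (w')^{\DIRup} \in \T^0$), we get $E(z^{\DIRup}) = E(w) \subset E^1$ and $E(z^{\DIRdown}) = E(w') \subset E^2$, i.e., $z$ is double-clustered, $z \in \dbl(\omega)$, and $z \in \T^0$. This simultaneously proves $\dbl(\omega) \subset \T^0$ (any double-clustered hexagon arises this way, or one argues directly: $E(z^{\DIRup}) \subset E^1$ forces $z^{\DIRup}$ to border only $\T^1$-flowers, which via Lemma~\ref{lem:circuit-interior-degree} pins down $z^{\DIRup} \in \T^1$) and shows that every edge of the intersection lies in $\bigcup_{z \in \dbl(\omega)} E(z)$.

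For the reverse inclusion, fix $z \in \dbl(\omega)$. Then $E(z^{\DIRup}) \subset E^1(\omega)$, and since $E^1(\omega) = E^1(\omega) \cap \omega$ on cluster edges that are present — more carefully, each $1$-cluster $E$ satisfies $\omega \cap E \supset$ (all trivial loops of $1$-flowers bordering $\sigma(E)$) — in particular, because $z^{\DIRup} \in \T^1$ and all six bordering edges of $z^{\DIRup}$ lie in $E^1(\omega)$, Lemma~\ref{lem:circuit-interior-degree} gives $z^{\DIRup} \in \IntHex{\sigma(E)}$ for the relevant cluster and hence $z^{\DIRup}$ is a $1$-flower, so $E(z^{\DIRup}) \subset \omega$; thus $E(z^{\DIRup}) \subset \omega \cap E^1(\omega)$, giving $E(z) = E(z^{\DIRup})^{\DIRdown} \subset (\omega \cap E^1(\omega))^{\DIRdown}$. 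Symmetrically $E(z) \subset (\omega \cap E^2(\omega))^{\DIRup}$, so $E(z)$ lies in the intersection. Taking the union over $z \in \dbl(\omega)$ completes the reverse inclusion, and combined with the first part yields the claimed equality.

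The main obstacle I anticipate is the bookkeeping in the first paragraph: one must carefully verify that the two shifts $\DIRup$ and $\DIRdown$ applied to the respective flower hexagons produce the *same* hexagon $z$, and that this $z$ lies in $\T^0$ — this is where the coloring structure ($\DIRup$ cyclically permutes $\T^0 \to \T^1 \to \T^2$) is used, together with the geometric fact that a single edge of $\HH$ borders exactly two hexagons of $\T$, which lie in two distinct color classes. The rest is a routine application of Lemma~\ref{lem:circuit-interior-degree} (to translate "all six bordering edges in a cluster" into "the hexagon is a flower") and the defining property of clusters/gardens that flower hexagons carry their full trivial loop in $\omega$. One should also note that the right-hand side $\bigcup_{z \in \dbl(\omega)} E(z)$ is indeed a loop configuration — a disjoint union of trivial loops — which is exactly what is needed to close Step~3 of the proof of Proposition~\ref{prop:repair-map-disjoint-loop-configs}.
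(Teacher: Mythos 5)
There is a genuine gap at the very first step of your forward direction, and it is exactly the main content of that direction. An edge of $\HH$ borders only \emph{two} hexagons, which lie in two of the three color classes; there is no a priori reason why one of the hexagons bordering $e^{\DIRup}$ should lie in $\T^1$ (equivalently, why $e$ should border a hexagon of $\T^0$). Your parenthetical justification only produces the $\T^1$-hexagon bordering an \emph{endpoint} of $e^{\DIRup}$, which need not border the edge $e^{\DIRup}$ itself, so the hexagon $w$ you work with afterwards may simply not exist: the case in which $e$ borders a $\T^1$- and a $\T^2$-hexagon is never excluded, and excluding it is not free. Note also that you never use that $\sigma(E_1),\sigma(E_2)$ are vacant in $\omega$ (Lemma~\ref{lem:circuits-and-clusters}\ref{it:circuits-and-clusters2}); the paper uses this to upgrade $e^{\DIRup}\in E_1$, $e^{\DIRup}\in\omega$ to $e^{\DIRup}\in\IntEdge{\sigma(E_1)}$ (both endpoints in $\IntVert{\sigma(E_1)}$), and similarly for $e^{\DIRdown}$, and only then deduces that $e$ borders a hexagon of $\T^0$; ruling out the bad color pattern requires combining the two cluster conditions (via Lemma~\ref{lem:circuit-interior-degree}, the edge-disjointness~\eqref{eq:clusters-are-disjoint} and the flower/degree structure of $\omega$), none of which appears in your sketch.

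The second recurring gap is the inference ``$w\in\IntHex{\sigma(E)}$, hence $w$ is a flower.'' This is false in general: the definition of a garden only guarantees that hexagons of $\T^{\clr}\cap\partial\IntHex{\sigma(E)}$ are flowers, and a hexagon deep inside a cluster need not carry a trivial loop. This breaks your reverse inclusion (where you need $E(z^{\DIRup})\subset\omega$) and also the forward one. The missing ingredient, which is how the paper argues, is to show that $z\notin\IntHex{\sigma(E_1)}\cup\IntHex{\sigma(E_2)}$: if, say, $z\in\IntHex{\sigma(E_1)}$ then $E(z)\subset E^1(\omega)$ while $E(z)$ shares an edge with $E(z^{\DIRdown})\subset E^2(\omega)$, contradicting~\eqref{eq:clusters-are-disjoint}; only then are $z^{\DIRup}$ and $z^{\DIRdown}$ boundary hexagons of their gardens and hence flowers. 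The same disjointness argument is what actually proves $\dbl(\omega)\subset\T^0$ (if $z\in\T^1$, Lemma~\ref{lem:circuit-interior-degree} forces $z\in\IntHex{\sigma(E_1)}$ and one gets the same edge-sharing contradiction); your ``direct'' color argument (``$E(z^{\DIRup})\subset E^1$ forces $z^{\DIRup}$ to border only $\T^1$-flowers, pinning $z^{\DIRup}\in\T^1$'') is a non sequitur, since a hexagon of any color can have all six bordering edges inside a $1$-cluster. So while your overall skeleton (two inclusions, Lemma~\ref{lem:circuit-interior-degree}, trivial loops of flowers) matches the paper's, the two decisive steps are not established.
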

\begin{proof}
    Let $z \in \dbl(\omega)$. Then $z^{\DIRup} \in \IntHex{\sigma(E_1)}$ and $z^{\DIRdown} \in \IntHex{\sigma(E_2)}$, where $E_1$ and $E_2$ are $1$- and $2$-clusters of $\omega$, respectively. It follows from Lemma~\ref{lem:circuit-interior-degree} and~\eqref{eq:clusters-are-disjoint} that $z \in \T^0$ and that $z \notin \IntHex{\sigma(E_1)} \cup \IntHex{\sigma(E_2)}$. Thus, $z^{\DIRup}$ is a $1$-flower of $\omega$ and $z^{\DIRdown}$ is a $2$-flower of $\omega$. In particular, $E(z) \subset (\omega \cap E^1(\omega))^{\DIRdown}\cap(\omega \cap E^2(\omega))^{\DIRup}$.

    For the opposite containment, let $e \in (\omega \cap E^1(\omega))^{\DIRdown}\cap(\omega \cap E^2(\omega))^{\DIRup}$. Then $e^{\DIRup} \in \IntEdge{\sigma(E_1)} \cup \sigma(E_1)^*$ and $e^{\DIRdown} \in \IntEdge{\sigma(E_2)} \cup \sigma(E_2)^*$, where $E_1$ and $E_2$ are $1$- and $2$-clusters of $\omega$, respectively. Since, by Lemma~\ref{lem:circuits-and-clusters}\ref{it:circuits-and-clusters2}, $\sigma(E_1)$ and $\sigma(E_2)$ are vacant in $\omega$, we have $e^{\DIRup} \in \IntEdge{\sigma(E_1)}$ and $e^{\DIRdown} \in \IntEdge{\sigma(E_2)}$. In particular, both endpoints of $e^{\DIRup}$ belong to $\IntVert{\sigma(E_1)}$ and both endpoints of $e^{\DIRdown}$ belong to $\IntVert{\sigma(E_2)}$. Therefore, by Lemma~\ref{lem:circuit-interior-degree}, $e$ must border a hexagon $z$ in $\T^0$, and $E(z^{\DIRup}) \subset E_1$ and $E(z^{\DIRdown}) \subset E_2$. Thus, $z \in \dbl(\omega)$.
\end{proof}

The next lemma shows that certain boundary conditions are preserved by the repair map.

\begin{lemma}\label{lem:range-of-repair-map-new}
    Let $H'$ be a domain and denote $\cE := \LC(H,\emptyset) \cap \LC(H',\ground^0 \cap \IntEdge\gamma)$.
    Then $\shift{\cE} \subset \cE$.
\end{lemma}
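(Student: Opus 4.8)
The plan is to compare $\shift{\omega}$ with the prescribed boundary data $\ground^0 \cap \IntEdge\gamma$ edge by edge. By Proposition~\ref{prop:repair-map-disjoint-loop-configs} we already know that $\shift{\omega} \in \LC(H,\emptyset)$ for every $\omega \in \cE \subseteq \LC(H,\emptyset)$, so the only thing left to verify is that $\shift{\omega}$ agrees with $\ground^0 \cap \IntEdge\gamma$ on $\EH \setminus E(H')$. Outside $\IntEdge\gamma$ both configurations are empty, so I would reduce immediately to the set $D := \IntEdge\gamma \setminus E(H')$; there the definition of $\cE$ gives $\omega \cap D = \ground^0 \cap D$, and the goal becomes the single assertion that the repair map acts as the identity on $D$, i.e. $\shift{\omega} \cap D = \omega \cap D$.

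The first substantial step is the structural claim that every $1$-cluster and every $2$-cluster of $\omega$ is contained in $E(H')$. For a $1$-cluster $E$ with circuit $\sigma := \sigma(E) \subset \T \setminus \T^1$, each edge of $\sigma^*$ joins a hexagon of $\T^0$ to a hexagon of $\T^2$, hence borders a hexagon of $\T^0$, hence lies in $\ground^0$; since $\sigma$ is vacant in $\omega$ by Lemma~\ref{lem:circuits-and-clusters}\ref{it:circuits-and-clusters2} while $\omega = \ground^0$ on $D$, no edge of $\sigma^*$ can lie in $D$. As $\sigma^*$ also avoids $\gamma^*$ (whose edges join $\T^1$ to $\T^2$) and lies in $\IntEdge\gamma \cup \gamma^*$ because $E$ is a cluster inside $\gamma$, this forces $\sigma^* \subseteq \IntEdge\gamma \cap E(H') \subseteq E(H')$. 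Finally, since $H'$ is a domain its complement in $\HH$ is connected and no path in $\VH \setminus V(H')$ can cross $\sigma^*$ (every vertex incident to an edge of $\sigma^*$ lies in $V(H')$); as $\ExtVert\sigma$ is infinite while $H'$ is finite, this forces $\IntVert\sigma \subseteq V(H')$, whence $\IntEdge\sigma \subseteq E(H')$ (using that $H'$ is induced) and $E = \IntEdge\sigma \cup \sigma^* \subseteq E(H')$. The symmetric argument handles $2$-clusters, so $E^1(\omega) \subseteq E(H')$ and $E^2(\omega) \subseteq E(H')$; in particular no edge of $D$ lies in a $1$- or $2$-cluster of $\omega$.

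With this in hand I would fix $e \in D$ and run through the four components of $\shift{\omega}$. Since $\{E^0(\omega), E^1(\omega), E^2(\omega), \BadEdgesBefore(\omega)\}$ partitions $\IntEdge\gamma \cup \gamma^*$ and $E^1(\omega) \cup E^2(\omega)$ misses $D$, either $e \in E^0(\omega)$ or $e \in \BadEdgesBefore(\omega)$. If $e \in \omega$ (equivalently $e \in \ground^0$): when $e \in E^0(\omega)$ then $e \in \omega \cap E^0(\omega) \subseteq \shift{\omega}$; when $e \in \BadEdgesBefore(\omega)$ and $e \notin E^1(\omega)^{\DIRdown} \cup E^2(\omega)^{\DIRup}$ then $e \in \BadEdges(\omega)$, so $e \in \ground^0 \cap \BadEdges(\omega) \subseteq \shift{\omega}$; while if $e \in E^1(\omega)^{\DIRdown}$ (or $E^2(\omega)^{\DIRup}$) one must see that in fact $e \in (\omega \cap E^1(\omega))^{\DIRdown} \subseteq \shift{\omega}$. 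If $e \notin \omega$ (equivalently $e \notin \ground^0$), then $e$ is captured by neither $\omega \cap E^0(\omega)$ nor $\ground^0 \cap \BadEdges(\omega)$, and one must rule out $e \in (\omega \cap E^1(\omega))^{\DIRdown} \cup (\omega \cap E^2(\omega))^{\DIRup}$. Thus every branch collapses to a single geometric fact: on $D$ the exclusion of a bad edge due to membership in $E^1(\omega)^{\DIRdown}$ coincides with inclusion in $(\omega \cap E^1(\omega))^{\DIRdown}$, and the edges that $(\omega \cap E^1(\omega))^{\DIRdown}$ deposits into $D$ all lie in $\ground^0$ (and symmetrically for $E^2$).

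The hard part will be this last fact, and I expect it to be the main obstacle. A first simplification: by Lemma~\ref{lem:circuits-and-clusters}\ref{it:circuits-and-clusters2} an $\omega$-edge of a $1$-cluster $E_1$ lies in $\IntEdge{\sigma(E_1)}$ rather than on $\sigma(E_1)^*$, so any $e \in D$ with $e^{\DIRup} \in \omega \cap E^1(\omega)$ has $e^{\DIRup} \in \omega \cap \IntEdge{\sigma(E_1)} \subseteq E(H')$, and by Lemma~\ref{lem:circuits-and-shifts}\ref{it:circuits-and-shifts1} the shifted circuit $\sigma(E_1)^{\DIRdown} \subset \T \setminus \T^0$ keeps $e$ inside $\Int\gamma$. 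What remains is to show $e$ borders a hexagon of $\T^0$, i.e. that $e^{\DIRup}$ borders a hexagon of $\T^1$ rather than joining $\T^0$ to $\T^2$; I would extract this by applying Lemma~\ref{lem:circuit-interior-degree} to the hexagon bordering $e^{\DIRup}$, using $\IntVert{\sigma(E_1)} \subseteq V(H')$ from the structural claim together with $e \in D$ and tracking colour types and the flower structure along $\partial \IntHex{\sigma(E_1)}$ near $\partial H'$. Feeding the resulting fact back into the case analysis of the previous paragraph gives $\shift{\omega} \cap D = \omega \cap D = \ground^0 \cap D$, hence $\shift{\omega} \in \cE$.
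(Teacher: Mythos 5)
Your reduction to $D=\IntEdge\gamma\setminus E(H')$ and your structural claim that every $1$- and $2$-cluster of $\omega$ lies inside $E(H')$ are correct, and the route you take to the latter (vacancy of $\sigma(E)$ combined with $\sigma(E)^*\subset\ground^0$ and the connectedness of the complement of $H'$) is a sound variant of the paper's first step. The gap is in the part you yourself flag as ``the hard part'': for $e\in\ground^0\cap D$ whose bordering hexagon $z\in\T^0$ has $E(z)\subset E^1(\omega)^{\DIRdown}$, you must show that the shifted cluster actually deposits $e$, i.e.\ that $e^{\DIRup}\in\omega$. This is the crux of the paper's proof: since $e\notin E^1(\omega)$ (your structural claim), the hexagon $z$ cannot lie in $\IntHex{\sigma(E)}$, so by Lemma~\ref{lem:circuit-interior-degree} the hexagon $z^{\DIRup}\in\T^1$ lies in $\partial\IntHex{\sigma(E)}$, hence is a $1$-flower by the garden property, hence $E(z^{\DIRup})\subset\omega$ and in particular $e^{\DIRup}\in\omega$. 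Your proposal never carries out this step; it only announces an intention to ``track colour types and the flower structure along $\partial\IntHex{\sigma(E_1)}$ near $\partial H'$''. Moreover the ``single geometric fact'' you substitute for it --- that for $e\in D$, membership in $E^1(\omega)^{\DIRdown}$ coincides with membership in $(\omega\cap E^1(\omega))^{\DIRdown}$ --- is not the right statement: without restricting to $e\in\ground^0$ it would have to cover edges $e$ with $e^{\DIRup}\in\sigma(E)^*$, which belong to $E^1(\omega)$ but never to $\omega$ (vacancy of $\sigma(E)$), so the claimed equivalence is not available as stated and in any case is unproven.

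A second, related shortfall: by aiming at the full equality $\shift\omega\cap D=\ground^0\cap D$ you take on an exclusion obligation (no edge of $D\setminus\ground^0$ is deposited by $(\omega\cap E^1(\omega))^{\DIRdown}$ or $(\omega\cap E^2(\omega))^{\DIRup}$) that you also leave unproven. The paper sidesteps it: any $\tilde\omega\in\LC(H,\emptyset)$ containing $\ground^0\cap D$ automatically contains no further edge of $D$, because an edge of $D\setminus\ground^0$ has an endpoint $v\notin V(H')$ whose two $\ground^0$-edges are already forced into $\tilde\omega$, and $v$ cannot have degree $3$ in a loop configuration; hence $\cE=\{\tilde\omega\in\LC(H,\emptyset):\ground^0\cap D\subset\tilde\omega\}$, and together with Proposition~\ref{prop:repair-map-disjoint-loop-configs} only the inclusion $\ground^0\cap D\subset\shift\omega$ needs to be checked. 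Without either this observation or a direct proof of the exclusion, and without the boundary-flower argument above, the proposal does not establish the lemma.
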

\begin{proof}
    Let $\omega \in \cE$ and denote $\omega' := \shift\omega$.
    Set $F := \IntEdge{\gamma} \setminus E(H')$ and note that $\cE = \{ \tilde\omega \in \LC(H,\emptyset) : \tilde\omega \cap F = \ground^0 \cap F \}$. In fact, one easily checks that $\cE = \{ \tilde\omega \in \LC(H,\emptyset) : \ground^0 \cap F \subset \tilde\omega \}$. Thus, by Proposition~\ref{prop:repair-map-disjoint-loop-configs}, it suffices to show that $\ground^0 \cap F \subset \omega'$.

    Let us first show that $F$ is disjoint from $E^1(\omega)$ and $E^2(\omega)$.
    To this end, let $e \in F$ and consider an infinite simple path in $E(H')^c$ starting from $e$. Observe that no vertex on this path borders a 1- or 2-flower of $\omega$.
    On the other hand, by the definition of a cluster, if $e$ belongs to a 1- or 2-cluster of $\omega$, then any such path must have such a vertex. Hence, $e \notin E^1(\omega) \cup E^2(\omega)$.

    Towards showing that $\ground^0 \cap F \subset \omega'$, let $e \in \ground^0 \cap F$ and note that $e$ borders a hexagon $z \in \T^0$. By Lemma~\ref{lem:circuit-interior-degree}, $E(z)$ is contained in either $E^0(\omega)$, $E^1(\omega)^{\DIRdown}$, $E^2(\omega)^{\DIRup}$ or $\BadEdges(\omega)$.
    In the first case, $E(z) \subset \omega \cap E^0(\omega) \subset \omega'$.
    In the second case, $z^{\DIRup} \in \IntHex{\sigma(E)}$ for some $1$-cluster $E$ of $\omega$.
Since $e \notin E^1(\omega)$, we have $z^{\DIRup} \in \partial\IntHex{\sigma(E)}$. Thus, $z^{\DIRup}$ is a $1$-flower of $\omega$ and $E(z) \subset (\omega \cap E^1(\omega))^{\DIRdown} \subset \omega'$.
The third case is similar to the second case.
Finally, in the last case, $E(z) \subset \ground^0 \cap \BadEdges(\omega) \subset \omega'$.
\end{proof}

\subsection{Comparing the probabilities of $\shift\omega$ and $\omega$}
\label{sec:compare probabilities}

As in Section~\ref{sec:repair-map}, we henceforth fix a circuit $\gamma\subset \T\setminus \T^0$ and denote $H:=\Int\gamma$.
Our goal now is to compare the probabilities of $\shift\omega$ and $\omega$.
Recall the definition of $V(\omega,\gamma)$ from Section~\ref{sec:main lemma}.
Denote by $V'(\omega,\gamma)$ the vertices in $V(\omega,\gamma)$ which are isolated in $\omega$ (i.e., which are incident to no edges in $\omega$).

\begin{prop}\label{cor:shift-increases-weight-of-configuration}
  Let $n \ge 1$, let $x \in (0,\infty]$ and let $\omega \in \LC(H,\emptyset)$. Then
  \[
  \Pr_{H,n,x}^\emptyset(\shift\omega) \ge n^{\frac{|V(\omega,\gamma)|}{15} + \frac{|V'(\omega,\gamma)|}{10}} \cdot x^{|V'(\omega,\gamma)|} \cdot \Pr_{H,n,x}^\emptyset(\omega).
  \]
  In particular, if $nx^6 \ge 1$ then
    \[
    \Pr_{H,n,x}^\emptyset(\shift\omega) \ge (n \cdot \min\{x^6,1\})^{\frac{|V(\omega,\gamma)|}{15}} \cdot (\max\{x,1\})^{|V'(\omega,\gamma)|} \cdot \Pr_{H,n,x}^\emptyset(\omega).
    \]
\end{prop}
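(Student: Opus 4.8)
The plan is to compare the two weights directly from Definition~\ref{def:loop-model}. Since $\shift\omega$ and $\omega$ both lie in $\LC(H,\emptyset)$ and agree off $E(H)$,
\[
\frac{\Pr_{H,n,x}^\emptyset(\shift\omega)}{\Pr_{H,n,x}^\emptyset(\omega)} \;=\; x^{\,o_H(\shift\omega)-o_H(\omega)}\cdot n^{\,L_H(\shift\omega)-L_H(\omega)},
\]
with the usual reading at $x=\infty$ (the statement is trivial when $\Pr_{H,n,x}^\emptyset(\omega)=0$, and since $\shift\omega$ only ever has more edges than $\omega$ this covers every case with $|V'(\omega,\gamma)|>0$; otherwise the edge factor is $1$). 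Thus the proposition reduces to two estimates,
\[
o_H(\shift\omega)-o_H(\omega)=|V'(\omega,\gamma)| \qquad\text{and}\qquad L_H(\shift\omega)-L_H(\omega)\ \ge\ \tfrac{|V(\omega,\gamma)|}{15}+\tfrac{|V'(\omega,\gamma)|}{10}.
\]
Granting these and using $n\ge1$ gives the first displayed bound at once; the ``In particular'' refinement then follows by an elementary split into $x\ge1$ and $x<1$, using $n\ge1$, $V'(\omega,\gamma)\subseteq V(\omega,\gamma)$ and, in the second case, the hypothesis $nx^6\ge1$ (which is exactly what is needed to absorb the $x$-power there).

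For the edge identity I would count degrees. Each $v\in\IntVert\gamma$ borders a unique hexagon $z(v)\in\T^0$, and by Lemma~\ref{lem:circuit-interior-degree} the six edges around $z(v)$ lie entirely in one of $E^0(\omega)$, $E^1(\omega)^{\DIRdown}$, $E^2(\omega)^{\DIRup}$ or $\BadEdges(\omega)$. Running through these possibilities --- using that $\shiftFunc$ leaves $0$-clusters untouched, that $\DIRup,\DIRdown$ are graph automorphisms (so loops within a cluster are transported rigidly), Lemma~\ref{lem:E12-intersection} for the way shifted $1$- and $2$-clusters can overlap, and that $\ground^0$ has every vertex of degree $2$ --- one finds that $\shiftFunc$ preserves the multiset of vertex degrees, with the sole exception of the vertices that are isolated in $\omega$ and whose surrounding $\T^0$-hexagon falls in $\BadEdges(\omega)$; by the characterisations recorded just after the definitions of $V(\omega,\gamma)$ and $V'(\omega,\gamma)$, these are exactly the vertices of $V'(\omega,\gamma)$, each moving from degree $0$ to degree $2$ (picking up the trivial loop of $\ground^0$ around its hexagon). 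Summing the degree changes and dividing by $2$ gives $o_H(\shift\omega)-o_H(\omega)=|V'(\omega,\gamma)|$.

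The loop estimate is the heart of the argument, and I would split it into ``no loops lost'' and ``many loops gained''. For the former: by Proposition~\ref{prop:repair-map-disjoint-loop-configs}, $\shift\omega$ is the disjoint union of the loop configurations $\omega\cap E^0(\omega)$, $(\omega\cap E^1(\omega))^{\DIRdown}$, $(\omega\cap E^2(\omega))^{\DIRup}$ and $\ground^0\cap\BadEdges(\omega)$; shifts preserve the loop count inside each cluster, same-colour clusters span a disconnected edge set by~\eqref{eq:clusters-are-disconnected}, different-colour clusters are nested or disjoint by Lemma~\ref{cl:gardens-different-class}, and a shifted $1$-cluster meets a shifted $2$-cluster only in whole trivial loops around double-clustered hexagons by Lemma~\ref{lem:E12-intersection} --- so none of the constituent loops get merged, whence $L_H(\shift\omega)$ is at least the total number of loops among these pieces, which is at least $L_H(\omega)$ (loops of $\omega$ in $0$-clusters survive unchanged, those in $1$- and $2$-clusters survive after shifting, and the remaining loops of $\omega$, those lying in bad edges, are merely discarded). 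For the latter: since $\Int\gamma$ is a domain of type $0$ and $E^0(\omega)\cup E^1(\omega)^{\DIRdown}\cup E^2(\omega)^{\DIRup}$ is a union of sets $\IntEdge\sigma\cup\sigma^*$ with $\sigma\subset\T\setminus\T^0$, the set $\BadEdges(\omega)$ contains all six edges of a $\T^0$-hexagon whenever it contains one of them; hence $\ground^0\cap\BadEdges(\omega)$ is a vertex-disjoint union of trivial loops, one around each $\T^0$-hexagon lying wholly inside $\BadEdges(\omega)$, and none of these is a loop of $\omega$. What remains --- and what I expect to be the genuine obstacle --- is the discrete packing estimate that the number of such hexagons is at least $\frac{|V(\omega,\gamma)|}{15}+\frac{|V'(\omega,\gamma)|}{10}$. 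I would prove this by exhibiting, within bounded distance of every vertex of $V(\omega,\gamma)$, a $\T^0$-hexagon all of whose edges lie in $\BadEdges(\omega)$, and bounding by $15$ the number of vertices of $V(\omega,\gamma)$ that can be charged to any one such hexagon; near a vertex that is in addition isolated in $\omega$ the surrounding hexagon group is forced to be entirely bad, which improves the charging rate to $\frac16$ there and supplies the extra $\frac1{10}=\frac16-\frac1{15}$ (consistency check: when $\omega=\emptyset$ one has $V(\omega,\gamma)=V'(\omega,\gamma)=\IntVert\gamma$, every interior $\T^0$-hexagon becomes a new trivial loop, and the estimate is tight with $\frac{|V(\omega,\gamma)|}{15}+\frac{|V'(\omega,\gamma)|}{10}=\frac{|\IntVert\gamma|}{6}$). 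Making the ``bounded distance'' and ``entirely bad'' claims precise while controlling the interface between the clusters and the bad region is the delicate point, and I would isolate it as a standalone combinatorial lemma.
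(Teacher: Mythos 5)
Your reduction is the same as the paper's: everything comes down to $o_H(\shift\omega)-o_H(\omega)=|V'(\omega,\gamma)|$ and $L_H(\shift\omega)-L_H(\omega)\ge\frac{|V(\omega,\gamma)|}{15}+\frac{|V'(\omega,\gamma)|}{10}$, and your handling of $n\ge1$, the case split for the ``in particular'' statement, and $x=\infty$ is fine. The genuine gap is in the loop count, and it is structural rather than a missing technicality. Your plan asserts ``no loops lost'' and then tries to get the whole gain from the new trivial loops of $\ground^0\cap\BadEdges(\omega)$, i.e.\ $\Delta L\ge\#\{\text{new trivial loops}\}$. But loops \emph{are} lost: every loop of $\omega$ lying in $\BadEdgesBefore(\omega)$ is discarded (as you yourself note), and in addition the trivial loops around double-clustered hexagons are shared between $(\omega\cap E^1(\omega))^{\DIRdown}$ and $(\omega\cap E^2(\omega))^{\DIRup}$, so they are counted once, not twice. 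The correct accounting (Lemma~\ref{cl:shift-general-measure-formula}, applied to the functionals $\phi(L)=1$ and $\phi(L)=|E(L)|$, with Lemma~\ref{cl:bad-edges-formula} and Lemma~\ref{lem:E12-intersection} doing the inclusion–exclusion) is the exact identity
\[
L_H(\shift\omega)-L_H(\omega)=\tfrac{|V(\omega,\gamma)|}{6}-L_H\big(\omega\cap\BadEdgesBefore(\omega)\big),
\]
and the constants $15$ and $10$ come from bounding the loss term: every trivial loop of $\omega$ sits inside a cluster, so the destroyed loops are non-trivial and have at least $10$ edges each, while $|\omega\cap\BadEdgesBefore(\omega)|=|V(\omega,\gamma)\setminus V'(\omega,\gamma)|$ because $\omega\cap\BadEdgesBefore(\omega)$ is a loop configuration (see \eqref{eq:pairwise-disjoint-loop-configs}) whose non-isolated vertices are exactly $V\setminus V'$. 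Your proposed ``packing estimate'' ($\#$ of wholly-bad $\T^0$-hexagons $\ge\frac{|V|}{15}+\frac{|V'|}{10}$) is in fact true — the count of these hexagons is $|\BadVert(\omega)|/6=\frac{|V(\omega,\gamma)|}{6}+|\dbl(\omega)|$ — but even granted, it does not yield the proposition, because $\Delta L$ equals that count \emph{minus} $|\dbl(\omega)|$ \emph{minus} the number of destroyed bad loops; the mechanism you propose for the extra $\frac1{10}$ (better charging near isolated vertices) is not where it comes from.

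A secondary weakness: the degree-bookkeeping you sketch for $\Delta o=|V'(\omega,\gamma)|$ is not correct pointwise. The shifts relocate degrees (a vertex covered by a $1$-cluster loses its degree to its $\DIRdown$-translate), a vertex of $V'(\omega,\gamma)$ need not acquire the trivial $\ground^0$-loop around its $\T^0$-hexagon — that hexagon may instead be covered by a shifted cluster, and it may even stay isolated in $\shift\omega$ — and the overlap at double-clustered hexagons must be cancelled against the identity $|\BadVert(\omega)|=|V(\omega,\gamma)|+6|\dbl(\omega)|$ (note $\BadVert(\omega)$, defined from the \emph{shifted} bad set $\BadEdges(\omega)$, is not the set of endpoints of $\omega\cap\BadEdgesBefore(\omega)$). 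So as written your edge identity is essentially a restatement of the goal; the aggregate bookkeeping that makes both identities exact is precisely the content of Lemmas~\ref{cl:bad-edges-formula}, \ref{lem:E12-intersection} and \ref{cl:shift-general-measure-formula}, which your argument would need to reproduce.
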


The proof of Proposition~\ref{cor:shift-increases-weight-of-configuration} is based on showing that applying the repair map can only increase the number of loops and edges and estimating carefully the amounts by which they increase.

We begin with two preliminary lemmas. Denote by $\BadVert(\omega)$ the subset of $\IntVert\gamma$ composed of endpoints of edges in $\BadEdges(\omega)$.
Recall the definition of $\dbl(\omega)$ just prior to Lemma~\ref{lem:E12-intersection}.

\begin{lemma}
  \label{cl:bad-edges-formula}
  For any $\omega\in\LC(H,\emptyset)$, we have
  \[
  |\BadVert(\omega)| = |V(\omega,\gamma)| + 6 \cdot |\dbl(\omega)|.
  \]
\end{lemma}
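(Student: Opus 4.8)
The plan is to prove the equivalent statement obtained by passing to complements inside $\IntVert\gamma$, namely $|\IntVert\gamma\setminus\BadVert(\omega)| = |\IntVert\gamma\setminus V(\omega,\gamma)| - 6\,|\dbl(\omega)|$. For $\clr\in\{0,1,2\}$ let $U_\clr$ be the set of $v\in\IntVert\gamma$ all of whose three incident edges lie in a single $\clr$-cluster of $\omega$; equivalently, since $v\in\IntVert{\sigma(E)}$ puts the three edges of $v$ into $E=\IntEdge{\sigma(E)}\cup\sigma(E)^*$ and conversely, $U_\clr=\bigcup_E(\IntVert{\sigma(E)}\cap\IntVert\gamma)$ over $\clr$-clusters $E$. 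By the characterisation of $V(\omega,\gamma)$ recorded just after its definition, and because distinct clusters are edge-disjoint (so a vertex lies in the interior of at most one cluster circuit), $\IntVert\gamma\setminus V(\omega,\gamma)=U_0\sqcup U_1\sqcup U_2$. I will use throughout the fact from the set-up of the repair map in Section~\ref{sec:repair-map} that $\Int{\sigma(E)}\subset\Int\gamma$ for every cluster $E$ of $\omega$; with Lemma~\ref{lem:circuits-and-shifts}\ref{it:circuits-and-shifts1} this gives $\Int{\sigma(E)^{\DIRdown}}\subset\Int\gamma$ for $1$-clusters and $\Int{\sigma(E)^{\DIRup}}\subset\Int\gamma$ for $2$-clusters. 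Set $W_0:=U_0$, $W_1:=U_1^{\DIRdown}$, $W_2:=U_2^{\DIRup}$; since $\DIRup,\DIRdown$ are bijections of $\VH$ preserving $\IntVert\gamma$ on the relevant sets, $|W_\clr|=|U_\clr|$.

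The first step is to show $\IntVert\gamma\setminus\BadVert(\omega)=W_0\cup W_1\cup W_2$. Write $G:=E^0(\omega)\cup E^1(\omega)^{\DIRdown}\cup E^2(\omega)^{\DIRup}$, so $\BadEdges(\omega)=(\IntEdge\gamma\cup\gamma^*)\setminus G$ and $v\in\IntVert\gamma$ lies outside $\BadVert(\omega)$ exactly when all three incident edges of $v$ lie in $G$. Now $G=\bigcup_{\tau\in\mathcal T}(\IntEdge\tau\cup\tau^*)$, where $\mathcal T$ collects the circuits $\sigma(E)$ of the $0$-clusters, the circuits $\sigma(E)^{\DIRdown}$ of the $1$-clusters, and the circuits $\sigma(E)^{\DIRup}$ of the $2$-clusters of $\omega$; every $\tau\in\mathcal T$ satisfies $\tau\subset\T\setminus\T^0$. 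The key local observation is that \emph{all three edges of $v$ lie in $G$ iff $v\in\IntVert\tau$ for some $\tau\in\mathcal T$}. The direction $\Leftarrow$ is immediate. For $\Rightarrow$, the three hexagons bordering $v$ carry the three distinct colours, so exactly one of them, say $a$, lies in $\T^0$; the two edges of $v$ bordering $a$ cannot lie in $\tau^*$ for any $\tau\in\mathcal T$, since that would force $a\in\tau\subset\T\setminus\T^0$; hence one of them lies in $\IntEdge\tau$ for some $\tau\in\mathcal T$, so $v\in\IntVert\tau$. Unwinding the three types of circuits in $\mathcal T$ and using the containment facts above to keep the relevant vertices inside $\IntVert\gamma$ yields $\IntVert\gamma\setminus\BadVert(\omega)=W_0\cup W_1\cup W_2$.

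The crux is to evaluate the pairwise intersections of $W_0,W_1,W_2$. If $v\in W_0\cap W_1$ then $v\in\IntVert{\sigma(E)}\cap\IntVert{\sigma(E')}^{\DIRdown}$ for a $0$-cluster $E$ and a $1$-cluster $E'$; these are distinct, hence edge-disjoint by~\eqref{eq:clusters-are-disjoint}, so $\IntVert{\sigma(E)}\cap\IntVert{\sigma(E')}=\emptyset$, and then Lemma~\ref{lem:circuits-and-shifts}\ref{it:circuits-and-shifts3} forces $\IntVert{\sigma(E)}\cap\IntVert{\sigma(E')}^{\DIRdown}=\emptyset$, a contradiction; so $W_0\cap W_1=\emptyset$, and the colour-permuted version of this argument gives $W_0\cap W_2=\emptyset$, hence $W_0\cap W_1\cap W_2=\emptyset$. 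For the remaining intersection I claim $W_1\cap W_2=\bigcup_{z\in\dbl(\omega)}V(z)$, where $V(z)$ is the set of six vertices of $\HH$ bordering $z$. If $v\in W_1\cap W_2$, let $a$ be the $\T^0$ hexagon bordering $v$; then $a\in\IntHex\gamma$ by Lemma~\ref{lem:circuit-interior-degree}, and since $v^{\DIRup}$ lies in the interior of some $1$-cluster circuit $\sigma(E_1)$ and borders the $\T^1$ hexagon $a^{\DIRup}$, Lemma~\ref{lem:circuit-interior-degree} gives $a^{\DIRup}\in\IntHex{\sigma(E_1)}$, whence $E(a^{\DIRup})\subset\IntEdge{\sigma(E_1)}\subset E^1(\omega)$; symmetrically $E(a^{\DIRdown})\subset E^2(\omega)$, so $a\in\dbl(\omega)$ and $v\in V(a)$. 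Conversely, for $z\in\dbl(\omega)$ the six edges $E(z^{\DIRup})$ form a single loop and so lie in one $1$-cluster $E_1$ by~\eqref{eq:clusters-are-disconnected}, hence $z^{\DIRup}\in\IntHex{\sigma(E_1)}$ and $V(z^{\DIRup})\subset\IntVert{\sigma(E_1)}$, so every $v\in V(z)$ has $v^{\DIRup}$ in the interior of $\sigma(E_1)$; likewise for the $2$-cluster, and $v\in\IntVert\gamma$ since $z\in\IntHex\gamma$; thus $v\in W_1\cap W_2$. Finally, by Lemma~\ref{lem:E12-intersection} we have $\dbl(\omega)\subset\T^0$, and two distinct $\T^0$ hexagons share no vertex of $\HH$ (the three hexagons around any vertex have distinct colours), so the sets $V(z)$, $z\in\dbl(\omega)$, are pairwise disjoint of size $6$ and $|W_1\cap W_2|=6\,|\dbl(\omega)|$.

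Putting the pieces together, inclusion–exclusion gives $|\IntVert\gamma\setminus\BadVert(\omega)|=|W_0|+|W_1|+|W_2|-|W_1\cap W_2|=|U_0|+|U_1|+|U_2|-6\,|\dbl(\omega)|=|\IntVert\gamma\setminus V(\omega,\gamma)|-6\,|\dbl(\omega)|$, and subtracting from $|\IntVert\gamma|$ produces the stated identity $|\BadVert(\omega)| = |V(\omega,\gamma)| + 6\,|\dbl(\omega)|$. I expect the main obstacle to be the $\Rightarrow$ direction of the local observation (all three incident edges in $G$ forcing $v$ into the interior of some $\tau\in\mathcal T$) together with the bookkeeping of which circuit interiors remain inside $\gamma$ after shifting; the identification $W_1\cap W_2=\bigcup_{\dbl(\omega)}V(z)$ is the conceptual heart of the argument, being the only place the term $6\,|\dbl(\omega)|$ enters.
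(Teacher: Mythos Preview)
Your proof is correct and follows essentially the same route as the paper's: both pass to complements in $\IntVert\gamma$, decompose $\IntVert\gamma\setminus V(\omega,\gamma)$ as a disjoint union over the three colours (your $U_\clr$ is the paper's $\Int{E^\clr}$), shift to obtain the analogous decomposition of $\IntVert\gamma\setminus\BadVert(\omega)$, verify via Lemma~\ref{lem:circuits-and-shifts}\ref{it:circuits-and-shifts3} that $W_0$ is disjoint from $W_1$ and $W_2$, identify $W_1\cap W_2$ with the vertices bordering $\dbl(\omega)$, and finish by inclusion--exclusion. Your argument is in fact more explicit than the paper's at the step it records as~\eqref{eq:bad-edges-formula-1}: the paper asserts $U'=\Int{E^0}\cup\Int{E^1}^{\DIRdown}\cup\Int{E^2}^{\DIRup}$ without further comment, whereas you supply the ``local observation'' (the $\T^0$-hexagon at $v$ forces two of its edges out of every $\tau^*$, hence into some $\IntEdge\tau$) that justifies the nontrivial inclusion.
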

\begin{proof}
  As before, set $E^\clr:=E^\clr(\omega)$ for $\clr\in\{0,1,2\}$. Let $U:=\IntVert\gamma\setminus V(\omega,\gamma)$ be the set of vertices whose three incident edges are contained in one of the sets $E^0$, $E^1$ or $E^2$. Let $U':=\IntVert\gamma\setminus \BadVert(\omega)$ be the set of vertices whose three incident edges are contained in one of the sets $E^0$, $(E^1)^{\DIRdown}$ or $(E^2)^{\DIRup}$. The lemma will follow if we show that $|U| - |U'| = 6 \cdot |\dbl(\omega)|$.

  For $E \subset \EH$, denote by $\Int E$ the set of vertices whose 3 incident edges belong to $E$. Then
  \begin{equation}\label{eq:bad-edges-formula-1}
  \begin{aligned}
    U &= \Int{E^0}\cup\Int{E^1}\cup\Int{E^2}, \\
    U' &= \Int{E^0} \cup \Int{E^1}^{\DIRdown} \cup \Int{E^2}^{\DIRup}.
  \end{aligned}
  \end{equation}
  We now show that
  \begin{equation}\label{eq:bad-edges-formula-2}
    \Int{E^0} \cap \Int{E^1}^{\DIRdown} = \emptyset \qquad\text{and}\qquad \Int{E^0} \cap \Int{E^2}^{\DIRup} = \emptyset .
  \end{equation}
  Note that, for a garden $E$, we have $\Int E=\IntVert{\sigma(E)}$. Thus, it follows from~\eqref{eq:clusters-are-disjoint} and Lemma~\ref{lem:circuits-and-shifts}\ref{it:circuits-and-shifts3} that if $E$ and $E'$ are $0$- and $1$-clusters of $\omega$, respectively, then $\Int{E} \cap \Int{E'}^{\DIRdown} = \emptyset$.
  On the other hand, $\Int{E^\clr} = \cup\,\Int{E}$ over all $\clr$-clusters $E$ of $\omega$ in $\gamma$, as follows from~\eqref{eq:clusters-are-disconnected} and~\eqref{eq:clusters-are-connected}.
  We therefore conclude that $\Int{E^0} \cap \Int{E^1}^{\DIRdown}=\emptyset$. By symmetry, we also have $\Int{E^0} \cap \Int{E^2}^{\DIRup}=\emptyset$.

  Using the inclusion-exclusion principle, we obtain
  \begin{align*}
    |U'|
    &= |\Int{E^0}| + |\Int{E^1}^{\DIRdown}| + |\Int{E^2}^{\DIRup}| - |\Int{E^1}^{\DIRdown} \cap \Int{E^2}^{\DIRup}|
     &&\text{\small by~\eqref{eq:bad-edges-formula-1} and~\eqref{eq:bad-edges-formula-2}} \\
    &= |\Int{E^0}| + |\Int{E^1}| + |\Int{E^2}| - |\Int{E^1}^{\DIRdown} \cap \Int{E^2}^{\DIRup}| \\
    &= |U| - |\Int{E^1}^{\DIRdown} \cap \Int{E^2}^{\DIRup}|.
     &&\text{\small by~\eqref{eq:bad-edges-formula-1} and~\eqref{eq:clusters-are-disjoint}}
  \end{align*}
  Finally, observe that, by Lemma~\ref{lem:circuit-interior-degree}, $\Int{E^1}^{\DIRdown} \cap \Int{E^2}^{\DIRup}$ is precisely the set of vertices that border the hexagons in $\dbl(\omega)$ and that each such vertex is incident to a unique double-clustered hexagon (since $\dbl(\omega) \subset \T^0$, by Lemma~\ref{lem:E12-intersection}). Consequently,
  \[
  |\Int{E^1}^{\DIRdown} \cap \Int{E^2}^{\DIRup}| = 6 \cdot |\dbl(\omega)| . \qedhere
  \]
\end{proof}

For our next lemma, we require the following definition. A
\emph{functional on loops} is a map $\phi$ that assigns a real
number to each loop in $\HH$. We say that $\phi$ is
\emph{$\DIRup$-invariant} if $\phi(L^{\DIRup})=\phi(L)$ for every
loop $L$ and $\phi(L)=\phi(L')$ for any two trivial loops $L$ and
$L'$. Given such a functional, we extend $\phi$ to finite loop
configurations $\omega$ by summing over all the loops, i.e., by
setting
\[
\phi(\omega) := \sum_{\text{loops $L$ in $\omega$}} \phi(L).
\]

Recall the definition of $\BadEdgesBefore(\omega)$
from~\eqref{eq:def-bad-edges-star} and the repair map from Section~\ref{sec:repair-map}.
Let $\TrivialLoop\subset\HH$ denote a trivial loop.

\begin{lemma}\label{cl:shift-general-measure-formula}
  For any $\omega\in\LC(H,\emptyset)$ and any $\DIRup$-invariant functional $\phi$ on loops, we have
  \[
  \phi(\shift\omega) - \phi(\omega) = \phi(\TrivialLoop) \cdot \tfrac{|V(\omega,\gamma)|}{6} - \phi(\omega \cap \BadEdgesBefore(\omega)).
  \]
\end{lemma}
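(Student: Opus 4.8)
The plan is to track exactly how the loops of $\omega$ are transformed by the repair map $\shiftFunc$ and to use the partition of $\IntEdge\gamma\cup\gamma^*$ into $E^0(\omega)$, $E^1(\omega)$, $E^2(\omega)$, $\BadEdgesBefore(\omega)$ recorded in~\eqref{eq:pairwise-disjoint-loop-configs}. By that partition, $\omega$ splits as a disjoint union of the four loop configurations $\omega\cap E^0(\omega)$, $\omega\cap E^1(\omega)$, $\omega\cap E^2(\omega)$ and $\omega\cap\BadEdgesBefore(\omega)$, so by additivity of $\phi$ over loops it suffices to understand what happens to each piece. The first piece is left untouched by $\shiftFunc$; the second and third are moved by the graph automorphisms $\DIRdown$ and $\DIRup$ of $\HH$, which carry loops to loops and, crucially, are $\phi$-preserving by the assumption that $\phi$ is $\DIRup$-invariant; the fourth piece, $\omega\cap\BadEdgesBefore(\omega)$, is deleted entirely and replaced by $\ground^0\cap\BadEdges(\omega)$, which (as established in the proof of Proposition~\ref{prop:repair-map-disjoint-loop-configs}) consists solely of trivial loops. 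So on the nose,
\[
\phi(\shift\omega)-\phi(\omega)=\phi\big(\ground^0\cap\BadEdges(\omega)\big)-\phi\big(\omega\cap\BadEdgesBefore(\omega)\big),
\]
provided the images of the shifted clusters do not accidentally merge with each other. By Lemma~\ref{lem:E12-intersection} the only overlap among $\omega\cap E^0$, $(\omega\cap E^1)^{\DIRdown}$ and $(\omega\cap E^2)^{\DIRup}$ is a union of trivial loops over $\dbl(\omega)$; since these coincide as whole loops rather than partially, $\phi$ remains additive after taking the union, and this trivial-loop overlap is exactly accounted for below.

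It remains to evaluate $\phi(\ground^0\cap\BadEdges(\omega))$. Every loop of $\ground^0$ is trivial, hence every loop of $\ground^0\cap\BadEdges(\omega)$ is a trivial loop, so $\phi(\ground^0\cap\BadEdges(\omega))=\phi(\TrivialLoop)$ times the number of such loops. A trivial loop around a hexagon $z\in\T^0$ lies in $\ground^0\cap\BadEdges(\omega)$ precisely when $E(z)\subset\BadEdges(\omega)$, i.e.\ when all six vertices bordering $z$ lie in $\BadVert(\omega)$ and $z$ is not interior to any shifted cluster; counting vertices, the number of such hexagons is $\tfrac{1}{6}\big(|\BadVert(\omega)|-6|\dbl(\omega)|\big)$, since the vertices of $\BadVert(\omega)$ that are \emph{not} captured by full trivial loops of $\ground^0$ are exactly the $6|\dbl(\omega)|$ vertices bordering double-clustered hexagons. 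Combining this with Lemma~\ref{cl:bad-edges-formula}, which gives $|\BadVert(\omega)|=|V(\omega,\gamma)|+6|\dbl(\omega)|$, the count simplifies to $\tfrac{|V(\omega,\gamma)|}{6}$, so $\phi(\ground^0\cap\BadEdges(\omega))=\phi(\TrivialLoop)\cdot\tfrac{|V(\omega,\gamma)|}{6}$. Since $\omega\cap\BadEdgesBefore(\omega)=\omega\cap\BadEdges(\omega)$ would be false in general but $\phi(\omega\cap\BadEdgesBefore(\omega))$ is exactly the term being subtracted, substituting yields the claimed identity.

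The main obstacle is the bookkeeping in the middle step: one must be careful that after applying the two shifts the resulting edge sets $(\omega\cap E^1)^{\DIRdown}$ and $(\omega\cap E^2)^{\DIRup}$ do not create new loops by splicing together with $\omega\cap E^0$ or with each other in a way that changes $\phi$. This is precisely where Lemma~\ref{lem:E12-intersection} and Lemma~\ref{lem:circuits-and-shifts} are needed: they guarantee that the only coincidences are whole trivial loops (over $\dbl(\omega)$) rather than partial overlaps that would fuse distinct loops, so that $\phi$ stays additive throughout and the trivial-loop overcounting is cleanly absorbed into the $|\dbl(\omega)|$ term that cancels against Lemma~\ref{cl:bad-edges-formula}. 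Everything else is an exact vertex count.
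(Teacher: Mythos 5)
There is a genuine problem: your two key quantitative steps are both incorrect, and your final formula only comes out right because the two errors cancel. First, the ``on the nose'' identity $\phi(\shift\omega)-\phi(\omega)=\phi\big(\ground^0\cap\BadEdges(\omega)\big)-\phi\big(\omega\cap\BadEdgesBefore(\omega)\big)$ is false: since $(\omega\cap E^1(\omega))^{\DIRdown}$ and $(\omega\cap E^2(\omega))^{\DIRup}$ share the trivial loops over $\dbl(\omega)$ as \emph{whole} loops, $\phi$ of their union equals the sum of the two values \emph{minus} $\phi(\TrivialLoop)\cdot|\dbl(\omega)|$; sharing entire loops is exactly the situation where additivity fails and an inclusion--exclusion correction is required (this is precisely how the paper uses Lemma~\ref{lem:E12-intersection}). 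Second, your count of the trivial loops in $\ground^0\cap\BadEdges(\omega)$ as $\tfrac16\big(|\BadVert(\omega)|-6|\dbl(\omega)|\big)$ is also false, and the justification is factually wrong: the $6|\dbl(\omega)|$ vertices bordering double-clustered hexagons do \emph{not} lie in $\BadVert(\omega)$ at all. Indeed, if $v$ borders $z\in\dbl(\omega)$ then $z^{\DIRup}\in\IntHex{\sigma(E_1)}$ for a $1$-cluster $E_1$ (as in the proof of Lemma~\ref{lem:E12-intersection}), so $v^{\DIRup}\in\IntVert{\sigma(E_1)}$ and all three edges incident to $v$ lie in $E^1(\omega)^{\DIRdown}$, hence none lies in $\BadEdges(\omega)$. (This is also implicit in the proof of Lemma~\ref{cl:bad-edges-formula}, where these vertices are placed in $\IntVert\gamma\setminus\BadVert(\omega)$.) The correct count is $|\BadVert(\omega)|/6$, and establishing it is the genuinely nontrivial step your proposal skips: one must show that for \emph{every} $v\in\BadVert(\omega)$ the entire trivial loop of $\ground^0$ around the $\T^0$-hexagon bordered by $v$ is contained in $\BadEdges(\omega)$, which the paper does by writing $\BadEdges(\omega)$ as an intersection of exteriors of circuits in $\T\setminus\T^0$ and invoking Lemma~\ref{lem:circuit-interior-degree}; your characterization ``precisely when all six vertices bordering $z$ lie in $\BadVert(\omega)$ and $z$ is not interior to any shifted cluster'' is asserted without proof and then miscounted.

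Because the omitted correction $-\phi(\TrivialLoop)|\dbl(\omega)|$ in the first step exactly offsets the undercount by $\phi(\TrivialLoop)|\dbl(\omega)|$ in the second, your displayed conclusion matches the lemma, but neither intermediate identity is proved (and each, as stated, is wrong). The repair is the paper's route: use Proposition~\ref{prop:repair-map-disjoint-loop-configs} to write $\phi(\shift\omega)=\phi(\omega\cap E^0)+\phi\big((\omega\cap E^1)^{\DIRdown}\cup(\omega\cap E^2)^{\DIRup}\big)+\phi(\ground^0\cap\BadEdges(\omega))$, apply inclusion--exclusion with Lemma~\ref{lem:E12-intersection} to get the $-\phi(\TrivialLoop)|\dbl(\omega)|$ term, prove $\phi(\ground^0\cap\BadEdges(\omega))=\phi(\TrivialLoop)\cdot|\BadVert(\omega)|/6$ as above, and only then invoke Lemma~\ref{cl:bad-edges-formula} to convert $|\BadVert(\omega)|$ into $|V(\omega,\gamma)|+6|\dbl(\omega)|$ and cancel the $\dbl$ terms.
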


\begin{proof}
  As before, set $E^\clr:=E^\clr(\omega)$ for $\clr\in\{0,1,2\}$ and $\BadEdges:=\BadEdges(\omega)$. Recall from Proposition~\ref{prop:repair-map-disjoint-loop-configs} that each loop of $\shift\omega$ belongs to one of the following pairwise disjoint loop configurations: $\omega \cap E^0$, $\ground^0 \cap \BadEdges$, or $(\omega \cap E^1)^{\DIRdown}\cup(\omega \cap E^2)^{\DIRup}$. Thus, the definition of a functional implies that \begin{equation}\label{eq:shift-general-measure-formula-1}
    \phi(\shift\omega) = \phi(\omega \cap E^0) + \phi(\ground^0 \cap \BadEdges)+ \phi\big((\omega \cap E^1)^{\DIRdown} \cup (\omega \cap E^2)^{\DIRup}\big) .
  \end{equation}

  We claim that $\ground^0 \cap \BadEdges$ consists of $|\BadVert(\omega)|/6$ trivial loops.
  As $\ground^0 \cap \BadEdges$ is a loop configuration and $\ground^0$ is a fully-packed loop configuration (i.e., every vertex has degree $2$) containing only trivial loops, it suffices to show that each vertex in $\BadVert(\omega)$ is incident to at least two edges in $\BadEdges$.
  We may write
  \[ \BadEdges = (\IntEdge\gamma \cup \gamma^*) \setminus \bigcup_i (\IntEdge{\sigma_i} \cup \sigma_i^*) = \bigcap_i \ExtEdge{\sigma_i} \setminus \ExtEdge\gamma \]
  for some circuits $\sigma_i \subset \T \setminus \T^0$.
  Let $v \in \BadVert(\omega)$ and let $z$ be the hexagon in $\T^0$ which $v$ borders.
  By Lemma~\ref{lem:circuit-interior-degree}, the six edges bordering $z$ must belong to $\IntEdge\gamma$ and to $\ExtEdge{\sigma_i}$ for each $i$. Hence, they belong to $\BadEdges$, and, in particular, two edges incident to $v$ belong to $\BadEdges$, as required.

  Thus, the $\DIRup$-invariance of $\phi$ implies
  \begin{equation}
    \label{eq:shift-general-measure-formula-3}
    \phi(\ground^0 \cap \BadEdges) = \phi(\TrivialLoop) \cdot |\BadVert(\omega)| / 6 .
  \end{equation}
  By Lemma~\ref{lem:E12-intersection}, the inclusion-exclusion principle and the $\DIRup$-invariance of $\phi$, we have that
  \begin{equation}
    \label{eq:shift-general-measure-formula-2}
    \begin{split}
      \phi\big((\omega \cap E^1)^{\DIRdown} \cup (\omega \cap E^2)^{\DIRup}\big)
      &= \phi((\omega \cap E^1)^{\DIRdown})
      + \phi((\omega \cap E^2)^{\DIRup}) - \phi\big((\omega \cap E^1)^{\DIRdown} \cap (\omega \cap E^2)^{\DIRup}\big) \\
      &= \phi(\omega \cap E^1)
      + \phi(\omega \cap E^2)- \phi(\TrivialLoop) \cdot |\dbl(\omega)|.
    \end{split}
  \end{equation}
  Using identities \eqref{eq:shift-general-measure-formula-1}, \eqref{eq:shift-general-measure-formula-3}, \eqref{eq:shift-general-measure-formula-2} and Lemma~\ref{cl:bad-edges-formula}, we obtain
  \[
  \phi(\shift\omega)
  = \phi(\omega \cap E^0)
  + \phi(\omega \cap E^1)
  + \phi(\omega \cap E^2) + \phi(\TrivialLoop) \cdot |V(\omega,\gamma)| / 6 .
  \]
  Finally, by~\eqref{eq:pairwise-disjoint-loop-configs},
  \[
  \phi(\omega)
  = \phi(\omega \cap E^0)
  + \phi(\omega \cap E^1)
  + \phi(\omega \cap E^2)
  + \phi(\omega \cap \BadEdgesBefore(\omega)),
  \]
  and the lemma follows by subtracting the last two displayed equations.
\end{proof}

\begin{proof}[Proof of Proposition~\ref{cor:shift-increases-weight-of-configuration}]
  Fix a loop configuration $\omega \in \LC(H,\emptyset)$.
  Lemma~\ref{cl:shift-general-measure-formula} applied to the $\DIRup$-invariant functionals $\phi_1$ and $\phi_2$ defined by
  \[
  \phi_1(L):=|E(L)| \quad\text{and}\quad \phi_2(L):=1 \quad \text{for every loop $L$}
  \]
  implies (respectively) that
  \begin{align}
  \Delta o &:= o_H(\shift\omega) - o_H(\omega) = |V(\omega,\gamma)| - |\omega \cap \BadEdgesBefore(\omega)|, \label{eq:boundO}\\
  \Delta L &:= L_H(\shift\omega) - L_H(\omega) = |V(\omega,\gamma)| / 6 - L_H(\omega \cap \BadEdgesBefore(\omega)) \label{eq:boundL}.
  \end{align}
    Since every trivial loop of $\omega$ is contained in a cluster, there are no trivial loops of $\omega$ in $\BadEdgesBefore(\omega)$. Hence, as any non-trivial loop contains at least $10$ edges,
    \[
    L_H(\omega \cap \BadEdgesBefore(\omega)) \le |\omega \cap \BadEdgesBefore(\omega)| / 10 .
    \]
  Furthermore, the simple observation that $V(\omega,\gamma) \setminus V'(\omega,\gamma)$ is precisely the set of endpoints of edges in $\omega \cap \BadEdgesBefore(\omega)$, and the fact that $\omega \cap \BadEdgesBefore(\omega)$ is a loop configuration, by~\eqref{eq:pairwise-disjoint-loop-configs}, imply that
\[ |\omega \cap \BadEdgesBefore(\omega)| = |V(\omega,\gamma) \setminus V'(\omega,\gamma)|.\]
  Substituting these in~\eqref{eq:boundO} and \eqref{eq:boundL}, we obtain
  \[ \Delta o = |V'(\omega,\gamma)| \qquad\text{and}\qquad \Delta L \ge \tfrac{|V(\omega,\gamma)|}{15} + \tfrac{|V'(\omega,\gamma)|}{10} .\]
  Therefore, as $n \ge 1$ by assumption,
  \[
  \frac{\Pr_{H,n,x}^\emptyset(\shift\omega)}{\Pr_{H,n,x}^\emptyset(\omega)} = \frac{x^{o_H(\shift\omega)} \cdot n^{L_H(\shift\omega)}}{x^{o_H(\omega)} \cdot n^{L_H(\omega)}} = x^{\Delta o} \cdot n^{\Delta L} \ge x^{|V'(\omega,\gamma)|} \cdot n^{\frac{|V(\omega,\gamma)|}{15} + \frac{|V'(\omega,\gamma)|}{10}} . \qedhere
  \]
\end{proof}

\subsection{Proof of the main lemma}
\label{sec:proof of lemma}

In this section, we prove Lemma~\ref{lem:prob-outer-circuit}. Recall the definition of $V(\omega,\gamma)$ from Section~\ref{sec:main lemma}. Let us start with two technical lemmas regarding the connectedness of $V(\omega,\gamma)$. Let $\HH^{\times}$ be the graph obtained from $\HH$ by adding an edge between each pair of opposite vertices of every hexagon, so that $\HH^{\times}$ is a $6$-regular non-planar graph.

\begin{figure}
    \centering
    \begin{tikzpicture}[scale=0.5, every node/.style={scale=0.5}]
    \hexagon[0][0];\hexagon[1][0];\hexagon[2][-1];
    \node [circle,draw,fill=white, inner sep=1mm] at (0.5,\h) { };
    \node [circle,draw,fill=white, inner sep=1mm] at (2.5,\h) { };
    \node [circle,draw,fill, inner sep=1mm] at (1,0) { };
    \node [circle,draw,fill, inner sep=1mm] at (2,0) { };
    \node [scale=1.6] at (0.2,1.4*\h) {$u'$};
    \node [scale=1.6] at (2.9,1.4*\h) {$v'$};
    \node [scale=1.6] at (0.6,-0.2*\h) {$u$};
    \node [scale=1.6] at (2.4,-0.2*\h) {$v$};
    \node [scale=1.6] at (1.5,1.3*\h) {$z$};
    \begin{scope}[xscale=1.5, yscale=0.866]
    \draw [domain-path,-] {(0,0)--(1,1)--(2,0) };
    \end{scope}
    \end{tikzpicture}
    \caption{If a circuit $\gamma$ lies in $\T\setminus\T^0$ then any three consecutive hexagons on $\gamma$ are in the depicted constellation up to rotation and reflection (with $\gamma$ denoted by the dotted line).
        The set of vertices in $\partial\IntVert\gamma$ bordering the hexagon $z$ is then either the set $\{u,v\}$ or the set $\{u', v'\}$, and in both cases, constitutes an edge of $\HH^{\times}$. The same is true for $\partial\ExtVert\gamma$.}
    \label{fig:left-right-turns}
\end{figure}
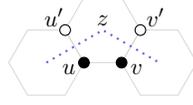

\begin{lemma}
  \label{cl:boundary-of-internal-vertices-is-connected}
  Let $\gamma \subset \T \setminus \T^0$ be a circuit. Then $\partial \IntVert\gamma$ and $\partial \ExtVert\gamma$ are connected in $\HH^{\times}$.
\end{lemma}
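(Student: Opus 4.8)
The plan is to follow the circuit $\gamma=(\gamma_0,\dots,\gamma_m)$ and read off $\partial\IntVert\gamma$ and $\partial\ExtVert\gamma$ as cyclic sequences of vertices, one per edge of $\gamma^*$, and then check that consecutive vertices in each sequence are adjacent in $\HH^{\times}$. Concretely, set $e_i:=\{\gamma_{i-1},\gamma_i\}^*$ for $1\le i\le m$ (indices read mod $m$, so $\gamma_0=\gamma_m$), so that $\gamma^*=\{e_1,\dots,e_m\}$. Since $\{\IntEdge\gamma,\ExtEdge\gamma,\gamma^*\}$ partitions $\EH$ and both $\Int\gamma$ and $\Ext\gamma$ are \emph{induced} subgraphs of $\HH$ (Fact~\ref{fact:gamma-int-ext}), no $e_i$ can have both endpoints in $\IntVert\gamma$ (else $e_i\in\IntEdge\gamma$) nor both in $\ExtVert\gamma$; hence each $e_i$ has a unique endpoint $w_i\in\IntVert\gamma$ and a unique endpoint $u_i\in\ExtVert\gamma$. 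A one-line argument---any $\HH$-edge joining $\IntVert\gamma$ to $\ExtVert\gamma$ lies in neither $\IntEdge\gamma$ nor $\ExtEdge\gamma$, hence in $\gamma^*$---shows $\partial\IntVert\gamma=\{w_1,\dots,w_m\}$ and $\partial\ExtVert\gamma=\{u_1,\dots,u_m\}$. So it suffices to prove $w_i\sim w_{i+1}$ and $u_i\sim u_{i+1}$ in $\HH^{\times}$ for every $i$.

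The decisive input is a local fact about the three consecutive hexagons $\gamma_{i-1},\gamma_i,\gamma_{i+1}$, all of which lie in $\T\setminus\T^0$. Because every triangular face of $\T$ carries all three colours, around any hexagon the six neighbours alternate between the two colours distinct from it; therefore the colours of $\gamma_0,\gamma_1,\dots$ alternate between $1$ and $2$ along $\gamma$, and $\gamma_{i-1},\gamma_{i+1}$ are two \emph{distinct} neighbours of $\gamma_i$ of the \emph{same} colour. Two distinct same-coloured neighbours of a vertex of $\T$ are always ``two apart'' in the cyclic order of its six incident edges, i.e.\ the $\T$-edges $\{\gamma_{i-1},\gamma_i\}$ and $\{\gamma_i,\gamma_{i+1}\}$ enclose an angle of $120^\circ$; in particular $\gamma$ neither passes straight through ($180^\circ$) nor makes a sharp turn ($60^\circ$) at any hexagon. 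This is exactly the constellation of Figure~\ref{fig:left-right-turns}, and it is the only place the hypothesis $\gamma\subset\T\setminus\T^0$ is used.

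To conclude, I would pass to the dual picture at the hexagonal face $\gamma_i$: the edges $e_i,e_{i+1}$ are two of its six bordering edges, separated by exactly one bordering edge on one side and three on the other. Label the six bordering vertices cyclically $v_0,v_1,v_2,v_3,v_4,v_5$ with $e_i=\{v_0,v_1\}$ and $e_{i+1}=\{v_2,v_3\}$. Since $\gamma$ visits the hexagon $\gamma_i$ exactly once, the only bordering edges of $\gamma_i$ lying in $\gamma^*$ are $e_i$ and $e_{i+1}$; hence $v_1,v_2$ lie in one component of $\HH\setminus\gamma^*$ (joined by the bordering edge $\{v_1,v_2\}$) and $v_0,v_3,v_4,v_5$ lie in one component (joined along the path $v_3v_4v_5v_0$). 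As $e_i\in\gamma^*$ separates $v_0$ from $v_1$ and $\HH\setminus\gamma^*$ has exactly two components (Fact~\ref{fact:gamma-int-ext}), one of $\{v_1,v_2\}$, $\{v_0,v_3,v_4,v_5\}$ is contained in $\IntVert\gamma$ and the other in $\ExtVert\gamma$. In the first case $\{w_i,w_{i+1}\}=\{v_1,v_2\}$ and $\{u_i,u_{i+1}\}=\{v_0,v_3\}$; in the second case these two pairs are interchanged. Either way each pair is either an edge of $\HH$ or a pair of opposite vertices of a hexagon, hence an edge of $\HH^{\times}$. Thus $w_1,\dots,w_m$ and $u_1,\dots,u_m$ trace closed walks in $\HH^{\times}$, proving that $\partial\IntVert\gamma$ and $\partial\ExtVert\gamma$ are connected in $\HH^{\times}$.

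The routine parts are the two short incidence facts of the first paragraph and the coloring fact of the second; the only point needing care is the separation claim in the last paragraph (that the two vertex-groups bordering the middle hexagon fall on opposite sides of $\gamma$), which I would deduce purely from Fact~\ref{fact:gamma-int-ext}---two components, both induced---rather than reproving any planar Jordan-curve statement. It is worth noting that the coloring hypothesis is genuinely necessary: a $60^\circ$ or $180^\circ$ turn would force $w_i$ and $w_{i+1}$ to be two apart along a hexagon, which is \emph{not} an edge of $\HH^{\times}$.
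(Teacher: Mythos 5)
Your proof is correct and follows essentially the same route as the paper: both arguments traverse the circuit, use the coloring hypothesis to force the $120^\circ$-turn constellation of Figure~\reffig{fig:left-right-turns} at each hexagon, and deduce that the boundary vertices contributed by consecutive steps are joined by an edge of $\HH^{\times}$ (either an $\HH$-edge or an opposite pair of a hexagon). The only difference is bookkeeping — you index the boundary by the edges of $\gamma^*$ while the paper indexes by the hexagons of $\gamma$ — and your write-up spells out the constellation and separation claims that the paper delegates to the figure.
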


\begin{proof}
 Suppose $\gamma = (z_0,\ldots, z_m)$. Set $U$ to be either $\partial \IntVert\gamma$ or $\partial \ExtVert\gamma$ and let $U_i$ be the set of vertices in $U$ which border the hexagon $z_i$. The connectivity of $U$ in $\HH^{\times}$ is a consequence of the following statements:
\begin{enumerate}[label=(\alph*), ref=\alph*, labelindent=\parindent]
\item $U = \cup_{i} U_i$.
\item $U_i\cap U_{i+1}\neq\emptyset$ for $0\le i < m$.
\item $U_i$ is connected in $\HH^{\times}$ for all $i$.
\end{enumerate}
The first and second properties follow from Fact~\ref{fact:gamma-int-ext}. For the third property note that the only constellation up to rotation and reflection of three consecutive hexagons $z_{i-1}, z_i, z_{i+1}\in\T \setminus \T^0$ (where the indices are taken modulo $m$) on $\gamma$  is as depicted in Figure~\reffig{fig:left-right-turns}, so that the set $U_i$ has size $2$ and constitutes an edge in $\HH^{\times}$.
\end{proof}

\begin{lemma}\label{cl:bad-vertices-are-connected}
  Let $\omega$ be a loop configuration and let $\gamma\subset\T\setminus\T^0$ be a vacant circuit in $\omega$. If $\partial\IntVert\gamma\subset V(\omega,\gamma)$ then $V(\omega,\gamma)$ is connected in $\HH^{\times}$.
\end{lemma}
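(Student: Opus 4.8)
The goal is to show that $V(\omega,\gamma)$ is connected in $\HH^{\times}$ under the hypothesis that $\partial\IntVert\gamma\subset V(\omega,\gamma)$. Since $\partial\IntVert\gamma$ is already known to be connected in $\HH^{\times}$ by Lemma~\ref{cl:boundary-of-internal-vertices-is-connected}, it suffices to show that every vertex of $V(\omega,\gamma)$ can be connected to $\partial\IntVert\gamma$ by a path in $\HH^{\times}$ lying entirely inside $V(\omega,\gamma)$. The natural strategy is to take an arbitrary $v\in V(\omega,\gamma)$ and walk outward toward $\gamma$, arguing that we can always stay inside $V(\omega,\gamma)$.

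\begin{proof}
By Lemma~\ref{cl:boundary-of-internal-vertices-is-connected}, $\partial\IntVert\gamma$ is connected in $\HH^{\times}$, so since $\partial\IntVert\gamma\subset V(\omega,\gamma)$, it suffices to show that every $v\in V(\omega,\gamma)$ is connected in $\HH^{\times}$ to $\partial\IntVert\gamma$ through vertices of $V(\omega,\gamma)$. Fix $v\in V(\omega,\gamma)$ and, towards a contradiction, let $W$ be the connected component of $v$ in the subgraph of $\HH^{\times}$ induced on $V(\omega,\gamma)$, and suppose $W\cap\partial\IntVert\gamma=\emptyset$, so that $W\subset\IntVert\gamma\setminus\partial\IntVert\gamma$.

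Consider the edges of $\HH$ incident to vertices of $W$. We claim that the set $E$ of all edges of $\HH$ having at least one endpoint in $W$, together with the cluster edges ``surrounding'' $W$, can be organized to reveal a garden of $\omega$ strictly containing some cluster, contradicting maximality of clusters. More precisely: since $W$ contains no vertex of $\partial\IntVert\gamma$, every edge of $\HH$ incident to $W$ lies in $\IntEdge\gamma$. Now examine a vertex $w$ on the outer boundary of $W$, i.e., $w\in W$ having a neighbor $w'$ in $\HH$ with $w'\notin W$; since $w\in V(\omega,\gamma)$, by the characterization recalled before Lemma~\ref{lem:prob-outer-circuit}, $w$ is incident to an edge in no cluster or has its three edges in distinct clusters. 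The key point is that the three edges at $w'$ must all lie in a single cluster $E_0$ (as $w'\notin V(\omega,\gamma)$), and the edge $\{w,w'\}$ then forces $w$ to border a hexagon $z\in\T^{\clr}$ with $z\in\partial\IntHex{\sigma(E_0)}$, making $z$ a $\clr$-flower; one then checks, using Lemma~\ref{lem:circuit-interior-degree} and Fact~\ref{fact:circuits-max}, that the circuit $\sigma(E_0)$ can be enlarged across $W$, contradicting the maximality of $E_0$ as a cluster — unless $W$ abuts the boundary circuit $\gamma$, which it does not by assumption. Hence $W\cap\partial\IntVert\gamma\neq\emptyset$, completing the proof.
\end{proof}

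\textbf{The main obstacle.} The delicate step is the ``enlargement across $W$'' argument: one must show rigorously that a component of $V(\omega,\gamma)$ isolated from $\gamma$ is entirely encircled by cluster edges all of the \emph{same} color class, so that these surrounding circuits can be merged (via Fact~\ref{fact:circuits-max}) into one larger circuit whose interior swallows $W$, contradicting that a cluster is a maximal garden. The subtlety is ruling out that different portions of the boundary of $W$ touch clusters of \emph{different} colors; here one invokes Lemma~\ref{cl:gardens-different-class}, which says gardens of different colors are nested or disjoint, to argue that along the boundary of a connected deviation-free region only one color can appear. Handling the interface geometry carefully — precisely which hexagons become flowers and why the flower condition propagates around $W$ — is where the real work lies, and it leans essentially on Lemma~\ref{lem:circuit-interior-degree} and the turn-structure of circuits in $\T\setminus\T^0$ exhibited in Figure~\ref{fig:left-right-turns}.
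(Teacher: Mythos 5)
Your opening reduction is fine and matches the spirit of the paper: since $\partial\IntVert\gamma$ is $\HH^{\times}$-connected (Lemma~\ref{cl:boundary-of-internal-vertices-is-connected}) and contained in $V(\omega,\gamma)$ by hypothesis, it suffices to connect each $v\in V(\omega,\gamma)$ to $\partial\IntVert\gamma$ within $V(\omega,\gamma)$. The gap is everything after that. First, a local slip: if $w\in W$, $w'\notin W$ and $w'\in\IntVert{\sigma(E_0)}$, then $\{w,w'\}\in\sigma(E_0)^*$ and the $\clr$-colored hexagon bordered by $w$ lies \emph{outside} $\IntHex{\sigma(E_0)}$ (it is the third hexagon at $w'$, not at $w$, that lies in $\partial\IntHex{\sigma(E_0)}$ and is a flower), so the asserted flower at $w$ is not there. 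More seriously, the intended contradiction does not exist along your route: a cluster abutting deviation vertices of $V(\omega,\gamma)$ is the generic situation and violates no maximality. To contradict maximality you would need the circuit enlarged across $W$ to define a \emph{garden}, i.e., a circuit $\sigma'\subset\T\setminus\T^{\clr}$ all of whose $\clr$-hexagons on $\partial\IntHex{\sigma'}$ are flowers; nothing in your sketch (nor in the geometry) provides this, since $W$ consists precisely of vertices not absorbed into any cluster. Your proposed fix via Lemma~\ref{cl:gardens-different-class} also does not do the job: that lemma only says different-color gardens are nested or edge-disjoint, and two edge-disjoint clusters of different colors can perfectly well both abut a putative landlocked component $W$, so "only one color appears along $\partial W$" is not established. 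In short, the step "one then checks \dots that $\sigma(E_0)$ can be enlarged" is exactly the missing proof, and I do not see how to complete it as stated.

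The paper's proof avoids maximality altogether and is purely a rerouting/topological argument: writing $\sigma_i:=\sigma(E_i)$ for the clusters inside $\gamma$, one has $V(\omega,\gamma)=\IntVert\gamma\setminus\bigcup_i\IntVert{\sigma_i}$; the set $\IntVert\gamma$ is connected in $\HH$ (Fact~\ref{fact:gamma-int-ext}); each $\partial\ExtVert{\sigma_i}$ is $\HH^{\times}$-connected (Lemma~\ref{cl:boundary-of-internal-vertices-is-connected} applied to $\sigma_i$); and each $\partial\ExtVert{\sigma_i}\subset V(\omega,\gamma)$, using edge-disjointness of clusters~\eqref{eq:clusters-are-disjoint} together with the hypothesis $\partial\IntVert\gamma\subset V(\omega,\gamma)$ to keep these boundaries inside $\IntVert\gamma$. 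Any path in $\IntVert\gamma$ from $v$ to $\partial\IntVert\gamma$ can then be detoured around each $\IntVert{\sigma_i}$ it meets, staying inside $V(\omega,\gamma)$. If you want to salvage your contradiction framing, this is the mechanism that actually rules out a landlocked component; the cluster-enlargement idea should be dropped.
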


\begin{proof}
Let $E_1,\dots,E_m$ denote the clusters of $\omega$ inside $\gamma$ and write $\sigma_i := \sigma(E_i)$.
The connectivity of $V(\omega,\gamma)$ in $\HH^{\times}$ is a consequence of the following statements:
\begin{enumerate}[label=(\alph*), ref=\alph*, labelindent=\parindent]
\item $V(\omega,\gamma) = \IntVert\gamma \setminus \cup_i \IntVert{\sigma_i}$.
\item $\IntVert\gamma$ is connected in $\HH$.
\item $\partial \ExtVert{\sigma_i}$ is connected in $\HH^{\times}$ for all $i$.
\item $\partial \ExtVert{\sigma_i} \subset V(\omega,\gamma)$ for all $i$.
\end{enumerate}
The first property follows from the definition of $V(\omega,\gamma)$, the second from Fact~\ref{fact:gamma-int-ext} and the third from Lemma~\ref{cl:boundary-of-internal-vertices-is-connected} (and symmetry).
For the fourth property, note that $\partial \ExtVert{\sigma_i} \cap \IntVert\gamma \subset V(\omega,\gamma)$ by~\eqref{eq:clusters-are-disjoint}, and $\partial\ExtVert{\sigma_i} \subset \IntVert\gamma$ by the assumption that $\partial\IntVert\gamma \subset V(\omega,\gamma)$.
\end{proof}

\begin{lemma}
    \label{lem:prob-outer-circuit-intermediate}
    There exist absolute constants $C,c>0$ such that for any $n \ge C$ and any $x \in (0,\infty]$ satisfying $nx^6 \ge C$ the following holds.
    Let $\gamma\subset \T\setminus\T^0$ be a circuit, let $H'$ be a domain and set $\cE:=\LC(H', \ground^0 \cap \IntEdge\gamma)$.
    Then, for any integers $k \ge \ell \ge 0$, we have
    \[
    \Pr_{\Int\gamma,n,x}^\emptyset\big(\partial\IntVert\gamma\subset V(\omega,\gamma),~ |V(\omega,\gamma)|\ge k \text{ and } |V'(\omega,\gamma)|\ge \ell \,\mid\, \cE \big) \le \frac{(cn \cdot \min\{x^6,1\})^{-k/15}}{\max\{x^{\ell},1\}} .
    \]
\end{lemma}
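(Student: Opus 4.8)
The plan is to apply Lemma~\ref{lem:prob-inequality-tool} to the repair map $\shiftFunc$, after decomposing the event in question according to the values of $j:=|V(\omega,\gamma)|$ and $\ell':=|V'(\omega,\gamma)|$. Fix $\gamma$ and put $H:=\Int\gamma$, so that every $\omega\in\LC(H,\emptyset)$ has $\gamma$ vacant in it. Write $\cE':=\LC(H,\emptyset)\cap\cE$; this coincides with the set denoted $\cE$ in Lemma~\ref{lem:range-of-repair-map-new} (taken with this $H$), it is non-empty as it contains $\ground^0\cap\IntEdge\gamma$, and we let $\mu:=\Pr_{H,n,x}^\emptyset(\,\cdot\mid\cE')$ (if $\Pr_{H,n,x}^\emptyset(\cE')=0$ there is nothing to prove, and for $k=0$ the asserted bound is at least $1$, so assume $k\ge 1$). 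By Lemma~\ref{lem:range-of-repair-map-new}, $\shiftFunc$ maps $\cE'$ into $\cE'$, and since $\mu(\shift\omega)/\mu(\omega)=\Pr_{H,n,x}^\emptyset(\shift\omega)/\Pr_{H,n,x}^\emptyset(\omega)$ for $\omega\in\cE'$, one may work with $\mu$ exactly as with the unconditioned measure. For integers $j\ge\ell'\ge 0$, let
\[
\cD_{j,\ell'}:=\{\omega\in\cE'\ :\ \partial\IntVert\gamma\subset V(\omega,\gamma),\ |V(\omega,\gamma)|=j,\ |V'(\omega,\gamma)|=\ell'\},
\]
so the event in the statement is $\bigcup_{j\ge k}\bigcup_{\ell\le\ell'\le j}\cD_{j,\ell'}$, and it suffices to bound $\mu(\cD_{j,\ell'})$ for each such pair and sum.

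\emph{The parameter $p$.} Since $n\ge C\ge 1$ and $nx^6\ge C\ge 1$, we have $n\min\{x^6,1\}=\min\{n,nx^6\}\ge C\ge 1$ and $\max\{x,1\}\ge 1$, so Proposition~\ref{cor:shift-increases-weight-of-configuration} applies and gives, for every $\omega\in\cD_{j,\ell'}$, with $\sfT:=\shiftFunc$ restricted to $\cD_{j,\ell'}$,
\[
\mu(\sfT\omega)\ \ge\ (n\min\{x^6,1\})^{j/15}\,(\max\{x,1\})^{\ell'}\cdot\mu(\omega)\ =:\ p_{j,\ell'}\cdot\mu(\omega),
\]
and $\sfT(\cD_{j,\ell'})\subset\cE'$ by Lemma~\ref{lem:range-of-repair-map-new}.

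\emph{The parameter $q$, and the main obstacle.} I claim there is an absolute constant $D$ with $|\sfT^{-1}(\omega')|\le D^j$ for every $\omega'$; making this precise is the heart of the argument. The point is that any $\omega\in\cD_{j,\ell'}$ with $\shift\omega=\omega'$ is determined by $\omega'$ together with the following data: (i) the set $V:=V(\omega,\gamma)$; (ii) the colour $\clr_i\in\{0,1,2\}$ of each cluster $E_i$ of $\omega$ inside $\gamma$ (say there are $m$ of them, with $\sigma_i:=\sigma(E_i)$); and (iii) the ``bad part'' $\beta:=\omega\cap\BadEdgesBefore(\omega)$. For (i): by Lemma~\ref{cl:bad-vertices-are-connected} (applicable since $\partial\IntVert\gamma\subset V$), $V$ is connected in $\HH^{\times}$, and it contains the non-empty set $\partial\IntVert\gamma$, which depends only on $\gamma$; fixing once and for all $v_0\in\partial\IntVert\gamma$ and using that $\HH^{\times}$ is $6$-regular, Lemma~\ref{lem:number-of-connected-graphs} bounds the number of possibilities for $V$ by $(5e)^j$. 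Given $V$, the cluster interiors $\IntVert{\sigma_i}$ are recoverable, being the connected components of $\IntVert\gamma\setminus V$: indeed $V=\IntVert\gamma\setminus\bigcup_i\IntVert{\sigma_i}$ (as in the proof of Lemma~\ref{cl:bad-vertices-are-connected}), each $\Int{\sigma_i}$ is connected (Fact~\ref{fact:gamma-int-ext}), and the $\IntVert{\sigma_i}$ are pairwise disjoint and pairwise non-adjacent in $\HH$ — this last point follows from~\eqref{eq:clusters-are-disjoint}, Lemmas~\ref{cl:gardens-same-class} and~\ref{cl:gardens-different-class} and the maximality of clusters, since two touching cluster interiors would force one of the corresponding gardens to be contained in another. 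Thus $V$ determines the circuits $\sigma_i$ (via Fact~\ref{fact:circuit-domain-bijection}) and hence the edge sets $E_i=\IntEdge{\sigma_i}\cup\sigma_i^*$; a crude bound — each vertex of $V$ lies on the outer boundary $\partial\ExtVert{\sigma_i}$ of at most three distinct $\Int{\sigma_i}$, and $\partial\ExtVert{\sigma_i}\subset V$ by Lemma~\ref{cl:bad-vertices-are-connected} — gives $m\le 3j$, so the data in (ii) contributes at most $3^m\le 27^j$ choices. For (iii): by~\eqref{eq:pairwise-disjoint-loop-configs} and the proof of Proposition~\ref{cor:shift-increases-weight-of-configuration}, $\beta$ is a $2$-regular subgraph of $\HH$ all of whose edges have both endpoints in $V$; since $\HH$ has maximum degree $3$ there are fewer than $2^{3j/2}\le 3^j$ such subgraphs. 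Given this data, $\omega$ is recovered explicitly: $\beta$ is $\omega\cap\BadEdgesBefore(\omega)$, while for each $i$ one has $\omega\cap E_i=\omega'\cap\IntEdge{\sigma_i}$ if $\clr_i=0$, $\omega\cap E_i=(\omega'\cap\IntEdge{\sigma_i^{\DIRdown}})^{\DIRup}$ if $\clr_i=1$, and $\omega\cap E_i=(\omega'\cap\IntEdge{\sigma_i^{\DIRup}})^{\DIRdown}$ if $\clr_i=2$; one checks that the right-hand sides indeed equal $\omega\cap E_i$ using Proposition~\ref{prop:repair-map-disjoint-loop-configs} together with Lemma~\ref{lem:circuits-and-shifts}\ref{it:circuits-and-shifts3} (and its cyclic symmetry), which rules out the shifted clusters interfering with one another or with $\BadEdges(\omega)$ — and then $\omega=\beta\cup\bigcup_i(\omega\cap E_i)$ by~\eqref{eq:pairwise-disjoint-loop-configs}. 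Hence $|\sfT^{-1}(\omega')|\le q_{j,\ell'}:=D^j$ for an absolute $D$ (e.g. $D:=5e\cdot 27\cdot 3$). The genuine difficulty of the whole proof is concentrated in these geometric reconstruction steps; the rest is bookkeeping.

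\emph{Conclusion.} By Lemma~\ref{lem:prob-inequality-tool},
\[
\mu(\cD_{j,\ell'})\ \le\ \frac{q_{j,\ell'}}{p_{j,\ell'}}\ \le\ \frac{D^j}{(n\min\{x^6,1\})^{j/15}(\max\{x,1\})^{\ell'}}.
\]
Write $M:=n\min\{x^6,1\}=\min\{n,nx^6\}\ (\ge C)$. Summing over $\ell'$ from $\ell$ to $j$ and using $\sum_{\ell'=\ell}^{j}(\max\{x,1\})^{-\ell'}\le (j+1)(\max\{x,1\})^{-\ell}=(j+1)/\max\{x^\ell,1\}$ (the identity because $(\max\{x,1\})^{\ell}=\max\{x^\ell,1\}$), and then summing over $j\ge k$, gives
\[
\mu\Big(\bigcup_{j\ge k}\bigcup_{\ell\le\ell'\le j}\cD_{j,\ell'}\Big)\ \le\ \frac{1}{\max\{x^\ell,1\}}\sum_{j\ge k}(j+1)\Big(\frac{D}{M^{1/15}}\Big)^j.
\]
Choosing the absolute constant $C:=(2D)^{15}$ forces $D/M^{1/15}\le\tfrac12$, so the sum is at most $C_0k\,D^kM^{-k/15}$ for an absolute constant $C_0$; then taking $c>0$ a sufficiently small absolute constant (for instance $c:=(C_0D)^{-15}$, using $(C_0k)^{1/k}\le C_0$ for all $k\ge 1$) makes $C_0k\,D^k\le c^{-k/15}$, so that the right-hand side is at most $(cM)^{-k/15}/\max\{x^\ell,1\}=(cn\min\{x^6,1\})^{-k/15}/\max\{x^\ell,1\}$, which is the desired bound.
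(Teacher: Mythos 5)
Your argument is correct and follows essentially the same route as the paper: apply Lemma~\ref{lem:prob-inequality-tool} to the repair map restricted to the conditioned space (well-defined by Lemma~\ref{lem:range-of-repair-map-new}), take $p$ from Proposition~\ref{cor:shift-increases-weight-of-configuration}, bound the number of preimages of $\shiftFunc$, and sum using the $\HH^{\times}$-connectivity of $V(\omega,\gamma)$ (Lemma~\ref{cl:bad-vertices-are-connected}) together with Lemma~\ref{lem:number-of-connected-graphs}. The only structural difference is bookkeeping: the paper fixes the set $V=V(\omega,\gamma)$ inside the event, so that given $V$ the preimage count is just the number of subsets of edges with both endpoints in $V$, at most $2^{3|V|/2}$, and the enumeration of connected sets $V$ enters only at the summation stage; moreover the paper observes that the colour of each cluster is already determined by $V$ (via the first edge, along a path from the cluster, crossing into $V$, which lies in $\sigma(E)^*$), so your extra factor $3^m\le 27^j$ for guessing colours is unnecessary --- harmless, since it only inflates the absolute constants $C,c$.

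One justification in your reconstruction step is off, though its conclusion is right. You invoke Lemma~\ref{lem:circuits-and-shifts}\ref{it:circuits-and-shifts3} (and its cyclic symmetry) to ``rule out the shifted clusters interfering with one another'', but that lemma does not cover the pair $(\omega\cap E^1(\omega))^{\DIRdown}$ versus $(\omega\cap E^2(\omega))^{\DIRup}$: by Lemma~\ref{lem:E12-intersection} these can genuinely overlap, in the trivial loops around double-clustered hexagons. Your identity $\omega\cap E_i=(\omega'\cap\IntEdge{\sigma_i^{\DIRdown}})^{\DIRup}$ for a $1$-cluster $E_i$ nevertheless holds: if an edge $e$ of an up-shifted $2$-cluster $E_j$ lies in $\IntEdge{\sigma_i^{\DIRdown}}$, then the argument in the proof of Lemma~\ref{lem:E12-intersection} (which uses only $e^{\DIRup}\in\IntEdge{\sigma_i}$ and $e^{\DIRdown}\in\IntEdge{\sigma(E_j)}$) shows that $e$ borders a hexagon $z\in\T^0$ with $E(z^{\DIRup})\subset E_i$ and $E(z^{\DIRdown})\subset E_j$, so $z^{\DIRup}$ is a $1$-flower and $e\in(\omega\cap E_i)^{\DIRdown}$ anyway. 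With that case supplied (the $0$-cluster versus shifted-cluster cases and the equal-colour cases are indeed handled by Lemma~\ref{lem:circuits-and-shifts}\ref{it:circuits-and-shifts3}, \eqref{eq:clusters-are-disjoint} and shifting), your preimage bound $D^j$ and the final summation go through and give the stated estimate.
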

\begin{proof}
    Let $\gamma\subset \T\setminus\T^0$ be a circuit and denote $H := \Int\gamma$.
    Let $n>0$ and let $x \in (0,\infty]$. We may assume throughout the proof that $n \cdot \min\{x^6,1\}$ is sufficiently large, as otherwise the statement is trivial.
    We shall show that for any $\emptyset \neq V \subset \IntVert\gamma$,
    \begin{equation}\label{eq:b}
    \Pr_{H,n,x}^\emptyset(V(\omega,\gamma) = V \text{ and } |V'(\omega,\gamma)| \ge \ell \,\mid\, \cE) \le \frac{(2\sqrt{2})^{|V|} \cdot (n \cdot \min\{x^6,1\})^{-|V|/15}}{\max\{x^\ell,1\}}.
    \end{equation}
    In light of Lemma~\ref{cl:bad-vertices-are-connected} and Lemma~\ref{lem:number-of-connected-graphs}, Lemma~\ref{lem:prob-outer-circuit-intermediate} will then follow from~\eqref{eq:b} by summing over all sets $V$ with $\partial \IntVert\gamma \subset V \subset \IntVert\gamma$ such that $V$ is connected in $\HH^\times$ and has cardinality at least $k$.

    In order to prove~\eqref{eq:b}, we shall apply Lemma~\ref{lem:prob-inequality-tool} to the (restricted) repair map
    \[
    \shiftFunc \colon \{ \omega \in \LC(H,\emptyset) \cap \cE ~:~ V(\omega,\gamma)=V \text{ and } |V'(\omega,\gamma)| \ge \ell \} \to \LC(H,\emptyset) \cap \cE ,\]
    which, by Lemma~\ref{lem:range-of-repair-map-new}, is well-defined.
    By Proposition~\ref{cor:shift-increases-weight-of-configuration}, we may take $p:=(n \cdot \min\{x^6,1\})^{|V|/15} \cdot \max\{x^\ell,1\}$. It remains to estimate, for each $V$, the maximum number of preimages under $\shiftFunc$ of a given loop configuration.

    Let $\omega$ be such that $V(\omega,\gamma) = V$ and let $E(V)$ be the set of edges with both endpoints in $V$. We claim that the set $\omega \setminus E(V)$ may be reconstructed from $\shift\omega$ and $V$. Indeed, $\omega \subset \IntEdge\gamma$ since $\omega \in \LC(H,\emptyset)$, and, for every $e \in \IntEdge\gamma \setminus E(V)$, we may determine whether $e \in \omega$ in the following way.
    Since $e$ has an endpoint $u_0 \in \IntVert\gamma \setminus V$, we see that $e$ belongs to a $\clr$-cluster $E$ of $\omega$ for some $\clr\in\{0,1,2\}$. In this case, $\omega \cap E$ equals either $\shift\omega \cap E$, $\shift\omega^{\DIRup} \cap E$ or $\shift\omega^{\DIRdown} \cap E$, depending on whether $\clr=0$, $\clr=1$ or $\clr=2$, respectively. Hence, it suffices to determine $\clr$ from $V$. To this end, consider a path from $u_0$ to $V$ in $\Int\gamma$, and let $\{u,v\}$ be the first edge on this path such that $u \notin V$ and $v \in V$.
    Observe that $u \in \IntVert{\sigma(E)}$ and $v \in \ExtVert{\sigma(E)}$ since $\partial \ExtVert{\sigma(E)} \cap \IntVert\gamma \subset V \subset \ExtVert{\sigma(E)}$ by~\eqref{eq:clusters-are-disjoint} and the definition of $V(\omega,\gamma)$.
    Thus, $\{u,v\} \in \sigma(E)^*$. Finally, since $\sigma(E) \subset \T \setminus \T^\clr$, we see that $\clr$ is the unique element in $\{0,1,2\}$ such that $y,z \notin \T^\clr$, where $\{y,z\}^*=\{u,v\}$.

    In conclusion, since given $V(\omega,\gamma)=V$, $\shift\omega$ uniquely determines $\omega \setminus E(V)$, the number of preimages of a given loop configuration $\shift\omega$ is at most the number of subsets of $E(V)$. Since there are at most $3|V|/2$ edges with both endpoints in $V$, there are at most $2^{3|V|/2}$ subsets of $E(V)$. Thus, Lemma~\ref{lem:prob-inequality-tool} implies~\eqref{eq:b}.
\end{proof}

\begin{proof}[Proof of Lemma~\ref{lem:prob-outer-circuit}]
Let $A$ be the event that $\partial\IntVert\gamma\subset V(\omega,\gamma)$ and $|V(\omega,\gamma)|\ge k$.
Denote $\cE:=\LC(H, \ground^0 \cap \IntEdge\gamma)$.
Using the fact that $\gamma$ is vacant in $\ground^0$, the domain Markov property implies that
\[ \Pr_{H,n,x}^0(A \mid \text{$\gamma$ vacant}) = \Pr_{\Int\gamma,n,x}^\emptyset(A \mid \cE).\]
Thus, the result follows from Lemma~\ref{lem:prob-outer-circuit-intermediate}.
\end{proof}

%
%

\section{Proofs of main theorems}
\label{sec:proofs_main_theorems}

Throughout this section, we continue to use the notation introduced
in Section~\ref{sec:definitions}. The proofs of the main theorems
mostly rely on the main lemma, Lemma~\ref{lem:prob-outer-circuit}.

\subsection{Exponential decay of loop
lengths}\label{sec:exponential_decay_loop_lengths}

As mentioned in the introduction, the results for small $x$ follow
via a Peierls argument. The following lemma gives an upper bound on
the probability that a given collection of loops appears in a random
loop configuration.

\begin{lemma}\label{lem:given-loops-are-unlikely}
    Let $H$ be a domain and let $\xi$ be a loop configuration.
    Then, for any $n>0$, any $x>0$ and any $A \in \LC(H,\emptyset)$, we have
    \[ \Pr_{H,n,x}^\xi(A \subset \omega) \le n^{L_H(A)}x^{o_H(A)} .\]
\end{lemma}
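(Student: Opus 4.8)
The plan is to use the map-counting tool, Lemma~\ref{lem:prob-inequality-tool}, applied to the operation of \emph{deleting} the prescribed collection of loops $A$ from a configuration. Concretely, let $E$ be the event $\{A \subset \omega\}$ inside $\LC(H,\xi)$, and let $F := \LC(H,\xi)$ be the whole space. Define $\sfT \colon E \to F$ by $\sfT(\omega) := \omega \setminus A$. This is well-defined: since $A \in \LC(H,\emptyset)$ and $A \subset \omega$, part~\eqref{it:circuits-and-clusters5} of Lemma~\ref{lem:circuits-and-clusters} guarantees that $\omega \setminus A$ is again a loop configuration, and it still agrees with $\xi$ off $E(H)$ because $A \subset E(H)$. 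The map $\sfT$ is injective — from $\omega \setminus A$ one recovers $\omega$ by re-adding the edges of $A$ — so we may take $q = 1$ in Lemma~\ref{lem:prob-inequality-tool}.

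The main point is to estimate $p$, i.e.\ to bound the ratio $\Pr_{H,n,x}^\xi(\sfT(\omega)) / \Pr_{H,n,x}^\xi(\omega)$ from below uniformly over $\omega \in E$. Writing $\omega' := \omega \setminus A$, the definition of the measure gives
\[
\frac{\Pr_{H,n,x}^\xi(\omega')}{\Pr_{H,n,x}^\xi(\omega)} = x^{o_H(\omega') - o_H(\omega)} \, n^{L_H(\omega') - L_H(\omega)} = x^{-o_H(A)} \, n^{L_H(\omega') - L_H(\omega)}.
\]
The edge count is exact: since $A$ and $\omega \setminus A$ are edge-disjoint and $A \subset E(H)$, we have $o_H(\omega) = o_H(\omega') + o_H(A)$. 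For the loop count I would argue that removing the loops of $A$ destroys exactly the $L_H(A)$ loops that $A$ contributes and creates no new ones: because the loops of $A$ are among the connected components of $\omega$ (a loop configuration has even degrees everywhere, so each loop of $A$, being vertex-disjoint from the rest by the even-degree constraint, is a union of components), deleting $A$ simply removes those components without merging or splitting anything else. Hence $L_H(\omega') \ge L_H(\omega) - L_H(A)$, so the ratio is at least $x^{-o_H(A)} n^{-L_H(A)}$, and we may take $p := x^{-o_H(A)} n^{-L_H(A)}$.

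Plugging $p$ and $q=1$ into Lemma~\ref{lem:prob-inequality-tool} yields
\[
\Pr_{H,n,x}^\xi(A \subset \omega) = \Pr(E) \le \frac{q}{p}\,\Pr(F) = n^{L_H(A)} x^{o_H(A)},
\]
which is exactly the claimed bound.

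The one step deserving genuine care is the loop-count inequality $L_H(\omega\setminus A) \ge L_H(\omega) - L_H(A)$, more precisely the claim that each loop of $A$ is a full connected component of $\omega$ rather than something that shares vertices with other loops of $\omega$. This follows from the structure of loop configurations: if a loop $\ell$ of $A$ shared a vertex $v$ with another edge of $\omega$, then $v$ would have degree at least $3$ in $\omega$ (two edges from $\ell$ plus at least one more), contradicting the evenness of degrees — degree $3$ is impossible and degree $4$ would require two edges of $A$ through $v$, but distinct loops of $A$ are vertex-disjoint. Thus the loops of $A$ sit in $\omega$ as isolated cycle-components, $L_H$ counts loops intersecting $E(H)$, and deleting them decreases the loop count by exactly $L_H(A)$ while leaving every other loop of $\omega$ intact; in fact equality holds in the loop-count relation, though only the inequality is needed. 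Everything else is a direct unwinding of the definition of $\Pr_{H,n,x}^\xi$.
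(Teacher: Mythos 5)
Your proposal is correct and follows essentially the same route as the paper: the same deletion map $\omega \mapsto \omega \setminus A$ fed into Lemma~\ref{lem:prob-inequality-tool} with $q=1$ (injectivity) and the same edge/loop-count bookkeeping, with well-definedness via Lemma~\ref{lem:circuits-and-clusters}. The only quibble is your aside about degree $4$, which is vacuous since $\HH$ is $3$-regular, so even degrees are $0$ or $2$ and each loop of $A$ is automatically a full connected component of $\omega$, giving the paper's exact identity $L_H(\omega\setminus A)=L_H(\omega)-L_H(A)$.
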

\begin{proof}
    Consider the map
    \[ \sfT \colon \{ \omega \in \LC(H,\xi) :  A \subset \omega \} \to \LC(H,\xi) \]
    defined by
    \[ \sfT(\omega) := \omega \setminus A .\]
    Clearly, $\sfT$ is well-defined (see Lemma~\ref{lem:circuits-and-clusters}\ref{it:circuits-and-clusters5}) and injective.
    Moreover, since $L_H(\sfT(\omega)) = L_H(\omega) - L_H(A)$ and $o_H(\sfT(\omega)) = o_H(\omega) - o_H(A)$, we have
    \[ \Pr_{H,n,x}^\xi(\sfT(\omega)) = \Pr_{H,n,x}^\xi(\omega) \cdot n^{-L_H(A)} x^{-o_H(A)} .\]
    Hence, the statement follows from Lemma~\ref{lem:prob-inequality-tool}.
\end{proof}

Recall the notion of a loop surrounding a vertex given prior to Theorem~\ref{thm:no-large-loops}.

\begin{cor}\label{cl:no-large-loops-for-small-x}
  For any $n>0$, any $x>0$, any domain $H$, any vertex $u \in V(H)$ and any positive integer $k$, we have
  \[
  \Pr_{H,n,x}^\emptyset(\text{there exists a loop of length $k$ surrounding $u$}) \leq k n (2x)^k .
  \]
\end{cor}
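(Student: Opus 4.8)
The plan is to combine Lemma~\ref{lem:given-loops-are-unlikely} with a union bound over the possible loops surrounding $u$. First I would observe that if a loop $L$ of length $k$ surrounds $u$, then in particular $L \in \LC(H,\emptyset)$ (viewing $L$ as a single loop, which is a loop configuration with $L_H(L) \le 1$ and $o_H(L) = k$), and $\{L \subset \omega\}$ is one of the events bounded by Lemma~\ref{lem:given-loops-are-unlikely}. Strictly, one must be slightly careful: a loop surrounding $u$ might not intersect $E(H)$ at all, in which case it is forced by the (vacant) boundary conditions — but with $\xi = \emptyset$ there are no loops outside $E(H)$, so every loop surrounding $u$ that can appear must in fact lie in $E(H)$, and then $L_H(L) = 1$. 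Applying Lemma~\ref{lem:given-loops-are-unlikely} gives
\[
\Pr_{H,n,x}^\emptyset(L \subset \omega) \le n \cdot x^k
\]
for each such $L$.

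Next I would bound the number of loops of length $k$ in $\HH$ that surround $u$. A loop surrounding $u$ must pass through a vertex on some fixed infinite simple path emanating from $u$ — say the rightward horizontal ray — so it contains at least one of the (at most $k$, since the loop has $k$ vertices, but more usefully: at most $k/\text{(spacing)}$, and crudely at most $k$) vertices on that ray within distance $k$ of $u$. Fixing such a vertex $w$, a loop of length $k$ through $w$ is determined by a closed walk of length $k$ starting at $w$ in the $3$-regular graph $\HH$; the number of such walks is at most $2^k$ (at each of the $k$ steps there are at most $2$ choices since we cannot immediately backtrack, and in fact one may be even more generous and just say $\le 3 \cdot 2^{k-1} \le 2^k$ for $k\ge 3$), and each loop is counted at most $2k$ times this way (choice of starting vertex among its $k$ vertices on the ray — at most $k$ of them — and orientation). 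Combining, the number of loops of length $k$ surrounding $u$ is at most $k \cdot 2^k$.

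Putting these together via a union bound,
\[
\Pr_{H,n,x}^\emptyset(\exists\text{ a loop of length }k\text{ surrounding }u) \le (k \cdot 2^k) \cdot n x^k = k n (2x)^k,
\]
which is the claimed bound. The only mild obstacle is getting the combinatorial count of loops clean — one wants a bound of the shape $k \cdot 2^k$ rather than something with a larger base, so the counting should be organized by a fixed ray through $u$ and by closed non-backtracking walks, being careful that each geometric loop is overcounted by only a $\mathrm{poly}(k)$ factor (here $\le k$ for the anchor vertex and $\le 2$ for orientation, absorbed into the factor $k$ up to constants — or one simply keeps $k$ as the number of vertices and notes orientation contributes another factor $2$ that can be folded into replacing $2^k$ by the crude walk count; in any case a factor $k n (2x)^k$ suffices). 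Everything else is a direct application of the already-established Lemma~\ref{lem:given-loops-are-unlikely} together with Lemma~\ref{lem:prob-inequality-tool} implicitly inside it.
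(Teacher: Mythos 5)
Your argument is correct and is essentially the paper's own proof: a union bound over the at most $k\cdot 2^k$ loops of length $k$ surrounding $u$ (counted via an anchor vertex on a fixed ray from $u$ together with a simple path from that anchor), combined with the per-loop bound $n x^k$ from Lemma~\ref{lem:given-loops-are-unlikely}. The only cosmetic point is that the cleanest way to reach the factor $2^k$ is to count simple paths of length $k-1$ from the anchor ($\le 3\cdot 2^{k-2}\le 2^k$), as the paper does, rather than walks of length $k$ (your inequality $3\cdot 2^{k-1}\le 2^k$ is off, but your orientation factor repairs it).
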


\begin{proof}

Denote by $a_k$ the number of simple paths of length $k$ in $\HH$
starting at a given vertex. Clearly, $a_k \leq 3 \cdot 2^{k-1}$. It
is then easy to see that the number of loops of length $k$
surrounding $u$ is at most $k a_{k-1} \le k 2^k$. Thus, the result
follows by the union bound and
Lemma~\ref{lem:given-loops-are-unlikely}.
\end{proof}

Our main lemma, Lemma~\ref{lem:prob-outer-circuit}, shows that for a given circuit $\gamma$ (with a type) it is unlikely that the set $V(\omega,\gamma)$ is large. The set $V(\omega,\gamma)$ specifies deviations from the ground states which are `visible' from $\gamma$, i.e., deviations which are not `hidden' inside clusters. In Theorem~\ref{thm:no-large-loops}, we claim that it is unlikely to see long loops surrounding a given vertex. Any such long loop constitutes a deviation from all ground states. Thus, the theorem would follow from the main lemma (in the main case, when $x$ is large) if the long loop was captured in $V(\omega,\gamma)$. Our next lemma bridges the gap between the main lemma and the theorem, by showing that even when a deviation is not captured by $V(\omega,\gamma)$, there is necessarily a smaller circuit $\sigma$ which captures it in $V(\omega,\sigma)$.

\begin{lemma}\label{lem:existence-of-good-outer-circuit}
Let $\omega$ be a loop configuration, let $\clr\in\{0,1,2\}$ and let $\gamma \subset \T\setminus\T^\clr$ be a vacant circuit in $\omega$.
Let $U \subset \IntVert\gamma$ be non-empty and connected and assume that no vertex in $U$ belongs to a trivial loop in $\omega$. Then there exists $\clr'\in\{0,1,2\}$ and a circuit $\sigma \subset \T\setminus\T^{\clr'}$ such that $\Int\sigma \subset \Int\gamma$, $\sigma$ is vacant in $\omega$ and $U \cup \partial \IntVert\sigma \subset V(\omega,\sigma)$.
\end{lemma}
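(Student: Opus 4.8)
The plan is to argue by induction on $|\IntVert\gamma|$. The base case is $|\IntVert\gamma|=6$, the smallest value possible for a circuit of a type (the boundary of a single hexagon of $\T^\clr$): there the only loop configurations contained in $\IntEdge\gamma$ are the empty one and the trivial loop around that hexagon, and the latter is ruled out by the hypotheses on $U$, so $\omega\cap\IntEdge\gamma=\emptyset$, there are no clusters of $\omega$ inside $\gamma$, and $V(\omega,\gamma)=\IntVert\gamma$; hence $\sigma:=\gamma$ works. Before treating the inductive step I would record one elementary \emph{dichotomy}: for every cluster $E$ of $\omega$ inside $\gamma$, either $U\subseteq\IntVert{\sigma(E)}$ or $U\cap\IntVert{\sigma(E)}=\emptyset$. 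Indeed, if $U$ met both sides then, since $U$ is connected in $\HH$, some $\HH$-edge with both endpoints in $U$ would lie in $\sigma(E)^*$; but by the proof of Lemma~\ref{lem:circuits-and-clusters}\ref{it:circuits-and-clusters2} every edge of $\sigma(E)^*$ has an endpoint bordering a flower, hence an endpoint lying on a trivial loop of $\omega$, contradicting the hypothesis on $U$. I will also use the (routine, Jordan-curve-type) fact that a cluster $E$ of $\omega$ inside $\gamma$ has $\Int{\sigma(E)}\subseteq\Int\gamma$ and is such that $\sigma(E)$ is a vacant circuit of some type.

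For the inductive step, if $U\cup\partial\IntVert\gamma\subseteq V(\omega,\gamma)$ we are done with $\sigma:=\gamma$. Otherwise I split into three cases, in each of which I produce a strictly smaller vacant circuit $\gamma'$ of some type with $U\subseteq\IntVert{\gamma'}\subsetneq\IntVert\gamma$ and then apply the inductive hypothesis (noting that $U$, $\omega$ and the hypothesis ``no vertex of $U$ lies on a trivial loop'' are unchanged). \textbf{Case (a):} $U\subseteq\IntVert{\sigma(E)}$ for some cluster $E$ of $\omega$ inside $\gamma$ with $\Int{\sigma(E)}\subsetneq\Int\gamma$ — then $\gamma':=\sigma(E)$. \textbf{Case (b):} $\IntEdge\gamma\cup\gamma^*$ is itself a $\clr$-garden of $\omega$ (equivalently, the unique cluster of $\omega$ inside $\gamma$ has circuit $\gamma$). \textbf{Case (c):} $U\subseteq V(\omega,\gamma)$ but some $v\in\partial\IntVert\gamma$ lies in a cluster interior $\IntVert{\sigma(E)}$; here the edge of $\gamma^*$ incident to $v$ also lies in $\sigma(E)^*$, which forces $\gamma$ and $\sigma(E)$ to share a $\T$-edge and hence forces $E$ to be of type $\clr$. (By the dichotomy and the fact above, these cases are exhaustive once $\sigma:=\gamma$ fails.)

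In Cases (b) and (c) I build $\gamma'$ by a ``delete a layer, take the component of $U$, fill the holes'' procedure. In Case (b), delete from $\IntVert\gamma$ all vertices lying on the trivial loops surrounding the hexagons of $\T^\clr\cap\partial\IntHex\gamma$ (this set is nonempty, and $U$ is disjoint from it since $U$ avoids trivial loops; in fact $\partial\IntVert\gamma$ is contained in it). In Case (c), delete from $\IntVert\gamma$ the interiors of all type-$\clr$ clusters of $\omega$ inside $\gamma$ that share a $\T$-edge with $\gamma$ (nonempty by Case (c)). In both cases let $W_0$ be the connected component, in $\HH$, of the remaining vertex set that contains $U$, and let $W:=W_0$ together with all bounded components of $\HH\setminus W_0$; then $U\subseteq W\subsetneq\IntVert\gamma$. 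The crucial point is that $W$ is \emph{$\T^\clr$-closed}: a hexagon of $\T^\clr$ with one bordering vertex in $W$ has all six bordering vertices in $W$ — because, by Lemma~\ref{lem:circuit-interior-degree}, the six vertices bordering a $\T^\clr$-hexagon either all lie in a single type-$\clr$ cluster interior or none do, and neither the deleted type-$\clr$ clusters nor the deleted trivial loops (which surround $\T^\clr$-hexagons) can split a $\T^\clr$-hexagon across the deletion, nor can the hole-filling. Consequently $W$ is a domain of type $\clr$, i.e. $W=\IntVert{\gamma'}$ for some circuit $\gamma'\subset\T\setminus\T^\clr$ with $\Int{\gamma'}\subseteq\Int\gamma$; and $\gamma'$ is vacant in $\omega$, since every $\HH$-edge leaving $W$ either crosses $\gamma$ (vacant) or crosses a vacant cluster circuit (Case (c)) or is the ``radial'' edge at a flower vertex, which cannot lie in $\omega$ by a degree count (Case (b)). Applying the inductive hypothesis to $\gamma'$ completes the step.

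I expect the main obstacle to be precisely this geometric bookkeeping in Cases (b) and (c): verifying that $W$ is connected, hole-free, strictly smaller than $\IntVert\gamma$, of type $\clr$, and bounded by a circuit that is vacant in $\omega$. All of it is a careful combination of Facts~\ref{fact:gamma-int-ext}--\ref{fact:circuits-max}, Fact~\ref{fact:circuit-domain-bijection}, Lemma~\ref{lem:circuit-interior-degree}, and the cluster-disjointness relations~\eqref{eq:clusters-are-disjoint}--\eqref{eq:clusters-are-connected}; the cleanest packaging is probably to isolate ``$\T^\clr$-closedness'' as a standalone sublemma and to prove separately that the hull of any $\T^\clr$-closed connected subset of $\IntVert\gamma$ is a domain of type $\clr$ whose boundary circuit only crosses edges that cross $\gamma$ or cross vacant cluster circuits.
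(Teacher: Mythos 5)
Your proposal is correct and follows the same overall strategy as the paper's proof: induction on $|\IntVert\gamma|$, stopping when $U\cup\partial\IntVert\gamma\subset V(\omega,\gamma)$, and otherwise descending either into the circuit of a cluster containing $U$ or into a strictly smaller vacant circuit of type $\clr$ obtained by peeling material away from the boundary of $\gamma$. The difference is in how the peeling step is executed. The paper's second case rests on the short observation that any $u\in\partial\IntVert\gamma\setminus V(\omega,\gamma)$ necessarily borders a $\clr$-flower $z$; it then deletes only the six vertices bordering that single hexagon and notes that each connected component of what remains is automatically a domain of type $\clr$ whose circuit is vacant, so no hole-filling is needed. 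You avoid that observation via your cases (b) and (c), deleting instead a whole layer (all boundary flowers, respectively the interiors of the boundary-touching type-$\clr$ clusters) and then filling holes; this is heavier, but the bookkeeping you flag does go through: distinct $\T^\clr$-hexagons are vertex-disjoint, and by Lemma~\ref{lem:circuit-interior-degree} a type-$\clr$ cluster interior contains all six vertices of any $\T^\clr$-hexagon it meets, so neither deletion splits a $\T^\clr$-hexagon, and the component-plus-hole-filling step cannot split one either since its six bordering vertices form a connected cycle; vacancy of the new circuit then follows exactly as you say (crossing edges lie in $\gamma^*$, in some $\sigma(E)^*$, or are the third edge at a vertex of degree two on a trivial loop). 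Two small steps that you (like the paper) leave implicit deserve to be recorded explicitly, since your case exhaustiveness and the recursion in case (a) depend on them: first, a vertex all three of whose incident edges lie in a single cluster $E$ must lie in $\IntVert{\sigma(E)}$ (otherwise all three dual $\T$-edges would lie on $\sigma(E)$, forcing $\sigma(E)$ to be a triangle, which meets all three color classes); second, the containment $\Int{\sigma(E)}\subset\Int\gamma$ for a cluster inside $\gamma$, which follows from the fact that an infinite path started in $\ExtVert\gamma$ crossing $\sigma(E)^*$ would have to meet a vertex of a flower of $\omega\cap\IntEdge\gamma$, hence a vertex of $\IntVert\gamma$. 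With these recorded, your argument is a complete, if somewhat bulkier, variant of the paper's proof.
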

\begin{proof}
We prove the lemma by induction on $|\IntVert\gamma|$. We consider two cases.

Assume first that $\partial\IntVert\gamma \subset V(\omega,\gamma)$.
If $U \subset V(\omega,\gamma)$ then we are done, with $\sigma = \gamma$.
Otherwise, since $U$ is connected and no vertex in $U$ belongs to a trivial loop in $\omega$ it follows that $U$ is disjoint from $V(\omega, \gamma)$. Thus, using again the connectedness of $U$ and \eqref{eq:clusters-are-disjoint}, there is a cluster $E$ of $\omega$ inside $\gamma$ which contains all edges incident to vertices in $U$. Denote $\gamma' := \sigma(E)$ and observe that $\Int{\gamma'} \subsetneq \Int\gamma$ and that $\gamma'$ is vacant in $\omega$ by Lemma~\ref{lem:circuits-and-clusters}\ref{it:circuits-and-clusters2}. Hence, the lemma follows by applying the induction hypothesis with $\gamma'$ replacing $\gamma$.

Assume now that $\partial\IntVert\gamma \setminus V(\omega,\gamma) \neq \emptyset$. Let $u \in \partial\IntVert\gamma \setminus V(\omega,\gamma)$ and note that $u$ necessarily borders a $\clr$-flower $z$ of $\omega$. Consider the subgraph $H'$ induced by the vertices of $H$ which do not border $z$. Observe that $U \subset V(H')$ and, while $H'$ is not necessarily connected, each of its connected components is a domain of type $\clr$. Let $\gamma'$ be the circuit corresponding to the domain containing $U$. Now $\Int{\gamma'} \subset H' \subsetneq \Int\gamma$ and $\gamma'$ is vacant in $\omega$ as $\gamma$ is vacant and $z$ is a $\clr$-flower. Thus, the lemma follows by applying the induction hypothesis with $\gamma'$ replacing $\gamma$.
\end{proof}

\subsection*{Proof of Theorem~\ref{thm:no-large-loops}}
\label{sec:11}

Suppose that $n_0$ is a sufficiently large constant, let $n \ge n_0$ and let $x \in (0,\infty]$ be arbitrary. Let $\clr\in\{0,1,2\}$, let $H$ be a domain of type $\clr$ and let $u \in V(H)$. We shall estimate the probability that, in a random loop configuration drawn from $\Pr_{H,n,x}^\emptyset$, the vertex $u$ is surrounded by a non-trivial loop of length $k$. We consider two cases, depending on the relative values of $n$ and $x$.

Suppose first that $nx^6 < n^{1/50}$. Since $n \ge n_0$, we may
assume that $2x \le n^{-4/25}$ and that $kn^{-k/120} \le 1$ for all
$k>0$. By Corollary~\ref{cl:no-large-loops-for-small-x}, for every
$k \ge 7$,
\[
\begin{split}
  \Pr_{H,n,x}^\emptyset(\text{there exists a loop of length $k$ surrounding $u$}) & \le k n (2x)^k \le k n^{1-4k/25} \\
  & \le k n^{-k/60} \le n^{-k/120}.
\end{split}
\]

We now assume that $nx^6 \ge n^{1/50}$. Since $n \ge n_0$, we may assume that $n \cdot \min\{x^6,1\}$ is sufficiently large for our arguments to hold.
Let $L \subset H$ be a non-trivial loop of length $k$ surrounding $u$. Note that, if $\omega \in \LC(H,\emptyset)$ has $L \subset \omega$ then, by Lemma~\ref{lem:existence-of-good-outer-circuit}, for some $\clr'\in\{0,1,2\}$, there exists a circuit $\sigma \subset \T\setminus\T^{\clr'}$ such that $\Int\sigma \subset H$, $\sigma$ is vacant in $\omega$ and $V(L) \cup \partial \IntVert\sigma \subset V(\omega,\sigma)$.
Using the fact that $H$ is of type $\clr$ and the equivalence~\eqref{eq:domain_of_type}, the domain Markov property and Lemma~\ref{lem:prob-outer-circuit} imply that for every fixed circuit $\sigma \subset \T\setminus\T^{\clr'}$ with $\Int\sigma \subset H$,
\[
\Pr_{H,n,x}^\emptyset(\sigma\text{ vacant and }V(L)\cup\partial\IntVert\sigma \subset V(\omega,\sigma)) \le (c n \cdot \min\{x^6,1\})^{-|V(L)\cup \partial\IntVert\sigma|/15}.
\]
Thus, denoting by $\mathcal{G}(u)$ the set of circuits $\sigma$ contained in $\T\setminus\T^{\clr'}$ for some $\clr'\in\{0,1,2\}$ and having $u\in\IntVert\sigma$, we obtain
\[
\begin{split}
  \Pr_{H,n,x}^\emptyset(L \subset \omega)
  & \le \sum_{\sigma \in \mathcal{G}(u)} (c n \cdot \min\{x^6,1\})^{-|V(L)\cup\partial\IntVert\sigma|/15}\\
  & \le \sum_{\ell = 1}^{\infty} D^{\ell} (cn \cdot \min\{x^6,1\})^{-\max\{\ell,k\}/15} \\
  &\le (c' n \cdot \min\{x^6,1\})^{-k/15} ,
\end{split}
\]
where we used the facts that the length of a circuit $\sigma$ such that $|\partial \IntVert\sigma|=\ell$ is at most $3\ell$, that the number of circuits $\sigma$ of length at most $3\ell$ with $u \in \IntVert{\sigma}$ is bounded by $D^\ell$ for some sufficiently large constant $D$, and in the last inequality we used the assumption that $n \cdot \min\{x^6,1\}$ is sufficiently large.
Since the number of loops of length $k$ surrounding a given vertex is smaller than $k 2^k$ (see the proof of Corollary~\ref{cl:no-large-loops-for-small-x}), our assumptions that $nx^6 \ge n^{1/50}$ and $n \ge n_0$ yield
\[
\begin{split}
  \Pr_{H,n,x}^\emptyset(\text{there exists a loop of length $k$ surrounding $u$})
  & \le k 2^k (c' n^{1/50})^{-k/15} \le n^{-k/800}.
\end{split}
\]

%
%

\subsection*{Proof of Theorem~\ref{thm:spin spin}} The proof is
very similar to that of Theorem~\ref{thm:no-large-loops}. The main
difference is the following replacement of
Lemma~\ref{lem:prob-outer-circuit}. Recall that in every $\lambda \in
\LC(H,\emptyset,u,v)$, there is a simple path between $u$ and $v$. Let $p(\lambda,u,v)$ be such a path and denote $\omega_\lambda := \lambda \setminus E(p(\lambda,u,v))$, so
that $\omega_\lambda \in \LC(H,\emptyset)$ and $L'_H(\lambda) = L_H(\omega_\lambda)$.
For a circuit $\gamma$ for which $\Int\gamma \subset H$ and for a positive integer $k$, let $\mathcal E(H,u,v,\gamma,k)$ be the set of configurations
$\lambda \in\LC(H,\emptyset,u,v)$ such that
\begin{itemize}
    \item $\gamma$ is vacant in $\omega_\lambda$;
    \item $V(p(\lambda,u,v)) \setminus \{u,v\}$ and $\partial\IntVert\gamma$ are contained in $V(\omega_\lambda,\gamma)$;
    \item $|V(\omega_\lambda,\gamma)|\ge k$.
\end{itemize}
For $\omega \in \LC(H,\emptyset)$ and $\lambda \in \LC(H,\emptyset,u,v)$, denote
\begin{align*}
\phi_{H,n,x}(\omega) &:= x^{o_H(\omega)}n^{L_H(\omega)} ,\\
\phi_{H,n,x}(\lambda) &:= x^{o_H(\lambda)}n^{L'_H(\lambda)} J(\lambda) .
\end{align*}

\begin{lemma}\label{lem:prob-outer-circuit-loopconfig-with-path}
    There exist absolute constants $C,c>0$ such that for any $n \ge C$ and $x\in(0,\infty)$ satisfying $n x^6 \ge C$ the following holds.
    For any domain $H$, any $\clr\in\{0,1,2\}$, any circuit $\gamma\subset \T\setminus\T^\clr$ for which $\Int\gamma \subset H$, any distinct vertices $u,v \in V(H)$ and any positive integer $k$, we have
    \[\sum_{\lambda\in \mathcal E(H,u,v,\gamma,k)} \phi_{H,n,x}(\lambda) \le x (cn \cdot \min\{x^6,1\})^{-k/15} \sum_{\omega\in \LC(H,\emptyset)} \phi_{H,n,x}(\omega) .\]
\end{lemma}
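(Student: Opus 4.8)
The plan is to imitate the proof of Lemma~\ref{lem:prob-outer-circuit-intermediate}, feeding the repair map the loop configuration obtained from $\lambda$ by erasing the chosen $u$--$v$ path. To each $\lambda\in\mathcal E(H,u,v,\gamma,k)$ we associate $\omega_\lambda:=\lambda\setminus E(p(\lambda,u,v))\in\LC(H,\emptyset)$ and consider the map $\sfT(\lambda):=\shift{\omega_\lambda}$, with the usual reduction to the repair map on $\Int\gamma$ obtained by freezing the configuration outside $\IntEdge\gamma$ (legitimate since $\gamma$ is vacant in $\omega_\lambda$, exactly as in the passage from Lemma~\ref{lem:prob-outer-circuit-intermediate} to Lemma~\ref{lem:prob-outer-circuit}); that $\sfT$ lands in $\LC(H,\emptyset)$ follows as in Section~\ref{sec:repair-map}. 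As there, we stratify $\mathcal E(H,u,v,\gamma,k)$ by the value $V:=V(\omega_\lambda,\gamma)$: since $\partial\IntVert\gamma\subset V(\omega_\lambda,\gamma)$ by the definition of $\mathcal E$, Lemma~\ref{cl:bad-vertices-are-connected} applied to $\omega_\lambda$ shows that every such $V$ is connected in $\HH^\times$, contains the nonempty set $\partial\IntVert\gamma$, and has $|V|\ge k$; so it suffices to prove a bound for each fixed $V$ and then sum over connected subsets of $\HH^\times$ of size $\ge k$ through a fixed vertex, using the $6$-regularity of $\HH^\times$ and Lemma~\ref{lem:number-of-connected-graphs}.

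The one genuinely new ingredient is a geometric remark about the path. Since $\HH$ is $3$-regular, every interior vertex $w$ of $p(\lambda,u,v)$ uses two of its three incident edges for the path, and the third cannot belong to $\lambda$ --- otherwise $w$ would have odd degree $3$, contradicting $w\notin\{u,v\}$. Hence $w$ is isolated in $\omega_\lambda$, and, being in $V(\omega_\lambda,\gamma)$ by the definition of $\mathcal E$, it lies in $V'(\omega_\lambda,\gamma)$. Therefore $p(\lambda,u,v)$ has exactly $|P|+1$ edges, where $P:=V(p(\lambda,u,v))\setminus\{u,v\}\subset V'(\omega_\lambda,\gamma)$, and
\[
o_H(\lambda)=o_H(\omega_\lambda)+|P|+1,\qquad |P|\le|V'(\omega_\lambda,\gamma)|,\qquad L'_H(\lambda)=L_H(\omega_\lambda).
\]
Combining the factorization $\phi_{H,n,x}(\lambda)=x^{|P|+1}J(\lambda)\,\phi_{H,n,x}(\omega_\lambda)$ with $J(\lambda)<3$, with $x^{|P|+1}\le x\cdot(\max\{x,1\})^{|V'(\omega_\lambda,\gamma)|}$, and with the ``in particular'' form of Proposition~\ref{cor:shift-increases-weight-of-configuration} (applicable since $nx^6\ge C\ge1$) gives
\[
\phi_{H,n,x}(\sfT(\lambda))\ \ge\ (n\cdot\min\{x^6,1\})^{|V|/15}\,(\max\{x,1\})^{|V'(\omega_\lambda,\gamma)|}\,\phi_{H,n,x}(\omega_\lambda)\ \ge\ \frac{(n\cdot\min\{x^6,1\})^{|V|/15}}{3x}\,\phi_{H,n,x}(\lambda),
\]
so we may take $p:=(n\cdot\min\{x^6,1\})^{|V|/15}/(3x)$. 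Morally, the (possibly many) edges of the path are paid for edge-for-edge by the trivial-loop edges the repair map creates around the isolated path vertices, at a total loss of only a single edge and the bounded factor $J(\lambda)$.

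For the number of preimages of a fixed $\mu\in\LC(H,\emptyset)$ among the $\lambda$ with $V(\omega_\lambda,\gamma)=V$: exactly as in the proof of Lemma~\ref{lem:prob-outer-circuit-intermediate}, $\omega_\lambda\setminus E(V)$ is reconstructed from $(\mu,V)$ (an edge with an endpoint outside $V$ lies in a cluster whose class is read off from $V$, and the corresponding shift is undone), so $\omega_\lambda$ is determined up to the $\le 2^{3|V|/2}$ choices of $\omega_\lambda\cap E(V)$; and since $\lambda=\omega_\lambda\cup E(p(\lambda,u,v))$ and every edge of $p(\lambda,u,v)$ has both endpoints in $V\cup\{u,v\}$ (a set of at most $|V|+2$ vertices, spanning at most $\tfrac32(|V|+2)$ edges of the $3$-regular graph $\HH$), $\lambda$ is then determined up to $\le 2^{3(|V|+2)/2}$ further choices. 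Hence $q:=2^{3|V|+3}$ works, and the weighted form of Lemma~\ref{lem:prob-inequality-tool} (identical proof) yields
\[
\sum_{\lambda:\,V(\omega_\lambda,\gamma)=V}\phi_{H,n,x}(\lambda)\ \le\ \frac{q}{p}\sum_{\omega\in\LC(H,\emptyset)}\phi_{H,n,x}(\omega)\ =\ 24x\Big(\tfrac{8^{15}}{n\cdot\min\{x^6,1\}}\Big)^{|V|/15}\sum_{\omega\in\LC(H,\emptyset)}\phi_{H,n,x}(\omega).
\]
Summing over $V$ (at most $(5e)^{|V|}$ candidates of each size, by Lemma~\ref{lem:number-of-connected-graphs}) turns the right-hand side into $24x\sum_{\ell\ge k}\big((40e)^{15}/(n\cdot\min\{x^6,1\})\big)^{\ell/15}$ times $\sum_\omega\phi_{H,n,x}(\omega)$; when $n\cdot\min\{x^6,1\}\ge C$ for a large absolute constant $C$ this geometric series is at most twice its first term, and taking $c$ a small enough absolute constant absorbs the remaining numerical constants, giving the claimed bound.

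The step I expect to be the real obstacle --- and the reason the path needs separate treatment --- is controlling the path edges when $x$ is large, since naively erasing a long path costs a factor $x^{\mathrm{length}}$ that the $n^{|V|/15}$ loop gain cannot absorb. The $3$-regularity observation that the interior path vertices are automatically isolated in $\omega_\lambda$, hence bounded in number by $|V(\omega_\lambda,\gamma)|$ and matched one-for-one by edges the repair map creates, is precisely what makes this work; once it is in place, the remaining bookkeeping runs parallel to Lemma~\ref{lem:prob-outer-circuit-intermediate}. A secondary point requiring minor care is that $u$ or $v$ may lie outside $\Int\gamma$, handled as in the reduction of Lemma~\ref{lem:prob-outer-circuit} to Lemma~\ref{lem:prob-outer-circuit-intermediate}.
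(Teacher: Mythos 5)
Your proposal is correct and follows essentially the same route as the paper: the paper sets $\omega_\lambda:=\lambda\setminus E(p(\lambda,u,v))$, stratifies by the number $\ell$ of interior path vertices, and invokes Lemma~\ref{lem:prob-outer-circuit-intermediate} (whose $\max\{x,1\}^{-\ell}$ factor plays exactly the role of your $(\max\{x,1\})^{|V'(\omega_\lambda,\gamma)|}$ gain, resting on the same observation that interior path vertices are isolated in $\omega_\lambda$ and hence lie in $V'$), together with $J(\lambda)\le 3$ and a count of at most $2^{\ell+1}$ simple $u$--$v$ paths, before summing over $\ell$. You merely inline the intermediate lemma's injection argument, stratifying by $V=V(\omega_\lambda,\gamma)$ and absorbing the path multiplicity into the preimage count via edges spanned by $V\cup\{u,v\}$, which is a harmless reorganization of the same proof.
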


\begin{proof}
By symmetry, it suffices to consider the case that $\clr = 0$.
For $\ell \ge 0$, let $\mathcal E_\ell$ denote the set of $\lambda \in \mathcal E(H,u,v,\gamma,k)$ having $|V(p(\lambda,u,v)) \setminus \{u,v\}| = \ell$ and set $\mathcal E'_\ell := \{ \omega_\lambda : \lambda \in \mathcal E_\ell \}$.
Since $V(p(\lambda,u,v)) \setminus \{u,v\} \subset V(\omega_\lambda,\gamma)$, we have $|V'(\omega_\lambda,\gamma)| \ge \ell$ for any $\lambda \in \cE_\ell$.
Therefore, by Lemma~\ref{lem:prob-outer-circuit-intermediate},
\[ \sum_{\omega \in \mathcal E'_\ell} \phi_{H,n,x}(\omega) \le (cn \cdot \min\{x^6,1\})^{-k/15} \cdot \max\{x,1\}^{-\ell} \cdot \sum_{\omega\in \LC(H,\emptyset)} \phi_{H,n,x}(\omega) .\]
Since $J(\lambda) \le 3$ and $|E(p(\lambda,u,v))|=\ell+1$ for any $\lambda \in \cE_\ell$, we have
\[ \phi_{H,n,x}(\lambda) = \phi_{H,n,x}(\omega_\lambda) \cdot x^{|E(p(\lambda,u,v))|} J(\lambda) \le \phi_{H,n,x}(\omega_\lambda) \cdot 3x \cdot \max\{x,1\}^\ell .\]
Thus, noting that for every $\omega \in \cE'_\ell$,
\[ |\{ \lambda \in \cE_\ell ~:~ \omega_\lambda = \omega \}| \le \text{\#(simple paths of length $\ell+1$ from $u$ to $v$)} \le 3 \cdot 2^{\ell-1} \le 2^{\ell+1} ,\]
we obtain
\[ \sum_{\lambda \in \mathcal E_\ell} \phi_{H,n,x}(\lambda) \le 3x \cdot 2^{\ell+1} \cdot (cn \cdot \min\{x^6,1\})^{-k/15} \cdot \sum_{\omega\in \LC(H,\emptyset)} \phi_{H,n,x}(\omega) .\]
Finally, the lemma follows by summing over $0 \le \ell \le k$.
\end{proof}

We shall also require the following replacement of Corollary~\ref{cl:no-large-loops-for-small-x}.
\begin{lemma}\label{lem:no-large-loops-for-small-x-path}
  Let $n>0$ and $0<x\le\frac{1}{8}$. For any domain $H$ and any distinct $u,v \in V(H)$, we have
  \[
  \sum_{\lambda \in \LC(H,\emptyset,u,v)} \phi_{H,n,x}(\lambda) \le 3(2x)^{d_H(u,v)} \sum_{\omega \in \LC(H,\emptyset)} \phi_{H,n,x}(\omega).
  \]
\end{lemma}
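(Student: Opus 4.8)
The plan is to run the same deletion argument that underlies Corollary~\ref{cl:no-large-loops-for-small-x}, but now deleting the $u$--$v$ path rather than a loop. Fix $\lambda\in\LC(H,\emptyset,u,v)$, let $p_\lambda:=p(\lambda,u,v)$ be the chosen simple $u$--$v$ path contained in $\lambda$, and set $\omega_\lambda:=\lambda\setminus E(p_\lambda)\in\LC(H,\emptyset)$. Since $\lambda\subset E(H)$ we have $o_H(\lambda)=o_H(\omega_\lambda)+|E(p_\lambda)|$, while $L'_H(\lambda)=L_H(\omega_\lambda)$ by the choice of path and $J(\lambda)<3$; hence
\[
\phi_{H,n,x}(\lambda)\ \le\ 3\,x^{|E(p_\lambda)|}\,\phi_{H,n,x}(\omega_\lambda).
\]

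I would then sum this over all $\lambda\in\LC(H,\emptyset,u,v)$ and group the right-hand side according to the value of $\omega_\lambda$, so that it remains to bound $\sum_{\lambda\,:\,\omega_\lambda=\omega}x^{|E(p_\lambda)|}$ for each fixed $\omega\in\LC(H,\emptyset)$. The point is that within one such fiber the assignment $\lambda\mapsto p_\lambda$ is injective, because $\lambda=\omega_\lambda\cup E(p_\lambda)$; and $p_\lambda$ is a simple $u$--$v$ path lying in $H$, so its length is at least $d:=d_H(u,v)$. Therefore
\[
\sum_{\lambda\,:\,\omega_\lambda=\omega}x^{|E(p_\lambda)|}\ \le\ \sum_{\ell\ge d}\ \#\{\text{simple $u$--$v$ paths of length $\ell$ in }\HH\}\cdot x^{\ell}.
\]

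To estimate the path count, delete the last edge of a simple $u$--$v$ path of length $\ell$; this gives a simple path of length $\ell-1$ issuing from $u$, and the operation is injective, so (as $\HH$ is $3$-regular) there are at most $3\cdot 2^{\ell-2}$ such paths. Thus the sum above is at most $\sum_{\ell\ge d}3\cdot 2^{\ell-2}x^{\ell}=\tfrac34\,(2x)^{d}/(1-2x)$, and since $x\le\tfrac18$ gives $2x\le\tfrac14$ and hence $\tfrac34/(1-2x)\le 1$, this is at most $(2x)^{d}$. Combining with the displayed weight inequality yields $\sum_{\lambda}\phi_{H,n,x}(\lambda)\le 3\,(2x)^{d_H(u,v)}\sum_{\omega}\phi_{H,n,x}(\omega)$, as required.

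There is no genuine obstacle here: the argument is entirely elementary. The only place needing a little care is the bookkeeping of constants, where one must use the sharper count $3\cdot 2^{\ell-2}$ for $u$--$v$ paths (rather than $3\cdot 2^{\ell-1}$ for paths from a single vertex) together with the bound $\tfrac34/(1-2x)\le 1$ valid for $x\le\tfrac18$, in order to reach exactly the constant $3$ stated in the lemma.
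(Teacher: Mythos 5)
Your argument is correct and is essentially the paper's own proof: delete the chosen $u$--$v$ path, use $J(\lambda)\le 3$, bound the number of simple $u$--$v$ paths of length $\ell$ by $3\cdot 2^{\ell-2}$, and sum the geometric series using $x\le\tfrac18$. The only difference is cosmetic bookkeeping of where the constant $3$ is absorbed, so nothing needs changing.
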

\begin{proof}
The number of possibilities for a simple path of length $k$ from $u$ to $v$ is at most $3\cdot2^{k-2}$. Consideration of the map $\lambda \mapsto \omega_\lambda$, the fact that $J(\lambda)\le 3$ and summation over all possibilities for $p(\lambda,u,v)$ now shows that the ratio of the sums appearing in the lemma is bounded above by
\begin{equation*}
\sum_{k\ge d_H(u,v)} 3 x^k(3\cdot 2^{k-2}) = \frac{9}{4}\cdot\frac{(2x)^{d_H(u,v)}}{1-2x}\le 3 (2x)^{d_H(u,v)}.\qedhere
\end{equation*}
\end{proof}

We now proceed along the
same lines as the proof of Theorem~\ref{thm:no-large-loops}. Suppose first that $nx^6 < n^{1/2}$. Since $n \ge n_0$, the theorem follows as an immediate consequence of Lemma~\ref{lem:no-large-loops-for-small-x-path}. Suppose now that $nx^6 \ge n^{1/2}$. For each $\lambda\in \LC(H,\emptyset,u,v)$, by Lemma~\ref{lem:existence-of-good-outer-circuit} applied to $\omega_\lambda$, there exists a circuit $\sigma \subset \T\setminus\T^{\clr'}$ for some $\clr'\in\{0,1,2\}$ such that $\Int\sigma \subset H$ and $\lambda\in  \mathcal E(H,u,v,\sigma,k_\sigma)$, where $k_\sigma := \max\{d_H(u,v)-1, |\partial \IntVert\sigma|\}$. The theorem now follows with a similar calculation as in Theorem~\ref{thm:no-large-loops}, by summing over all possibilities for the circuit $\sigma$ and applying Lemma~\ref{lem:prob-outer-circuit-loopconfig-with-path} with $\gamma = \sigma$ and $k = k_\sigma$.

%
%

\subsection{Small perturbation of ground state}\label{sec:proof-of-theorem-small-dev}

\subsection*{Proof of Theorem~\ref{thm:small-deviations-from-the-ground-state-x-model}}
 By definition, the subgraph
of $\HH$ induced by $\breakup(\omega,u)$ is a domain when it is
non-empty. Let $\Gamma(\omega,u)$ be the circuit satisfying
$\breakup(\omega,u)=\IntVert{\Gamma(\omega,u)}$. It follows that
$\Gamma(\omega,u)$ is vacant and contained in $\T\setminus \T^0$. To
see this, note that the edge boundary of $B(\omega)$ consists only
of edges $\{v,w\}$ such that $w$ borders a $0$-flower
$y$ and $v$ is the unique neighbor of $v$ not bordering $y$; in particular, $\{v,w\}$ borders a hexagon from $\T^1$ and a
hexagon from $\T^2$ and $\{v,w\} \not\in \omega$. Furthermore,
$\partial \breakup(\omega,u)\subset V(\omega,\Gamma(\omega,u))$.
This follows as $\Gamma(\omega,u)$ is vacant in
$\omega$ and, by the definition of $B(\omega)$, no vertex of $\partial\IntVert{\Gamma(\omega,u)}$
belongs to a trivial loop surrounding a hexagon in $\T^0$.

Now, denoting by $\mathcal{G}_k(u)$ the set of circuits $\gamma
\subset \T \setminus \T^0$ having $u\in\IntVert\gamma$
and $|\partial \IntVert\gamma| \ge k$,
Lemma~\ref{lem:prob-outer-circuit} implies that
\[
\begin{split}
\Pr_{H,n,x}^0(|\partial \breakup(\omega,u)| \geq k)&= \sum_{\gamma \in \mathcal{G}_k(u)}\Pr_{H,n,x}^0(\Gamma(\omega,u)=\gamma)\\
&\le \sum_{\gamma \in \mathcal{G}_k(u)}\Pr_{H,n,x}^0(\gamma\text{ vacant and }\partial\IntVert\gamma\subset V(\omega,\gamma))\\
&\leq \sum_{\gamma \in \mathcal{G}_k(u)} (c n \cdot \min\{x^6,1\})^{-|\partial\IntVert\gamma|/15} \\
&\leq \sum_{\ell\ge k} D^{\ell}(c n \cdot \min\{x^6,1\})^{-\ell/15}
\leq (c' n \cdot \min\{x^6,1\})^{-k/15},
\end{split}
\]
where $c',D$ are positive constants. In the final inequality, we used the facts that the length of a circuit $\gamma$ such that $|\partial \IntVert\gamma|=\ell$ is at most $3\ell$, and that the number of circuits of length at most $3\ell$ surrounding $u$ is bounded by $D^\ell$ for some sufficiently large constant $D$.

%
%

\subsection{Limiting Gibbs measures}
Before proving the last two theorems, we require the following two lemmas.
We say that a circuit $\gamma$
\emph{surrounds} a subgraph $A \subset \HH$ if $A \subset
\Int\gamma$ and that $\gamma$ is \emph{inside} $A$ if $\Int\gamma
\subset A$. We say that a circuit $\gamma$ \emph{contains} a circuit
$\sigma$ if $\Int\sigma \subset \Int\gamma$.

\begin{lemma}\label{lem:marginal-distribution-given-vacant-circuit}
Let $H$ and $H'$ be two domains, let $A \subset H \cap H'$ be a non-empty subgraph and let $\xi$ and $\xi'$ be loop configurations.
Let $n>0$ and $x\in(0,\infty]$. Let $\omega \sim \Pr_{H,n,x}^\xi$ and $\omega' \sim \Pr_{H',n,x}^{\xi'}$ be independent.
Denote by $\Omega$ the event that there exists a circuit surrounding $A$ and inside $H \cap H'$ which is vacant in both $\omega$ and $\omega'$. Assume that $\Omega$ has positive probability. Then, conditioned on $\Omega$, the marginal distributions of $\omega$ and $\omega'$ on $A$ are equal.
\end{lemma}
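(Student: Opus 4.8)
The plan is to condition on the ``outermost'' circuit witnessing $\Omega$ and then apply the domain Markov property inside it. Write $G := H \cap H'$, and for loop configurations $\eta,\eta'$ let $\mathcal{M}(\eta,\eta')$ be the (finite) collection of circuits that surround $A$, are inside $G$, and are vacant in both $\eta$ and $\eta'$; on the event $\Omega$ one has $\mathcal{M}(\omega,\omega')\neq\emptyset$. The first step is to show that $\mathcal{M}(\omega,\omega')$ has a unique $\subseteq$-maximal element (circuits ordered by inclusion of interiors), to be denoted $\Gamma=\Gamma(\omega,\omega')$. Existence of a maximal element is immediate, since the interiors of circuits in $\mathcal{M}(\omega,\omega')$ are among the finitely many domains contained in $G$. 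For uniqueness I would argue that $\mathcal{M}(\omega,\omega')$ is closed under ``joins'': given $\gamma_1,\gamma_2\in\mathcal{M}(\omega,\omega')$, their interiors both contain the non-empty $A$, so Fact~\ref{fact:circuits-max} produces a circuit $\gamma$ with $\gamma^*\subset\gamma_1^*\cup\gamma_2^*$ and $\Int{\gamma_1}\cup\Int{\gamma_2}\subset\Int\gamma$; this $\gamma$ surrounds $A$, and from $\omega\cap\gamma^*\subset\omega\cap(\gamma_1^*\cup\gamma_2^*)=\emptyset$ and the analogous fact for $\omega'$ it is vacant in both; it remains to check $\Int\gamma\subset G$, which is the one genuinely geometric point, discussed below. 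Closure under joins forces $\mathcal{M}(\omega,\omega')$ to have a unique top element $\Gamma$, and the events $\{\Gamma(\omega,\omega')=\gamma\}$, as $\gamma$ ranges over the finitely many circuits surrounding $A$ and inside $G$, form a finite partition of $\Omega$.

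The second step is to observe that each event $\{\Gamma(\omega,\omega')=\gamma\}$ is measurable with respect to the restriction of the pair $(\omega,\omega')$ to $\EH\setminus\IntEdge\gamma$. Indeed, since $\mathcal{M}(\omega,\omega')$ has a unique maximal element, $\{\Gamma=\gamma\}$ is the event that $\gamma$ is vacant in both $\omega$ and $\omega'$ and that no circuit $\gamma'$ surrounding $A$, inside $G$, with $\Int{\gamma'}\supsetneq\Int\gamma$ is vacant in both $\omega$ and $\omega'$. For any such $\gamma'$ one has $\IntEdge\gamma\subset\IntEdge{\gamma'}$, hence $(\gamma')^*\subset\ExtEdge\gamma\cup\gamma^*$, while trivially $\gamma^*\subset\ExtEdge\gamma\cup\gamma^*$; thus the whole event inspects $\omega$ and $\omega'$ only on edges lying in $\EH\setminus\IntEdge\gamma$.

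The third step is the probabilistic heart. Fix a circuit $\gamma$ surrounding $A$ and inside $G$ with $\Pr(\Gamma(\omega,\omega')=\gamma)>0$. On this event $\gamma$ is vacant in $\omega$, and $\Int\gamma$ is a domain contained in $H$ by Fact~\ref{fact:gamma-int-ext}; so the domain Markov property of $\Pr_{H,n,x}^\xi$, together with the vacancy of $\gamma$ (which makes both $o_{\Int\gamma}(\omega\cap\IntEdge\gamma)$ and the number of loops of $\omega$ meeting $E(\Int\gamma)$ depend only on $\omega\cap\IntEdge\gamma$, so the exterior enters the conditional weight only through a normalising constant), shows that the conditional law of $\omega\cap\IntEdge\gamma$ given the restriction of $(\omega,\omega')$ to $\EH\setminus\IntEdge\gamma$ is exactly $\Pr_{\Int\gamma,n,x}^\emptyset$; in particular it depends neither on the exterior configuration nor on $H$ or $\xi$. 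The identical argument with $H'$ and $\xi'$ gives that the conditional law of $\omega'\cap\IntEdge\gamma$ is the \emph{same} measure $\Pr_{\Int\gamma,n,x}^\emptyset$. Since $\{\Gamma=\gamma\}$ is measurable with respect to that exterior restriction (second step), averaging these conditional laws over $\{\Gamma=\gamma\}$ shows that, conditionally on $\{\Gamma=\gamma\}$, both $\omega\cap\IntEdge\gamma$ and $\omega'\cap\IntEdge\gamma$ are distributed as $\Pr_{\Int\gamma,n,x}^\emptyset$. As $A\subset\Int\gamma$ gives $E(A)\subset\IntEdge\gamma$, restricting to $E(A)$ shows that $\omega\cap E(A)$ and $\omega'\cap E(A)$ have the same conditional distribution given $\{\Gamma=\gamma\}$; summing over the finite partition $\Omega=\bigsqcup_\gamma\{\Gamma=\gamma\}$ with weights $\Pr(\Gamma=\gamma\mid\Omega)$ then yields the asserted equality of the marginals on $A$ conditioned on $\Omega$.

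The main obstacle is the geometric claim used in the first step: that the join $\gamma$ of two circuits of $\mathcal{M}(\omega,\omega')$ again satisfies $\Int\gamma\subset H\cap H'$, so that $\mathcal{M}(\omega,\omega')$ is a join-semilattice and its maximal element is unique. Morally this holds because $\Int\gamma$ is obtained from $\Int{\gamma_1}\cup\Int{\gamma_2}$ by filling in finite complementary components, and any domain with connected complement (such as $H$ or $H'$) contains the ``fill'' of every subgraph it contains; turning this into a rigorous statement about the circuit $\gamma$ furnished by Fact~\ref{fact:circuits-max} requires the discrete Jordan-curve bookkeeping behind Facts~\ref{fact:gamma-int-ext}--\ref{fact:circuits-max}. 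Everything else in the proof is routine manipulation of the domain Markov property.
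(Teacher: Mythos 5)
Your argument is correct and is essentially the paper's own proof: form the outermost doubly-vacant circuit $\Gamma$ via Fact~\ref{fact:circuits-max}, observe that $\{\Gamma=\gamma\}$ is determined by the restrictions of $(\omega,\omega')$ to $\EH\setminus\IntEdge\gamma$, and apply the domain Markov property to conclude that, conditioned on $\{\Gamma=\gamma\}$, both $\omega\cap\IntEdge\gamma$ and $\omega'\cap\IntEdge\gamma$ are distributed as $\Pr_{\Int\gamma,n,x}^\emptyset$, whence the marginals on $A$ agree. The geometric point you flag as the main obstacle (that the join is inside $H\cap H'$) is exactly the step the paper dismisses with ``Clearly,'' and it needs no extra Jordan-curve bookkeeping: if some $v\in\IntVert\gamma$ lay outside $V(H)$, a path from $v$ to infinity inside the connected set $\VH\setminus V(H)$ would have to cross $\gamma^*\subset\sigma^*\cup(\sigma')^*$, but every such edge has an endpoint in $\IntVert\sigma\cup\IntVert{\sigma'}\subset V(H)\cap V(H')$, a contradiction (and symmetrically for $H'$).
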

\begin{proof}
In this proof, a doubly-vacant circuit is a circuit which is vacant
in both $\omega$ and $\omega'$. Let $\mathcal G$ denote the
collection of circuits surrounding $A$ and inside $H \cap H'$. Let
$\sigma,\sigma' \in \mathcal G$ be
doubly-vacant circuits. Then, since both circuits surround $A$,
$\Int{\sigma} \cap \Int{\sigma'} \neq \emptyset$. By
Fact~\ref{fact:circuits-max}, there exists a circuit $\gamma$ having $\gamma^* \subset \sigma^* \cup
(\sigma')^*$ which contains both $\sigma$ and $\sigma'$. Clearly,
$\gamma$ is doubly-vacant, surrounds $A$ and is inside $H \cap H'$,
and hence $\gamma \in \mathcal G$. Thus, we have a notion of the
``outermost'' doubly-vacant circuit in $\mathcal G$. On $\Omega$,
define $\Gamma$ to be this circuit. Then, we claim that, for any
circuit $\gamma \in \mathcal G$ for which the event $\Omega \cap
\{\Gamma=\gamma\}$ has positive probability, conditioned on $\Omega
\cap \{ \Gamma=\gamma \}$, the marginal distribution of
$(\omega,\omega')$ on $A^2$ is the same as the marginal distribution
of two independent loop configurations sampled from
$\Pr_{\Int\gamma,n,x}^\emptyset$. Indeed, since the event $\Omega
\cap \{ \Gamma = \gamma \}$ is determined by $\omega \setminus
\IntEdge\gamma$ and $\omega' \setminus \IntEdge\gamma$, this follows from
the domain Markov property.
\end{proof}

\begin{lemma}\label{lem:limit-is-gibbs}
Let $H_k$ be an increasing sequence of domains such that $\cup_k H_k = \HH$ and let $\xi_k$ be a sequence of loop configurations. Let $n>0$ and $x\in(0,\infty]$ and assume that $\Pr_{H_k,n,x}^{\xi_k}$ converges (weakly) as $k \to \infty$ to an infinite-volume measure $\Pr$ which is supported on loop configurations with no infinite paths. Then $\Pr$ is a Gibbs measure for the loop $O(n)$ model with edge weight $x$.
\end{lemma}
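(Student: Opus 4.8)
The plan is to verify the defining (DLR) property of a Gibbs measure directly, combining the domain Markov property of the finite-volume measures with the assumed weak convergence. Fix a domain $H$ and write $E_H^c:=\EH\setminus E(H)$. For a loop configuration $\omega$ the measure $\Pr_{H,n,x}^\omega$ is well defined (it is a probability measure on the finite non-empty set $\LC(H,\omega)$, which contains $\omega$ itself) and depends on $\omega$ only through $\omega\cap E_H^c$; for a bounded local function $f$ put $(\Pi_H f)(\omega):=\sum_{\zeta\in\LC(H,\omega)}f(\zeta)\,\Pr_{H,n,x}^\omega(\zeta)$, which is a bounded function depending only on $\omega\cap E_H^c$. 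Since $E(H)$ is finite, showing that $\Pr$ is Gibbs reduces to showing, for every $\tau\subset E(H)$ and every bounded local $g$ depending only on $\omega\cap E_H^c$, the identity $\E_\Pr[\mathbf{1}_{\{\omega\cap E(H)=\tau\}}\,g]=\E_\Pr[(\Pi_H\mathbf{1}_{\{\cdot\cap E(H)=\tau\}})\,g]$; a monotone-class extension in $g$, together with intersecting over the finitely many $\tau$, then yields the almost-sure statement.

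The first step is the domain Markov property of $\Pr_{H_k,n,x}^{\xi_k}$: once $k$ is large enough that $H\subset H_k$, conditionally on $\omega\cap E_H^c$ the law of $\omega\cap E(H)$ under $\Pr_{H_k,n,x}^{\xi_k}$ is exactly $\Pr_{H,n,x}^\omega$. This is the same splitting used repeatedly in Section~\ref{sec:main-lemma}: writing $\omega=\tau\cup\eta$ with $\tau\subset E(H)$ and $\eta$ fixed, both $o_{H_k}(\tau\cup\eta)$ and $L_{H_k}(\tau\cup\eta)$ equal a $\tau$-dependent term ($|\tau|$, respectively the number of loops meeting $E(H)$) plus a term independent of $\tau$, so the conditional weights are proportional to $x^{o_H(\tau\cup\eta)}n^{L_H(\tau\cup\eta)}$ (for $x=\infty$ the packing constraint $o_{H_k}(\omega)=o_{H_k,\xi_k}$ localizes in the same way). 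Hence, for any bounded local $f$ and bounded local $g$ depending only on $\omega\cap E_H^c$, I obtain $\E_{\Pr_{H_k,n,x}^{\xi_k}}[fg]=\E_{\Pr_{H_k,n,x}^{\xi_k}}[(\Pi_H f)\,g]$ for all large $k$.

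The second step is to pass to the limit $k\to\infty$. The left-hand side converges to $\E_\Pr[fg]$ since $fg$ is bounded and continuous. For the right-hand side I would invoke the generalized portmanteau theorem: $(\Pi_H f)\,g$ is bounded and $\Pr$-almost-surely continuous, because $\Pi_H f$ is continuous at every loop configuration $\omega$ with no infinite paths and $\Pr$ is supported on such configurations. Establishing this continuity is where the hypothesis enters and is, I expect, the only non-routine point: if $\omega$ has no infinite paths then the finitely many connected components of $\omega$ meeting $V(H)$ are finite, hence contained in $E(H_1)$ for some finite domain $H_1$ (containing $H$ and the support of $f$) that moreover has a layer of vertices untouched by those components; then for any $\omega'$ agreeing with $\omega$ on $E(H_1)$, the pieces of these components lying outside $E(H)$ — and therefore the way any $\tau\subset E(H)$ closes up into loops through $E_H^c$ — coincide with those for $\omega$, so $\LC(H,\omega')$, the weights, and hence $(\Pi_H f)(\omega')$, agree with those for $\omega$, i.e.\ $\Pi_H f$ is locally constant at $\omega$. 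Granting this, the right-hand side converges to $\E_\Pr[(\Pi_H f)\,g]$, which gives $\E_\Pr[fg]=\E_\Pr[(\Pi_H f)\,g]$; after the monotone-class step this is the DLR equation for $H$, and since $H$ is an arbitrary domain, $\Pr$ is a Gibbs measure. The main obstacle is precisely the quasilocality — continuity off the event of having an infinite path — of the loop $O(n)$ specification kernel; the domain Markov computation and the measure-theoretic extension are standard.
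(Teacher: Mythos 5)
Your argument is correct, and it reaches the conclusion by a route that is packaged differently from the paper's, even though the geometric heart is the same. The paper invokes L\'evy's zero--one law: it reduces Gibbsianness to showing $\lim_m \Pr(A \mid \mathcal{E}_m^\tau)=\Pr^{\tau}_{H,n,x}(A)$ for $\Pr$-a.e.\ $\tau$, where $\mathcal{E}_m^\tau$ fixes the configuration on $E(H_m)\setminus E(H)$; it then observes that for a.e.\ $\tau$ (no infinite paths) there is a circuit surrounding $H$ that is vacant in $\tau$ --- obtained as the boundary circuit of the smallest domain containing $V(H)$ and all components of $\tau$ meeting $V(H)$, via Fact~\ref{fact:circuit-domain-bijection} --- and that once $m$ is large enough for the annulus to contain this circuit, the domain Markov property makes the conditional probability \emph{exactly} $\Pr^{\tau}_{H,n,x}(A)$, for the finite-volume measures and hence in the double limit. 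You instead prove the finite-volume DLR identity in integrated test-function form, pass to the weak limit via the portmanteau/mapping theorem for bounded $\Pr$-a.s.\ continuous functions, and recover the almost-sure conditional statement by a monotone-class argument; the hypothesis of no infinite paths enters through the quasilocality (local constancy) of the kernel $\Pi_H$ at such configurations. Note that your finite domain $H_1$ engulfing the components of $\omega$ that meet $V(H)$ is precisely the object from which the paper manufactures its vacant circuit, so the two proofs exploit the hypothesis in the same geometric way; what each buys is different bookkeeping. The paper's version is shorter once L\'evy's zero--one law is granted, while yours avoids martingale convergence, makes explicit that the specification of the loop $O(n)$ model is quasilocal off the event of an infinite path (the standard mechanism by which weak limits of finite-volume Gibbs measures are Gibbs), and cleanly isolates the two routine inputs (domain Markov splitting of $o_{H_k}$ and $L_{H_k}$, including the localization of the packing constraint at $x=\infty$, and the measure-theoretic extension). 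Your sketch of the local-constancy step is sound: since loop configurations have maximum degree two, agreement with $\omega$ on $E(H_1)$ (with $E(H_1)$ containing every edge incident to $V(H)$) forces the components meeting $V(H)$, and hence the admissible $\tau\subset E(H)$ and the counts $o_H$, $L_H$, to coincide, so $\Pi_H f$ is indeed locally constant there.
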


\begin{proof}
For a domain $H$, denote by $\cF_H$ the sigma algebra generated by the events $\{ e \in \omega \}$ for $e \in E(H)$. For a loop configuration $\tau$, let $\mathcal{E}_m^\tau$ be the event that $\omega$ and $\tau$ coincide on $E(H_m) \setminus E(H)$. By L\'evy's zero-one law, $\Pr$ is a Gibbs measure if and only if for every domain $H$ and every $A\in\cF_H$,
\[
\lim_{m \to \infty} \Pr(A \mid \mathcal{E}_m^\tau) = \Pr_{H,n,x}^{\tau}(A)\quad\text{for $\Pr$-almost every $\tau$}.
\]
Fix a domain $H$ and $A\in\cF_H$. By the definition of $\Pr$, we need to show that
\[ \lim_{m \to \infty} \lim_{k \to \infty} \Pr_{H_k,n,x}^{\xi_k}(A \mid \mathcal{E}_m^\tau) = \Pr_{H,n,x}^{\tau}(A)\quad\text{for $\Pr$-almost every $\tau$}.\]
Indeed, for any $\tau$ having a vacant circuit $\gamma$ with $H\subset\Int\gamma$, the domain Markov property implies that $\Pr_{H_k,n,x}^{\xi_k}(A \mid \mathcal{E}_m^\tau) = \Pr_{H,n,x}^{\tau}(A)$ for large enough $m$ and $k\ge m$. As $\Pr$ is supported on loop configurations with no infinite paths, such a circuit exists for $\Pr$-almost every $\tau$ (consider the smallest domain containing $V(H)$ and all the connected components of $\tau$ which intersect $V(H)$ and apply Fact~\ref{fact:circuit-domain-bijection}).
\end{proof}

%
%

\subsection*{Proof of Theorem~\ref{thm:Gibbs
measures_small_x}}\label{sec:unique-limit-for-free-boundary-conditions}

We start with a lemma.

\begin{lemma}\label{lem:empty space}
    Let $n>0$ and $x>0$. For any two domains $H$ and $H'$, any vertex $u\in \VH$ and any positive integer $k$, we have
    \[
    \Pr(\text{the connected component of $u$ in $\omega\cup\omega'$ has exactly $k$ edges}) \le (9e \max\{n^{1/6},1\} x)^k,
    \]
    where $\omega\sim\Pr_{H,n,x}^\emptyset$ and $\omega'\sim\Pr_{H',n,x}^\emptyset$ are independent.
\end{lemma}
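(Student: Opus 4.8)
The plan is to bound the probability that a fixed vertex $u$ lies in a connected component of $\omega\cup\omega'$ with exactly $k$ edges by a union bound over all possible shapes of such a component, combined with an estimate of the probability that a given subgraph appears in $\omega\cup\omega'$. First I would observe that a connected component of $\omega\cup\omega'$ with $k$ edges is a connected subgraph $A$ of $\HH$ (with all degrees at most $3$) containing $u$, and that $A\subset\omega\cup\omega'$ forces each edge of $A$ to lie in $\omega$ or in $\omega'$. Splitting $A=A_1\cup A_2$ according to which configuration each edge belongs to (breaking ties arbitrarily, say all shared edges assigned to $\omega$), we get $A_1\subset\omega$ and $A_2\subset\omega'$ with $A_1\cup A_2=A$. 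Here $A_1,A_2$ are subgraphs of $\HH$ with maximum degree $3$, hence each is a disjoint union of paths and even-degree pieces; in particular $A_i\in\LC(H,\emptyset)$ need not hold since $A_i$ may contain odd-degree vertices, so instead I would bound $\Pr_{H,n,x}^\emptyset(A_i\subset\omega)$ directly.

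The key input is an estimate of the form $\Pr_{H,n,x}^\emptyset(B\subset\omega)\le (\max\{n^{1/6},1\}\,x)^{|E(B)|}$ for any subgraph $B$ of maximum degree at most $3$. For $B\in\LC(H,\emptyset)$ this is exactly Lemma~\ref{lem:given-loops-are-unlikely}, using that the number of loops intersecting $B$ is at most $|E(B)|/6$ (every loop has at least $6$ edges), so $n^{L_H(B)}x^{o_H(B)}\le n^{|E(B)|/6}x^{|E(B)|}$. For a general $B$ of max-degree $3$ that is not a loop configuration, the map $\omega\mapsto\omega\setminus B$ is no longer well-defined since $\omega\setminus B$ may have odd-degree vertices; however one can still run a removal argument: define $\sfT(\omega):=\omega\triangle B$ (symmetric difference) restricted appropriately, or more simply, augment $B$ to a loop configuration $\tilde B\supset B$ inside $H$ with $|E(\tilde B)|\le 3|E(B)|$ (by pairing up the odd-degree vertices of $B$ via short paths or, cleanly, by noting that any max-degree-$3$ subgraph embeds into a loop configuration whose edge count is comparable) — alternatively, and most cleanly, I would simply bound the probability that the component equals a given $A$ by summing over the $2^{|E(A)|}$ splittings and, for each, over the at most $\binom{|E(A)|}{|E(A_1)|}$ choices, applying Lemma~\ref{lem:given-loops-are-unlikely} after completing each $A_i$ to a loop configuration using at most $|E(A_i)|$ extra edges, which is permissible because $A_i$ has at most $|E(A_i)|$ leaves.

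With that input, the count goes as follows. The number of connected subgraphs $A$ of $\HH$ with $k$ edges containing $u$ is at most $(3e\cdot 2)^{k}$-type bound; more precisely, using Lemma~\ref{lem:number-of-connected-graphs} applied to $\HH$ (max degree $3$) together with the observation that a connected edge set with $k$ edges spans at most $k+1$ vertices, the number of such $A$ is at most $C^k$ for an explicit small constant $C$. Each such $A$ contributes, after summing over the $2^k$ splittings and the independence of $\omega,\omega'$,
\[
\sum_{A_1\cup A_2=A} \Pr_{H,n,x}^\emptyset(A_1\subset\omega)\,\Pr_{H',n,x}^\emptyset(A_2\subset\omega') \le 2^k\cdot(\max\{n^{1/6},1\}x)^{|E(A_1)|+|E(A_2)|}\le 2^k(\max\{n^{1/6},1\}x)^k,
\]
since $|E(A_1)|+|E(A_2)|\ge|E(A)|=k$ (with equality when the split is a partition). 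Multiplying the number of shapes by this per-shape bound yields $(9e\max\{n^{1/6},1\}x)^k$ once the combinatorial constants ($C\cdot 2$ together with the $e$ from Lemma~\ref{lem:number-of-connected-graphs}) are collected into the stated constant $9e$. The main obstacle is the bookkeeping in the previous paragraph: making the removal/union-bound argument rigorous for subgraphs that are not themselves loop configurations, i.e.\ confirming that completing $A_i$ to a loop configuration costs at most a constant factor in the exponent and that this factor, together with the entropy of connected shapes and of the splittings, fits inside the constant $9e$. Everything else is a routine union bound.
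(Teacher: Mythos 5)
There is a genuine gap, and it is precisely at the step you flag as ``the main obstacle''. Your union-bound skeleton (enumerate the possible shapes $S$ of the component of $u$, split $E(S)$ into a part carried by $\omega$ and a part carried by $\omega'$, use independence and Lemma~\ref{lem:given-loops-are-unlikely}, count shapes via Lemma~\ref{lem:number-of-connected-graphs}) is the same as the paper's, but your key probabilistic input, $\Pr_{H,n,x}^\emptyset(B\subset\omega)\le(\max\{n^{1/6},1\}x)^{|E(B)|}$ for an \emph{arbitrary} max-degree-$3$ subgraph $B$, is not established, and your proposed repairs do not work as stated. First, ``completing $A_i$ to a loop configuration using at most $|E(A_i)|$ extra edges'' is false: a single edge must be completed by at least five further edges, and the loops of $\omega$ actually containing $A_i$ may be arbitrarily long, so there is no a priori bound on the completion in terms of $|E(A_i)|$. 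Second, and more fundamentally, the monotonicity goes the wrong way: for any fixed completion $\tilde A_i\supset A_i$ one has $\Pr(\tilde A_i\subset\omega)\le\Pr(A_i\subset\omega)$, so bounding the probability of one completion says nothing about $\Pr(A_i\subset\omega)$; to argue this way you would have to sum over \emph{all} possible loop-configuration completions, which reintroduces an uncontrolled entropy factor (and certainly is not obviously absorbed into the constant $9e$, which your shape count, done through vertex sets plus $2^k$ splittings, already strains).

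The missing idea, which is how the paper avoids all of this, is that no completion is needed: since $\omega$ and $\omega'$ are loop configurations whose finite components are loops, every loop of $\omega$ (resp.\ $\omega'$) that meets the component $\mathsf S$ of $u$ in $\omega\cup\omega'$ is entirely contained in $\mathsf S$. Hence $A:=\omega\cap E(\mathsf S)$ and $A':=\omega'\cap E(\mathsf S)$ are themselves loop configurations with $E(A)\cup E(A')=E(\mathsf S)$ (they may overlap, which is why one allows each edge of $S$ to lie in $A$, in $A'$, or in both, giving $3^k$ compatible pairs rather than your $2^k$ partitions; your tie-breaking rule is exactly what destroys the loop structure of the second part). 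Lemma~\ref{lem:given-loops-are-unlikely} then applies directly, and since every loop has at least six edges, $n^{L_H(A)}x^{o_H(A)}\le(\max\{n^{1/6},1\}x)^{o_H(A)}$, with $o_H(A)+o_{H'}(A')\ge k$. Combined with the shape count $|\mathcal C_k|\le(3e)^k$ (obtained by applying Lemma~\ref{lem:number-of-connected-graphs} to the $4$-regular line graph of $\HH$, which is tighter than counting vertex sets and then edge subsets), this yields $(9e)^k(\max\{n^{1/6},1\}x)^k$ cleanly, assuming as you may that $\max\{n^{1/6},1\}x\le1$.
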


\begin{proof}
    We may assume that $\max\{n^{1/6},1\}x \le 1$, since the statement is trivial otherwise.
    Let $\mathcal C_k$ be the set of connected subgraphs of $\HH$ that have exactly $k$ edges and contain $u$. For $S\in\mathcal C_k$, call a pair of loop configurations $(A,A')$ {\em compatible with $S$} if $E(A)\cup E(A')=E(S)$.
    Let $\mathsf S$ be the connected component of $u$ in $\omega\cup\omega'$. Then
    \[
    \begin{split}
    \Pr(|E(\mathsf S)|= k) & \le \sum_{S\in\mathcal C_k}\ \sum_{(A,A')\text{ compatible with }S}\Pr(A\subset\omega, ~ A'\subset\omega')\\
    &\le \sum_{S\in\mathcal C_k}\ \sum_{(A,A')\text{ compatible with }S}(\max\{n^{1/6},1\}x)^{o_{H}(A)+o_{H'}(A')}\\
    &\le (9e)^k(\max\{n^{1/6},1\}x)^k.
    \end{split}
    \]
    The second inequality follows from Lemma~\ref{lem:given-loops-are-unlikely} and the facts that $\omega$ and $\omega'$ are independent and that any loop consists of at least six edges.
    The last inequality follows from the following three facts:
    \begin{itemize}[noitemsep,nolistsep]
        \item
        $o_H(A)+o_{H'}(A')\ge |E(S)|= k$ and $\max\{n^{1/6},1\}x \le 1$;
        \item
        the number of possible pairs of loop configurations $(A,A')$ compatible with $S$ is bounded by $3^k$ (since each edge in $S$ must be in either $A$, $A'$ or in both);
        \item $|\mathcal C_k|$ is bounded by $3 (3e)^{k-1}\le (3e)^k$ (apply Lemma~\ref{lem:number-of-connected-graphs} to the 4-regular line graph of $\HH$, using an edge incident to $u$ as the given vertex). \qedhere
    \end{itemize}
\end{proof}

    Let us conclude the proof of Theorem~\ref{thm:Gibbs measures_small_x}.
    Assume that $9e \max\{n^{1/6},1\} x \le 1/e$.
    Let $H$ and $H'$ be two domains and let $A\subset B\subset H \cap H'$ be two sub-domains. Let $\omega\sim\Pr_{H,n,x}^\emptyset$ and $\omega'\sim\Pr_{H',n,x}^\emptyset$ be independent. Let $\mathcal E$ be the event that the union of the connected components of the vertices of $A$ in the graph $\omega\cup\omega'$ intersects $\VH\setminus V(B)$. Lemma~\ref{lem:empty space} implies that
    \begin{equation}
    \label{eq:rr}
    \Pr(\mathcal E)\le \sum_{v \in V(A)} \sum_{k=d(\{v\},\VH\setminus V(B))}^\infty (9e \max\{n^{1/6},1\} x)^k \le 2|V(A)| \cdot e^{-d(A,\VH\setminus V(B))} ,
    \end{equation}
    where $d(E,F)$ is the minimum of the graph distances between a vertex in $E$ and a vertex in $F$.

    Let us now show that, on the complement of $\mathcal E$, there exists a circuit $\gamma$ surrounding $A$ and inside $H \cap H'$ which is vacant in both $\omega$ and $\omega'$.
We first define the notion of the \emph{outer circuit} of a
non-empty finite connected subset $U$ of $\VH$. Let $U'$ be the
unique infinite connected component of $\VH \setminus U$ and let
$U'' := \VH\setminus U'$. Evidently, the subgraph of $\HH$ induced by $U''$ is a domain containing
$U$. The outer circuit $\sigma$ of $U$ is then the circuit
corresponding to this domain, i.e., $U''
= \IntVert\sigma$, which exists by
Fact~\ref{fact:circuit-domain-bijection}. Note also that
$\partial U'' \subset \partial U$ and that if $U$ is
contained in some domain then $U''$ is also contained in the same
domain.

    Let $\mathsf D$ be the union of the connected components of vertices of $A$ in $\omega\cup\omega'$.
    Let $\gamma$ be the outer circuit of $V(A) \cup \sf D$, and note that, on the complement of $\mathcal E$, $\gamma$ is inside $B$.
    Let us show that $\gamma$ is vacant in both $\omega$ and $\omega'$.
    To this end, let $e=(u,v) \in \gamma^*$ be an edge with $u \in V(A) \cup \sf D$ and $v \notin V(A) \cup \sf D$. Assume first that $u \in \sf D$.
    Clearly $e \notin \omega \cup \omega'$, as otherwise, $v$ would also belong to $\sf D$. Assume now that $u \in V(A) \setminus \sf D$. Then, by definition of $\sf D$, $u$ is not contained in a loop of neither $\omega$ nor $\omega'$. In particular, $e$ does not belong to neither $\omega$ nor $\omega'$. Thus, $\gamma$ is vacant in both $\omega$ and $\omega'$.

    Thus, by Lemma~\ref{lem:marginal-distribution-given-vacant-circuit}, the total variation between the measures $\Pr_{H,n,x}^\emptyset(\cdot_{|A})$ and $\Pr_{H',n,x}^\emptyset(\cdot_{|A})$ is at most $\Pr(\mathcal E)$.
    In light of~\eqref{eq:rr}, by taking $B$ large enough, we may make $\Pr(\mathcal E)$ arbitrarily small.
    This implies the convergence of the measures $\Pr_{H_k,n,x}^\emptyset(\cdot_{|A})$ towards a limit. Since this holds for any domain $A$, we have established the convergence of $\Pr_{H_k,n,x}^\emptyset$ as $k \to \infty$ towards an infinite-volume measure $\Pr_{\HH,n,x}$.

The fact that $\Pr_{\HH,n,x}$ is supported on loop
configurations with no infinite paths is an immediate consequence of Corollary~\ref{cl:no-large-loops-for-small-x}.
Indeed, the corollary shows that in the measure $\Pr_{H_k,n,x}^\emptyset$, the probability that a given vertex is contained in a loop of length $m$ tends to zero with $m$, uniformly in $k$.
Finally, the fact that $\Pr_{\HH,n,x}$ is a Gibbs measure follows from Lemma~\ref{lem:limit-is-gibbs}.

\subsection*{Proof of Theorem~\ref{thm:Gibbs measures_large_x}}

Let us first assume that the convergence to the limiting measures $\{\Pr_{\HH,n,x}^{\clr}\}_{\clr\in\{0,1,2\}}$ holds and deduce the properties of these measures when $n \cdot \min\{x^6,1\}$ is sufficiently large.
By Theorem~\ref{thm:small-deviations-from-the-ground-state-x-model}, if $n \cdot \min\{x^6,1\}$ is sufficiently large then, for any $z \in \T^0$,
\[ \Pr_{\HH,n,x}^0(z\text{ is surrounded by a trivial loop}) > 1/2 .\]
Since $\Pr_{\HH,n,x}^1$ and $\Pr_{\HH,n,x}^2$ are the measures induced by applying the shifts $\DIRdown$ and $\DIRup$, respectively, to $\Pr_{\HH,n,x}^0$, the same statement holds for any $\Pr_{\HH,n,x}^\clr$ with $z \in \T^\clr$.
Thus, since adjacent hexagons cannot both be surrounded by trivial loops simultaneously, we conclude that the measures $\{ \Pr_{\HH,n,x}^{\clr} \}_{\clr\in\{0,1,2\}}$ are not convex combinations of one another. Next, the fact that $\Pr_{\HH,n,x}^\clr$ is supported on loop
configurations with no infinite paths is an immediate consequence of Theorem~\ref{thm:no-large-loops} (by using~\eqref{eq:domain_of_type} and applying the convergence result with an exhausting sequence of domains of type $\clr$).
Finally, the fact that $\Pr_{\HH,n,x}^\clr$ is a Gibbs measure follows from Lemma~\ref{lem:limit-is-gibbs}.

It remains to show that, for any $\clr\in\{0,1,2\}$, $\Pr_{H_k,n,x}^\clr$ converges as $k \to \infty$ to an infinite-volume  measure $\Pr_{\HH,n,x}^\clr$. Without loss of generality, we may assume that $\clr=0$.
The proof bears similarity with the proof of Theorem~\ref{thm:Gibbs measures_small_x}.

We start with a lemma. Recall the definition of $B(\omega)$ and
$\breakup(\omega,u)$ from Section~\ref{sec:results} and recall the definition of $\HH^{\times}$ from Section~\ref{sec:proof of lemma}. For a domain
$H$ and a loop configuration $\omega \in \LC(H,\ground^0)$, set
$\breakup(\omega):=\VH\setminus B(\omega)=\cup_{u\in \VH}
\breakup(\omega,u)$. Note that, by definition, every two breakups
$\breakup(\omega,u)$ and $\breakup(\omega,v)$, where $u,v \in
\VH$, are either equal or their union is disconnected in
$\HH^\times$ (as the definition implies that if a vertex belongs to
$\breakup(\omega)$ then all vertices bordering the same hexagon in
$\T^0$ also belong to $\breakup(\omega)$). Thus, every connected
component of $\breakup(\omega)$ is a breakup of some vertex, and
every $\HH^\times$-connected component of $\partial\breakup(\omega)$
is the boundary of a breakup of some vertex, i.e., equals
$\partial\breakup(\omega,u)$ for some $u \in \VH$ (recall that
this set is $\HH^\times$-connected, by
Lemma~\ref{cl:boundary-of-internal-vertices-is-connected}).

\begin{lemma}\label{lem:empty spaceA}
There exists an absolute constant $c>0$ such that for any $n>0$ and $x\in(0,\infty]$ the following holds.
For any two domains $H$ and $H'$, any vertex $u\in \VH$ and any positive integer $k$,
\[ \Pr(\text{the $\HH^\times$-connected component of $u$ in }\partial\breakup(\omega)\cup \partial\breakup(\omega')\text{ has cardinality }k)\le (c n \cdot \min\{x^6,1\})^{-k/15} ,\]
where $\omega\sim\Pr_{H,n,x}^0$ and $\omega'\sim\Pr_{H',n,x}^0$ are independent.
\end{lemma}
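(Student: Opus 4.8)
The plan is to run a Peierls-type argument over circuits, in the same spirit as the proof of Lemma~\ref{lem:empty space} but with Lemma~\ref{lem:prob-outer-circuit} playing the role that Lemma~\ref{lem:given-loops-are-unlikely} played there. Throughout we may assume that $cn\cdot\min\{x^6,1\}$ is as large as we wish, since otherwise the asserted bound exceeds $1$ and there is nothing to prove. The starting point is the structural discussion just before the lemma together with the computation in the proof of Theorem~\ref{thm:small-deviations-from-the-ground-state-x-model}: every $\HH^{\times}$-connected component of $\partial\breakup(\omega)$ equals $\partial\IntVert\gamma$ for a circuit $\gamma\subset\T\setminus\T^0$ that is vacant in $\omega$ and satisfies $\partial\IntVert\gamma\subset V(\omega,\gamma)$, and likewise for $\omega'$. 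Let $S$ be the $\HH^{\times}$-connected component of $u$ in $\partial\breakup(\omega)\cup\partial\breakup(\omega')$ and suppose $|S|=k$. Since a whole component of $\partial\breakup(\omega)$ that meets $S$ must be contained in $S$, the set $S\cap\partial\breakup(\omega)$ is a disjoint, pairwise non-$\HH^{\times}$-adjacent union of breakup boundaries; writing these as $\partial\IntVert{\gamma_1},\dots,\partial\IntVert{\gamma_a}$ with circuits $\gamma_i\subset\T\setminus\T^0$ vacant in $\omega$ and $\partial\IntVert{\gamma_i}\subset V(\omega,\gamma_i)$, and doing the same for $\omega'$ to produce $\gamma'_1,\dots,\gamma'_b$, we obtain a family of circuits whose boundaries cover $S$, contain $u$, and have total mass $M:=\sum_i|\partial\IntVert{\gamma_i}|+\sum_j|\partial\IntVert{\gamma'_j}|\ge|S|=k$ (the inequality because $S$ is covered by the disjoint union over each of the two families).

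Next I would bound, for each fixed admissible family, the probability under $\Pr:=\Pr_{H,n,x}^0\times\Pr_{H',n,x}^0$ that all the $\gamma_i$ are vacant in $\omega$ with $\partial\IntVert{\gamma_i}\subset V(\omega,\gamma_i)$ and all the $\gamma'_j$ are vacant in $\omega'$ with $\partial\IntVert{\gamma'_j}\subset V(\omega',\gamma'_j)$. Because the breakups $\IntVert{\gamma_i}$ are pairwise disjoint and non-$\HH^{\times}$-adjacent, the edge sets $\gamma_i^*$ are pairwise disjoint and disjoint from $\bigcup_j\IntEdge{\gamma_j}$; conditioning on $\omega$ restricted to $\EH\setminus\bigcup_i\IntEdge{\gamma_i}$ and invoking the domain Markov property, on the event that all $\gamma_i$ are vacant the restrictions of $\omega$ to the $\IntEdge{\gamma_i}$ become independent with respective laws $\Pr_{\Int{\gamma_i},n,x}^\emptyset$, and $V(\omega,\gamma_i)$ is a function of $\omega\cap\IntEdge{\gamma_i}$. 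Applying Lemma~\ref{lem:prob-outer-circuit} to each factor (with $H=\Int{\gamma_i}$, in which case $\gamma_i$ is vacant with probability $1$), doing the same for $\omega'$, and using independence of $\omega$ and $\omega'$, this probability is at most $\prod_i(cn\cdot\min\{x^6,1\})^{-|\partial\IntVert{\gamma_i}|/15}\prod_j(cn\cdot\min\{x^6,1\})^{-|\partial\IntVert{\gamma'_j}|/15}=(cn\cdot\min\{x^6,1\})^{-M/15}$. Summing over admissible families, the event in the lemma has probability at most $\sum_{\text{families}}(cn\cdot\min\{x^6,1\})^{-M/15}$, with $M\ge k$ throughout.

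It then remains to bound the number of admissible families of a given total mass $M$ by $D^M$ for an absolute constant $D$. I would do this by: choosing the $\HH^{\times}$-connected set $S$ of size $k\le M$ containing $u$ in at most $(5e)^k$ ways (since $\HH^{\times}$ is $6$-regular, by Lemma~\ref{lem:number-of-connected-graphs}); labelling each vertex of $S$ according to which of the two families' pieces it belongs to, in at most $3^k$ ways, which determines $S\cap\partial\breakup(\omega)$ and $S\cap\partial\breakup(\omega')$ and hence the individual pieces as their $\HH^{\times}$-components; and, for each piece with vertex set $U$, recovering the circuit $\gamma$ with $\partial\IntVert\gamma=U$ — since such a circuit has length at most $3|U|$ and each of its hexagons is bordered by a vertex of $U$, there are at most $D_0^{|U|}$ choices for some absolute $D_0$. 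Multiplying these counts and combining with the previous paragraph,
\[
\Pr(\text{event})\ \le\ \sum_{M\ge k}D^M(cn\cdot\min\{x^6,1\})^{-M/15}\ \le\ (c'n\cdot\min\{x^6,1\})^{-k/15}
\]
once $cn\cdot\min\{x^6,1\}$ is large enough — the only non-trivial case — after summing the geometric series and replacing $c$ by a small enough absolute constant $c'$.

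I expect the main obstacle to be the bookkeeping in the last two paragraphs rather than any new idea beyond Lemma~\ref{lem:prob-outer-circuit}: verifying that the circuits within each family are genuinely well separated (so the domain Markov factorisation is legitimate), confirming that a breakup boundary determines its circuit up to an exponential-in-$|U|$ factor, and tracking constants so that the final $c'$ comes out absolute. The quantitative heart — that a single breakup boundary of size $\ell$ is exponentially unlikely in $\ell$ — is already supplied, uniformly in $x\in(0,\infty]$, by Lemma~\ref{lem:prob-outer-circuit}.
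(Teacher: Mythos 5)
Your proposal is correct and follows essentially the same route as the paper: decompose the $\HH^\times$-connected component of $u$ into breakup boundaries coming from $\omega$ and from $\omega'$, bound each fixed decomposition using independence, the domain Markov property and the exponential estimate of Lemma~\ref{lem:prob-outer-circuit} (the paper packages this per-configuration step as $\Pr(A \prec \breakup(\omega)) \le (cn\cdot\min\{x^6,1\})^{-|A|/15}$ for vertex sets $A$, which sidesteps your circuit-reconstruction count --- in fact the circuit is uniquely determined by its boundary --- and then it enumerates via Lemma~\ref{lem:number-of-connected-graphs} and a $3^k$ labelling exactly as you do). The one point to tighten is that when a circuit $\gamma_i$ exits $H$ the conditional law inside $\Int{\gamma_i}$ is not $\Pr^{\emptyset}_{\Int{\gamma_i},n,x}$ but that measure further conditioned to agree with $\ground^0$ off $E(H)$, which is precisely what Lemma~\ref{lem:prob-outer-circuit-intermediate} (equivalently, Lemma~\ref{lem:prob-outer-circuit} applied in its conditional form) is designed to handle, so your argument goes through unchanged.
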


\begin{proof}
Let $\mathcal C_k$ be the set of $\HH^\times$-connected subsets of
$\VH$ of cardinality $k$ containing $u$. For $S\in\mathcal C_k$,
call a pair $(A,A')$ of subsets of $\VH$ {\em compatible with
$S$} if $A\cup A'=S$. We write $A \prec \breakup(\omega)$ if $A$ is
the union of some $\HH^\times$-connected components of $\partial
\breakup(\omega)$, or equivalently, if every $\HH^\times$-connected
component of $A$ is equal to $\partial\breakup(\omega,v)$ for some
$v \in \VH$. Now, we claim that for each fixed $A$, we have
\begin{equation}\label{eq:A_subset_of_breakup}
\Pr_{H,n,x}^0(A \prec \breakup(\omega)) \le (c n \cdot \min\{x^6,1\})^{-|A|/15}.
\end{equation}
To see this, note that for the probability to be positive, $A$ needs
to be a union of $\partial\IntVert{\gamma_i}$ for a collection of
circuits $\gamma_i\subset\T\setminus\T^0$ with disjoint interiors. Moreover, on the event $A \prec \breakup(\omega)$ these circuits are necessarily vacant in $\omega$.
Therefore, by conditioning on all of the $\gamma_i$ being vacant, we may
apply the domain Markov property and
Theorem~\ref{thm:small-deviations-from-the-ground-state-x-model} to
obtain the estimate~\eqref{eq:A_subset_of_breakup}. Similarly, for
each fixed $A'$ we have that
\begin{equation*}
  \Pr_{H',n,x}^0(A' \prec \breakup(\omega')) \le (c n \cdot \min\{x^6,1\})^{-|A'|/15}.
\end{equation*}
We may assume that $c n \cdot \min\{x^6,1\} \ge 1$, since the
statement is trivial otherwise. Let $\mathsf S$ be the
$\HH^\times$-connected component of $u$ in $\partial\breakup(\omega)
\cup \partial\breakup(\omega')$. Then
\begin{align*}
\Pr(|\mathsf S|= k)&\le \sum_{S\in\mathcal C_k}\ \sum_{(A,A')\text{ compatible with }S} \Pr(A \prec \breakup(\omega), ~ A' \prec \breakup(\omega'))\\
&\le \sum_{S\in\mathcal C_k}\ \sum_{(A,A')\text{ compatible with }S}(c n \cdot \min\{x^6,1\})^{-(|A|+|A'|)/15}\\
&\le (15e)^k (c n \cdot \min\{x^6,1\})^{-k/15}.
\end{align*}
In the second inequality we used the fact that $\omega$ and $\omega'$ are independent.
The last inequality follows from the following three facts:
\begin{itemize}[noitemsep,nolistsep]
    \item $|A|+|A'|\ge |S|= k$ and $cn \cdot \min\{x^6,1\} \ge 1$;
    \item the number of possible pairs $(A,A')$ compatible with $S$ is bounded by $3^k$ (since each vertex in $S$ is either in $A$, in $A'$ or in both);
    \item $|\mathcal C_k|$ is bounded by $(5e)^{k-1}\le (5e)^k$ (apply Lemma~\ref{lem:number-of-connected-graphs} to the $6$-regular graph $\HH^\times$). \qedhere
\end{itemize}
\end{proof}

Let us conclude the proof of Theorem~\ref{thm:Gibbs measures_large_x}.
Let $c>0$ be the minimum between the constants from the statements of Lemma~\ref{lem:empty spaceA} and Theorem~\ref{thm:small-deviations-from-the-ground-state-x-model}, and assume that $cn \cdot \min\{x^6,1\} \ge e^{15}$.
Let $H$ and $H'$ be two domains and let $A \subset B \subset H \cap H'$ be two domains of type $0$. Let $\omega\sim\Pr_{H,n,x}^0$ and $\omega'\sim\Pr_{H',n,x}^0$ be independent. Let $\mathcal E$ be the event that the union of $\HH^\times$-connected components of vertices in $A$ in $\partial\breakup(\omega) \cup \partial\breakup(\omega')$ intersects $\VH \setminus V(B)$. Lemma~\ref{lem:empty spaceA} implies that
\[ \Pr(\mathcal E)
\le \sum_{u \in V(A)} \sum_{k=d(\{u\},\VH\setminus V(B))}^\infty (c n \cdot \min\{x^6,1\})^{-k/15}
\le 2|V(A)| \cdot e^{-d(A,\VH\setminus V(B))},\]
where $d(E,F)$ is the minimum of the graph distances between a vertex in $E$ and a vertex in $F$.
Let $\mathcal E'$ be the event that $A$ is contained in either $\breakup(\omega)$ or $\breakup(\omega')$, i.e., that $A$ is contained entirely in one breakup (of either $\omega$ or $\omega'$). Denote by $\rho(m)$ the smallest possible size of $\partial U$ for a finite subset $U \subset \VH$ of size at least $m$.
Then Theorem~\ref{thm:small-deviations-from-the-ground-state-x-model} implies that
\[ \Pr(\mathcal E') \le 2(c n \cdot \min\{x^6,1\})^{-\rho(|V(A)|)/15} \le 2e^{-\rho(|V(A)|)} .\]

Let us now show that, on the complement of $\mathcal E \cup \mathcal
E'$, there exists a circuit $\gamma \subset \T\setminus\T^0$
surrounding $A$ and inside $H \cap H'$ which is vacant in both
$\omega$ and $\omega'$. We require the following simple geometric
claim. For brevity, in the rest of the proof we identify a domain
with its set of vertices.
\begin{equation}\label{eq:S_T_boundaries_claim}
  \begin{aligned}
    &\text{If $S,T$ are two domains of type $0$ with $S\not\subset T$
    and $T\not\subset S$ such that $S\cup T$ is connected,}\\[-0.4em]
    &\text{then $\partial{S}\cup\partial T$ is
    $\HH^\times$-connected. If, in addition, $S\cap T\neq\emptyset$ then also $\partial S\cap  T\neq\emptyset$.}
  \end{aligned}
\end{equation}
To see this, note first that $\partial S$ and $\partial T$ are
$\HH^\times$-connected by Fact~\ref{fact:circuit-domain-bijection}
and Lemma~\ref{cl:boundary-of-internal-vertices-is-connected}. If
$S\cap T=\emptyset$ then the assumption that $S\cup T$ is connected
implies that a vertex of $\partial S$ is adjacent to a vertex of
$\partial T$ yielding that $\partial S\cup\partial T$ is
$\HH^\times$-connected. Assume that $S\cap T\neq \emptyset$. By
considering a path in $T$ from $T\setminus S$ to $T\cap S$ it
follows that $\partial S\cap T\neq\emptyset$. Similarly, considering
a path in $T^c$ from $S\setminus T$ to $(S\cup T)^c$ shows that
$\partial S\setminus T\neq\emptyset$.
Finally, by considering a $\HH^\times$-path in $\partial S$ from $\partial S\cap T$ to $\partial S \setminus T$, we see that either $\partial S \cap \partial T \neq \emptyset$ or a vertex of $\partial S$ is adjacent to a vertex of $\partial T$. In either case, we conclude that $\partial S\cup\partial T$ is $\HH^\times$-connected.

Recall the notion of the outer circuit of a non-empty finite
connected subset $U$ of $\VH$ from the proof of
Theorem~\ref{thm:Gibbs measures_small_x}. Let $\sf D$ be the union of $A$ and of the connected components of $\breakup(\omega) \cup \breakup(\omega')$ that intersect $A$. Let $\gamma$ be the outer
circuit of $\sf D$. It follows that $\gamma \subset
\T\setminus\T^0$ and that $\gamma$ is vacant in both $\omega$ and
$\omega'$. Indeed, $\gamma\subset\T\setminus\T^0$ since $A$ is a
domain of type $0$ and, by the definition of the breakup, each of the $\breakup(\omega,u)$
is a domain of type $0$. Thus, no edge of $\gamma^*$ can belong to
$\omega\cup\omega'$ since otherwise both its endpoints would belong
to a breakup.

We claim that, on the complement of $\mathcal E \cup \mathcal E'$,
$\gamma$ is inside $B$. By the definition of $\gamma$ and since $B$
is a domain, it suffices to show that ${\sf D} \subset B$. On the
complement of $\mathcal E'$, we may write $\sf D$ as the
union of domains $D_i$ of type $0$ such that no one contains
another, $D_0=A$ and each $D_i$, $i \neq 0$, is a breakup of either
$\omega$ or $\omega'$. Let ${\sf D}'$ be the union of $A$ and of the $\HH^\times$-connected components of $\partial\breakup(\omega)\cup
\partial\breakup(\omega')$ that intersect $A$. On the complement of $\mathcal E$,
we have ${\sf D}' \subset B$.
By~\eqref{eq:S_T_boundaries_claim}, $\cup_i
\partial D_i$ is $\HH^\times$-connected and if $D_i\cap A\neq\emptyset$ then $\partial D_i\cap A\neq\emptyset$. Thus
$\cup_i \partial D_i\subset {\sf D}'$. We conclude
that $\partial {\sf D} \subset \cup_i
\partial D_i \subset B$, whence ${\sf D} \subset
B$ as we wanted to show.

Thus, by Lemma~\ref{lem:marginal-distribution-given-vacant-circuit}, the total variation between the measures $\Pr_{H,n,x}^0(\cdot_{|A})$ and $\Pr_{H',n,x}^0(\cdot_{|A})$ is at most $\Pr(\mathcal E \cup \mathcal E')$.
In particular, fixing a subgraph $A' \subset A$, the same holds for the measures $\Pr_{H,n,x}^0(\cdot_{|A'})$ and $\Pr_{H',n,x}^0(\cdot_{|A'})$.
Since $\rho(m)$ clearly tends to infinity as $m$ tends to infinity, by first taking $A$ large enough and then taking $B$ large enough, we may make $\Pr(\mathcal E \cup \mathcal E')$ arbitrarily small.
This implies the convergence of the measures $\Pr_{H_k,n,x}^0(\cdot_{|A'})$ towards a limit. Since this holds for any finite subgraph $A'$ of $\HH$, we have established the convergence of $\Pr_{H_k,n,x}^0$ as $k \to \infty$ towards an infinite-volume measure $\Pr_{\HH,n,x}^0$.

\section{Discussion and open questions}
\label{sec:discussion_open_questions}

In this work, we investigate the structure of loop configurations in the loop $O(n)$ model
with large parameter $n$. We show that the chance of having a loop
of length $k$ surrounding a given vertex decays exponentially in
$k$. In addition, we show, under appropriate boundary conditions,
that if $nx^6$ is small, the model is in a dilute, disordered phase
whereas if $nx^6$ is large, configurations typically resemble one of
the three ground states. In this section, we briefly discuss several
future research directions.

\begin{figure}
    \centering
    \begin{subfigure}[t]{.5\textwidth}
        \includegraphics[scale=0.46]{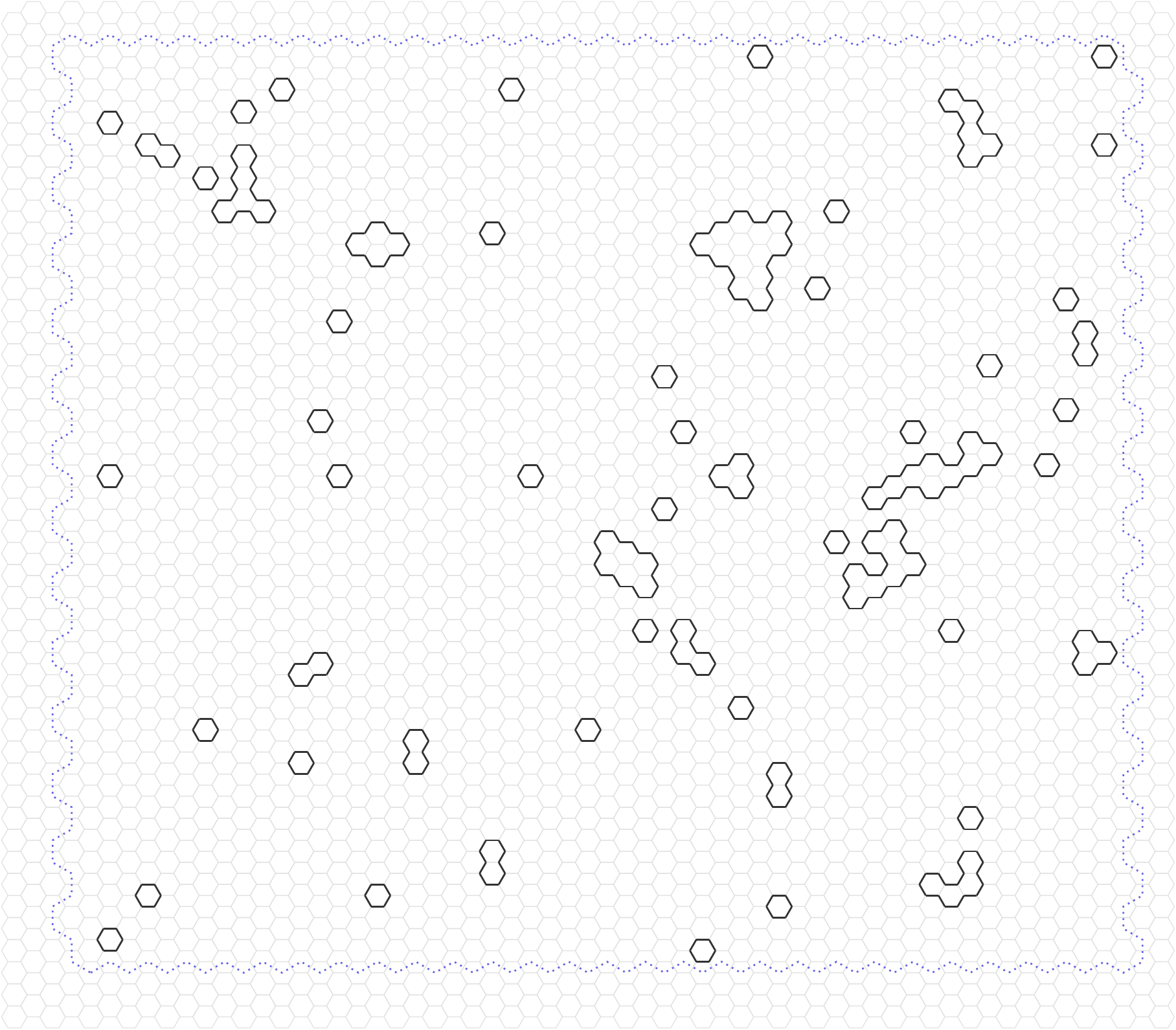}
        \caption{$n=0.8$ and $x=0.55$.}
        \label{fig:loop-sample-n=0.8,x=0.55}
    \end{subfigure}%
    \begin{subfigure}{20pt}
        \quad
    \end{subfigure}%
    \begin{subfigure}[t]{.5\textwidth}
        \includegraphics[scale=0.46]{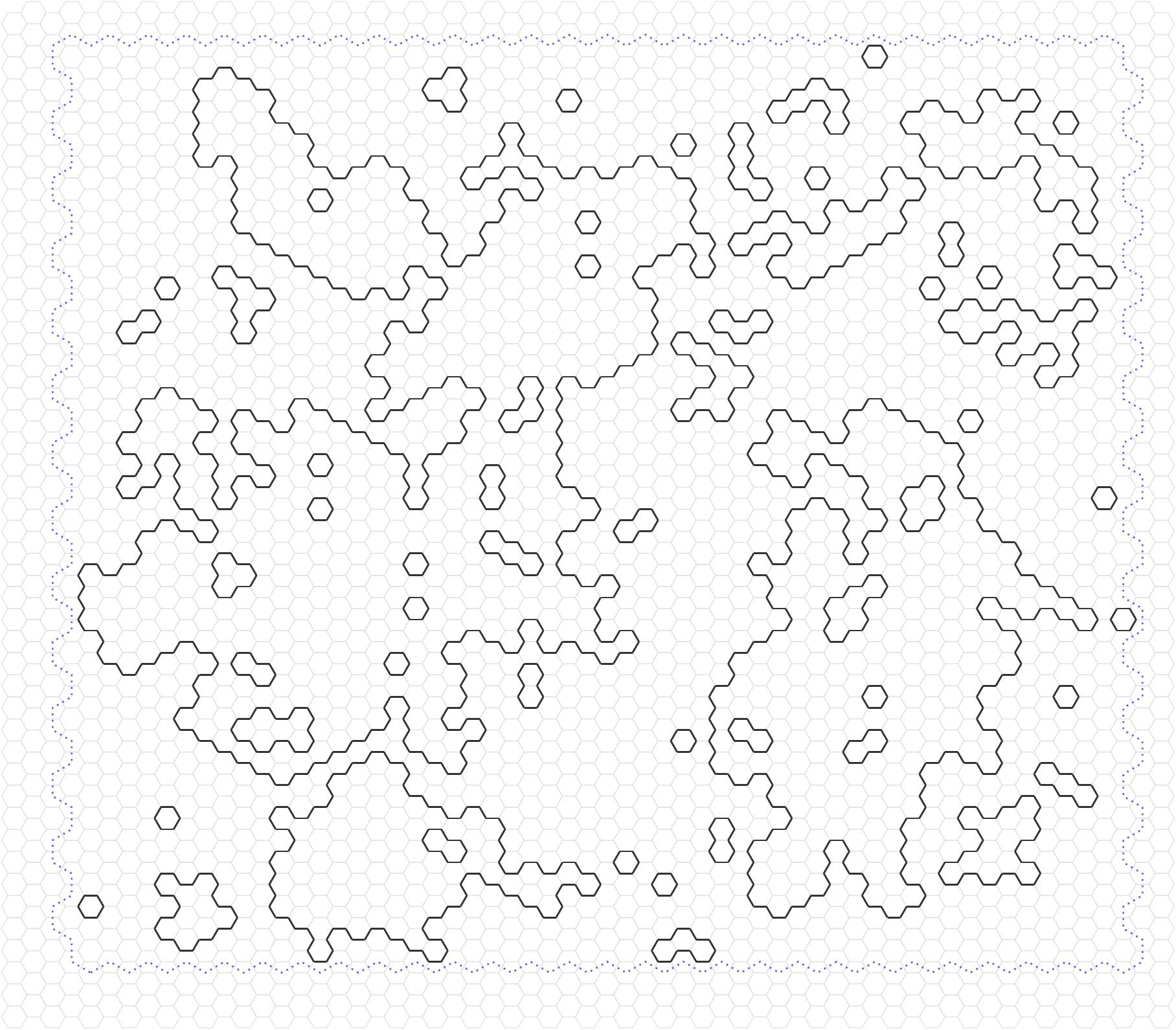}
        \caption{$n=0.8$ and $x=0.6$.}
        \label{fig:loop-sample-n=0.8,x=0.6}
    \end{subfigure}
    \medbreak
    \begin{subfigure}[t]{.5\textwidth}
        \includegraphics[scale=0.46]{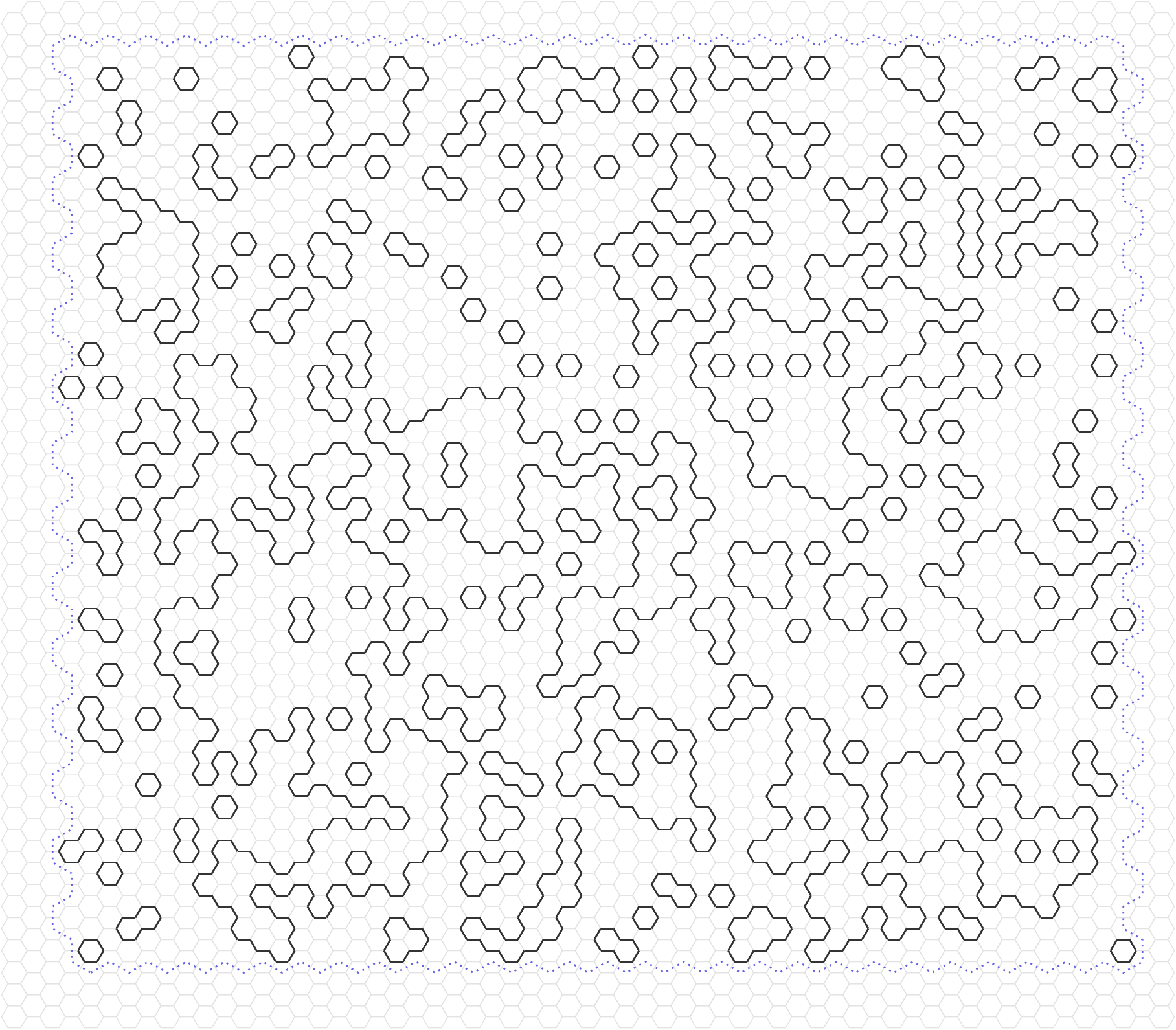}
        \caption{$n=2$ and $x=1/\sqrt{2} \approx 0.707$.}
        \label{fig:loop-sample-n=2,x=1/sqrt2}
    \end{subfigure}%
    \begin{subfigure}{20pt}
        \quad
    \end{subfigure}%
    \begin{subfigure}[t]{.5\textwidth}
        \includegraphics[scale=0.46]{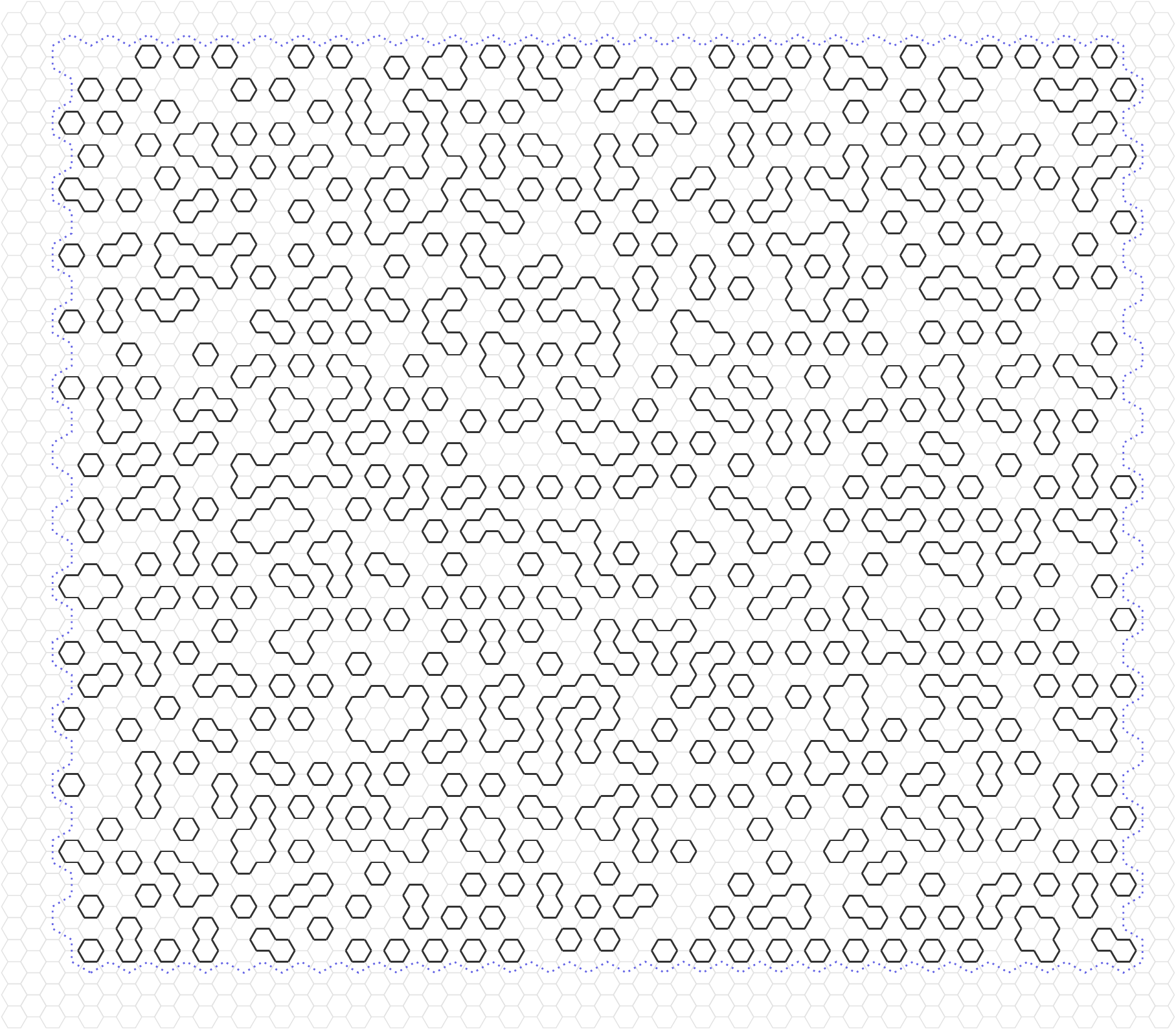}
        \caption{$n=8$ and $x=1$.}
        \label{fig:loop-sample-n=8,x=1}
    \end{subfigure}
    \caption{A few samples of random loop configurations. Configurations are on a $60\times45$ domain of type $0$ and are sampled via Glauber dynamics for 100 million iterations started from the empty configuration. The conjectured phase transition point for $n=0.8$ is $x_c = 1/\sqrt{2 + \sqrt{2-0.8}}\approx 0.568$ and for $n=2$ is $x_c=1/\sqrt{2} \approx 0.707$. Theorem~\ref{thm:no-large-loops} shows that long loops are exponentially unlikely for large $n$.}
    \label{fig:loop-samples-small-n}
\end{figure}

\subsection*{Spin O(n)}
As described in the introduction, the loop $O(n)$ model can be
viewed as an approximation of the spin $O(n)$ model, with the length
of loops related to the spin-spin correlation function. Thus, our
results prove an analogue of the well-known conjecture that
spin-spin correlations decay exponentially (in the distance between
the sites) in the planar spin $O(n)$ model with $n\ge 3$, at any
positive temperature. Proving the conjecture itself remains a
tantalizing challenge.

\subsection*{Small $n$}
Studying the loop $O(n)$ model for small values of $n$ is of great
interest. It is predicted that the model displays critical behavior
only when $n\le 2$. There, it is expected to undergo a
Kosterlitz--Thouless phase transition at $x_c = 1 /
\sqrt{2+\sqrt{2-n}}$, see~\cite{Nie82}, and exhibit conformal
invariance when $x\ge x_c$. Mathematical results on this are
currently restricted to the cases $n=1$ and $n=0$, which correspond
to the \emph{Ising model} and the \emph{self-avoiding walk},
respectively. For these two cases, the critical values have been
identified rigorously in \cite{KimJos74} and \cite{DumSmi12},
respectively. In the $n=1$ case, the model has been
proved~\cite{CheDumHon14,CheSmi12} to be conformally invariant at
$x_c=1/\sqrt3$. For $n=1$ and $x=\infty$ the height function of the
model may be viewed as a uniformly chosen lozenge tiling of a domain
in the plane. This viewpoint leads to a determinantal process, the
\emph{dimer model}, which has been analyzed in great detail (see,
e.g., \cite{Ken04} for an introduction). Conformal invariance has
also been proved for the \emph{double dimer model} which is closely
related to the case $n=2$ and $x=\infty$ (see~\cite{Ken14}).

Our results are limited to the case $n\ge n_0$ and understanding the
various behaviors for small values of $n$ remains a beautiful
mathematical challenge. To give a taste of the different
possibilities, we provide some simulation results in Figure~\reffig{fig:loop-samples-small-n}.

\subsection*{Extremality and uniqueness of the Gibbs measures}
When $n\ge n_0$ and $nx^6\ge C$, we prove that the model has at least
three different Gibbs measures, distinguished by a choice of a
sublattice of the triangular lattice. Are these the only extremal
Gibbs measures in this regime (i.e., is every other measure a convex
combination of these three measures)? Such a result would be in the spirit of
the Aizenman--Higuchi theorem \cite{Aiz80, Hig81} which proves that
the only extremal Gibbs measures for the 2D Ising model are the two
pure states. This theorem was recently extended to the $q$-state
Potts model in \cite{CoqDumIof12}.

For small values of $\max\{n,1\}x^6$, we prove the existence of a
limiting Gibbs measure when exhausting space via an increasing
sequence of domains with vacant boundary conditions. Is this Gibbs
measure unique for each choice of $n$ and $x$ in this regime?
Intuitively, the difficulty in proving this lies in dealing with
domains with boundary conditions which force an interface (i.e.,
part of a loop) through the domain (similarly to the situation in
Figure~\reffig{fig:domain-unique-min-edge-loop-config}). If this
interface passes near the origin with non-negligible probability,
one would obtain a limiting Gibbs measure having an infinite path
with positive probability. However, one expects interfaces to follow
diffusive scaling, similarly to random walk paths, and as such
should have negligible probability to pass close to the origin when
the domain is large. Making such an intuition rigorous is quite
non-trivial and was recently carried out successfully in
\cite{CoqDumIof12} for planar Potts models. Adapting the ideas in
\cite{CoqDumIof12} to the loop $O(n)$ model poses a challenge
as these rely on specific properties of the Potts model. Roughly,
the strategy in \cite{CoqDumIof12} proceeds by showing that when
starting from a large domain $H$ with arbitrary boundary conditions,
only a uniformly bounded number of interfaces will reach the
boundary of a smaller sub-domain $H'$. Then it is shown that these
bounded number of interfaces follow diffusive scaling as in the
intuition above. The first part, bounding the number of interfaces
between the boundary of $H$ and $H'$, may possibly be carried out
for the loop $O(n)$ model by using
Lemma~\ref{lem:prob-inequality-tool}; configurations with many long
interfaces may be `rewired', erasing most of these interfaces and
replacing them with short connections along the boundary of $H$,
yielding configurations with much higher probability. The second
part, however, showing the diffusive scaling, remains a major
obstacle.

\subsection*{The hard-hexagon model}
Our results shed light on the Gibbs measures of the loop $O(n)$
model when $n\ge n_0$ and either $nx^6\le c$ or $nx^6\ge C$. The
structure for $n\ge n_0$ and $c\le nx^6\le C$ remains unclear; see Figure~\reffig{fig:loop-sample-n=8,x=1} and Figure~\reffig{fig:loop-samples-large-n}. Is
there a single $x_c(n)$ at which the model transitions from the
dilute, disordered phase to the dense, ordered phase? What happens
when $x = x_c(n)$?

An intuition for this question may be obtained by considering a
limiting model as $n$ tends to infinity. As noted already in the
paper \cite{DomMukNie81} where the loop $O(n)$ model was introduced,
taking the limit $n\to\infty$ and $nx^6\to\lambda$ leads formally to
the hard-hexagon model. As loops of length longer than $6$ become
less and less likely in this limit, hard-hexagon configurations
consist solely of trivial loops, with each such loop contributing a
factor of $\lambda$ to the weight. Thus, the hard-hexagon model is
the hard-core lattice gas model on the triangular lattice $\T$ with
fugacity $\lambda$. For this model, Baxter \cite{Bax80} (see also
\cite[Chapter 14]{Bax89}) computed the critical fugacity
\begin{equation*}
\lambda_c = \left(2\cos\left(\frac{\pi}{5}\right)\right)^5 = \frac{1}{2}\left(11 + 5\sqrt{5}\right) \approx 11.09017,
\end{equation*}
and showed that as $\lambda$ increases beyond the threshold
$\lambda_c$, the model undergoes a fluid-solid phase transition from
a homogeneous phase in which the sublattice occupation frequencies
are equal to a phase in which one of the three sublattices is
favored. Additional information is obtained on the critical behavior
including the fact that the mean density of hexagons is equal for
each of the three sublattices \cite[Equation (13)]{Bax80} and the
fact that the transition is of second order \cite[Equation
(9)]{Bax80}. Baxter's arguments use certain assumptions on the model
which appear not to have been mathematically justified. Still, this
exact solution may suggest that the loop $O(n)$ model with large $n$
will also have a unique transition point $x_c(n)$, that $nx_c(n)^6$
will converge to $\lambda_c$ as $n$ tends to infinity and that the
transition in $x$ is of second order, with the model having a unique
Gibbs state when $x = x_c(n)$.

\subsection*{Square-lattice random-cluster model and dilute Potts model}
We start with a somewhat informal description of the square-lattice
random-cluster model and refer the interested reader to
\cite{grimmett2006random, duminil2015geometric} for more details. The random-cluster model with parameters $0<p<1, q>0$ on a domain in $\Z^2$ is
a random collection of edges $\eta$ of the domain whose probability
is proportional to
\begin{equation*}
  p^{o(\eta)}(1-p)^{c(\eta)}q^{k(\eta)},
\end{equation*}
where $o(\eta)$ is the number of edges in $\eta$, $c(\eta)$ is the
number of edges of the domain which are not in $\eta$ and $k(\eta)$
is the number of connected components in the graph whose vertices
are the vertices of the domain and whose edges are given by $\eta$.
For each $\eta$, one may draw a loop configuration $\omega_\eta$ (on
the so-called medial lattice) consisting of the loops marking the
boundaries of the connected components (these loops go around the
connected components and on the boundary of each ``hole'' that the
components surround); see Figure~\reffig{fig:random-cluster-loops}. It turns out that the probability of $\eta$ may be rewritten
using these loops so that the probability of $\eta$ is proportional
to
\begin{equation}\label{eq:random_cluster_loop}
  \lambda^{o(\eta)}(\sqrt{q})^{L(\omega_\eta)},
\end{equation}
where $\lambda := \frac{p}{\sqrt{q}(1-p)}$ and $L(\omega_\eta)$ is
the number of loops in $\omega_\eta$. This representation highlights
a self duality occurring when $p$ is such that $\lambda = 1$ and
this self-dual point has been proven to be the critical point
$p_c(q)$ for the random-cluster model~\cite{BefDum12}. The formula \eqref{eq:random_cluster_loop} may immediately
remind the reader of the formula for the probability of
configurations in the loop $O(n)$ model given in Definition~\ref{def:loop-model}.
However, we emphasize that $o(\eta)$ counts the number of edges in
$\eta$ and as such is quite different from the `length' of the loops
in $\omega_\eta$. In fact, the loop configuration $\omega_\eta$ is
necessarily fully packed in the domain for any given $\eta$, so that
$\lambda$ plays a different role from the parameter $x$ of the loop
$O(n)$ model. Still, the formula \eqref{eq:random_cluster_loop} does
suggest an analogy between the random-cluster model at criticality
(when $p = p_c(q)$) and the fully packed (i.e., $x=\infty$) loop
$O(n)$ model with $n=\sqrt{q}$.

\begin{figure}
    \centering
    \includegraphics[scale=1]{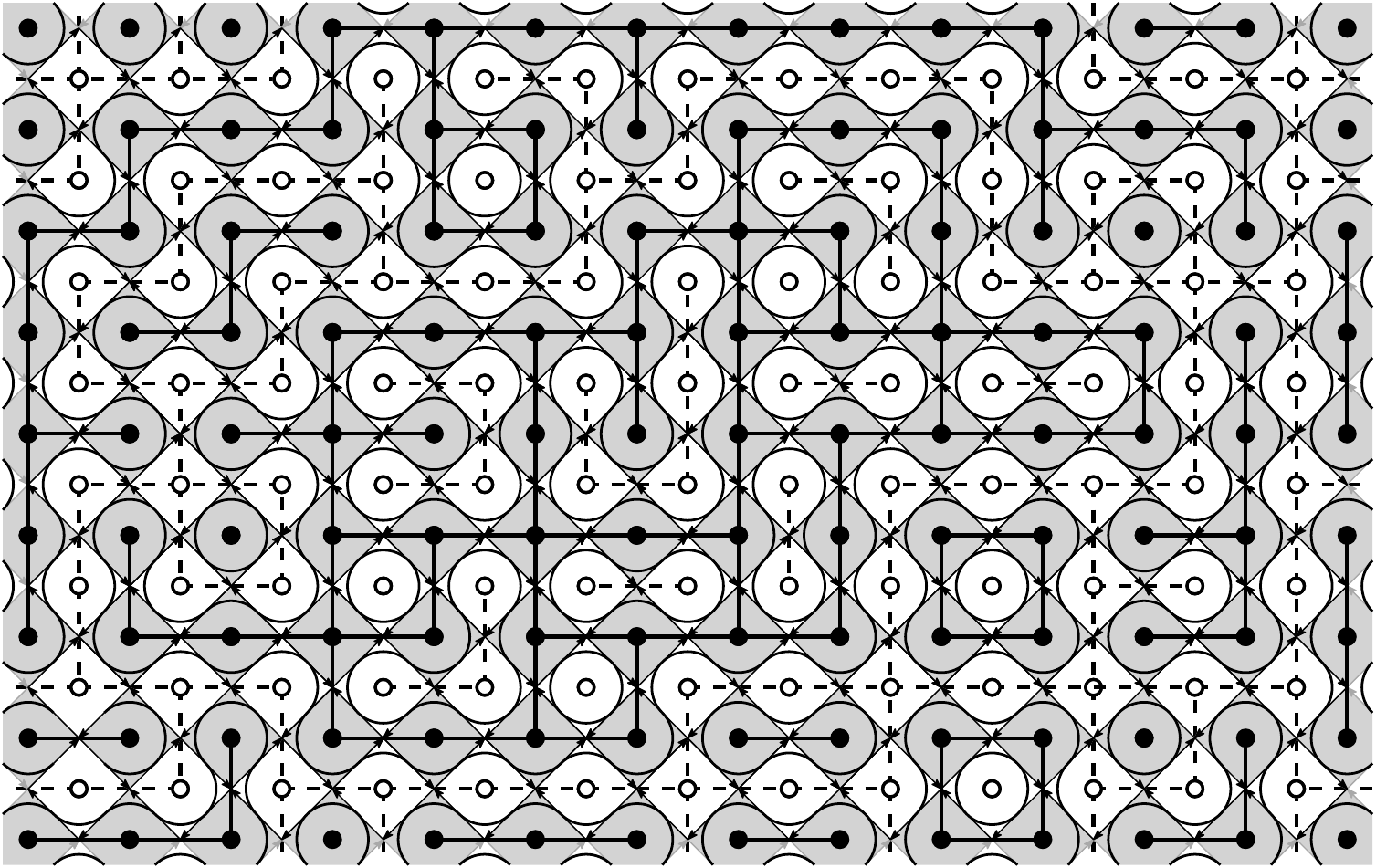}
    \caption{An illustration of a random-cluster configuration $\eta$ and its corresponding loop configuration $\omega_\eta$. The edges of $\eta$ are denoted by bold lines, the edges not in $\eta$ by dashed lines and the loops of $\omega_\eta$ by plain lines.}
    \label{fig:random-cluster-loops}
\end{figure}

Taken with periodic boundary conditions on a square domain, the
random-cluster model has two configurations $\eta$ which maximize
$L(\omega_\eta)$: one in which all the edges of the domain are
absent (yielding loops around the vertices) and one in which all of
them are present (yielding loops around the faces). These
configurations are equally probable at the critical point, but one
is preferred over the other whenever $p\neq p_c(q)$. Following a
proof of Koteck\'y and Shlosman~\cite{KotShl82} for the
closely-related Potts model, it has been proven by Laanait et
al.~\cite{LaaMesMir91} that for large $q$, the random-cluster model
exhibits a first-order phase transition, so that at criticality
there are two Gibbs states corresponding to the two ground states
described above. Our results on the existence of the ordered phase
for large $n$ and $x=\infty$ are quite analogous to this phenomenon.
In fact, it is predicted that the square-lattice random-cluster
model has a first-order phase transition if $q\ge 4$ and otherwise
has a second-order phase transition. This is in line with the
conjectured phase diagram for the loop $O(n)$ model, predicting that
the ordered phase at $x=\infty$ exists only for $n\ge 2$.

We again point out that the parameter $p$ of the random-cluster
model has no analogue in the loop $O(n)$ model and so the existence
of a first-order transition in $p$ does not suggest that such a
transition should occur also when varying $x$. As mentioned above,
it may well be that for large $n$, the transition in $x$ is of
second order by analogy with the situation for hard hexagons.

Lastly, we mention that Nienhuis~\cite{nienhuis1991locus} proposed a version
of the Potts model, termed the \emph{dilute Potts model}, with a
direct relationship to the loop $O(n)$ model. A configuration of the
dilute Potts model in a domain of the triangular lattice is an
assignment of a pair $(s_z, t_z)$ to each vertex $z$ of the domain,
where $s_z\in\{1,\ldots, q\}$ represents a spin and $t_z\in\{0,1\}$
denotes an occupancy variable. The probability of configurations
involves a hard-core constraint that nearest-neighbor occupied sites
must have equal spins (reminiscent of the Edwards-Sokal coupling of
the Potts and random-cluster models) and single-site,
nearest-neighbor and triangle interaction terms involving the
occupancy variables. With a certain choice of coupling constants,
the marginal of the model on the occupancy variables is equivalent
to the loop $O(n)$ model (with $n=\sqrt{q}$), with the loops being
the interfaces between occupied and unoccupied sites. Nienhuis
predicts this choice of parameters to be part of the critical
surface of the dilute Potts model. The properties of the dilute
Potts model appear not to have been studied in the mathematical
literature and it would be interesting to see whether they can shed
further light on the behavior of the loop $O(n)$ model.

\subsection*{Height representation for integer $n$}
When the loop parameter $n$ is an integer, the loop $O(n)$ model admits a height function representation \cite{DomMukNie81}. Let $T_n$ be the $n$-regular tree (so that $T_1 = \{+,-\}$ and $T_2=\Z$) rooted at an arbitrary vertex $\rho$. Let $\text{Lip}_n$ be the set of functions $\varphi\colon\T\to T_n$ satisfying the `Lipschitz condition':
\begin{equation*}
\text{If $y,z\in\T$ are adjacent then either $\varphi(y) = \varphi(z)$ or $\varphi(y)$ is adjacent to $\varphi(z)$ in $T_n$}
\end{equation*}
(in other words, $\varphi$ is a graph homomorphism from $\T$ to the graph $T_n'$ obtained from $T_n$ by adding a loop at every vertex). For a domain $H\subset\HH$, we further set $\text{Lip}_n(H)$ to be the set of $\varphi\in\text{Lip}_n$ satisfying the boundary condition $\varphi(z) = \rho$ for all hexagons $z$ which are not in the interior of $H$ (i.e., which are incident to a vertex in $V(\HH)\setminus V(H)$). Define the `level lines' of $\varphi\in\text{Lip}_n$ by
\begin{equation*}
\omega_\varphi := \big\{e \in E(\HH) \,:\, \text{the edge $e$ borders hexagons $y,z\in\T$ satisfying $\varphi(y)\neq \varphi(z)$}\big\}.
\end{equation*}
Observe that $\omega_\varphi$ is a loop configuration and that if $\varphi\in\text{Lip}_n(H)$ then $\omega_\varphi\in\LC(H,\emptyset)$.
For a real parameter $x>0$, define a probability measure $\nu_{H,n,x}$ on $\text{Lip}_n(H)$ by
\[
 \nu_{H,n,x}(\varphi) := \frac{x^{|\omega_\varphi|}}{Z^{\text{Lip}}_{H,n,x}}, \quad \varphi\in\text{Lip}_n(H),
\]
where $Z^{\text{Lip}}_{H,n,x}$ is the unique constant which makes $\nu_{H,n,x}$ a probability measure. The definition is extended to $x=\infty$ by $\nu_{H,n,\infty}(\varphi) := \lim_{x\to\infty} \nu_{H,n,x}(\varphi)$.

The fact that the loop $O(n)$ model admits a height function
representation is manifested in the relation between the measures
$\nu_{H,n,x}$ and $\Pr_{H,n,x}^\emptyset$. As is straightforward to
verify, if $\varphi$ is a random function chosen according to
$\nu_{H,n,x}$ then $\omega_\varphi$ is distributed according to
$\Pr_{H,n,x}^\emptyset$. In particular, the height function
representation of the loop $O(1)$ model is an Ising model (which may
be either ferromagnetic or antiferromagnetic according to whether
$x<1$ or $x>1$) and the height function representation of the loop
$O(2)$ model is a restricted Solid-On-Solid model. Our main result,
Theorem~\ref{thm:no-large-loops}, implies that long level lines
surrounding a given hexagon are exponentially unlikely in height
functions sampled according to $\nu_{H,n,x}$, when $H$ is a domain
of type $\clr\in\{0,1,2\}$ and $n$ is large. Our proof does not make
use of the height function representation and thus applies to real
$n$. It would be interesting to see whether the height function
representation may be used to provide further information for
integer $n$.

%
%

\appendix

\section{Integrals}
\label{sec:integrals}

In this section, we present a detailed derivation of the formulas
approximating the partition function and the spin-spin correlations
in the spin $O(n)$ model on a finite subgraph $H$ of the hexagonal
lattice. Let $u,v\in V(H)$ be distinct vertices and let $H^+$ be the
(possibly multi-)graph obtained by adding an edge $e_{u,v}$ between
$u$ and $v$ to $H$. In the introductory section, the derivation was
reduced to computing integrals of the form
\[
I(\omega):=\int_\Omega \prod_{\{w,w'\} \in E(\omega)} \langle
\sigma_w, \sigma_{w'} \rangle \,d\sigma,
\]
where $\Omega = (\nSn)^{V(H)}$, $\omega$ is an arbitrary subgraph of
$H^+$,
and $d\sigma$ is the product of $|V(H)|$ uniform probability
measures on $\nSn$. Note first that, by symmetry, making the
substitution $\sigma_w \leftarrow -\sigma_w$ for some $w \in V(H)$
does not change the value of this integral and consequently
$I(\omega) = 0$ unless every vertex has even degree in $\omega$. In
other words, if $\omega \subset H$ then $I(\omega) = 0$ unless
$\omega$ is a loop configuration, i.e., $\omega \in
\LC(H,\emptyset)$, and $I(\omega + e_{u,v}) = 0$ unless the degrees
of $u$ and $v$ in $\omega$ are odd and the degrees of all other
vertices are even, i.e., $\omega \in \LC(H,\emptyset,u,v)$.

We shall repeatedly make use of the following identity. For every $x, y \in \R^n$,
\begin{equation}
    \label{eq:xzzy-xy}
    \int_{\nSn} \langle x, z \rangle \langle z, y \rangle \,dz = \langle x, y \rangle,
\end{equation}
where $dz$ is the uniform probability measure on $\nSn$. Note that
both sides of~\eqref{eq:xzzy-xy} are bilinear functions of $x$ and
$y$ and therefore it is enough to verify that~\eqref{eq:xzzy-xy}
holds when $x$ and $y$ are two vectors from the canonical basis
$\{e_1, \ldots, e_n\}$ of $\R^n$. By symmetry, for each $i$,
\[
\int_{\nSn}  \langle e_i, z \rangle \langle z, e_i \rangle \,dz = \frac{1}{n} \sum_{i=1}^n \int_{\nSn}  \langle z, e_i \rangle^2 \,dz = \frac{1}{n} \int_{\nSn} \|z\|^2dz = 1,
\]
If $i \neq j$, substituting $(z_1, \ldots, z_n) \leftarrow (z_1, \ldots, z_{i-1}, -z_i, z_{i+1}, \ldots, z_n)$ yields
\[
\int_{\nSn}  \langle e_i, z \rangle \langle z, e_j \rangle \,dz = - \int_{\nSn}  \langle e_i, z \rangle \langle z, e_j \rangle \,dz = 0.
\]

Suppose first that $\omega \in \LC(H, \emptyset)$. Since the loops of $\omega$ are vertex-disjoint, $I(\omega) = \prod_{L \subset \omega} I(L)$, where $L$ ranges over all loops of $\omega$. Suppose now that $L$ is a loop through vertices $v_0, \ldots, v_\ell$, where $v_\ell = v_0$. Invoking~\eqref{eq:xzzy-xy} repeatedly yields
\[
\begin{split}
I(L) & = \int_\Omega \langle \sigma_{v_0}, \sigma_{v_1} \rangle \cdots \langle \sigma_{v_\ell-1}, \sigma_{v_\ell} \rangle \,d\sigma = \int_\Omega \langle \sigma_{v_0}, \sigma_{v_0} \rangle \,d\sigma = n,
\end{split}
\]
giving $I(\omega) = n^{L_H(\omega)}$.

Suppose now that $\omega \in \LC(H, \emptyset, u, v)$, let $C$ be
the connected component of $u$ (and $v$) in $\omega$, and note that
$C$ must contain a simple path $P$ connecting $u$ and $v$. Since we
have already proved that $I(L) = n$ for every loop $L$, in order to
compute $I(\omega + e_{u,v})$, it is enough to compute
$I(C+e_{u,v})$. A simple case analysis shows that $C$ is either (i)
the path $P$, (ii) the path $P$ and a loop intersecting $P$ in one
of its endpoints, (iii) the path $P$ and two vertex-disjoint loops,
each intersecting $P$ in one of its endpoints, or (iv) the path $P$
and two other simple paths connecting $u$ and $v$, each pair of
paths sharing only the vertices $u$ and $v$. Since the edge
$e_{u,v}$ closes $P$ into a loop, invoking~\eqref{eq:xzzy-xy}
repeatedly to `contract' loops yields that $I(C+e_{u,v})$ equals $n$
in case (i), $n^2$ in case (ii), and $n^3$ in case (iii). In case
(iv), invoking~\eqref{eq:xzzy-xy} repeatedly only gives
\[
I(C+e_{u,v}) = \iint_{\nSn} \langle x, y \rangle^4 \,dxdy,
\]
which is somewhat more difficult to compute. Using symmetry and the
fact that the projection of the Lebesgue measure on
$\mathbb{S}^{n-1} \subset \R^n$ onto the first coordinate gives the
measure on $[-1,1]$ with density $(1-t^2)^{\frac{n-3}{2}}$ up to a
normalization constant, we obtain
\[
\begin{split}
I(C + e_{u,v}) & = \int_{\nSn} \langle x, \sqrt{n} e_1 \rangle^4 \, dx = n^4 \int_{\nSn} \langle x/\sqrt{n}, e_1 \rangle^4 \,dx \\
& = n^4 \cdot \frac{\int_{-1}^1 t^4 (1-t^2)^{\frac{n-3}{2}}
\,dt}{\int_{-1}^1 (1-t^2)^{\frac{n-3}{2}} \,dt} = \frac{3n^3}{n+2},
\end{split}
\]
where one may obtain the final identity using integration by parts.

%
%

\section{Circuits and domains}
\label{sec:circuits-and-domains}

Here we prove some facts about circuits and domains.

\begin{proof}[Proof of Fact~\ref{fact:gamma-int-ext}]
    Let $\gamma$ be a circuit and denote by $\HH_\gamma$ the subgraph of $\HH$ obtained by removing from $\HH$ all edges in $\gamma^*$. Let $\Ext\gamma$ be the set of vertices that are the endpoint of some infinite simple path in $\HH_\gamma$.

First, we claim that $\Ext\gamma$ is a connected component of
$\HH_\gamma$. To see this, note first that by definition,
$\Ext\gamma$ is a union of connected components of $\HH_\gamma$.
Furthermore, since $\gamma^*$ is finite, there exists an $R$ and a vertex $u \in \VH$ such
that the complement of the ball of radius $R$ (in the graph distance
determined by $\HH$) centered at $u$ induces the same connected graph $\HH_R$
in both $\HH$ and $\HH_\gamma$. Finally, every infinite simple path
in $\HH$ intersects $\HH_R$ and therefore $\Ext\gamma$ consists of a
single connected component.

    Second, we claim that the set of endpoints of the edges in $\gamma^*$ intersects at most two connected components of $\HH_\gamma$, one of which is $\Ext\gamma$. To see this, suppose that $\gamma = (\gamma_0, \ldots, \gamma_m)$ as in the definition in Section~\ref{sec:definitions}. In order to prove the first part of our claim, it suffices to show that for each $i \in \{1, \ldots, m-1\}$, there are two disjoint $\HH_\gamma$-connected sets of vertices, each of which intersects both $\{\gamma_{i-1}, \gamma_i\}^*$ and $\{\gamma_i, \gamma_{i+1}\}^*$ (where we regard an edge as the set of its endpoints). To see this, note that $\{\gamma_{i-1}, \gamma_i\}^*$ and $\{\gamma_i, \gamma_{i+1}\}^*$ are the only two out of six edges surrounding the hexagon $\gamma_i$ that belong to $\gamma^*$. Consequently, the removal of $\gamma^*$ partitions the six vertices surrounding $\gamma_i$ into two $\HH_\gamma$-connected sets, each of which intersects both $\{\gamma_{i-1}, \gamma_i\}^*$ and $\{\gamma_i, \gamma_{i+1}\}^*$. For the second part of the claim, consider an arbitrary infinite simple path in $\HH$ which uses an edge from $\gamma^*$. Let $\{v, w\}$ be the last edge of $\gamma^*$ on this path and observe that either $v$ or $w$ belongs to $\Ext\gamma$. Hence, $\Ext\gamma$ is one of the $\HH_\gamma$-connected components that contains an endpoint of an edge of $\gamma^*$.

    Third, we claim that $\Ext\gamma \neq \VH$. If this were not the case, then in particular there would be a $\{v, w\} \in \gamma^*$ such that both $v$ and $w$ belong to the same connected component of $\HH_\gamma$. Consequently, there would be a simple path $P$ in $\HH_\gamma$ that connects $v$ and $w$. The edge $\{v,w\}$ and $P$ would then form a cycle in $\HH$ that contains exactly one edge of $\gamma^*$. This is impossible since the basic $6$-cycles surrounding the hexagons of $\T$ generate the cycle space of $\HH$ and each of these basic cycles intersects $\gamma^*$ in either $0$ or $2$ edges.

    Fourth, we claim that $\VH \setminus \Ext\gamma$ is $\HH_\gamma$-connected, that is, every two $v, w \notin \Ext\gamma$ are in the same connected component of $\HH_\gamma$. To see this, consider two infinite simple paths $P_v$ and $P_w$ in $\HH$ that start at $v$ and $w$, respectively. Since $v, w \notin \Ext\gamma$, both $P_v$ and $P_w$ contain an edge from $\gamma^*$. Let $v', w'$ be the first vertices in $P_v$ and $P_w$, respectively, which are incident to edges of $\gamma^*$. Clearly $v,v'$ and $w,w'$ lie in the same $\HH_\gamma$-connected components, other than $\Ext\gamma$. By our second claim, $v'$ and $w'$ must belong to the same $\HH_\gamma$-connected component. Hence, $v$ and $w$ also belong to the same $\HH_\gamma$-connected component, which we shall from now on denote by $\Int\gamma$.

    Finally, we show that both $\Ext\gamma$ and $\Int\gamma$, as $\HH_\gamma$-connected components, are induced subgraphs of $\HH$ and that $\Int\gamma$ is finite. The first assertion follows from the fact that the two endpoints of each edge of $\gamma^*$ belong to different $\HH_\gamma$-connected components, which we have already established above. If the second assertion were false, then $\Int\gamma$ would be an infinite $\HH_\gamma$-connected graph and hence it would contain an infinite simple path, contradicting the fact that $\Int\gamma \cap \Ext\gamma = \emptyset$.
\end{proof}

\begin{proof}[Proof of Fact~\ref{fact:circuit-domain-bijection}]
    Let $H$ be a domain and let $E$ be the set of edges of $\HH$ with exactly one endpoint in $V(H)$.
    Let $T$ be the auxiliary graph with vertex set $\T$ whose edges are all pairs $\{y,z\}$ such that $\{y,z\}^* \in E$. We first claim that all vertex degrees in $T$ are even. Indeed, to see this for the degree of a hexagon $z\in \T$, it suffices to traverse the vertices bordering $z$ in order and to consider which of them belong to $V(H)$. It follows that $T$ contains a circuit $\gamma$. By Fact~\ref{fact:gamma-int-ext}, $\gamma^*$ splits $\HH$ into exactly two connected components. As $\gamma^* \subseteq E$ and $\Int\gamma$ is finite and non-empty, and as $V(H)$ is finite, connected and with connected complement, it must be that $\VH \setminus V(H) \subseteq \ExtVert\gamma$ and $V(H) \subseteq \IntVert\gamma$. Consequently, $H = \Int\gamma$.
\end{proof}

\begin{proof}[Proof of Fact~\ref{fact:circuits-max}]
Denote $A := \IntVert\sigma$, $A' := \IntVert{\sigma'}$ and $B := A \cup A'$. Let us first show that $B$ is connected. If $A \cap A' \neq \emptyset$ then this is immediate. Otherwise, by assumption, there exists an edge $\{v,u\} \in \sigma^* \cap (\sigma')^*$. Assume without loss of generality that $v \in A$ and $u \notin A$. Then $u \in A'$ and $v \notin A'$, and thus, $B$ is connected.

Let $C$ be the unique infinite connected component of $\VH \setminus B$ and let $D := \VH\setminus C$. It is straightforward to check that $D$ is finite, $B \subset D$ and $\partial D \subset \partial B$. Since $B$ is connected, this implies that $D$ is connected. Thus, as $\VH\setminus D=C$ is connected, the subgraph of $\HH$ induced by $D$ is a domain.

By Fact~\ref{fact:circuit-domain-bijection}, there exists a circuit $\gamma$ such that $D=\IntVert\gamma$. It remains to check that $\gamma^* \subset \sigma^* \cup (\sigma')^*$. Let $\{v,u\} \in \gamma^*$ be such that $v \in D$ and $u \notin D$. In particular, $v \in B$ and $u \notin B$. Thus, either $v \in A$ so that $\{v,u\} \in \sigma^*$, or $v \in A'$ so that $\{v,u\} \in (\sigma')^*$.
\end{proof}

\bibliographystyle{amsplain}
\bibliography{biblicomplete}

\end{document}